\def\BState{\State\hskip-\ALG@thistlm}
\newtheorem{theorem}{Theorem}[section]
\newtheorem{corollary}[theorem]{Corollary}
\newtheorem{lemma}[theorem]{Lemma}
\newtheorem{observation}[theorem]{Observation}
\newtheorem{proposition}[theorem]{Proposition}
\newtheorem{claim}[theorem]{Claim}
\newtheorem{fact}[theorem]{Fact}
\newtheorem{assumption}[theorem]{Assumption}
\theoremstyle{definition}
\newtheorem{definition}[theorem]{Definition}
\def\eps{\varepsilon}
\def\Real{\mathbb{R}}
\def\poly{\operatorname{poly}}
\def\Bar{\overline}
\def\Hat{\widehat}
\DeclareMathOperator{\Diag}{\mathbf{Diag}}
\DeclareMathOperator{\dom}{dom}
\DeclareMathOperator{\Cong}{cong}
\DeclareMathOperator{\str}{str}
\DeclareMathOperator{\grad}{\nabla}
\newcommand\ftil{\tilde{f}}
\newcommand\Otil{\widetilde{O}}
\def\bb{\mathbf{b}}
\def\bc{\mathbf{c}}
\def\bd{\mathbf{d}}
\def\bf{\mathbf{f}}
\def\bv{\mathbf{v}}
\def\bs{\mathbf{s}}
\def\bt{\mathbf{t}}
\def\bu{\mathbf{u}}
\def\bx{\mathbf{x}}
\def\by{\mathbf{y}}
\def\bz{\mathbf{z}}
\def\bA{\mathbf{A}}
\def\bB{\mathbf{B}}
\def\bC{\mathbf{C}}
\def\bL{\mathbf{L}}
\def\bR{\mathbf{R}}
\def\bX{\mathbf{X}}
\def\CC{\mathcal{C}}
\def\DD{\mathcal{D}}
\def\EE{\mathcal{E}}
\def\GG{\mathcal{G}}
\def\HH{\mathcal{H}}
\def\MM{\mathcal{M}}
\def\PP{\mathcal{P}}
\def\RR{\mathcal{R}}
\def\OO{\mathcal{O}}
\def\WW{\mathcal{W}}
\newcommand{\norm}[1]{\left\lVert#1\right\rVert}
\newcommand{\Abs}[1]{\left\vert#1\right\vert}
\newcommand{\T}[1]{#1 ^\top}
\newcommand{\pinv}[1]{#1 ^\dagger}
\newcommand{\mb}[1]{\mathbf{#1}}
\newcommand{\vol}{{\hbox{\textnormal{vol}}}}
\newcommand{\cost}{{\hbox{\textnormal{cost}}}}
\newcommand{\argmin}{\mathop{\mathrm{argmin}}}
\newcommand\rea{\mathbb R}
\newcommand{\vone}{\boldsymbol{\mathbf{1}}}
\begin{document}

\title{$\ell_2$-norm Flow Diffusion in Near-Linear Time}
\author{
		Li Chen\\
		Georgia Tech\\
		\href{mailto:lichen@gatech.edu}{lichen@gatech.edu}
		\thanks{Research supported by IDEaS-TRIAD Research Scholarship}
        \and 
        Richard Peng\\
		Georgia Tech\\
		\href{mailto:rpeng@cc.gatech.edu}{rpeng@cc.gatech.edu}
        \and
        Di Wang\\
        Google Research\\
        \href{mailto:wadi@google.com}{wadi@google.com}
}

\maketitle

\begin{abstract}
Diffusion is a fundamental graph procedure and has been a basic building block in a wide range of theoretical and empirical applications such as graph partitioning and semi-supervised learning on graphs. In this paper, we study computationally efficient diffusion primitives beyond random walk.

We design an $\widetilde{O}(m)$-time randomized algorithm for the $\ell_2$-norm flow diffusion problem, a recently proposed diffusion model based on network flow with demonstrated graph clustering related applications both in theory and in practice. Examples include finding locally-biased low conductance cuts. Using a known connection between the optimal dual solution of the flow diffusion problem and the local cut structure, our algorithm gives an alternative approach for finding such cuts in nearly linear time.

From a technical point of view, our algorithm contributes a novel way of dealing with inequality constraints in graph optimization problems. It adapts the high-level algorithmic framework of nearly linear time Laplacian system solvers, but requires several new tools: vertex elimination under constraints, a new family of graph ultra-sparsifiers, and accelerated proximal gradient methods with inexact proximal mapping computation.

\end{abstract}
\newpage

\tableofcontents
\newpage

\section{Introduction}


Graphs are among the most prevalent data representations when it comes to modeling real-world networks or capturing relationships among entities. Thus, efficiently computing on graphs is a central topic in combinatorial optimization with wide applications in operations research, data mining, and network science etc. In this paper we look at {\em diffusion}, which informally is the generic process of spreading mass among vertices by sending mass across edges, and it can be effective at revealing interesting structures of the graph or learning latent representations of vertices. Diffusion has been a basic building block in the study of various fundamental algorithmic problem on graphs, most notably clustering~\cite{SpielmanT13,ACL06, chung2007-pagerank-heat, AP09,AL08_SODA, OZ14, OrecchiaSV12,SaranurakW18}, and it has also been successful at various graph learning tasks such as node classification and embedding~\cite{BC18, IbrahimG19, KWG19, LG20, FWY20, WFHM2017,HHSLB20, BKP20}. Consequently it is of great interests both theoretically and empirically to further the development of simple diffusion primitives that can be efficiently computed over massive graphs.

There has been a long line of research on diffusion, and historically the most popular ones are predominantly spectral methods based on the dynamics of random walk over graphs~\cite{LS90,pagerank,ACL06,SpielmanT13,chung2007-pagerank-heat,AP09,ALM13}. Although random walk based diffusion methods have seen wide applications due to their conceptual simplicity and practicality, they can spread mass too aggressively, and it is known in theory and in practice that spectral methods can have unsatisfactory performances when structural heterogeneities exist, and thus are not robust on real-world graphs constructed from very noisy data~\cite{guatterymiller98,Luxburg07,LLDM09_communities_IM,Jeub15}. More recently there has been progress on diffusion methods based on the dynamics of network flow, which have demonstrated improved theoretical guarantees over spectral methods on graph partitioning~\cite{OZ14,HRW17,SaranurakW18}, as well as better empirical results on learning tasks in practice~\cite{WFHM2017,FWY20}. On the computational side, although some diffusion methods are strongly local, i.e. the running time depends on the size of the output cluster instead of the size of the whole graph, in general the time to compute these diffusions is either not nearly linear in the (local or global) graph size or depends polynomially on quantities such as the conductance of the graph, and thus can be slow when one wants to explore large region of a graph or the conductance is small. 

Motivated by these observations, we study computationally efficient diffusion primitives beyond random walk. In particular, we look at the $\ell_p$-norm flow diffusion model proposed in~\cite{FWY20}, which offers the similar conceptual simplicity of random walk based diffusion models, and the authors also demonstrated in their paper that the $\ell_p$-norm flow diffusion can give theoretical guarantees for graph clustering and have better empirical performance than spectral diffusion at practical learning tasks such as community detection. In this work, we tackle the computational problem, and show the $\ell_p$-norm flow diffusion problem can be computed to high accuracy in nearly linear time for $p=2$. Before presenting our main result formally in~\cref{sec:IntroResult}, we first discuss the more generic setup for flow diffusion problems briefly for the background.

\subsection{Flow Diffusion}
\label{sec:FlowDiffusion}
The flow diffusion comes in a succinct generic optimization formulation below
\begin{equation}
    \label{eq:general_diffusion}
    \min_{\bf \in \Real^E} \cost(\bf)
    \text{ s.t. } \T{\bB}\bf \leq \bd,
\end{equation} 
where $\bB$ is the edge incidence matrix of a graph, $\bd \in \Real^V$ is the vertex capacity vector with $\bd^{\top} \vone \geq 0$, and $\cost(\cdot)$ is a cost function on flows. The {\em $\ell_p$-norm flow diffusion} takes $\cost(\bf)=\norm{\bf}_p$ as the cost function for some $p\in(0,\infty)$.

To see the intuitive combinatorial interpretation behind flow diffusion, we go back to the fundamental problem of network flow, which can be written in a similar generic form as follows.
\begin{equation*}
    \min_{\bf \in \Real^E} \cost(\bf)
    \text{ s.t. } \T{\bB}\bf = \bb,
\end{equation*}
with $\bb \in \rea^{V}$ (with $\bb^{\top} \vone = 0$) being the demand vector. The network flow problem asks for the most cost efficient way of routing a given demand $\bb$, whereas in flow diffusion, the solution can route any demand $\bb=\T{\bB}\bf$ as long as $\bb\leq \bd$. It is straightforward to see that flow diffusion generalizes network flow, since routing a specific demand $\bb$ can be enforced by using $\bd=\bb$ in~\eqref{eq:general_diffusion}. 

For a concrete application, in~\cite{FWY20} the authors looked at $\ell_p$-norm flow diffusion in the context of finding locally-biased graph clustering.
One can consider the vector $\bd$ as $\bt - \bs$ for $\bs,\bt \in \rea^{V}_{\geq 0}$ and $\bs^{\top} \vone\leq \bt^{\top} \vone$, so the constraint in~\eqref{eq:general_diffusion} becomes $\T{\bB}\bf +\bs\leq \bt$.
One can view $\bs(v)$ as the source mass placed on node $v$, where $\bs$ typically has a small support, e.g. on a small set of given seed nodes, with each of such node having a relatively large amount of mass, or informally "high density".
$\bt(v)$ is the sink capacity of node $v$, where $\bt$ has larger support but smaller infinity norm (i.e. "lower density") comparing to $\bs$, e.g. $\bt(v)=\deg(v)$ $\forall v$.
The flow diffusion aims to "spread out" the source mass from the high density state $\bs$ to any lower density state allowed by $\bt$ using the most cost-efficient network flow. One distinction of~\eqref{eq:general_diffusion} is its compact optimization formulation which enables direct analysis of its optimal solution (independent of any algorithm), whereas almost all other diffusion methods are defined with the step-by-step rules of how to spread mass, and thus the analysis must crucially rely on the specific dynamics of the underlying procedures. In particular, the optimal solution to the $\ell_p$-norm flow diffusion problem (or more accurately its dual to be introduced shortly) can be used to find local low conductance cuts with provable approximation guarantees (parameterized by $p$). For more details, we refer readers to~\cite{FWY20}.

We note that the generic idea of using sink capacities with larger sum than the total source mass to allow flexibility in demand long predates the formulation of~\eqref{eq:general_diffusion}, and has been exploited specifically in local clustering (e.g.~\cite{OZ14}).
Indeed, when $\cost(\bf)=\norm{\bf}_{\infty}$ the diffusion problem can be reduced to standard s-t max-flow by adding a super-sink $t$ and an edge from each node to $t$ with edge capacity equal to the node's sink capacity.
However, the reduction to network flow doesn't hold in general, and flow diffusion beyond the special case of $\norm{\bf}_{\infty}$ has been relatively under-explored. 

Moreover, diffusion models based on random walk (e.g.~\cite{pagerank,chung2007-pagerank-heat}) are typically linear operators, while flow diffusion naturally introduces non-linearity. Note this is the case even in $\ell_2$-norm flow diffusion whose counterpart network flow problem (i.e. electrical flow) is a linear operator. Introducing non-linearity in the diffusion process has proved to be useful at improving the classification accuracy empirically~\cite{BC18,ID19,FWY20}, and this further motivates our study of simple and practical diffusion algorithms beyond this special case.

\subsection{Our Results}
\label{sec:IntroResult}
We focus on the case with $\cost(f)$ being the electrical energy of the flow $\bf$ in~\eqref{eq:general_diffusion}. 
Given an undirected graph $G=(V, E)$ with conductance vector $\bc\in\rea^{E}_{\geq 0}$ (i.e., edge $e$ has resistance $1/c(e)$), vector $\bd \in \rea^{V}$ with $\T{1}\bd \ge 0$, and denote the Laplacian Matrix of $G$ as $\bL(G)=\T{\bB}\bC\bB$ where $\bB\in\rea^{E\times V}$ is the edge incidence matrix and $\bC$ is the diagonal matrix of $\bc$, the $\ell_2$-norm flow diffusion and its dual are as follows
\begin{equation}
\label{L2_primal}
    \min_{\bf \in \Real^E} \sum_{e \in E} \frac{f_e^2}{2c(e)}
    \qquad \text{ s.t. } \T{\bB}\bf \le \bd,
\end{equation}
\begin{equation}
\label{L2_dual}
    \min_{\bx \in \Real^n} \frac{1}{2} \T{\bx}\bL(G)\bx + \T{\bd}\bx 
    \qquad \text{ s.t. } \bx \ge 0.
\end{equation}
Similar to the canonical relationship between the flow and vertex potential in electrical flow, the optimal dual $\bx^*$ in our case also directly induce the optimal primal solution to the flow diffusion problem via $\bf^* = -\bC^{-1}\bB\bx^*$. We design the first nearly linear time algorithm to solve the dual problem~\eqref{L2_dual} (and thus the primal) to high accuracy. More formally,
\begin{theorem}
\label{thm:L2Main}
Given an instance of~\eqref{L2_dual} on graph $G$ of $n$ vertices and $m$ edges, error parameter $\varepsilon > 0$, with high probability Algorithm~\ref{algo:Main} computes $\bx$ whose objective is within $(1+\varepsilon)$ factor of the optimum of~\eqref{L2_dual} in running time $O(m\log^8 n \log (1 / \varepsilon))$.
\end{theorem}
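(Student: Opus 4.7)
The plan is to adapt the nearly linear time Laplacian solver framework (in the style of Spielman--Teng and Koutis--Miller--Peng) to the nonnegativity-constrained quadratic program~\eqref{L2_dual}. Without the constraint, this is just the linear system $\bL(G)\bx = -\bd$, solvable in $\widetilde{O}(m)$ time by a standard Laplacian solver; the difficulty is retaining this efficiency while honoring $\bx \ge \vzero$. The overall strategy therefore has three layers. At the top layer, I would run an accelerated proximal gradient method on~\eqref{L2_dual}, using as the prox-metric a spectrally close ultra-sparsifier Laplacian $\widetilde{\bL}$ rather than the Euclidean norm. Under the Euclidean metric the proximal step is trivial projection onto $\Real^n_{\ge 0}$, but under the $\widetilde{\bL}$-metric each proximal step is itself a problem of the form~\eqref{L2_dual} on the sparser graph, which enables recursion. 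The outer convergence rate will scale with the square root of the spectral ratio between $\bL(G)$ and $\widetilde{\bL}$.

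At the middle layer, I would construct the chain of ultra-sparsifiers. Starting from a low-stretch spanning tree augmented by a carefully sampled set of off-tree edges, I would collapse low-degree vertices to produce a smaller graph for the next level. Unlike the standard Laplacian setting, Schur-complement elimination of a vertex carrying an inequality constraint is not directly valid, since a constraint that is slack in the full problem can become active after elimination. To overcome this I would design a vertex elimination scheme tailored to the constrained setting: either identify vertices whose constraints are provably inactive at the optimum and eliminate them by standard Schur complement, or encode the eliminated vertex's constraint as modified constraints on its neighbors, producing the new family of ultra-sparsifiers promised in the abstract. The goal is to show that after elimination the resulting problem still has the form~\eqref{L2_dual} (possibly with adjusted $\bd$) on a graph of substantially smaller size, while remaining spectrally close to the original.

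At the bottom layer, I would analyze accelerated proximal gradient with inexact proximal mapping: since each inner recursive call only returns an approximately optimal prox step, I need to quantify how this error propagates through the momentum updates and show that with a polylogarithmic inner accuracy the accelerated outer rate is preserved. Balancing the three layers gives $O(\log n)$ recursion levels, each layer spending polylogarithmic work per edge per outer iteration, and $O(\log(1/\eps))$ outer iterations suffice for a $(1+\eps)$-approximation, which yields the claimed $O(m\log^8 n\log(1/\eps))$ bound. I expect the main obstacle to be the constrained vertex elimination together with the ultra-sparsifier construction: in the unconstrained setting Schur complements of Laplacians are Laplacians, but the Schur complement of a constrained quadratic program is not automatically a problem of the same form, so proving that it can be approximated by one on a sparser graph, in near-linear time and with controllable spectral error, is where the bulk of the technical effort will lie.
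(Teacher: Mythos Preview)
Your three-layer outline matches the paper's architecture: iterative refinement on the outside, accelerated proximal gradient preconditioned by a spectrally close sparse graph, with recursion into the prox subproblem and an inexact-prox analysis at the bottom. Where your proposal diverges from the paper, and where it would actually break down as written, is the middle layer.

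You propose a standard ultra-sparsifier (low-stretch spanning tree plus sampled off-tree edges) followed by collapsing ``low-degree'' vertices. The paper explicitly identifies this as the step that fails. In the constrained problem one cannot eliminate a degree-$2$ vertex and stay inside the problem class: with the nonnegativity constraint, the partial minimization over $x_u$ for a degree-$2$ vertex $u$ with neighbors $v,w$ is a piecewise function of $(x_v,x_w)$ that does \emph{not} decompose as a Laplacian edge plus separable vertex terms. Only degree-$1$ elimination preserves the structure, because then the partial minimum over $x_u$ is a univariate function of $x_v$ alone. But if you can only strip leaves, a spanning-tree-plus-off-tree-edges sparsifier is useless---a single cycle already has no degree-$1$ vertices. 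This is exactly why the paper introduces $j$-tree sparsifiers (a forest of $j$ trees hanging off a core on $j$ vertices): after peeling all leaves you are guaranteed to land on the $j$-vertex core. The construction in Section~5 builds such a $j$-tree that $\kappa$-spectrally approximates $G$ with $j=\widetilde O(m/\kappa)$, via a low-stretch spanning tree, a refined tree decomposition, and a careful choice of which tree edge to cut in each piece.

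The second piece you are missing is what the degree-$1$ elimination actually produces. Your option ``identify vertices whose constraints are provably inactive and Schur-complement them'' cannot work a priori: the active set is unknown and changes across iterations of the outer method. Your other option, ``encode the eliminated vertex's constraint as modified constraints on its neighbors,'' is the right instinct but needs a concrete vehicle. The paper's answer is to generalize the problem class: replace the linear terms $d_u x_u$ by \emph{vertex weighting functions} $f_u$, each convex, piecewise quadratic, with continuous concave derivative (Definition~3.1). Eliminating a leaf $u$ adjacent to $v$ replaces $f_v$ by $f_v(x)+\min_{y\ge b_u}\tfrac{1}{2}c_{uv}(x-y)^2+f_u(y)$, which is again a VWF. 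The residual problem after any sequence of leaf eliminations is thus of the same generalized form. Two further ingredients are then required to close the recursion: a compression lemma that rounds a large VWF to one with $O(\log n)$ pieces while incurring only a factor-$2$ loss in the embedding sense (Lemma~7.2), and a balanced-BST data structure supporting the VWF merge and infimal-convolution operations in amortized $O(\log^2 n)$ (Lemma~7.3). Without these, the total VWF size would blow up and the recursion would not shrink the instance.

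In short: your outer and inner layers are right, but the middle layer as you describe it (ultra-sparsifier $+$ generic low-degree elimination) is precisely the approach the paper rules out. The fix is degree-$1$-only elimination, which forces the new $j$-tree preconditioner family and the VWF generalization.
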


An immediate application of our result is to find locally-biased low conductance cuts, where the (edge) conductance $\phi(S)$ of a cut $S\subsetneq V$ is defined as 
\[
\phi(S) = \frac{\sum_{e\in E(S,\bar{S})} c(e)}{\vol(S)}
\]
where the volume $\vol(S)$ is the sum of (weighted) degrees of nodes in $S$.
This comes as a corollary of the result (Theorem $2$ in~\cite{FWY20}) connecting the optimal solution to the dual problem with low conductance cuts around a given set of seed nodes. Intuitively, if we put flow mass on the seed nodes and ask for the most energy efficient way to spread the mass to a larger region, it will be relatively easy to spread within a well-connected local neighborhood, while taking high energy to push mass out of a low conductance bottleneck.
Thus, the dual solution will naturally reveal the local cut structure, and algorithmically this can be done easily by a sweep cut on the dual solution.
\begin{corollary}
\label{coro:clusterMain}
Given a set of seed nodes $S$ in a graph $G$ of $n$ nodes and $m$ edges, suppose there exists a cluster $C$ such that
\begin{enumerate}
    \item $\vol(S\cap C)\geq \alpha\vol(C)$ for some $\alpha\in (0,1]$,
    \item $\vol(S\cap C)\geq \beta\vol(S)$ for some $\beta\in (0,1]$.
\end{enumerate}
Then there is an algorithm that with high probability finds a cut $\tilde{C}$ in  $O(m\log^8 n \log (1 / \varepsilon))$ time such that $\phi(\tilde{C})\leq O\left(\frac{\sqrt{\phi(C)}}{\alpha\beta}\right)$
\end{corollary}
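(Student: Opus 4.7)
The plan is to reduce the local-clustering task to a single call of Algorithm~\ref{algo:Main} followed by a sweep-cut rounding, and to invoke Theorem~2 of~\cite{FWY20} as a black box for the rounding analysis. First I would set up the instance of~\eqref{L2_dual}: define the seed mass $\bs$ by placing, on each $v\in S$, an amount proportional to $\deg(v)$ and scaled up by a factor that makes $\bs^\top\vone$ exceed $\vol(V)$ by the constant used in~\cite{FWY20}; set the sink capacity $\bt(v)=\deg(v)$; and take $\bd=\bt-\bs$. Under the two overlap assumptions, \cite{FWY20}'s theorem guarantees that the optimal dual $\bx^*$ of~\eqref{L2_dual} has the property that the best sweep cut along the ordering of vertices by $\bx^*(v)/\deg(v)$ has conductance $O(\sqrt{\phi(C)}/(\alpha\beta))$.

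Next I would run Algorithm~\ref{algo:Main} with accuracy parameter $\varepsilon'=1/\poly(n)$ (chosen large enough polynomial power to survive the perturbation argument below); by Theorem~\ref{thm:L2Main} this costs $O(m\log^8 n\log(1/\varepsilon))=O(m\log^9 n)$, absorbing the extra $\log$ factor into the stated bound. This returns $\hat{\bx}\ge 0$ whose objective in~\eqref{L2_dual} is within a $(1+\varepsilon')$ factor of the optimum. I then compute the sweep cut on $\hat{\bx}$: sort vertices by $\hat{\bx}(v)/\deg(v)$, scan the $n$ prefix cuts, and output the one of minimum conductance. The sweep-cut step is easily implementable in $O(m+n\log n)$ time, well within the target running time.

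The one step that is not automatic is bridging from an approximately optimal $\hat{\bx}$ to the conductance guarantee, since Theorem~2 of~\cite{FWY20} is stated for the exact optimum $\bx^*$. I would argue this by a stability lemma: because~\eqref{L2_dual} is a strongly convex quadratic on the support of $\bx^*$ (with strong-convexity constant related to $\bL(G)$ restricted to the relevant subspace), a $(1+\varepsilon')$ approximation in objective translates into an $\ell_\infty$-error on $\hat{\bx}$ that is $\poly(n)\cdot\sqrt{\varepsilon'}$-small relative to the dynamic range of $\bx^*/\deg$. Choosing $\varepsilon'$ an appropriately small inverse polynomial ensures that the ordering used by the sweep cut, and hence the conductance it achieves, differs from that on $\bx^*$ by at most a constant factor. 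This is the main obstacle: carrying out the quantitative perturbation bound. Once it is in place, concatenating the three ingredients --- setup, Theorem~\ref{thm:L2Main}, sweep cut --- yields Corollary~\ref{coro:clusterMain}.
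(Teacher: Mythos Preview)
Your proposal is correct and takes essentially the same approach as the paper: invoke Theorem~\ref{thm:L2Main} to solve~\eqref{L2_dual} and then apply Theorem~2 of~\cite{FWY20} to the sweep cut on the dual solution. In fact the paper is considerably less detailed than you are---it simply states that ``the conductance guarantee follows directly from Theorem~2 in~\cite{FWY20}'' and does not address the approximate-versus-exact issue you flag; your perturbation discussion is extra rigor beyond what the paper supplies.
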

The conductance guarantee follows directly from Theorem $2$ in~\cite{FWY20}, which proves more generally that for $\ell_p$-norm flow diffusion with $p\in(1,\infty)$, the sweep cut on the optimal dual solution finds a cut $\tilde{C}$ such that $\phi(\tilde{C})\leq O\left(\frac{\phi(C)^{(p-1)/p}}{\alpha\beta}\right)$. 

For comparison~\cite{FWY20} gives a strongly local method with worst-case running time guarantee of $O(\vol(C)^3D^2\log(1/\varepsilon))$ in the $p=2$ case where $D$ is the maximum degree of explored nodes, which can be as slow as $O(m^3n^2\log(1/\varepsilon))$ when $\vol(C)=\Theta(m)$ and $D=\Theta(n)$.

\subsection{Related Works}
\label{sec:RelatedWorks}
Diffusion is one of the most basic graph exploration techniques and has connections to various graph problems both in theory and in practice. We discuss some of these topics in this section.

\textbf{Graph Partitioning.}
Graph partitioning is a fairly broad area, and in our discussion we restrict our attention to the more specific problem of finding low conductance cuts in graphs, which along with its many variants are NP-Hard and of central object of study for approximation algorithms. Results along this direction can come in variants such as expander decomposition, where a graph is partitioned into clusters with large internal conductance (i.e. expanders) and small number of inter-cluster edges, or balanced separator, e.g. finding low conductance cut whose smaller side is at least a constant fraction of the total graph size. Here, the conductance of a cluster $C \subseteq V$ is defined as $\phi(C) \coloneqq |E(C, \Bar{C})| / \min\{\vol(C), \vol(\Bar{C})\}$. Flow and random walk methods have long been used to find good graph decompositions. For example, inspired by the classical result of Lovasz and Simonovitz~\cite{LS90}, Spielman and Teng~\cite{SpielmanT13} demonstrated that a constant fraction of vertices in a low conductance cut are good random walk starting points to find a good cut using diffusion. Their result has been improved by subsequent work~\cite{ACL06,AP09,OrecchiaV11,GharanT12,OrecchiaSV12} to achieve better approximation ratio and running time. These methods, similar to spectral partitioning methods in general~\cite{AlonM85,SINCLAIR198993}, suffer an intrinsic quadratic approximation error, informally known as the Cheeger barrier due to Cheeger's Inequality~\cite{Cheeger69}. That is, if there exists a cut of conductance $\phi^*$, these methods guarantee to find a cut with conductance no larger than $O(\sqrt{\phi^*})$. The running time is typically nearly linear in the size of the graph and with polynomial dependence in $\phi$.~\cite{OrecchiaSV12} achieves nearly linear time algorithm independent of $\phi$ via more advanced techniques including solving a series of Laplacian systems and \emph{Semidefinite programming} (SDP) rounding. 
Beyond spectral methods,~\cite{LeightonRao} uses LP relaxation to give $o(\log n)$ approximation on sparsest cut, and~\cite{ARV} achieves the best approximation ratio of $O(\sqrt{\log n})$ using SDP relaxation, whose running time has been subsequently optimized in~\cite{Sherman09,AroraHK10,AroraK07}. Despite having good theoretical guarantees, these algorithms are not easy to implement or even capture in a principled heuristic. The methods of~\cite{KRV,OSVV} combined the techniques of random walk and flow algorithms to give a framework of \emph{cut-matching game}, which give poly-logarithmic approximation ratio using a poly-logarithmic number of approximate max flow computations. The framework has inspired other graph partitioning algorithms~\cite{Sherman09,SaranurakW18,ChuzhoyGLNPS19:arxiv}.

\textbf{Local Clustering.}
A closely related problem (and sometimes a subroutine) of (global) graph partitioning is the \emph{local clustering} problem, which aims to find low conductance cluster local to a given subset of vertices $A$. The problem is first proposed and studied in \cite{SpielmanT13}, and is applied broadly in the design of fast (global) graph algorithms and also semi-supervised learning on graphs. 
Some of the results are weakly local in the sense that only the result cluster is biased toward the given seed set while the algorithm's running time still depends on the size of the entire graph, e.g.~\cite{MOV12_JMLR}. On the other hand, strongly local algorithms only touches the local region around the seed set, and run in time proportional to the size of the local cluster. Strongly local methods are predominantly based on the framework of exploring the graph starting from the seed nodes using various diffusion methods. Similar to the global partitioning case, random walk based diffusion methods also suffer the quadratic loss in the local clustering setting.
 On the other hand, network flow algorithms have also been adapted as strongly local diffusion methods to find good local cluster or to locally improve the cut quality (i.e. conductance) of given cut~\cite{OZ14, WFHM2017}.
They achieve better approximation ratio of $O(\phi^* polylog(n))$ by exploiting the canonical flow-cut duality.
However, diffusion methods based on combinatorial flow algorithms (e.g. Dinic's algorithm~\cite{Dinic70} in~\cite{OZ14} and push-relabel~\cite{GoldbergT88} in~\cite{WFHM2017}) are generally considered more difficult to understand and implement in practice due to the more complicated underlying dynamics.

We note that graph partitioning and local clustering algorithms are widely used in the design of efficient graph algorithms, for example graph sparsification~\cite{SpielmanT11:journal}, dynamic data structures~\cite{NanongkaiSW17}, minimum cut~\cite{KT15,HRW17,Li20Mincut}, etc.

\textbf{Electrical network flow and Laplacian Solver.}
From a technical point of view, our $\ell_2$-norm flow diffusion algorithm is inspired by the seminal result of Spielman and Teng~\cite{SpielmanTengSolver:journal} for solving Laplacian linear systems in nearly linear time and the subsequent progress (e.g.~\cite{KMP,KoutisMP10,KyngS16,JS20}) along this line of research (a.k.a the Laplacian paradigm~\cite{Teng10:survey}), which have served as a building block for the breakthroughs on many fundamental graph problems. Our algorithm at a high level uses a similar approach, yet the lack of given demand in flow diffusion poses unique challenges and requires novel techniques. 

There has been some recent progress on high accuracy algorithms to solve $\ell_p$-norm network flow problem. Both our $\ell_2$-norm diffusion result and $\ell_p$-norm network flow exploit ideas developed in the Lapalcian framework, and are natural non-linear variants of the electrical flow problem. Kyng et al.~\cite{KPSW19} obtained an almost-linear time high-accuracy algorithm for the uniform-weighted $\ell_p$-norm flow problem for $p\in[2,\infty)$, which has thus been improved and also generalized to weighted graphs~\cite{AdilS20,ABKS21}.
Using these results as a black box, Axiotis et al.~\cite{AxiotisMV20}, and Kathuria et al.~\cite{KLS20} obtained the first almost-$O(m^{4/3})$-time algorithm for the unit-capacitated max flow problem (i.e. $p=\infty$).



\textbf{Learning on Graphs}
There has been a surge of interest in performing machine learning tasks on graphs in recent years. The field of \emph{learning on graphs} has produced a fruitful of empirical results and practical applications. We refer readers to \cite{HamiltonGraphLearningBook} for a comprehensive survey.

Among the practical tasks, the node classification problem is of central importance.
The problem asks to classify vertices in a graph where labels are available for a subset of vertices.
Graph neural network (GNN) based approaches~\cite{KW16GCN,B18GNN,Wu20GNN} have been the state of the art in terms of accuracy, but the lack of efficient training algorithms and huge model complexity yield challenges when applying them on large graphs. Some recent results turn to diffusion based methods due to its simplicity and scalability.
\cite{BC18} showed advantages of diffusion over node embedding methods and graph convolutional networks~\cite{KW16GCN}.
\cite{KWG19} and \cite{BKP20} incorporated diffusion with GNN and get faster training time than other GNN-based methods.
\cite{IbrahimG19} showed the advantage of nonlinear diffusion process defined by $p$-Laplacian. When node features are presented, \cite{HHSLB20} achieved the state of the art result by using diffusion as a post-process routine independent from the learning procedure.





\subsection{Future Directions}
\label{sec:Future}
There are several natural next steps to further the study of flow diffusion. 

\textbf{Efficient Algorithms for the general $\ell_p$-norm flow diffusion.}
\cite{FWY20} has shown empirical evidence that using larger $p$ (e.g. $4$ or $8$) leads to improved clustering results over $p=2$. Moreover as mentioned after Corollary~\ref{coro:clusterMain}, $\ell_p$-norm flow diffusion gives local cut with conductance approximation guarantee of $O(\phi^{(p-1) / p})$, which is similar to the quadratic approximation of spectral diffusion starting at $p=2$, and approaches optimality as $p$ gets large. Thus, design efficient algorithm for $\ell_p$-norm flow diffusion beyond the case of $p=2$ has immediate impact both in theory and in practice. 
The direction is promising given the analogous development of fast algorithms for the $\ell_p$-norm flow problem~\cite{KPSW19, AdilS20, ABKS21}, and it is plausible that some techniques such as adaptive preconditioning~\cite{KPSW19}, sparsification~\cite{ABKS21}, and iterative refinement~\cite{AKPS19, AdilS20} developed in the context of $\ell_p$-norm network flow can be extended to the diffusion case.

\textbf{Finding balanced cut or global graph decomposition.}
Beyond local clustering guarantees, it is interesting to build upon the $\ell_p$-norm flow diffusion primitive to give (global) graph partitioning results such as balanced separator. This is analogous to how random walk based diffusion can serve as the building block for spectral partitioning, e.g.~\cite{SpielmanT13}. 

A graph partitioning result analogous to the smooth $O(\phi^{(p-1)/p})$ approximation guarantee in the local clustering case~\cite{FWY20} would be exciting as it bridges the guarantees given by the different approaches of spectral and flow in a unified framework.

\textbf{Improved Algorithm for $\ell_2$-norm flow diffusion.}
Similar to how the initial nearly linear time algorithm of Spielman and Teng for electrical flow is made simpler, faster and more practical, we expect similar progress can be achieved in the flow diffusion case. It is also interesting to design strongly local algorithms for flow diffusion with nearly linear dependence on the output cluster size, which can have practical impact when the size of the graph is massive comparing to the local cluster, and can also be useful for the design of nearly linear graph decomposition algorithm.



\section{Technical Overview}
\label{sec:TechOverview}
At a high level, we use the approach of a (preconditioned) numerical method, which is similar to many other almost-linear time algorithms for graph optimization problems~\cite{SpielmanTengSolver:journal, Sherman13, KelnerLOS14, Peng16, Sherman17a, Li20, ASZ20, KPSW19}.
Among these, the problem closest to ours is the electrical flow problem which is equivalent to solving graph-structured linear systems~\cite{SpielmanTengSolver:journal, KoutisMP10, KMP, JS20}.
It has similar form as Problem~\eqref{L2_primal} by replacing the inequality constraint with equality, and its dual is just Problem~\eqref{L2_dual} without the non-negativity constraint.

The general idea of the numerical approach is to iteratively update a current solution so that the gap to optimum in objective value is reduced geometrically (a.k.a iterative refinement). In each iteration, we look at a residual problem which captures the impact of an update (to the current solution) on the original objective, and solving the residual problem approximately would give us an update which improves the objective value substantially. In particular, we solve a slightly generalized form of Problem~\eqref{L2_dual} (see~\ref{sec:generalized_diffusion}) so that the residual problem is also in the same form, and we can solve the residual problem recursively.

Similar to the Spielman-Teng Laplacian solver, we recursively solve a residual problem of size $m$ by solving a series of smaller sized problems. In particular, we first precondition the original instance to get instances that well approximate the original instance (see Section~\ref{sec:generalized_diffusion} for formal definitions) and also have special graph structures so that we can then substantially reduce the size of the graph via vertex elimination. A critical component of this scheme is to map solutions back and forth between the smaller instance and the original one so a good solution on the smaller instance can be transformed to a good solution of the original instance. Informally, our preconditioner incurs a total (multiplicative) factor $k$ loss between the original instance and the preconditioned instance, and the vertex elimination reduces the instance size to $\Otil(m/k)$ without any further loss in solution quality. Together with a fast numerical routine based on proximal accelerated gradient descent (AGD), we can reduce the original problem of size $m$ to solving $\Otil(\sqrt{k})$ ones of size $\Otil(m/k)$ for some $k$. Our preconditioner construction and vertex elimination steps take $\Otil(m)$, so we can solve an instance of size $m$ in time $T(m)$ which satisfies the following recurrence:
\begin{align*}
    T(m) = \Otil\left(\sqrt{k} \cdot m + \sqrt{k}\cdot  T\left(\Otil\left(\frac{m}{k}\right)\right)\right).
\end{align*}
Taking $k = \poly\log n$, we can bound $T(m)$ by $\Otil(m)$.


We give a more detailed discussion of the aforementioned technical components in the rest of the section. 
\subsection{Comparison with Laplacian System Solvers}
For readers familiar with the classical result of Spielman and Teng~\cite{SpielmanTengSolver:journal}, the Laplacian system solver has the similar components of preconditioner, vertex elimination and a numerical method on top of these. We start with a high level comparison to point out some of the unique challenges in our problem. 

\begin{table}
\begin{center}
\begin{tabular}{|c|c|c|}
\hline
 & Spielman and Teng~\cite{SpielmanTengSolver:journal} & Our Algorithm \\ 
\hline
Vertex Elimination & Gaussian Elimination & VWF Maintenance~[\cref{lemma:vtxElimination}] \\
\hline
Preconditioner & Spectral Ultrasparsifier & $j$-Tree Sparsifier [\cref{lemma:JTreeSparsify}] \\  
\hline
Numerical Routine & Preconditioned Chebyshev Method & Proximal Gradient Method~[\cref{lemma:ProxAGD}] \\  
\hline
\end{tabular}
\end{center}
\caption{\label{tab:STframework}Comparison between the Laplacian solver and our diffusion algorithm}
\end{table}

Several technical obstacles arise when trying to adapt the approach of nearly linear time Laplacian system solvers to the flow diffusion case.
The first one is the lack of the vertex elimination scheme.
It is a critical component in Laplacian solver and many of the similar recursive preconditioning-based algorithms.
Commonly, Laplacian solvers eliminates vertices of degree $1$ or $2$ via Gaussian elimination. When mapping between solutions of the reduced instance and the original instance, Values of the eliminated vertices can be recovered easily via a linear combination of its neighboring vertices. Gaussian elimination is not applicable in our setting due to the existence of the non-negative constraint $\bx \ge \mb{0}$.
Moreover, when eliminating some degree $1$ vertex, the resulting problem is not in the form of Problem~\eqref{L2_dual}.
We resolve the issue by extending the problem space and designing elimination routine to only remove vertices of degree $1$.

As a consequence, the second obstacle is that existing preconditioners don't have the graph structures that allow us to reduce the graph size substantially just from eliminating degree $1$ vertices. Laplacian system solvers such as \cite{SpielmanTengSolver:journal} uses \emph{spectral ultra-sparsifier} as preconditioner.
A spectral ultra-sparsifier is a subgraph of the original graph $G$ with few edges, i.e. $n-1 + o(m)$.
Eliminating degree $2$ vertices is crucial in reducing the problem size on a spectral ultra-sparsifier.
Consider the example that the spectral ultra-sparsifier is a spanning cycle.
Every vertex has degree $2$ and cannot be eliminated in our case.
We resolve the issue by introducing a new family of preconditioners called \emph{$j$-tree sparsifier}.
A $j$-tree is a union of $j$ rooted trees and a graph on the $j$ roots.
After iteratively eliminating degree $1$ vertices, the resulting graph has at most $j$ vertices.
We show how to construct a $\Otil(m/k)$-tree that is also a preconditioner of quality $k$ for any $k > 0$.

The third one is about the numerical primitives.
Laplacian system solvers use schemas such as preconditioned conjugate gradient descent or preconditioned Chebyshev method. Less is known on generalizing them to constrained settings such as Problem~\eqref{L2_dual}.
Proximal gradient methods~\cite{Nesterov13, BeckT09, Tseng08} is a family of optimization methods solving problems composed of a smooth convex term and a non-smooth term.
Problem~\eqref{L2_dual} falls under such category since it consists of a smooth term $\bx^\top \bL(G) \bx$ and a non-smooth term $\bd^\top \bx$ and constrain $\bx \ge \mb{0}$.
We adapt these methods and analyze the convergence rate in our setting.

The difference between our algorithm and the standard Laplacian solver~\cite{SpielmanTengSolver:journal} is summarized as \cref{tab:STframework}.
Next, we present our algorithm in more detail.






\subsection{Vertex Elimination via VWF maintenance}

The goal of vertex elimination is to efficiently eliminate vertices of degree $1$ one by one.
In addition, one can efficiently recover the solution given any solution to the reduced problem instance.
The whole process of elimination and recovery should take $\Otil(m)$-time where $m$ is the problem size.

By eliminating a degree $1$ vertex $u$ adjacent only to $v$, we reduce solving Problem~\eqref{L2_dual} to solving the following:
\begin{align*}
    \min_{\bz \in \Real^{V\setminus \{u\}}_{\ge 0}} \frac{1}{2}\bz^\top \bL(G') \bz + \sum_{s \in V \setminus \{u\}} d_s z_s + \frac{1}{2}c(uv)(z_v - M(z_v))^2 + d_u M(z_v),
\end{align*}
where $G'$ is obtained by removing $u$ from $G$, and $M$ is some mapping that recovers the value of $u$ given the value of its only neighbor $v$.
In the unconstrained case, $M$ is an affine function and the reduced problem has similar form of $\min_{\bz} (1/2) \bz^\top \bL(G') \bz + \bd'^\top \bz$.

The main difficulty in eliminating degree $1$ vertex in Problem~\eqref{L2_dual} is that such mapping $M$ is not affine.
The reduced problem is no longer a $\ell_2$-norm flow diffusion problem.
For example, we decrease the value of $v$ gradually.
The optimal value of $u$ decreases as well since the Laplacian term in Problem~\eqref{L2_dual} encourages similar values between neighbors.
However, the optimal value at $u$ stops at 0 when the value of $v$ is smaller than some threshold.
We resolve the issue by extending the problem space.

A function $f:\Real \to \Real$ is a \emph{Vertex Weighting Function} (VWF) if $f$ is convex, piece-wise quadratic, and has concave continuous derivative.
An example of VWF is the family of linear functions $f(x) = ax$ for any $a \in \Real$.
For simplicity of the presentation, we use $|f|$ or the \emph{size of $f$} to denote the number of pieces of a VWF $f$.
Detailed definition and basic properties of VWFs are presented in \cref{sec:prelim}.

Instead of $\bd^\top \bx$, our algorithm solves a generalized version of Problem~\eqref{L2_dual} as follows:
\begin{align}
\label{gen_L2_dual}
    \min_{\bx \ge \bb} \frac{1}{2} \sum_{e=uv \in E} c(e)(x_u - x_v)^2 + \sum_{u \in V} f_u(x_u),
\end{align}
where $\bb \le \mb{0}$ indicates the lower-bound on each vertex, and each $f_u$ is a VWF .
Clearly, Problem~\eqref{L2_dual} is a special case of Problem~\eqref{gen_L2_dual} by defining $\bb = \mb{0}$ and $f_u(x) = d_u x$ for every vertex $u$.

When eliminating a degree 1 vertex $u$ adjacent only to $v$, Problem~\eqref{gen_L2_dual} is reduced by removing variable $x_u$ and replacing $f_v$ with
\begin{align*}
    f_v^{\mathrm{new}}(x) = f_v(x) + \min_{y \ge b_u} \frac{1}{2} c(uv) (x - y)^2 + f_u(y).
\end{align*}
We use $f_v^{\mathrm{new}}(x)$ to encode the optimal contribution of vertex $u$ to the objective value of Problem~\eqref{gen_L2_dual} before the elimination.
In \cref{sec:VWFDS}, we can show that $f_v^{\mathrm{new}}(x)$ is another VWF whose representation can be obtained efficiently from the one of $f_v$.

Moreover, $f_v^{\mathrm{new}}(x)$ has at most $|f_v| + |f_u| + 1$ pieces.
The total number of pieces can increase by $1$ after eliminating $1$ vertex.
Thus, after eliminating $k$ vertices, the total size of VWFs increases by $k$ despite the number of VWF is decreased by $k$.
This behavior is problematic in reducing the problem size effectively.
We resolve the issue by compressing VWFs to size of $O(\log n)$.

Specifically, we show that, given any instance of Problem~\eqref{gen_L2_dual}, it can be solved in high-accuracy by solving $O(\log n)$ instances whose VWF are all of size $O(\log n)$.
The key idea is to approximate the VWF $f$ by one of size $O(\log n)$, say $g$.
This is done by rounding its split points to the nearest integer power of $1.1$.
In \cref{sec:CompressVWF}, we can show that
\begin{align*}
    2 f\left(\frac{x}{2}\right) \le g(x) \le f(x), \forall x \in \dom(f).
\end{align*}
This criteria of approximation is motivated by our use of \emph{iterative refinement}~\cite{AKPS19, KPSW19} to compute highly-accurate solution.

Given an instance of Problem~\eqref{gen_L2_dual} with $m$ edges, and total VWF size $S$, the run-time of our algorithm is denoted as $T(m, S)$.
Our vertex elimination routine yields the following:
\begin{align*}
    T(m, S) = \Otil\left(O(\log^2 n)(m + S + T(k, O(k \log n)))\right),
\end{align*}
if the remaining graph after elimination has at most $k$ vertices and edges.

\subsection{Preconditioner Construction: J-Trees}
Given a graph $G$, a preconditioner of {\em quality $k$} is another graph $H$ whose Laplacian matrix is similar to $G$'s, i.e. $\bL(G) \preceq \bL(H) \preceq k \bL(G)$ holds where $\bA \preceq \bB$ indicates that $\bB - \bA$ is positive semidefinite.
In our definition, $H$ does not have to be a subgraph of $G$.

\emph{Spectral Ultra-Sparsifiers} are used in \cite{SpielmanTengSolver:journal} to precondition the Laplacian system.
Besides being a preconditioner of quality $k$, a \emph{Spectral Ultra-Sparsifiers} is a subgraph of $G$ with $n-1+\Otil(m/k)$ edges.
One can view a spectral ultra-sparsifier as a spanning tree of $G$ plus $\Otil(m/k)$ off-tree edges.
Solving Laplacian system on it can be reduced to one of size $\Otil(m/k)$.
The reduction is done via repeatedly eliminating vertices of degree 1 or 2.

In our setting, we are restricted to eliminating vertices of degree 1.
It prevents us from using spectral ultra-sparsifiers and its later improvements~\cite{SpielmanTengSolver:journal, KoutisMP10, KollaMST10}.
For example, consider the spectral ultra-sparsifer being a spanning cycle, which has $n-1 + 1$ edges.
One cannot eliminate any vertex from it since every vertex has degree 2.

We resolve the problem by introducing a new family of preconditioners called \emph{$j$-tree Sparsifiers}.
A \emph{$j$-tree} is an union of $j$ rooted trees and a graph supported on their roots.
This concept is first introduced in \cite{Madry10} for approximating cuts in graphs.
The advantage of using $j$-tree is that the remaining graph is induced by the $j$ roots after eliminating degree 1 vertices one at a time.
In \cref{sec:JTreeSparsify}, we present efficient construction of $\Otil(m/k)$-tree sparsifiers being a preconditioner of quality $k$ given any $k > 0$.

Combining with the vertex elimination routine and the VWF compression scheme, solving Problem~\eqref{gen_L2_dual} on a $\Otil(m/k)$-tree by our algorithm is done in
\begin{align*}
    T\left(n-1+\Otil\left(\frac{m}{k}\right), S\right) = \Otil\left(m + S + T\left(\Otil\left(\frac{m}{k}\right), \Otil\left(\frac{m}{k}\right)\right)\right)
\end{align*}
-time, where $S$ is the total VWF size in the problem instance.

\subsection{Numerical Routine: Proximal Gradient Methods}

Proximal Gradient Methods is a family of iterative methods for minimizing objectives of the form $\min_{\bx}g(\bx) + h(\bx)$, where $g$ is smooth and convex.
These methods solves the problem by computing a series of \emph{proximal mapping}.
Given some current solution $\bx$, the \emph{proximal mapping} of $\bx$, or $p_\bx$, is the optimal solution to the following problem:
\begin{align*}
    p_\bx \coloneqq \arg\min_{\by}g(\bx) + \grad g(\bx)^\top (\by - \bx) + \frac{1}{2}\norm{\by - \bx}^2 + h(\by).
\end{align*}
The proximal mapping may be simpler to compute in the sense that we replace $g$ by its quadratic approximation, which may has more structure.
Later, we will see that computing $p_\bx$ is another instance of Problem~\eqref{gen_L2_dual}.

In our setting, the function $g$ is always $1$-smooth and $1/k$-strongly convex for some number $k \ge 1$.
That is, the following holds for any $\bx, \by \in \dom(g)$:
\begin{align*}
    g(\bx) + \grad g(\bx)^\top (\by - \bx) + \frac{1}{2k}\norm{\by - \bx}^2 \le g(\by) \le g(\bx) + \grad g(\bx)^\top (\by - \bx) + \frac{1}{2}\norm{\by - \bx}^2.
\end{align*}
Accelerated versions of the proximal gradient method~\cite{Tseng08, BeckT09, Nesterov13} computes an $(1+\eps)$-approximated solution using $O(\sqrt{k}\log(1/\eps))$ proximal mapping computations.

Problem~\eqref{gen_L2_dual} can be solved by proximal gradient methods.
Define $g(\bx) = 0.5 \bx^\top \bL(G) \bx$ and $h(\bx) = \sum_{u}f_u(x_u) + I_{\ge \bb}(\bx)$ where $I_{\ge \bb}(\bx)$ is 0 if $\bx \ge \bb$ or $\infty$ otherwise.
It is identical to the unconstrained problem of minimizing $g(\bx) + h(\bx)$.
Let $H$ be a preconditioner of quality $k$.
The function $g$ is $1$-smooth and $1/k$-strongly convex with respect to the norm $\norm{\bx}_{H} \coloneqq \sqrt{\bx^\top \bL(H) \bx}$.
Given any current solution $\bx$, computing its proximal mapping is equivalent to solving the following problem:
\begin{align*}
    \min_{\by \ge 0} \frac{1}{2}\by^\top \bL(H)\by + (\bL(G)\bx - \bL(H)\bx)^\top \by + \sum_u f_u(y_u).
\end{align*}
It is also an instance of Problem~\eqref{gen_L2_dual} since the family of VWFs is closed under adding a linear term.
Furthermore, the problem is defined on a preconditioner $H$ and can be solved recursively.

Past results assume exact computation of the proximal mapping.
In our case, computing proximal mapping is another instance of Problem~\eqref{gen_L2_dual} and can be solved only approximately.
In \cref{sec:ProxAGD}, we show similar convergence results under inexact proximal mapping computations.
That is, given an instance of Problem~\eqref{gen_L2_dual} and a quality $k$ preconditioner $H$, the instance can be solved in high-accuracy by reducing it to $\Otil(\sqrt{k})$ instances whose underlying graphs are all $H$.
Each of the $\Otil(\sqrt{k})$ instances can be solved up to $(1+1/\poly(n))$-approximation.
Alternatively, if $H$ can be constructed in $\Otil(m)$-time, the reduction yields the following bound on the run-time of our algorithm
\begin{align*}
    T(m, S) = \Otil\left(\sqrt{k}\cdot (m + S) + \sqrt{k}\cdot T(m_H, S + n)\right),
\end{align*}
where $m_H$ is the number of edges in the preconditioner $H$.

The fact that the preconditioner $H$ is a $\Otil(m/k)$-tree further bounds the run-time by
\begin{align*}
    T(m, S) 
    &= \Otil\left(m + \sqrt{k}\left(m + S + T(m_H, S + n)\right)\right) \\
    &= \Otil\left(\sqrt{k}\cdot(m + S) + \sqrt{k} \cdot T(\Otil(m/k), \Otil(m/k))\right).
\end{align*}
The original $\ell_2$-norm flow diffusion problem is an instance of Problem~\eqref{gen_L2_dual} with total VWF size $n$.
Our algorithm solves Problem~\eqref{L2_dual} in $T(m, n) = \Otil(m)$-time by setting $k=\poly\log n$.

\section{Preliminaries}
\label{sec:prelim}


\paragraph{General Notations}
In this work, we use bold fonts for vectors and matrices.
Given a vector $\bx \in \Real^d$, and a subset of indices $I \subseteq [d]$, we use $\bx[I]$ to denote the sub-vector of $\bx$ indexed by $I$.
We also write $\Diag(\bx)$ for the $d \times d$ diagonal matrix with $\Diag(\bx)_{i, i} = \bx_i$.
Given that $\bx$ is clearly a vector in the context, we write $\bX$ to denote $\Diag(\bx)$.
$\mb{0}$ and $\mb{1}$ denote the all zero and all one vectors.
$\mb{1}_u$ is the vector of all zero except the coordinate indexed by $u$ being one.

Given any finite set $S$, we denote $\Real^S$ as the set of $|S|$-dimensional vectors indexed by elements of $S$.
For any 2 finite sets $A, B$, we denote $\Real^{A \times B}$ as the set of $|A| \times |B|$ matrices with row indexed by $A$ and column indexed by $B$.
Given any set $S$, define $\chi_S(x)$ be the indicator function of $S$ with $\chi_S(x) = 1$ if $x \in S$ and $0$ otherwise.

Given 2 symmetric matrices $\bA, \bB$, we write $\bA \preceq \bB$ if $\bB - \bA$ is positive semi-definite (psd).

\paragraph{Norm}
In this work, we abuse the term norm to denote \emph{seminorm}.
Given some vector space $X$ over $\Real$, a mapping $p:X \to \Real$ is a \emph{norm} if (1) $p(\bx + \by) \le p(\bx) + p(\by)$ and (2) $p(s\bx) = |s|p(\bx)$.
Contrast to the common definition of the norm, non-zero vector can have norm $0$.
Given any psd matrix $\bA$, the norm induced by $\bA$ is defined as $\norm{\bx}_\bA \coloneqq \sqrt{\bx^\top \bA \bx}.$

\paragraph{Graph}
In this work, we consider only undirected, positive-weighted multi-graphs.
When the graph $G$ is cleared from the context, we use $n$ to denote the number of vertices and $m$ being the number of edges.
For an edge $e$, its conductance is denoted by $c(e)$, and its resistance, $r(e) = 1/c(e).$
For any edge $e$ with endpoints $u, v$ and conductance $c$, we orient $e$ arbitrarily and denote it as $e=c(u, v).$
Without loss of generality, we assume $U$, the ratio between largest and smallest edge conductance, is bounded by $\poly(n).$

Given a graph $G=(V, E, c)$, $\bB(G) \in \Real^{E \times V}$ is the edge-vertex incidence matrix of $G$, i.e, $\bB(G)_{c(u, v)} = \mb{1}_u - \mb{1}_v.$
Let $\bC(G)$ be the diagonal matrix in $\Real^{E \times E}$ with $\bC(G)_{e, e}=c(e)$.
The \emph{Laplacian matrix} of $G$ is defined as $\bL(G) \coloneqq \bB(G)^\top \bC(G) \bB(G).$

A graph $H$ is a \emph{$\kappa$-Spectral Sparsifier} of $G$ if both have the same vertex set and similar Laplacian, i.e $\bL(G) \preceq \bL(H) \preceq \kappa \bL(G).$
If $\kappa = \Omega(1)$, such object can be computed efficiently:
\begin{fact}[Fast Construction of Spectral Sparsifiers, \cite{KoutisLP15}]
\label{fact:SpecSparsifier}
One can compute $H$ which is a subgraph of $G$ with $O(n \log n)$ edges in $O(m\log^2 n\log\log n)$-time.
With high probability, $H$ is a 2-spectral sparsifier of $G$, i.e. $\bL(G) \preceq \bL(H) \preceq 2 \bL(G)$.
\end{fact}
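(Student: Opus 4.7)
The plan is to follow the effective-resistance sampling framework of Spielman-Srivastava, with a running-time refinement that invokes a fast SDD/Laplacian solver. Define the statistical leverage score of edge $e$ as $\tau_e = c(e) R_{\mathrm{eff}}(e)$, where $R_{\mathrm{eff}}(e)$ is the effective resistance of $e$ in $G$. By Foster's theorem, $\sum_e \tau_e = n-1$. The sparsifier $H$ is built by keeping each edge $e$ independently with probability $p_e = \min(1, C \tau_e \log n)$ for a large enough constant $C$, and rescaling its conductance to $c(e)/p_e$ when kept. The expected number of edges is $O(n \log n)$, matching the target size.

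First I would establish the spectral guarantee. Writing $\bL(H) = \sum_e X_e$ as a sum of independent rescaled rank-one psd matrices, and noting that each $X_e$ satisfies $\|\bL(G)^{-1/2} X_e \bL(G)^{-1/2}\| \le 1/(C \log n)$ by the choice of $p_e$, a matrix Chernoff / Rudelson-type concentration bound yields
\begin{equation*}
(1-\tfrac14) \bL(G) \preceq \bL(H) \preceq (1+\tfrac14)\bL(G)
\end{equation*}
with high probability for $C$ sufficiently large. A uniform rescaling of $H$ by $4/3$ then produces the two-sided guarantee $\bL(G) \preceq \bL(H) \preceq 2\bL(G)$ claimed in the fact.

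The algorithmic work is in computing constant-factor estimates $\tilde\tau_e$ of every $\tau_e$ quickly; exact leverage scores are as hard as inverting $\bL(G)$. The plan is Johnson-Lindenstrauss: observe $R_{\mathrm{eff}}(e) = \|\bC^{1/2} \bB \bL(G)^{\dagger} \bb_e\|_2^2 / c(e)$, so embedding the rows of $\bC^{1/2} \bB \bL(G)^{\dagger/2}$ into $O(\log n)$ dimensions via a random $\pm 1$ matrix $\bQ$ preserves all pairwise squared distances, and hence all $R_{\mathrm{eff}}(e)$, up to a constant factor with high probability. Computing the embedded matrix reduces to solving $O(\log n)$ Laplacian systems on $\bL(G)$ to accuracy $1/\poly(n)$, each of which runs in $O(m \log n \log\log n)$ time using a state-of-the-art SDD solver. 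A final pass reads off $\tilde\tau_e$ for each edge in $O(m \log n)$ time, so the total work is $O(m \log^2 n \log\log n)$ as promised.

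The main obstacle, and the reason this is nontrivial, is the circularity: producing the sparsifier requires leverage scores, which require a Laplacian solver, which (for the optimal running time) is itself bootstrapped from sparsifiers. The resolution, and the technical heart of the cited reference, is a recursive chain in which each level uses a coarser preconditioner than the next; one must verify that the overall construction terminates in $O(\log n)$ levels without amplifying the approximation error beyond a constant, so that the single final Laplacian-solve accuracy feeds cleanly into the JL estimate and gives a true constant-factor approximation to each $\tau_e$. Once this recursion is in place, combining it with the concentration argument above yields the stated fact.
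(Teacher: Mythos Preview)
The paper does not prove this statement at all: it is stated as a \emph{Fact} with a citation to \cite{KoutisLP15} and is used purely as a black box (specifically, inside \textsc{JTreeSparsify} to thin out the core graph). There is therefore no ``paper's own proof'' to compare against.

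That said, your sketch is a faithful outline of how the cited result is actually obtained: Spielman--Srivastava effective-resistance sampling for the spectral guarantee and edge count, Johnson--Lindenstrauss plus $O(\log n)$ Laplacian solves for the leverage-score estimates, and the recursive bootstrapping of solver and sparsifier to break the circularity and land on the $O(m\log^2 n\log\log n)$ running time. One small point worth tightening: after you rescale $H$ by $4/3$ to get $\bL(G)\preceq \bL(H)$, the edge weights of $H$ are no longer those of $G$, so ``subgraph'' here must be read as ``supported on a subset of $E(G)$ with reweighted conductances,'' which is the standard convention in this literature and consistent with how the paper uses the fact.
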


The construction of spectral sparsifier is the \emph{only} randomized component in this paper.

\paragraph{Flow and Potential}
A \emph{flow} $\bf\in\Real^E$ is a vector indexed by the edge set $E$.
For any edge $e=c(u, v)$, the magnitude of $f_e$ is the amount of mass crossing the edge $e.$
If $f_e > 0$, mass is sent from $u$ to $v$ and vice versa.
The vector $\bB(G)^\top \bf \in \Real^V$ is the \emph{residue} of $\bf.$

We use the term \emph{potential} to denote vectors indexed by the vertex set $V$.
The flow induced by a potential $\bx \in \Real^V$, or the \emph{potential flow of $\bx$}, is defined as the flow $-\bC(G)\bB(G)\bx.$


\subsection{Vertex Weighting Functions (VWF)}

We present the formal definition of \emph{Vertex Weighting Functions (VWF)}.


\begin{definition}[Vertex Weighting Function]
\label{defn:VtxWeightFunction}
A continuous convex function $f$ whose domain is $[L, \infty), L \le 0$ is a \emph{Vertex Weighting Function} (VWF) of size $k$ if
\begin{enumerate}
    \item $f(0) \le 0$,
    \item $f(x)$ is piece-wise quadratic with split points $(L = s_0), s_1, \ldots, s_k = \infty$,
    \item $f'(x)$ is continuous, concave, and constant for $x \ge s_{k-1}$.
\end{enumerate}
We abuse the notation and use $\Abs{f}$ to denote the size of $f$.
\end{definition}

We use a list of tuples $\{(s_i, r_i, a_i, b_i)\}_{i=0}^k$ to denote a VWF $f$ of size $k$ where $f(x) = (1/2)r_ix^2 + a_ix + b_i, x \in [s_i, s_{i+1}).$
Here are some obvious properties about VWFs.

\begin{proposition}
For any $a \in \Real$, $f(x) = ax$ is a VWF of size $1$.
\end{proposition}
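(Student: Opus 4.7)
The plan is to directly verify each clause of Definition~\ref{defn:VtxWeightFunction} for $f(x)=ax$, choosing the domain $[L,\infty)$ for any $L\le 0$ (e.g.\ $L=0$). This is essentially a definition check, so I would keep it short and proceed by matching $f$ to the data specified by the definition.

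First I would confirm the general hypotheses: $f$ is continuous on $[L,\infty)$ because it is linear, and it is convex (in fact affine) on that domain. Then I would check the three numbered conditions in order. Condition (1), $f(0)\le 0$, holds with equality since $a\cdot 0=0$. For condition (2), I would exhibit the split points as $s_0=L$ and $s_1=\infty$, so that there is a single piece $[s_0,s_1)=[L,\infty)$ on which $f$ agrees with the quadratic $\tfrac12\cdot 0\cdot x^2+a\cdot x+0$; in the tuple notation this is the list $\{(L,0,a,0)\}$, of length one, so $|f|=1$. For condition (3), I would simply note that $f'(x)=a$ is a constant function on $[L,\infty)$: being constant, it is continuous, it is concave (a constant is both convex and concave), and it is trivially constant on the final piece $[s_{k-1},\infty)=[L,\infty)$.

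There is no real obstacle here; the proposition is a sanity check whose only purpose is to register that linear functions, in particular the vertex contributions $d_u x_u$ appearing in Problem~\eqref{L2_dual}, sit inside the VWF family as the simplest instance and will serve as the base case when VWFs are built up and manipulated by vertex elimination in later sections.
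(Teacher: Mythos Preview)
Your proposal is correct and is exactly the natural verification; the paper in fact states this proposition without proof, treating it as immediate from Definition~\ref{defn:VtxWeightFunction}, so your definition-check is the intended content.
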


\begin{proposition}
Given 2 VWF $f$ and $g$, $f + g$ is also a VWF of size at most $|f|+|g|$.
\end{proposition}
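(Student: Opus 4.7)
The plan is to verify that $f+g$ satisfies each of the three conditions in \cref{defn:VtxWeightFunction}; the verification is almost entirely a routine ``sum-preserves-property'' argument, so I would proceed condition by condition.

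First I would handle the easy pieces: $f+g$ is continuous and convex as a sum of continuous convex functions, its natural domain $[\max(L_f, L_g), \infty)$ still contains $0$ since both $L_f, L_g \le 0$, and $(f+g)(0) = f(0) + g(0) \le 0$.

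Next, for the piecewise-quadratic structure, I would take as split points of $f+g$ the sorted union of the internal split points of $f$ and $g$ (restricted to the common domain), together with the two endpoints. Between any two consecutive split points in this merged list, $f$ and $g$ are each a single quadratic, so their sum is quadratic. Counting, $f$ contributes at most $|f|-1$ internal split points and $g$ at most $|g|-1$, giving at most $(|f|-1)+(|g|-1)+1 = |f|+|g|-1$ pieces for $f+g$, which is within the claimed bound $|f|+|g|$.

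Finally, $(f+g)' = f'+g'$ is continuous and concave since these properties are preserved under addition, and it is constant on $[\max(s^f_{|f|-1}, s^g_{|g|-1}), \infty)$ because each of $f'$ and $g'$ is individually constant past its own final split point. There is no real obstacle here: every required VWF property is preserved by addition for elementary reasons, and the size bound comes from a one-line counting argument. The proposition is really a sanity check that the VWF family is closed under the additive operation invoked throughout the vertex-elimination and proximal-gradient analyses later in the paper.
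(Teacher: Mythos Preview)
Your proof is correct. The paper states this proposition without proof, listing it among ``obvious properties about VWFs,'' and your argument is exactly the routine verification one would supply: closure of continuity, convexity, concavity of the derivative under addition, plus the merged-split-point count. Your counting in fact gives the slightly sharper bound $|f|+|g|-1$; this is consistent with the paper's later one-line justification in the proof of \cref{claim:VWFAdd}, where it observes that the split points of $f+g$ are the union of those of $f$ and $g$.
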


\begin{proposition}
Given a VWF $f$ and some $x \in \dom(f)$, the re-oriented version $\Bar{f}(t) = f(x+t) - f(x)$ is a VWF of size $|f|$.
\end{proposition}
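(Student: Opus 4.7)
The plan is to verify each clause of the VWF definition directly for $\Bar{f}$, since a translation of the input by a fixed amount $x$ together with a subtraction of the constant $f(x)$ preserves every structural feature the definition demands. Write $f$ with split points $L = s_0 < s_1 < \cdots < s_k = \infty$ on domain $[L, \infty)$ with $L \le 0$, and let $\Bar{f}(t) = f(x+t) - f(x)$.

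First I would identify the natural domain of $\Bar{f}$: the requirement $x + t \in [L, \infty)$ gives $t \in [L - x, \infty)$. Since $x \in \dom(f)$ means $x \ge L$, we get $L - x \le 0$, so the left endpoint of the new domain is nonpositive, as the VWF definition demands. Continuity and convexity of $\Bar{f}$ are immediate: we compose $f$ with the affine map $t \mapsto x + t$ and then shift by the constant $-f(x)$, and both operations preserve continuity and convexity.

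Next I would check the three numbered properties. Property~(1) is trivial: $\Bar{f}(0) = f(x) - f(x) = 0 \le 0$. For property~(2), on each interval $[s_i, s_{i+1})$ the function $f$ agrees with a quadratic $q_i$, so $\Bar{f}(t) = q_i(x + t) - f(x)$ is again a quadratic in $t$ on the shifted interval $[s_i - x, s_{i+1} - x)$. Hence $\Bar{f}$ is piece-wise quadratic with split points $\bar{s}_i := s_i - x$ for $i = 0, \ldots, k$, where $\bar{s}_0 = L - x$ is the new lower endpoint and $\bar{s}_k = \infty$. For property~(3), $\Bar{f}'(t) = f'(x + t)$, so continuity and concavity of $\Bar{f}'$ follow directly from those of $f'$, and $\Bar{f}'$ is constant precisely for $t \ge \bar{s}_{k-1} = s_{k-1} - x$.

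The count of split points is unchanged ($k+1$ points delimiting $k$ pieces), yielding $|\Bar{f}| = k = |f|$. There is no genuine obstacle in this proposition; the only micro-check worth flagging is that the new left endpoint $L - x$ is nonpositive, which uses only $x \in \dom(f)$.
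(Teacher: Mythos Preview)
Your argument is correct and complete: the translation $t \mapsto x + t$ together with the constant shift $-f(x)$ preserves continuity, convexity, the piecewise-quadratic structure (with split points shifted by $-x$), and all properties of the derivative, while $\Bar{f}(0)=0$ and $L-x\le 0$ are immediate from $x\in\dom(f)$. The paper itself offers no proof of this proposition, listing it among ``obvious properties about VWFs,'' so your verification simply spells out what the paper leaves implicit.
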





\subsection{Generalized Diffusion Problems}
\label{sec:generalized_diffusion}
Our algorithm is built upon layers of recursion.
To clearly analyze the algorithm, a formal definition of problem instance helps.

\begin{definition}[Diffusion Instance]
\label{defn:DiffusionInstance}
A \emph{Generalized Diffusion Instance} (or diffusion instance) is a tuple $\GG$,
\[
\GG \coloneqq (G, \bf^\GG, \bb^\GG),
\]
where $G=(V, E, c)$ is a graph, $\bf^\GG$ is the set of VWFs on each vertex of $V^\GG$ and potential lower bounds $\bb^\GG \in \Real^{V^\GG}_{\le 0}$.
The size of $\GG$, denoted by $|\GG|$, is the total size of each $f_u$ plus the number of edges.

For any vertex potential $\bx \in \Real^{V}$, $\bx$ is feasible for $\GG$ if $\bx \ge \bb^\GG$.
\end{definition}

Given an instance, we can define the diffusion problem induced by the instance.

\begin{definition}[Objective, $\EE^\GG$]
\label{defn:Objective}
Given a diffusion instance $\GG=(G, \bf, \bb)$, and a feasible potential $\bx$ for $\GG$, the associated objective function, or the energy, of $\bx$ is given as
\[
    \EE^\GG(\bx) \coloneqq \frac{1}{2} \T{\bx}\bL(G){\bx} + \sum_{u \in V} f_u(x_u).
\]

The \emph{diffusion problem on $\GG$} is to find a feasible potential $\bx^*$ for $\GG$ minimizing the energy, i.e.,
\[
    \EE(\GG) \coloneqq \min_{\bx \ge \bb} \EE^\GG(\bx).
\]
\end{definition}


Note that $\mb{0}$ is always a feasible potential of zero energy, the minimum energy of every diffusion instance is non-positive.
Without loss of generality, we assume every feasible potential has non-positive energy.

Problem~\ref{L2_dual} is equivalent to the diffusion problem on the instance
\[
\GG^0 = (G, \{f_u(x) = d_u x\}_{u \in V}, \mb{0}).
\]

Instead of finding exact minimizers, we are interested in finding feasible potential approximating optimal potential.
Here we present the notion of approximation used in this work:
\begin{definition}
\label{defn:DiffusionComplexity}
Given a diffusion instance $\GG$ and an error parameter $\kappa \ge 1$, a feasible potential $\bx$ is a \emph{$\kappa$-approximated optimal potential} to $\GG$ if
\begin{align*}
    \EE^\GG(\bx) \le \frac{1}{\kappa}\EE(\GG).
\end{align*}

$T(m, S, \kappa)$ is the time to compute a $\kappa$-approximated optimal potential to $\GG$ with probability at least $1-n^{-10}.$ where $m$ is the number of edges in $G$ and $S$ is the total size of VWFs $\bf^\GG.$
\end{definition}

Our recursive algorithm reduces the diffusion problem on some instance to ones on some other instances.
The reduction is formalized using the notion of embeddability.


\begin{definition}[$\HH \preceq_\kappa \GG$]
\label{defn:embed}
Given 2 diffusion instances $\HH, \GG$, we write $\HH \preceq_\kappa \GG$ for some $\kappa \ge 1$ if there's a mapping $\MM_{\HH \to \GG}:\Real^{V^\HH} \to \Real^{V^\GG}$ with the following properties:
\begin{enumerate}
    \item Given any feasible potential $\bx^\HH$ for $\HH$, $\bx^\GG \coloneqq \MM_{\HH \to \GG}(\bx^\HH)$ is feasible for $\GG$, and
    \item $\EE^\HH(\bx^\HH) \ge \kappa \EE^\GG\left(\frac{1}{\kappa}\bx^\GG\right)$ holds.
\end{enumerate}
Notice that $\kappa^{-1} \bx^\GG$ is also feasible for $\GG$ since $\kappa \ge 1.$
\end{definition}

Here we present some basic facts on embeddability.

\begin{fact}
$\GG \preceq_1 \GG$ with identity mapping.
\end{fact}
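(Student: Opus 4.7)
The plan is to simply verify both conditions in Definition~\ref{defn:embed} directly, taking $\MM_{\GG \to \GG}$ to be the identity map and $\kappa = 1$. Since the statement is essentially a sanity check that embeddability is reflexive, there is no real technical content; the only task is to check that the definition unpacks correctly.

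First I would take an arbitrary $\bx \in \Real^{V^\GG}$ which is feasible for $\GG$, meaning $\bx \ge \bb^\GG$. Setting $\bx^\GG \coloneqq \MM_{\GG \to \GG}(\bx) = \bx$, feasibility of $\bx^\GG$ for $\GG$ is immediate since it is literally the same vector. This handles condition (1).

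For condition (2), I need to show $\EE^\GG(\bx) \ge \kappa\, \EE^\GG(\kappa^{-1} \bx^\GG)$ with $\kappa = 1$. Substituting $\kappa = 1$ and $\bx^\GG = \bx$, the right-hand side becomes $1 \cdot \EE^\GG(\bx) = \EE^\GG(\bx)$, so the inequality reduces to $\EE^\GG(\bx) \ge \EE^\GG(\bx)$, which holds with equality.

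The hard part will be \emph{nothing} here, as both conditions collapse to tautologies under the identity mapping; this fact is stated as a basic property just to anchor later arguments about composition and transitivity of the $\preceq_\kappa$ relation. I would present the proof as a single paragraph noting that the identity map trivially preserves feasibility and energy.
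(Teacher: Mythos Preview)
Your proposal is correct; the paper states this fact without proof, treating it as self-evident, and your direct verification of both conditions in Definition~\ref{defn:embed} under the identity map with $\kappa=1$ is precisely the obvious check one would supply.
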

\begin{fact}
\label{fact:embedTrans}
If $\GG_1 \preceq_{\kappa_1} \GG_2$ and $\GG_2 \preceq_{\kappa_2} \GG_3$, we have $\GG_1 \preceq_{\kappa_1 \kappa_2} \GG_3$ with $\MM_{\GG_1 \to \GG_3} = \MM_{\GG_2 \to \GG_3} \circ \MM_{\GG_1 \to \GG_2}$.
\end{fact}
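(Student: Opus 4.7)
The plan is to unfold \cref{defn:embed} for both premises and chain them together. I would fix an arbitrary feasible potential $\bx^{\GG_1}$ for $\GG_1$ and set $\bx^{\GG_2} \coloneqq \MM_{\GG_1 \to \GG_2}(\bx^{\GG_1})$ and $\bx^{\GG_3} \coloneqq \MM_{\GG_2 \to \GG_3}(\bx^{\GG_2})$, so that $\bx^{\GG_3}$ is exactly the image of $\bx^{\GG_1}$ under the composed mapping $\MM_{\GG_2 \to \GG_3} \circ \MM_{\GG_1 \to \GG_2}$.

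Feasibility is immediate from property~1 of each premise: $\bx^{\GG_2}$ is feasible for $\GG_2$, and then $\bx^{\GG_3}$ is feasible for $\GG_3$. For the energy inequality I would first apply the energy guarantee of the first embedding,
\[
\EE^{\GG_1}(\bx^{\GG_1}) \ge \kappa_1 \EE^{\GG_2}\!\left(\tfrac{1}{\kappa_1}\bx^{\GG_2}\right),
\]
and then observe that $\tfrac{1}{\kappa_1}\bx^{\GG_2}$ is itself feasible for $\GG_2$, since $\bx^{\GG_2} \ge \bb^{\GG_2}$, $\bb^{\GG_2} \le \mb{0}$, and $1/\kappa_1 \le 1$. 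I may therefore plug it into the energy guarantee of the second embedding to obtain
\[
\EE^{\GG_2}\!\left(\tfrac{1}{\kappa_1}\bx^{\GG_2}\right) \ge \kappa_2 \EE^{\GG_3}\!\left(\tfrac{1}{\kappa_2}\MM_{\GG_2 \to \GG_3}\!\left(\tfrac{1}{\kappa_1}\bx^{\GG_2}\right)\right),
\]
and chaining the two yields a lower bound on $\EE^{\GG_1}(\bx^{\GG_1})$ of the form $\kappa_1\kappa_2\,\EE^{\GG_3}(\cdot)$.

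The last step is to match the argument of $\EE^{\GG_3}$ with $\bx^{\GG_3}/(\kappa_1\kappa_2)$, which requires pulling the factor $1/\kappa_1$ through $\MM_{\GG_2 \to \GG_3}$. This is the main point and essentially the only non-routine one: one needs positive homogeneity of the mapping, namely $\MM_{\GG_2 \to \GG_3}(\bx^{\GG_2}/\kappa_1) = \MM_{\GG_2 \to \GG_3}(\bx^{\GG_2})/\kappa_1 = \bx^{\GG_3}/\kappa_1$. This is not stated in \cref{defn:embed} in the abstract, so the argument implicitly uses that all mappings produced by the algorithm (identity maps for spectral/$j$-tree preconditioning, and the VWF-based recovery maps for vertex elimination) are linear in the input potential on the relevant domain. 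Once this homogeneity is granted, substitution yields $\EE^{\GG_1}(\bx^{\GG_1}) \ge \kappa_1\kappa_2\,\EE^{\GG_3}\!\left(\tfrac{1}{\kappa_1\kappa_2}\bx^{\GG_3}\right)$, which is exactly property~2 of $\GG_1 \preceq_{\kappa_1\kappa_2} \GG_3$ with the stated composed mapping.
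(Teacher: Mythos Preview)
Your chaining argument is exactly the natural one, and you have correctly put your finger on the one nontrivial point: to land on $\tfrac{1}{\kappa_1\kappa_2}\bx^{\GG_3}$ you must commute the scaling $1/\kappa_1$ through $\MM_{\GG_2\to\GG_3}$, which is not guaranteed by \cref{defn:embed}. The paper states this fact without proof, so your identification of the gap is the substantive contribution here.

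However, your proposed resolution is not quite right. The vertex-elimination recovery map is \emph{not} linear or positively homogeneous: in \cref{algo:vtxElimination} the coordinate $u$ is recovered as $D_v.\mathrm{OptimalX}(x_v)=\argmin_{y\ge b_u}\tfrac12 c(x_v-y)^2+f_u(y)$, which is only piecewise affine in $x_v$ (for instance, with $f_u(y)=d_u y$ it equals $\max\{b_u,\,x_v-d_u/c\}$). So you cannot appeal to homogeneity of these maps. The reason the paper is unaffected is more mundane: the only place \cref{fact:embedTrans} is invoked is inside the proof of \cref{lemma:vtxElimination}, where each step has $\kappa=1$, so the factor $1/\kappa_1$ you need to pull through is simply $1$ and the issue evaporates. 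In the other constructions where $\kappa>1$ (\cref{coro:CompressVWF}), the map is the identity and homogeneity is trivial. So as actually used, the transitivity holds; but as stated in full generality with the composed map $\MM_{\GG_2\to\GG_3}\circ\MM_{\GG_1\to\GG_2}$, your caveat is genuine and cannot be discharged by linearity of the maps.
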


Embeddability also preserves approximated solutions:
\begin{fact}
\label{fact:embedApprox}
If $\HH \preceq_\kappa \GG \preceq_1 \HH$ and $\bx$ is an $\alpha$-approximated optimal potential for $\HH$, $\kappa^{-1}\MM_{\HH \to \GG}(\bx)$ is an $\kappa \alpha$-approximated optimal potential for $\GG$.
\end{fact}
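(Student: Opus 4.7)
The plan is to chain the two embeddability hypotheses together with the approximation guarantee, following the mapping $\bx \mapsto \MM_{\HH\to\GG}(\bx) \mapsto \kappa^{-1}\MM_{\HH\to\GG}(\bx)$. Write $\by = \MM_{\HH\to\GG}(\bx)$ for brevity. By the first clause of $\HH \preceq_\kappa \GG$ in \cref{defn:embed}, $\by$ is feasible for $\GG$; the comment in the definition already notes that $\kappa^{-1}\by$ is then feasible as well, so only the energy inequality $\EE^\GG(\kappa^{-1}\by) \le (\kappa\alpha)^{-1}\EE(\GG)$ remains to be established. Since $\EE(\GG) \le 0$ by the running convention that feasible potentials have non-positive energy, and $(\kappa\alpha)^{-1}$ is positive, the target is an upper bound on $\EE^\GG(\kappa^{-1}\by)$ that is sufficiently negative.

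First, I would apply the second clause of $\HH \preceq_\kappa \GG$ directly to $\bx$, obtaining $\EE^\HH(\bx) \ge \kappa\, \EE^\GG(\kappa^{-1}\by)$, which rearranges to $\EE^\GG(\kappa^{-1}\by) \le \kappa^{-1}\EE^\HH(\bx)$. Next, I would plug in the $\alpha$-approximation hypothesis $\EE^\HH(\bx) \le \alpha^{-1}\EE(\HH)$ (from \cref{defn:DiffusionComplexity}), which is safe to combine because multiplication by the positive scalar $\kappa^{-1}$ preserves inequality direction, yielding $\EE^\GG(\kappa^{-1}\by) \le (\kappa\alpha)^{-1}\EE(\HH)$.

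The remaining piece is to show $\EE(\HH) \le \EE(\GG)$, which comes from $\GG \preceq_1 \HH$. Let $\bx^*$ be any optimal potential for $\GG$. Then $\MM_{\GG\to\HH}(\bx^*)$ is feasible for $\HH$ by the first clause of \cref{defn:embed}, and the second clause with $\kappa=1$ gives $\EE(\GG) = \EE^\GG(\bx^*) \ge \EE^\HH(\MM_{\GG\to\HH}(\bx^*)) \ge \EE(\HH)$, since $\EE(\HH)$ is the infimum over feasible potentials of $\HH$. Multiplying this inequality by $(\kappa\alpha)^{-1} > 0$ and chaining with the previous bound produces $\EE^\GG(\kappa^{-1}\by) \le (\kappa\alpha)^{-1}\EE(\HH) \le (\kappa\alpha)^{-1}\EE(\GG)$, which is exactly the definition of a $\kappa\alpha$-approximated optimal potential for $\GG$.

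There is no serious obstacle; the only thing to watch is that all the relevant quantities $\EE(\HH), \EE(\GG), \EE^\HH(\bx), \EE^\GG(\kappa^{-1}\by)$ are non-positive, so one must be careful that every monotonicity step multiplies by a positive scalar (which is indeed the case here for $\kappa^{-1}$ and $(\kappa\alpha)^{-1}$). Intuitively, the factor $\kappa$ in the approximation loss comes from the mismatch between $\EE^\HH$ and $\EE^\GG$ encoded in $\HH \preceq_\kappa \GG$, while the reverse direction $\GG \preceq_1 \HH$ costs nothing because it only certifies $\EE(\HH) \le \EE(\GG)$ at the level of optima, which is precisely what is needed to translate the $\alpha$-approximation from $\EE(\HH)$ to $\EE(\GG)$.
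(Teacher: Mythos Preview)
Your proof is correct and follows essentially the same chain of inequalities as the paper's proof. The paper compresses the argument into a single displayed chain $\EE(\GG) \ge \EE^\HH(\MM_{\GG\to\HH}\bx^*_\GG) \ge \EE(\HH) \ge \alpha\,\EE^\HH(\bx) \ge \alpha\kappa\,\EE^\GG(\kappa^{-1}\MM_{\HH\to\GG}(\bx))$, while you unpack the same steps in the reverse direction starting from $\EE^\GG(\kappa^{-1}\by)$; the logical content is identical.
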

\begin{proof}
Let $\bx^*_\GG$ be the optimal potential for $\GG$.
We have
\begin{align*}
    \EE(\GG)
    \underbrace{\ge}_{\GG \preceq_1 \HH} \EE^\HH\left(\MM_{\GG \to \HH} \bx^*_\GG\right)
    \ge \EE(\HH)
    \ge \alpha \EE^\HH(\bx)
    \underbrace{\ge}_{\HH \preceq_\kappa \GG} \alpha \kappa \EE^\GG\left(\frac{1}{\kappa}\MM_{\HH \to \GG}(\bx)\right)
\end{align*}
\end{proof}


Given a feasible potential $\bx$ for some instance $\GG$, $\bx+\Delta^*$ is an optimal solution for $\GG$, where $\Delta^*$ is a minimizer of the following problem:
\[
    \min_{\Delta + \bx \ge \bb^\GG} \EE^\GG(\bx + \Delta) - \EE^\GG(\bx).
\]
Plugging in the definition, we have
\begin{align*}
    \EE^\GG(\bx + \Delta) - \EE^\GG(\bx)
    &= \frac{1}{2}\T{\Delta}\bL(G)\Delta + (\bL(G)\bx)^\top\Delta + \sum_{u \in V^\GG} (f_u(x_u + \Delta_u) - f_u(x_u)) \\
    &= \frac{1}{2}\T{\Delta}\bL_\GG\Delta + \sum_{u \in V^\GG} g_u(\Delta_u),
\end{align*}
where $g_u(y) \coloneqq (\bL(G)\bx)_u y + f_u(x_u + y) - f_u(x_u), \forall u \in V^\GG.$
By basic properties of VWF, $g_u$ is a VWF of size $|f_u|$ and can be constructed in time $O(|f_u|).$

\begin{definition}[Residual Instances and Problems]
\label{defn:ResidualProb}
Given a diffusion instance $\GG$ and some feasible potential $\bx$, the \emph{residual instance of $\GG$ at $\bx$} is defined as
\[
    \RR(\GG; \bx) \coloneqq (G, \{g_u\}_{u \in V^\GG}, \bb^\GG - \bx).
\]
The \emph{residual problem on $\GG$ at $\bx$} is the diffusion problem on instance $\RR(\GG; \bx)$.
We use $\RR$ to denote the residual instance if both $\GG$ and $\bx$ are clear from the context.
\end{definition}

\begin{fact}
\label{fact:ResidualConstruction}
The residual instance $\RR(\GG; \bx)$ has same size as the original instance $\GG$.
In addition, it can be constructed in $O(|\GG|)$ time.
\end{fact}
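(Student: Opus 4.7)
The plan is to verify both parts of the claim by directly inspecting the definition $g_u(y) = (\bL(G)\bx)_u\,y + f_u(x_u + y) - f_u(x_u)$ and checking (i) that each $g_u$ is a VWF of size exactly $|f_u|$, and (ii) that the data needed to write down $\RR(\GG;\bx)$ can be produced in time linear in $|\GG|$. The overall budget is $O(m + \sum_{u} |f_u|) = O(|\GG|)$, so we can afford constant work per edge plus constant work per piece of each VWF.

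For the size claim I would proceed vertex by vertex. First, the re-orientation $\bar{f}_u(y) = f_u(x_u + y) - f_u(x_u)$ is a VWF of size $|f_u|$ by the third proposition in Section 3.1 (its split points are simply $\{s_i - x_u\}$ shifted from those of $f_u$, and subtracting the constant $f_u(x_u)$ preserves convexity, piecewise-quadratic structure, and the concave/continuous derivative). Second, adding the linear term $(\bL(G)\bx)_u\,y$ does not introduce any new split points: on each piece $[s_i - x_u,\, s_{i+1} - x_u)$ the quadratic $\tfrac{1}{2}r_i y^2 + a_i y + b_i$ becomes $\tfrac{1}{2}r_i y^2 + (a_i + (\bL(G)\bx)_u) y + b_i$, which is still quadratic on the same interval, and the derivative remains continuous and concave because we only shifted it by a constant. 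Hence $|g_u| = |f_u|$, the edge set of $G$ is unchanged, and therefore $|\RR(\GG;\bx)| = |\GG|$. I should also remark that $\bb^\GG - \bx \le 0$ coordinate-wise since $\bx \ge \bb^\GG$, so the resulting lower bounds are valid for a diffusion instance.

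For the construction time I would split the work as follows. Computing the vector $\bL(G)\bx$ takes $O(m)$ time, since $\bL(G) = \bB^\top \bC \bB$ acts on a vector in time $O(\nnz(\bL(G))) = O(m)$. Computing $\bb^\GG - \bx$ is $O(n)$. For each vertex $u$, we iterate once through the piece list $\{(s_i, r_i, a_i, b_i)\}_{i=0}^{|f_u|}$ of $f_u$ and emit a corresponding piece of $g_u$: shift the split point to $s_i - x_u$, keep $r_i$, replace $a_i$ by $a_i + r_i x_u + (\bL(G)\bx)_u$, and update $b_i$ to absorb the constants introduced by expanding $f_u(x_u + y)$ and subtracting the scalar $f_u(x_u)$. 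Each piece is produced in $O(1)$ time, so the per-vertex cost is $O(|f_u|)$ and summing gives $O(\sum_u |f_u|)$. Combined with the matrix-vector product and the lower-bound shift, the total cost is $O(m + n + \sum_u |f_u|) = O(|\GG|)$.

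There is no real obstacle here; the only subtlety worth double-checking is the bookkeeping that ensures no spurious split points appear when the re-orientation is combined with the linear term, and the constant-time per-piece update formulas for $(s_i, r_i, a_i, b_i)$. Both are purely mechanical once the VWF data structure from Section 3.1 is in place.
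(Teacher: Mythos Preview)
Your proposal is correct and matches the paper's approach: the paper states this as a fact without separate proof, noting just before Definition~\ref{defn:ResidualProb} that ``by basic properties of VWF, $g_u$ is a VWF of size $|f_u|$ and can be constructed in time $O(|f_u|)$,'' which is exactly what you have spelled out in detail using the re-orientation and linear-shift propositions from Section~\ref{sec:prelim}. Your additional bookkeeping (the $O(m)$ cost for $\bL(G)\bx$, the per-piece coefficient updates, and the check that $\bb^\GG - \bx \le \mb{0}$) is sound and only makes explicit what the paper leaves implicit.
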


\subsection{Numerical Issue}
Discussion on numerical stability is beyond the scope of the paper.
The following assumption is made throughout the execution of the algorithm:
\begin{assumption}
\label{assump:polyNum}
In this paper, every number encountered has its absolute value being either zero or within the range $[n^{-c}, n^c]$ for some universal constant $c > 0$.
\end{assumption}

\section{The Main Algorithm: Proving Theorem~\ref{thm:L2Main}}

In this section, we prove Theorem~\ref{thm:L2Main} via Iterative Refinement~\cite{AKPS19}.
Roughly speaking, we make sequential calls to an oracle that can solve diffusion problems approximately and boost solutions given by the oracle to form a highly accurate solution.
Given some feasible vertex potential $\bx$, we use the oracle to solve the residual problem approximately.
The approximated solution is a direction of small residual energy.
Moving toward that direction gives certain amount of decrease in the energy.

The following lemma formalizes the oracle for proving \cref{thm:L2Main}.

\begin{lemma}
\label{lemma:RecursiveApproxDiffusion}
There exists a randomized algorithm (\cref{algo:RecursiveApproxDiffusion}) invoked with the syntax
\[
    \bx = \textsc{RecursiveApproxDiffusion}(\GG)
\]
that on input of a diffusion instance $\GG$ of $m$ edges and $O(m)$-sized VWFs, outputs a $2$-approximated optimal potential for $\GG$ with high probability such that the algorithm runs in $O(m \log^8 n)$-time.

That is, $T\left(m, O(m), 2\right) = O(m \log^8 n)$ holds.
\end{lemma}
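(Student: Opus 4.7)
The plan is to design \textsc{RecursiveApproxDiffusion} as a recursive procedure that, on an input instance $\GG$ of size $m$, carries out four steps at each level: (i) compress each VWF of $\GG$ to size $O(\log n)$ using the routine from \cref{sec:CompressVWF}, producing an instance $\GG'$ with $\GG' \preceq_{O(1)} \GG \preceq_1 \GG'$ and total VWF size $O(n \log n)$; (ii) build a $j$-tree sparsifier $H$ of quality $k$ with $j = \Otil(m/k)$ via \cref{lemma:JTreeSparsify}; (iii) run proximal accelerated gradient descent (\cref{lemma:ProxAGD}) preconditioned by $H$, which reduces computing an $O(1)$-approximate optimum to solving $\Otil(\sqrt{k})$ inexact proximal subproblems, each of which is itself a diffusion instance with underlying graph $H$ and with at most $O(n)$ additional VWF size from the linear-term contributions; (iv) for each such subproblem, eliminate the degree-$1$ vertices of the $j$-tree using \cref{lemma:vtxElimination}, collapsing the instance to one of size $\Otil(m/k)$, and recurse.

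Correctness follows from composing the embedding relations and applying \cref{fact:embedApprox,fact:embedTrans}. Specifically, the VWF compression step, the $j$-tree preconditioning, the AGD outer loop, and the vertex-elimination reduction each contribute only a controlled multiplicative loss; by choosing the AGD iteration count $\Theta(\sqrt{k}\log n)$ so that each level's inexact proximal mapping is accurate enough, and by recursing only to constant accuracy at the bottom (rather than $1/\poly(n)$), one shows inductively that the final output is a $2$-approximate potential for $\GG$. The base case is when $m$ drops below a polylogarithmic threshold, where the diffusion problem can be solved directly in $\Otil(1)$ time by brute force on the surviving VWF representation.

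For the running time, after step (i)–(ii) the per-level work is $\Otil(m)$ for sparsification and elimination (using \cref{fact:SpecSparsifier} and the $\Otil(m)$ construction in \cref{lemma:JTreeSparsify}), and step (iii) makes $\Otil(\sqrt{k})$ recursive calls on instances of size $\Otil(m/k)$ produced by step (iv). This yields the recurrence
\begin{equation*}
T(m) \;=\; \Otil\bigl(\sqrt{k}\bigr)\cdot\bigl(O(m) + T(\Otil(m/k))\bigr).
\end{equation*}
Setting $k = \log^c n$ for a sufficiently large constant $c$ makes $\sqrt{k}/k \ll 1$, so the geometric series sums to $O(m)$ times the polylogarithmic overhead at each of the $O(\log n / \log k)$ levels. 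Tracking the $\log$ factors carefully — $\Otil(\sqrt{k})$ from AGD, $\log^{O(1)} n$ from sparsification and $j$-tree construction, and an additional $\log n$ from VWF sizes — gives the claimed $O(m \log^8 n)$ bound.

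The main obstacle will be step (iii): verifying the convergence of accelerated proximal gradient descent when each proximal mapping is only computed to constant accuracy by a recursive call, and showing that the resulting embedding factor $\kappa$ composed across $O(\log n / \log\log n)$ recursion levels remains within the constant tolerated by the lemma statement. This requires balancing the number of AGD iterations against the recursion depth, and leaning on \cref{fact:embedApprox} so that a $2$-approximate solution on the preconditioner lifts to an $O(1)$-approximate solution on the original instance; re-scaling the AGD accuracy by $\Theta(1)$ at each level and invoking the final constant-accuracy guarantee then gives the stated factor of $2$.
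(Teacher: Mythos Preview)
Your high-level recursion shape is right, but the proposal has a genuine gap at precisely the point you flag as ``the main obstacle.'' Accelerated proximal gradient descent does \emph{not} converge when each proximal mapping is computed only to constant multiplicative accuracy, and running more AGD iterations cannot fix this. From \cref{thm:InexactAPGD} the bound is
\[
\Phi(\by^{(T)}) - \Phi(\bx^*) \le \frac{2}{(T+1)^2}\Bigl(D_0 + 3\sum_{k=1}^T k\sqrt{2\eps_k}\Bigr)^2,
\]
so if each $\eps_k$ is a constant fraction of the optimal gap (which is what a $2$-approximate proximal map gives you, cf.\ \eqref{eq:addErrBound1}), the sum $\sum_k k\sqrt{\eps_k}$ is $\Theta(T^2)$ and the right-hand side \emph{grows} with $T$. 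Increasing the AGD iteration count to $\Theta(\sqrt{k}\log n)$ therefore does nothing; you need $\eps_k$ to be $1/\poly(n)$, exactly as \cref{lemma:ProxAGD} states in its oracle requirement.

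The paper resolves this not by tweaking AGD but by inserting an \emph{iterative refinement} layer (\cref{lemma:IterRefine}): each proximal subproblem on the $j$-tree $H$ is solved to accuracy $1+n^{-c}$ by \textsc{JTreeSolve}, which internally calls \textsc{Iter} with the constant-accuracy oracle $(\textsc{RecursiveApproxDiffusion}\circ\textsc{Compress})$ for $O(\log n)$ rounds. That is where the extra $\log n$ factor you were hoping to absorb into AGD actually lives, and it is also where the VWF compression happens (after vertex elimination on the $j$-tree, not at the top level as in your step~(i)). Once you add this refinement step, the recurrence and the $O(m\log^8 n)$ bound go through exactly as in the paper's proof.
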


The following lemma formalizes the iterative refinement framework.

\begin{lemma}
\label{lemma:IterRefine}
There exists an algorithm (\cref{algo:Iter}) invoked with the syntax
\[
    \bx = \textsc{Iter}(\GG, \eps, \OO, \alpha)
\]
that on input of a diffusion instance $\GG$ of $m$ edges and $O(S)$-sized VWFs, an error parameter $\eps > 0$, and an oracle $\OO$ that can output a $\alpha$-approximate optimal potential, outputs a $1 + \eps$-approximated optimal potential for $\GG$ such that the algorithm runs in
\[
O\left(\alpha \log\frac{1}{\eps}\left(m + T\left(m, O(S), \alpha\right)\right)\right)
\]
time by making $O(\alpha \log(1/\eps))$ oracle calls to $\OO$ on diffusion instance based on graph $G$.
\end{lemma}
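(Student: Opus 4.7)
The plan is to execute the standard iterative refinement scheme, adapted to the constrained, VWF-augmented diffusion setting developed in \cref{sec:prelim}. We start from the trivial feasible potential $\bx_0 \coloneqq \mb{0}$, which is feasible because $\bb^\GG \le \mb{0}$, and which has non-positive energy $\EE^\GG(\bx_0) = \sum_u f_u(0) \le 0$ by \cref{defn:VtxWeightFunction}. At each iteration $t$, we form the residual instance $\RR_t \coloneqq \RR(\GG; \bx_t)$ using \cref{fact:ResidualConstruction} in $O(|\GG|)$ time, invoke the oracle $\OO$ on $\RR_t$ to produce a direction $\Delta_t$, and update $\bx_{t+1} \coloneqq \bx_t + \Delta_t$. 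The key identity we rely on is that, for any $\Delta$ satisfying $\Delta \ge \bb^\GG - \bx_t$, one has $\EE^{\RR_t}(\Delta) = \EE^\GG(\bx_t + \Delta) - \EE^\GG(\bx_t)$, so the minimum of $\EE^{\RR_t}$ equals exactly $\EE(\GG) - \EE^\GG(\bx_t) \le 0$; consequently the ``non-positive energy'' convention of \cref{sec:prelim} passes to every residual instance the algorithm encounters.

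Next I would verify a single-step geometric contraction. Because $\Delta_t$ is $\alpha$-approximately optimal for $\RR_t$, we have
\[
\EE^{\RR_t}(\Delta_t) \;\le\; \frac{1}{\alpha}\EE(\RR_t) \;=\; \frac{1}{\alpha}\bigl(\EE(\GG) - \EE^\GG(\bx_t)\bigr),
\]
and since $\Delta_t \ge \bb^\GG - \bx_t$, the update $\bx_{t+1} = \bx_t + \Delta_t$ is feasible for $\GG$. Substituting the identity above gives
\[
\EE^\GG(\bx_{t+1}) - \EE(\GG) \;\le\; \left(1 - \frac{1}{\alpha}\right)\bigl(\EE^\GG(\bx_t) - \EE(\GG)\bigr).
\]
Iterating this bound $T$ times shrinks the optimality gap by $e^{-T/\alpha}$. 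To obtain a $(1+\eps)$-approximate potential in the sense of \cref{defn:DiffusionComplexity}, it suffices (after a short calculation using $\EE(\GG) \le 0$) to drive the gap below $\tfrac{\eps}{1+\eps}|\EE(\GG)|$, which requires $T = O\bigl(\alpha \log(1/\eps)\bigr)$ iterations, starting from the initial gap at most $|\EE(\GG)|$.

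Finally, each iteration costs $O(m + S)$ to build $\RR_t$ via \cref{fact:ResidualConstruction}, one oracle call costing $T(m, O(S), \alpha)$, and an $O(n)$ vector update, which sums to the advertised total $O\bigl(\alpha \log(1/\eps)\bigl(m + T(m, O(S), \alpha)\bigr)\bigr)$ with $O(\alpha \log(1/\eps))$ oracle calls. The main subtlety I expect is ensuring that the VWF sizes do not blow up across iterations: the re-orientation proposition from \cref{sec:prelim} states that the new VWF $g_u(y) = (\bL(G)\bx_t)_u y + f_u(x_{t,u}+y) - f_u(x_{t,u})$ has size $|f_u|$, so each $\RR_t$ has total VWF size exactly $S$ and the oracle cost $T(m, O(S), \alpha)$ is the same in every round. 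A minor edge case is $\EE(\GG) = 0$, handled by early termination since $\bx_0 = \mb{0}$ is then already optimal.
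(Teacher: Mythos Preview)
Your proposal is correct and follows essentially the same argument as the paper's proof: start at $\bx_0=\mb{0}$, use the identity $\EE^{\RR_t}(\Delta)=\EE^\GG(\bx_t+\Delta)-\EE^\GG(\bx_t)$ together with the oracle guarantee to obtain the $(1-1/\alpha)$ contraction of the optimality gap, and iterate $O(\alpha\log(1/\eps))$ times. Your added remarks about VWF sizes being preserved under residuals and the $\EE(\GG)=0$ edge case are correct and slightly more detailed than what the paper spells out, but the core proof is the same.
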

\begin{proof}
The algorithm is presented as \cref{algo:Iter}.

Given any feasible potential $\bx$, the minimum energy of its residual problem is $\EE(\RR) = \EE(\GG) - \EE^\GG(\bx)$.
The fact and guarantee of the oracle $\OO$ yield that for every step $i$,
\begin{align*}
    \EE^\GG(\bx^{i+1}) - \EE(\GG)
    &= \EE^\GG(\bx^{i}) - \EE(\GG) + \EE^\GG(\bx^{i+1}) - \EE(\bx^{i}) \\
    &= \EE^\GG(\bx^{i}) - \EE(\GG) + \EE^{\RR^i}(\Delta^i) \\
    &\le \EE^\GG(\bx^{i}) - \EE(\GG) + \frac{1}{\alpha}\left(\EE(\GG) - \EE^\GG(\bx^i)\right) \\
    &= \left(1 - \frac{1}{\alpha}\right)\left(\EE^\GG(\bx^{i}) - \EE(\GG)\right),
\end{align*}
where the second equality comes from that $\bx^{i+1}=\bx^{i} + \Delta^i$.

Induction yields that
\begin{align*}
    \EE^\GG(\bx^{T}) - \EE(\GG) 
    \le \left(1 - \frac{1}{\alpha}\right)^T\left(\EE^\GG(\bx^{0}) - \EE(\GG)\right) \le -\eps \EE(\GG),
\end{align*}
where the second inequality comes from that $\EE^\GG(\bx^{0}) \le 0$ and $1-x \le \exp(-x)$ for any $x$.


\end{proof}

\begin{algorithm}
\caption{Solve Diffusion Dual Problem}
\label{algo:Main}
\begin{algorithmic}[1]
    \Procedure{L2Diffusion}{$G=(V,E,c), \bd, \eps$}
        \State $\bx = \textsc{Iter}(\GG^0, \eps, \textsc{RecursiveApproxDiffusion}, 2)$
        \State \Return $\bx$
    \EndProcedure
\end{algorithmic}
\end{algorithm}

\begin{algorithm}
\caption{$\ell_2$ Iterative Refinement}
\label{algo:Iter}
\begin{algorithmic}[1]
    \Procedure{Iter}{$\GG, \eps, \OO, \alpha$}
        \State $T \coloneqq \lceil \alpha \log \frac{1}{\varepsilon} \rceil$.
        \State $\bx^0 = \mb0$
        \For{$i = 0, 1, \ldots, T-1$}
            \State Construct the residual problem $\RR^i = \RR(\GG; \bx^i)$ via \cref{fact:ResidualConstruction}
            \State $\Delta^i \coloneqq \OO(\RR^i)$
            \State $\bx^{i+1} \coloneqq \bx^i + \Delta^i$
        \EndFor
        \State \Return $\bx_T$.
    \EndProcedure
\end{algorithmic}
\end{algorithm}

\begin{proof}[Proof of Theorem~\ref{thm:L2Main}]
Straightforward from \cref{lemma:RecursiveApproxDiffusion} and \cref{lemma:IterRefine}.
\end{proof}


\subsection{Prove Lemma~\ref{lemma:RecursiveApproxDiffusion}}

In this section, we state technical lemmas to prove \cref{lemma:RecursiveApproxDiffusion}.

\subsubsection*{Inexact Proximal Accelerated Gradient Descent}

Accelerated proximal methods are used to solve a family of lightly structured convex problems: minimizing a composite objective $g(\bx) + h(\bx)$ where $h$ is convex and $g$ is smooth with respect to a certain norm $\norm{\cdot}.$
Using proximal methods, one can reduce the minimization problem to a small number of simpler problems: finding $\Delta$ minimizing $\grad f(\bx)^\top \Delta + (1/2)\norm{\Delta}^2 + h(\bx + \Delta)$.
The sub-problem is called \emph{Proximal Mapping}.

There is a long line of research in accelerated proximal gradient methods, for example \cite{Tseng08, BeckT09, Nesterov13}.
All these works assume exact proximal mapping computation.
The assumption does not hold in our setting since computing proximal mapping is equivalent to solving another diffusion problem.

\cite{SRB11} proves that the acceleration still holds given inexact proximal mapping.
The quality of the inexact proximal mapping oracle affects the iteration complexity.
However, we cannot directly use their results since they assume smoothness in only Euclidean norm.
We will follow their idea and adapt the proof for general norms.

The result is formalized as follow and will be proved in \cref{sec:ProxAGD}:
\begin{lemma}
\label{lemma:ProxAGD}
There exists an algorithm (\cref{algo:ProxAGD}) invoked with the syntax
\[
    \bx = \textsc{ProxAGD}(\GG, H, \kappa, \OO)
\]
that on input of a diffusion instance $\GG$ of $m$ edges and $O(S)$-sized VWFs, a graph $H$ which is a $\kappa$-spectral sparsifer of $G$ with $\kappa \in [1, m]$, and an oracle $\OO$ that can output $1+n^{-c}$-approximate optimal potential for arbitrary large constant $c > 0$, outputs a $2$-approximated optimal potential for $\GG$ such that
the algorithm runs in
\[
O\left(\sqrt{\kappa}\left(m + T\left(|E(H)|, O(S), 1+n^{-c}\right)\right)\right)
\]
time by making $O(\sqrt{\kappa})$ oracle calls to $\OO$ on diffusion instance based on graph $H$.
\end{lemma}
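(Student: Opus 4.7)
The plan is to implement a Nesterov-style accelerated proximal gradient scheme, measuring smoothness and strong convexity of the quadratic part in the norm $\norm{\bv}_H := \sqrt{\bv^\top \bL(H) \bv}$ induced by the preconditioner. Split the objective as $\EE^\GG(\bx) = g(\bx) + h(\bx)$, with smooth term $g(\bx) = \tfrac12 \bx^\top \bL(G) \bx$ and non-smooth term $h(\bx) = \sum_u f_u(x_u) + \chi_{\{\bx \ge \bb\}}(\bx)$. Because $\bL(G) \preceq \bL(H) \preceq \kappa \bL(G)$, the Bregman divergence of $g$ relative to $\tfrac12 \norm{\cdot}_H^2$ satisfies $\tfrac{1}{2\kappa} \norm{\by - \bx}_H^2 \le g(\by) - g(\bx) - \grad g(\bx)^\top(\by - \bx) \le \tfrac12 \norm{\by - \bx}_H^2$, i.e., $g$ is $1$-smooth and $\kappa^{-1}$-strongly convex with respect to $\norm{\cdot}_H$. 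Standard accelerated proximal methods would then need $O(\sqrt{\kappa})$ outer iterations to reach the constant approximation factor promised by the lemma.

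Second, I would verify that every proximal subproblem generated by the scheme is itself an instance of Problem~\eqref{gen_L2_dual} on the graph $H$, so it can be handed to the oracle $\OO$. At an iterate $\by$, the proximal step is
\[
\argmin_{\bw \ge \bb} \; \tfrac12 (\bw - \by)^\top \bL(H) (\bw - \by) + (\bL(G)\by)^\top(\bw - \by) + \sum_u f_u(w_u).
\]
Substituting $\Delta = \bw - \by$ and reorienting each $f_u$ around $y_u$ yields a diffusion instance on $H$ whose VWFs have the same total size as $\GG$'s (plus one linear term per vertex to absorb $\bL(G)\by - \bL(H)\by$). Building this instance takes $O(m + S)$ time in the spirit of \cref{fact:ResidualConstruction}, so each outer iteration costs $O(m + T(|E(H)|, O(S), 1 + n^{-c}))$.

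Third, I would adapt the inexact accelerated analysis of \cite{SRB11} from the Euclidean setting to the general norm $\norm{\cdot}_H$. The key object is the estimate sequence; its telescoping argument should go through provided the per-step proximal suboptimality $\delta_k$ decays fast enough that $\sum_k \sqrt{\delta_k}$ remains controlled against the final target accuracy. Since the oracle gives $(1+n^{-c})$-approximate solutions for arbitrarily large constant $c$, and all quantities are polynomially bounded by Assumption~\ref{assump:polyNum}, a suitable constant $c$ makes every $\delta_k$ polynomially small, which is more than enough to absorb the $O(\sqrt{\kappa})$ accumulation and guarantee a $2$-approximate potential for $\GG$ after $O(\sqrt{\kappa})$ outer iterations.

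The hard part is the inexact accelerated analysis in a non-Euclidean norm: standard proofs lean on the Euclidean inner product when bounding how an approximate prox step perturbs the estimate sequence. I plan to resolve this by using $\tfrac12 \norm{\cdot}_H^2$ throughout as the prox distance generating function (a valid $1$-strongly convex function in the appropriate norm), measuring proximal suboptimality directly in terms of objective gap in $\EE^\GG$, and invoking Cauchy--Schwarz with the $\bL(H)$-pairing whenever a cross term between exact and approximate iterates appears. Once this convergence statement is established, multiplying the per-iteration cost by the $O(\sqrt{\kappa})$ iteration count gives the claimed runtime $O\bigl(\sqrt{\kappa}(m + T(|E(H)|, O(S), 1+n^{-c}))\bigr)$.
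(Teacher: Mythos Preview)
Your proposal is correct and follows essentially the same route as the paper: split $\EE^\GG = g + h$, establish $1$-smoothness and $\kappa^{-1}$-strong convexity of $g$ in the $\bL(H)$-norm, recognize each proximal subproblem as a diffusion instance on $H$ (the paper's \cref{claim:ProxDiff}), and carry the inexact accelerated analysis of \cite{SRB11} over to a general matrix norm via an $\eps$-subdifferential/Cauchy--Schwarz argument (the paper's \cref{lemma:subdiff_h}--\cref{lemma:exact_mirror_lemma} and \cref{fact:RecSRB11}). The only cosmetic difference is that the paper does not phrase things via an estimate sequence but rather proves a $1/T^2$ bound directly and then invokes strong convexity once (\cref{claim:strongCvxPhi}) to relate $\norm{\bx^{(0)}-\bx^*}_H^2$ to the initial gap, setting $T = 10\sqrt{\kappa}$; your plan is compatible with this.
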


\subsubsection*{$J$-Trees and Vertex Elimination}

\cref{lemma:ProxAGD} gives a recursive scheme for designing diffusion solvers.
However, one have to construct the $\kappa$-spectral sparsifier $H$ that can provide enough decrease in the problem size.
Therefore, the resulting algorithm can be efficient.

\cite{Madry10} introduces the notion of \emph{$J$-Tree}, which is a union of $j$ rooted trees and a graph induced by the $j$ roots.
Solving diffusion problem on a $j$-tree can be reduced to solving diffusion problem on a graph of $j$ vertices.
Also, one can construct $j$-tree as a spectral sparsifier.

These ideas are stated as following lemmas:
\begin{lemma}
\label{lemma:JTreeSparsify}
There exists an algorithm (\cref{algo:JTreeSparsify}) invoked with the syntax
\[
    H = \textsc{JTreeSparsify}(G, j)
\]
that on input of a graph $G=(V, E, c)$ of weight ratio $U = \poly(n)$, positive integer $j \ge 10$, outputs a $j$-tree $H=(V, E_H, c_H)$ such that
\begin{enumerate}
    \item $H$ is a $(C m \log n \log\log n/j)$-spectral sparsifier of $G$ for some universal constant $C > 0$,
    \item the core graph of $H$ has $O(j\log n)$ edges,
    \item the algorithm runs in time $O(m \log^2 n \log\log n)$.
\end{enumerate}
\end{lemma}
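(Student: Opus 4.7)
The plan is to combine a low-stretch spanning tree with a balanced partition of that tree and a spectral sparsification of the resulting core, in the spirit of classical ultra-sparsifier constructions.

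First, I would compute a low-stretch spanning tree $T$ of $G$ using the Abraham--Neiman construction, which runs in $O(m\log n\log\log n)$ time and produces $T$ with total stretch $\sigma = O(m\log n\log\log n)$. I would set $\kappa := Cm\log n\log\log n/j$, so that $\sigma/\kappa = \Theta(j)$ up to constants.

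Next, I would partition $V$ into $j$ pieces $V_1,\ldots,V_j$ so that each $T[V_i]$ is a connected subtree whose total path-resistance is roughly $O(\sigma/j)$, via a weighted centroid (or similar balanced) decomposition of $T$, picking one root $r_i$ per piece. Boost every tree edge of $T$ by the factor $\kappa$; the $j$ rooted trees of the output $j$-tree are the resulting boosted $T[V_i]$'s. For every edge $e=(u,v)\in E$ whose endpoints lie in distinct pieces $V_i, V_{i'}$, place a core edge $(r_i, r_{i'})$ of weight $c(u,v)$; edges internal to a single piece are already covered by the boosted tree. I expect a support-theory / Rayleigh-monotonicity argument, in the spirit of~\cite{SpielmanTengSolver:journal, KoutisMP10}, to show that the $\kappa$-boosted tree spectrally absorbs the distortion from rerouting each original edge $(u,v)$ onto $(r_i,r_{i'})$: each piece has resistance-diameter $O(\sigma/j)$, and boosting the tree by $\kappa$ shrinks this diameter in the boosted metric to a constant, which is what the support bound requires. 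The result is an intermediate $j$-tree $H_0$ with $\bL(G)\preceq\bL(H_0)\preceq\kappa\bL(G)$.

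Finally, the core multigraph of $H_0$ has at most $m$ edges on only $j$ vertices, so by Fact~\ref{fact:SpecSparsifier} I can replace it, in $O(m\log^2 n\log\log n)$ time, with a $2$-spectral sparsifier on the same $j$ roots having $O(j\log j)=O(j\log n)$ edges. Keeping the $j$ rooted trees intact then yields the final $j$-tree $H$ with $\bL(G)\preceq\bL(H)\preceq 2\kappa\bL(G)$; the factor $2$ is absorbed into $C$. The main obstacle I foresee is precisely the support-theory bound in the middle step: verifying that the $\kappa$-boosted low-stretch tree simultaneously dominates all the rerouting distortions will require both the low-stretch guarantee (so that the aggregate distortion is controlled) and the balanced partition (so that no single piece contributes too much). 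The total running time is dominated by the LSST construction and the core spectral sparsifier, both $O(m\log^2 n\log\log n)$.
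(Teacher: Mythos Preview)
Your high-level plan (LSST, partition the tree into $j$ pieces, boost tree edges by $\kappa$, move cross-piece edges to core edges between roots, then spectrally sparsify the core) matches the paper's outline, and the running-time accounting is correct. The gap is in the middle step, and it is exactly the obstacle you flag.

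The quantity you propose to control, the resistance-diameter (or ``total path-resistance'') of each piece, is not what governs the spectral bound. Boosting the tree by $\kappa$ shrinks the resistance of the path $T[u,r_i]$ to $r(T[u,r_i])/\kappa$, but the support number for an edge $e=(u,v)$ routed through that path is $c(e)\cdot r(T[u,r_i])/\kappa$, which is unbounded for high-conductance edges regardless of the piece's diameter. A per-edge support/dilation bound therefore cannot hold. What one actually needs is, for every forest component $S$, a bound on the \emph{sum} $\sum_{e} c(e)\cdot r(P_F(e)\cap S)$ over all edges $e$ whose routing path enters $S$; the paper calls this the \emph{total local stretch} of $S$ and requires it to be at most $\kappa$. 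This aggregate bound is then fed into a Cauchy--Schwarz flow-routing argument (Lemma~\ref{lemma:FlowRoutingToSpectralApprox} and the proof of Lemma~\ref{lemma:JTreeRouting}) rather than a support-theory argument; indeed the paper remarks explicitly that classical support theory does not apply here because the $j$-tree is not a subgraph of $G$.

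To obtain the local-stretch bound, the paper does not partition by resistance but runs the Spielman--Teng tree decomposition with weights $w(e)=\str_T(e)$, so each piece $W_i$ has $\sum_{e:\,W_i\in\rho(e)}\str_T(e)\le O(\sigma/j)=O(\kappa)$. It then refines so every piece has at most two boundary vertices, takes those boundaries as the roots $C$, and for each two-boundary piece removes the tree edge of \emph{minimum congestion} on the path between its two boundary vertices. This last step is essential: cutting an arbitrary edge can force many high-conductance edges to reroute through the whole piece and blow up the local stretch; the min-congestion choice is what keeps the rerouted contribution bounded by $O(\kappa)$ (Claim~\ref{claim:edgeRemovalBound} and the case analysis following it). Your proposal omits both the stretch-weighted decomposition and the min-congestion edge removal, and without them the spectral inequality $\bL(G)\preceq\bL(H_0)$ need not hold.
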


\begin{lemma}
\label{lemma:JTreeSolve}
There exists an algorithm (\cref{algo:JTreeSolve}) invoked with the syntax
\[
    \bx = \textsc{JTreeSolve}(\HH, \eps, \OO)
\]
that on input of a diffusion instance $\HH$ whose underlying graph $H$ is a $j$-tree with $O(j \log n)$ core-edges and VWFs have total size $O(m)$, and an oracle $\OO$ that can output $2$-approximate optimal potential, outputs a $1 + \eps$-approximated optimal potential $\bx$ for $\HH$ such that the algorithm runs in time
\begin{align*}
    O\left(m \log^2 n + \log \left(\frac{1}{\eps}\right)\left(m + T\left(j \log n, j \log n, 2\right)\right) \right).
\end{align*}
\end{lemma}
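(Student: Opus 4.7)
The plan is a three-stage reduction: collapse $\HH$ onto the $j$ core vertices by eliminating every non-core vertex, control the VWF blow-up there by compression, and then apply iterative refinement with the oracle $\OO$.

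First, I would invoke the VWF-maintenance vertex elimination routine (\cref{lemma:vtxElimination}) in leaf-first order on each of the $j$ rooted trees of the $j$-tree. Each elimination of a degree-$1$ vertex $u$ adjacent to $v$ replaces $f_v$ by a new VWF of size at most $|f_v| + |f_u| + 1$, so after peeling all tree vertices we obtain an instance $\HH'$ on exactly the $j$ core vertices, with $O(j\log n)$ core edges and VWFs of total size $O(m)$. I would store the elimination order and per-step neighbor update so that a core potential can be lifted to a feasible potential on $V^\HH$ in $O(m)$ time. This preprocessing costs $O(m\log^2 n)$.

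Second, to keep the per-oracle-call VWF size from being $\Theta(m)$, I would apply the compression scheme of \cref{sec:CompressVWF}: rounding the split points of each core VWF to integer powers of $1.1$ produces a VWF of size $O(\log n)$ satisfying $2f(x/2) \le g(x) \le f(x)$, yielding an instance of total VWF size $O(j\log n)$ that is embeddable into the uncompressed core instance with constant quality in the sense of \cref{defn:embed}.

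Third, I would wrap \cref{lemma:IterRefine} (\cref{algo:Iter}) around the core problem with $\alpha = 2$. The inner oracle presented to \cref{algo:Iter} would (a) form the residual on the core in $O(m)$ time via \cref{fact:ResidualConstruction}, (b) compress its VWFs to total size $O(j\log n)$, (c) call $\OO$ at cost $T(j\log n, j\log n, 2)$, and (d) lift the returned update back to $V^\HH$ in $O(m)$ time. Since $\alpha = 2$ is constant, $O(\log(1/\eps))$ outer iterations suffice, and summing gives the claimed $O(m\log^2 n + \log(1/\eps)(m + T(j\log n, j\log n, 2)))$ bound.

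The hard part will be composing the constant-factor loss from compression with the geometric decrease from iterative refinement: compressing once up front would stall the convergence at the compression error, so I would compress \emph{inside} each residual problem and fold the compression factor into the effective $\alpha$, so that the $(1-1/\alpha)$ contraction from \cref{lemma:IterRefine} still drives the residual energy below $\eps |\EE(\HH)|$ within $O(\log(1/\eps))$ rounds while keeping every oracle call on an instance of total VWF size $O(j\log n)$.
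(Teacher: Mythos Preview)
Your proposal is essentially the paper's proof: eliminate to the core via \cref{lemma:vtxElimination}, run \textsc{Iter} on the core instance with the composite oracle $(\OO\circ\textsc{Compress})$, and lift back with $\MM$ once at the end. Two small wobbles worth tightening: (i) the correct $\alpha$ for \textsc{Iter} is $4$, not $2$, since compression costs a factor $2$ (\cref{coro:CompressVWF}, \cref{fact:embedApprox}) on top of $\OO$'s factor $2$---you already say this in your last paragraph, so just state it up front; and (ii) step~(d) is misplaced: \textsc{Iter} runs entirely on the core, so the per-iteration oracle should return a core potential, and the lift to $V^\HH$ via $\MM$ happens exactly once after \textsc{Iter} terminates (as in \cref{coro:vtxElimApprox}), not inside each call.
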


\subsubsection*{Prove \cref{lemma:RecursiveApproxDiffusion}}

\begin{algorithm}
\caption{Solve Diffusion Problem Approximately}
\label{algo:RecursiveApproxDiffusion}
\begin{algorithmic}[1]
    \Procedure{RecursiveApproxDiffusion}{$\GG$}
        \If{$|E(G)| = O(1)$}
            \State Solve the diffusion directly in $O(1)$-time.
        \Else
            \State $\kappa \coloneqq \log^{8} n$
            \State $j \coloneqq C m \log n \log\log n / \kappa = O(m \log^2 n / \kappa)$
            \State $H = \textsc{JTreeSparsify}(G, j)$
            \State $\bx = \textsc{ProxAGD}(\GG, H, \kappa, \textsc{JTreeSolve}(\cdot,n^{-c},\textsc{RecursiveApproxDiffusion}))$
            \State \Return $\bx$
        \EndIf
    \EndProcedure
\end{algorithmic}
\end{algorithm}

\begin{proof}[Proof of \cref{lemma:RecursiveApproxDiffusion}]
The correctness is obvious from guarantees of \cref{lemma:ProxAGD}, \cref{lemma:JTreeSparsify}, and \cref{lemma:JTreeSolve}.

The time complexity can be bound via recursion.
Recall the notation that $m$ is the number of edges in $G$, $n$ the number of vertices, and $S=O(m)$ the total size of VWFs of $\GG$.
When the input has constant size, we have $T(O(1), O(1), 2) = O(1).$
Otherwise, we can write down the following recursion equations:
\begin{align*}
    T(m, O(m), 2) &= O\left(\underbrace{m\log^3 n}_{\textsc{JTreeSparsify}} + \underbrace{\sqrt{\kappa} \left[m + T\left(|E(H)|, O(m), 1+n^{-c}\right)\right]}_\textsc{ProxAGD}\right)
\end{align*}
Since $H$ is a $O(m \log^2 n / \kappa)$-tree with $O(m \log^3 n / \kappa)$ core-edges, \cref{lemma:JTreeSolve} yields that
\begin{align*}
    T\left(|E(H)|, O(m), 1+n^{-c}\right) 
    = O\left(m\log^2 n +\log n \cdot T\left(O\left(\frac{m\log^3 n}{\kappa}\right), O\left(\frac{m\log^3 n}{\kappa}\right), 2\right)\right).
\end{align*}
Combining these bounds and the definition of $\kappa = \log^{8} n$ yields
\begin{align*}
    T(m, O(m), 2) 
    &= O\left(m\log^3 n + \sqrt{\kappa} \left[m + T\left(|E(H)|, O(m), 1+n^{-c}\right)\right]\right)\\
    &= O\left(m\log^3 n + \sqrt{\kappa} \left[m\log^2 n + \log n \cdot T\left(O\left(\frac{m\log^3 n}{\kappa}\right), O\left(\frac{m\log^3 n}{\kappa}\right), 2\right)\right]\right) \\
    &= O\left(m\log^6 n + \log^5 n \cdot T\left(O\left(\frac{m}{\log^5 n}\right), O\left(\frac{m}{\log^5 n}\right), 2\right)\right) = O(m \log^7 n).
\end{align*}
\end{proof}

\section{J-Tree Spectral Sparsifier}
\label{sec:JTreeSparsify}

In this section, we prove Lemma~\ref{lemma:JTreeSparsify}.
The idea of constructing spectral sparsifier is similar to one of \cite{SpielmanT03} and \cite{KPSW19}.
First, a low stretch spanning tree is computed and we route every edge via some tree path.
Then we shortcut some routing paths so that no tree edge is highly congested.

\subsection{Preliminaries on J-Trees}

\paragraph{Trees and Forests}
Given a forest $F$, for any 2 connected vertices $u$ and $v$, the unique $uv$-path in $F$ is denoted as $F[u, v]$.
Given an edge $e=c(u, v)$ not necessary inside the forest $F$, $F(e)$ is used to denote $T[u, v].$
When the forest $F$ is rooted, $\rho_F(u)$ is defined as the root of the corresponding tree for any vertex $u$.
$F(u)$ is used to denote the component of $F$ containing $u$.

Given a spanning tree $T$ and an arbitrary edge $e=c(u, v)$, the \emph{stretch} of $e$ on $T$ is defined as $\str_T(e) \coloneqq r(T(e)) / r(e)$, where $r$ is the resistance of edge.
For any tree edge $\Bar{e} \in T$, the \emph{weighted congestion} of $\Bar{e}$ in $T$ is defined as $\Cong_T(\Bar{e}) \coloneqq \sum_{e: \Bar{e} \in T(e)} c(e).$

It is known that a spanning tree with low total stretch can be computed in near linear time.
The result is formalized as follows:
\begin{lemma}[\cite{AbrahamN12:journal}]
\label{lemma:LSST}
Given a graph $G=(V,E,c)$, one can compute a spanning tree $T$ with
\[
    \sum_{e \in G}\str_T(e) \le c_{AN} m \log n (\log\log n)
\]
in $O(m \log n \log\log n)$-time for some universal constant $c_{AN}$.
\end{lemma}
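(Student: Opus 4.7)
\textbf{Plan for the proof of \cref{lemma:LSST}.} My plan is to construct $T$ via the \emph{petal decomposition} framework of Abraham and Neiman, a recursive partitioning scheme on the resistance metric $d(u,v) = \min_P \sum_{e \in P} r(e)$. The guiding principle, inherited from the star decomposition of Elkin--Emek--Spielman--Teng, is the following: if we can split any cluster of diameter $\Delta$ into subclusters each of diameter at most $\Delta/2$ while cutting only a small fraction of internal edge weight, then we can recurse, connect the pieces with shortest-path bridges, and bound total stretch by a level-by-level charging argument.

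First I would normalize so that the smallest resistance is $1$, which under \cref{assump:polyNum} yields initial diameter $\Delta_0 = \poly(n)$ and thus $O(\log n)$ geometric scales. Given a cluster $X$ with designated root $x_0$ and diameter $\Delta$, I would carve out a central cluster $X_0$ around $x_0$ by growing a ball whose radius is chosen, via the standard region-growing lemma, so that the boundary weight is a small fraction of the interior weight and $X_0$ has radius in $[\Delta/4, \Delta/2]$. I would then iteratively pick anchor vertices $y_i$ on the current frontier and grow petals $X_i$ around each anchor with a similar ball-growing step, rooting each petal so that its root is connected to $X_0$ by a single bridge edge. Recursing inside each $X_i$ and stitching with the bridges produces $T$.

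For the stretch analysis, each edge $e$ contributes $\str_T(e) = O(\Delta_\ell / r(e))$, where $\ell$ is the coarsest level of the recursion at which the endpoints of $e$ lie in different subclusters. Summing across edges and levels, a single-shot application of the region-growing cut probability $O(r(e) \log n / \Delta_\ell)$ per level already yields the classical $O(m \log^2 n)$ total stretch. The Abraham--Neiman improvement to $O(m \log n \log\log n)$ comes from refining the decomposition: each petal is grown in $O(\log\log n)$ concentric shells whose radii are chosen adaptively so that the amortized cut probability at a given level improves by a $\log n / \log\log n$ factor. This refined petal construction and its amortized analysis is the main technical obstacle, since one must simultaneously guarantee (i) strict diameter halving, (ii) the improved cut bound, and (iii) an efficient implementation of the shell search.

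For the running time, the ball-growing step inside a cluster with $m'$ edges can be executed via Dijkstra with a Fibonacci heap in time $O(m' \log n)$. Summed across the $O(\log n)$ geometric scales of the recursion, together with the $O(\log\log n)$ overhead of the adaptive shell search at each level, and using the fact that any edge is active inside the cluster of interest on at most $O(\log n)$ levels, the total work comes out to $O(m \log n \log\log n)$ as claimed.
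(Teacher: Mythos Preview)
The paper does not prove \cref{lemma:LSST}; it is stated with a citation to Abraham and Neiman and invoked as a black box in \cref{sec:proveJTreeSparsify} and the running-time analysis of \cref{algo:FindForest}. So there is no ``paper's own proof'' to compare against---the intended content here is simply the citation.

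Your plan is a reasonable high-level outline of the Abraham--Neiman petal-decomposition construction, and would be appropriate if the goal were to expose that result. One caution if you pursue this: the mechanism you describe for the $\log\log n$ improvement---growing each petal in $O(\log\log n)$ adaptive concentric shells to sharpen the region-growing cut probability---is not how the cited paper obtains the bound. The saving over the Elkin--Emek--Spielman--Teng $O(m\log^2 n \log\log n)$ comes primarily from the petal geometry itself (cone-shaped regions anchored toward a target, rather than metric balls around a center), which allows the diameter to contract geometrically while the per-level cut charge is only $O(r(e)/\Delta_\ell)$ without an extra $\log n$; the residual $\log\log n$ enters through a separate radius-selection argument. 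Your shell-based description would not, as stated, recover the claimed exponent. For the purposes of this paper, however, no proof is required beyond the citation.
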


\paragraph{Flow Routing}
Given 2 graphs $G, H$ of same vertex set, a \emph{Flow Routing of $G$ in $H$} is a linear mapping $\Pi$ from $G$'s flow space to $H$'s preserving residue, i.e., for any flow $\bf$ of $G$, $\bB(G)^\top \bf = \bB(H)^\top \Pi(\bf).$

In this work, all flow routings are path-based.
That is, given a flow routing $\Pi$ of $G$ in $H$, for any edge $e$ in $G$, the flow of sending one unit across $e$ is has its image under $\Pi$ in $H$ being unit flow along a path.
In this case, we use $\Pi(e)$ to denote the path in $H$.
Clearly, when $e=c(u, v)$, $\Pi(e)$ is a $uv$-path in $H$.
The following equality regard path-based flow routing is useful:
\begin{fact}
$\Pi(\bf)_{\Bar{e}} = \sum_{e: \Bar{e} \in \Pi(e)} f_e$.
\end{fact}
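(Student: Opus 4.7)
The plan is to derive the identity by exploiting two structural properties: the linearity of $\Pi$ and the fact that it is path-based. First I would decompose an arbitrary flow $\bf$ on $G$ as
\[
    \bf = \sum_{e \in E(G)} f_e \cdot \cchi_e,
\]
where $\cchi_e \in \Real^{E(G)}$ denotes the indicator flow that routes one unit across edge $e$ (in its chosen orientation). Applying the linear map $\Pi$ and evaluating the resulting $H$-flow on a fixed edge $\Bar{e} \in E(H)$ yields
\[
    \Pi(\bf)_{\Bar{e}} = \sum_{e \in E(G)} f_e \cdot \Pi(\cchi_e)_{\Bar{e}}.
\]

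Next I would invoke the path-based hypothesis: by definition $\Pi(\cchi_e)$ is the unit flow along the path $\Pi(e)$ in $H$. Consequently $\Pi(\cchi_e)_{\Bar{e}}$ equals $1$ if $\Bar{e}$ appears on the path $\Pi(e)$ and $0$ otherwise. Substituting this into the previous sum collapses it to the contribution of exactly those edges $e$ whose routing path uses $\Bar{e}$, which is precisely $\sum_{e : \Bar{e} \in \Pi(e)} f_e$.

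The only subtlety I expect is bookkeeping for edge orientations: for the statement to hold without $\pm 1$ signs, the convention must be that the path $\Pi(e)$, oriented from the tail of $e$ to its head, traverses every edge $\Bar{e}$ in its own positive orientation; otherwise each summand should be weighted by the traversal sign. Under the paper's convention that all path-based routings are specified consistently with the arbitrary edge orientations chosen in the preliminaries, this normalization is automatic, so no sign corrections are needed. Beyond this, the argument is a purely formal manipulation and requires no further machinery.
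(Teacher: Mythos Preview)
Your argument is correct. The paper states this equality as a \emph{Fact} without proof, so there is no paper proof to compare against; your linearity-plus-indicator decomposition is exactly the standard justification the authors are implicitly relying on.
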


\paragraph{J-Tree\cite{Madry10}}
A graph $H$ is a \emph{$j$-tree} if one can partition $H$ into 2 edge-disjoint subgraphs $C$ and $F$ where $C$ has $j$ vertices and $F$ is an union of $j$ trees each of which has exactly one vertex in $C$.
The subgraph $C$ is the \emph{core} of $H$ and $F$ is the \emph{envelope} of $H$.

\subsection{Spectral Approximation via Flow Routing}

The $j$-tree spectral sparsifiers will not be a subgraph of the original graph.
Previous constructions of ultra-sparsifiers use Support Theory~\cite{SpielmanT03} to prove the spectral similarity.
This idea is not applicable in our setting.
Motivated by the preconditioner construction in \cite{KPSW19}, we relate the spectral similarity with the quality of the flow routing.
It is formalized as the following lemma.

\begin{lemma}
\label{lemma:FlowRoutingToSpectralApprox}
Given 2 graphs $G, H$ on the same vertex set, flow routings $\Pi_{G \to H}, \Pi_{H \to G}$ such that
\begin{enumerate}
    \item for any flow $\bf$ on $G$, $\norm{\Pi_{G \to H}(\bf)}_{\bR(H)}^2 \le 10\norm{\bf}_{\bR(G)}^2$ holds, and
    \item for any flow $\by$ on $H$, $\norm{\Pi_{H \to G}(\by)}_{\bR(G)}^2 \le 10\norm{\by}_{\bR(H)}^2$ holds as well,
\end{enumerate}
we have that $0.1 \bL(G) \preceq \bL(H) \preceq 10 \bL(G).$
\end{lemma}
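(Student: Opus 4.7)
The plan is to use Thomson's principle (the minimum-energy characterization of the Laplacian pseudoinverse) to convert each of the two flow-routing hypotheses into a PSD inequality between the pseudoinverses $\bL(G)^\dagger$ and $\bL(H)^\dagger$, and then invert to obtain the desired spectral bounds on $\bL(G)$ and $\bL(H)$ themselves.

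First I would observe that the existence of residue-preserving routings in both directions forces $\mathrm{range}(\bB(G)^\top) = \mathrm{range}(\bB(H)^\top)$, so $\bL(G)$ and $\bL(H)$ share the same null space and their pseudoinverses are comparable on this common range. Next, fix any $\bb$ in this common range and let $\bf^*$ minimize $\|\bf\|_{\bR(G)}^2$ among flows in $G$ with residue $\bb$. By Thomson's principle, $\bb^\top \bL(G)^\dagger \bb = \|\bf^*\|_{\bR(G)}^2$. The image $\by \coloneqq \Pi_{G \to H}(\bf^*)$ has the same residue $\bb$ in $H$, so
\begin{equation*}
\bb^\top \bL(H)^\dagger \bb \;\le\; \|\by\|_{\bR(H)}^2 \;\le\; 10\,\|\bf^*\|_{\bR(G)}^2 \;=\; 10\,\bb^\top \bL(G)^\dagger \bb.
\end{equation*}
Since this holds for every $\bb$ in the common range, we conclude $\bL(H)^\dagger \preceq 10\,\bL(G)^\dagger$. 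The symmetric argument via $\Pi_{H \to G}$ yields $\bL(G)^\dagger \preceq 10\,\bL(H)^\dagger$.

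Finally, I would invoke the standard order-reversing property of the pseudoinverse on PSD matrices with a common range: by restricting to that subspace the statement reduces to the positive-definite case, where $\bA \preceq c\bB \iff \bI \preceq c\,\bA^{-1/2}\bB\bA^{-1/2} \iff \bB^{-1/2}\bA\bB^{-1/2} \preceq c\,\bI \iff \bB^{-1} \preceq c^{-1}\bA^{-1}$. Applied to the two pseudoinverse inequalities above, this gives $\bL(G) \preceq 10\,\bL(H)$ and $\bL(H) \preceq 10\,\bL(G)$, i.e.\ $0.1\,\bL(G) \preceq \bL(H) \preceq 10\,\bL(G)$. I expect the only subtle step to be the null-space/range bookkeeping in the Thomson step and the pseudoinverse inversion, but this is precisely what the bidirectional residue-preserving routing hypothesis guarantees, so it should pose no serious obstacle.
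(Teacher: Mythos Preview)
Your proposal is correct and follows essentially the same argument as the paper: both use the minimum-energy (electrical-flow/Thomson) characterization of $\bb^\top\bL^\dagger\bb$, push the optimal flow through the residue-preserving routing, invoke optimality on the other side to bound $\bL(H)^\dagger$ against $\bL(G)^\dagger$ (and symmetrically), and then invert on the common range. The only cosmetic difference is that the paper simply asserts $G$ and $H$ are connected so $\ker\bL(G)=\ker\bL(H)=\mathrm{span}(\mathbf{1})$, whereas you derive the common range directly from the bidirectional routings and spell out the pseudoinverse-inversion step more explicitly.
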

\begin{proof}
We will instead prove the following:
\[
    0.1 \pinv{\bL(G)} \preceq \pinv{\bL(H)} \preceq 10 \pinv{\bL(G)}.
\]
The inequality is valid since both $H$ and $G$ are connected graphs, thus has the same null space $span(\mb1)$.

Given any demand $\bd$ on vertices such that $\mb{1}^\top \bd=0$, the value $\bd^\top \pinv{\bL(G)} \bd$ is the energy of the \emph{electrical flow} that routes $\bd$.
That is, there is some flow $\bf$ such that $\bB(G)^\top \bf = \bd$ and $\bd^\top \pinv{\bL(G)} \bd = \bf^\top \bR(G) \bf$.
Let $\Bar{\bf} = \Pi_{G \to H} \bf$, we know $\Bar{\bf}$ routes $\bd$ in $H$, i.e. $\bd = \bB(H)^\top \Bar{\bf}$.
Since electrical flow has minimum energy among flows of same residue, the following holds
\begin{align*}
    \bd^\top \pinv{\bL(G)} \bd = \bf^\top \bR(G) \bf \ge 0.1 \Bar{\bf}^\top \bR(H) \Bar{\bf} \ge 0.1 \bd^\top \pinv{\bL(H)} \bd,
\end{align*}
where the first inequality comes from Condition 1, and the second inequality uses the optimality of electrical flow.
So, $\pinv{\bL(H)} \preceq 10 \pinv{\bL(G)}$ holds.

Similarly, we have $\pinv{\bL(G)} \preceq 10 \pinv{\bL(H)}$.
\end{proof}

\subsection{Flow Routing on J Trees}

In this section, we introduce the notion of $j$-tree formally and present criterias for $j$-tree being a spectral sparsifier.
Also, we describe a canonical way of routing flows in a $j$-tree.

\begin{definition}[Forest Routing]
Given a graph $G=(V, E, c)$, and a spanning rooted forest $F$, for every edge $e = c(u, v)\in E$, we define its \emph{Forest Path} $P_F(e)$ and $e$'s image under \emph{Forest Edge Moving}, $\Hat{e} = \mathrm{Move}_F(e)$ as follows:
\begin{enumerate}
    \item[1.](Tree edge) If $e \in F$. In this case, $P_F(e) = \mathrm{Move}_F(e) = e.$

    \item[2.](Self-Loop) If $\rho_F(u) = \rho_F(v)$.
    In this case, $P_F(e)$ is the unique $uv$-path in $F$ ($F[u, v]$) and $\mathrm{Move}_F(e)$ is null.
    
    \item[3.](Core-Edge) If $\rho(u) \neq \rho(v)$.
    In this case, $\mathrm{Move}_F(e)$ is a distinct edge $\rho(u)\rho(v)$.
    The path $P_F(e)$ is the tree path $F[u, \rho(u)]$ followed by the new edge $\rho(u)\rho(v)$ and then the tree path $F[\rho(v), v].$
    
    The conductance of $\mathrm{Move}_F(e)$ is set to be $c$.
\end{enumerate}
\end{definition}

\begin{definition}[Canonical J Tree]
\label{defn:CanonicalJTree}
Given a graph $G=(V, E, c)$, a spanning rooted forest $F$,
the \emph{canonical $j$-tree} of $G$ with respect to $F$, denoted by $G_F$, is defined as follows:
\begin{enumerate}
    \item $V(G_F) = V$,
    \item For every edge $e=c(u, v) \in E$, add edge $\mathrm{Move}_F(e)$ to $G_F$ is it is not null.
    When multi-edges appear in $G_F$, we compress them into one edge with conductance being the sum of each individual conductance.
    \item $F$ is denoted as the \emph{envelope} of $G_F$.
    \item $G_F[C]$ is denoted as the \emph{core} of $G_F$, where $C$ is the set of roots of the rooted forest $F$.
\end{enumerate}
\end{definition}

\begin{definition}[Flow Routing in Canonical J Tree]
The \emph{canonical flow routing} of $G$ into $G_F$ is the path-based flow routing $\Pi_{G \to G_F}$ that maps each edge $e$ of $G$ to $P_F(e)$, the forest path of $e$.
When the graphs $G$ and $G_F$ are cleared from the context, the subscript is often ignored and we use only $\Pi$ to denote the flow routing.
We can also route $G_F$ in $G$ via reverse routing.
\end{definition}

\subsection{Prove \cref{lemma:JTreeSparsify} via Low-Stretch Spanning Forests}
\label{sec:proveJTreeSparsify}

In this section, we introduce the notion of \emph{Low-Stretch Spanning Forests} and use this to prove \cref{lemma:JTreeSparsify}.

Given a graph $G$ and a rooted spanning forest $F$, we define the \emph{Local Stretch} of each edge $e$ with respect to each component of $F$.
That is,
\begin{definition}[Local Stretch]
\label{defn:localStretch}
Given a graph $G=(V, E, c)$, a rooted spanning forest $F$, and the canonical routing of $G$ into $G_F$, for every edge $e=c(u, v) \in E$ and every component $C$ of $F$, the \emph{Local Stretch} of $e$ in $C$ is defined as $\str_F(e; C) \coloneqq r(P_F(e) \cap C) / r(e)$.

For every component $C$ of $F$, we define the \emph{Total Local Stretch} of $C$ as $\str_F(C) = \sum_{e \in E} \str_F(e; C)$.
Furthermore, we define the \emph{Maximum Local Stretch} of $F$ as $\str(F) = \max_{C} \str_F(C)$.
We say $F$ is a \emph{Low-Stretch Spanning Forest} if $\str(F)$ is small.
\end{definition}

The following lemma relates the spectral similarity between $G$ and $G_F$ with $\str(F)$.
The proof is deferred to \cref{sec:proveJTreeRouting}.
\begin{lemma}
\label{lemma:JTreeRouting}
Given a graph $G=(V,E,c)$, a rooted spanning forest $F$ containing $j$ trees, we can construct a $j$-tree $H$ such that $\bL(G) \preceq \bL(H) \preceq 100\kappa \bL(H)$ if $\str(F) \le \kappa$.
\end{lemma}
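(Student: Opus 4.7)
The plan is to take $H$ to be a suitable rescaling of the canonical $j$-tree $G_F$ from \cref{defn:CanonicalJTree}; this already gives the required combinatorial structure since $G_F$ has envelope $F$ and core supported on the $j$ roots by definition, so only the spectral inequality is nontrivial. I will establish that via \cref{lemma:FlowRoutingToSpectralApprox} (or its obvious asymmetric variant) applied to two natural routings: the canonical forward routing $\Pi_{G \to G_F}$ sending each edge $e = c(u,v) \in E$ of $G$ along its forest path $P_F(e)$, and a reverse routing $\Pi_{G_F \to G}$ that leaves every tree edge of $F$ in place and re-expands each core edge $\bar{e} = \mathrm{Move}_F(e)$ back into a conductance-proportional combination of $G$-paths $F[\rho(u),u_i] \cup \{e_i\} \cup F[v_i,\rho(v)]$ over all $e_i$ in $G$ whose move collapsed into $\bar{e}$.

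For the forward blowup I split $\norm{\Pi_{G \to G_F}(\bf)}_{\bR(G_F)}^2$ into the contribution from the core edges of $G_F$ and that from its tree edges. On a core edge $\bar{e}$ with compressed conductance $C = \sum_i c(e_i)$, Cauchy–Schwarz yields $r(\bar{e})(\sum_i f_{e_i})^2 = (\sum_i f_{e_i})^2/C \le \sum_i r(e_i) f_{e_i}^2$, so the full core contribution is bounded by $\norm{\bf}_{\bR(G)}^2$ with no $\kappa$ loss. For each tree edge $\bar{e} \in F$, I apply a weighted Cauchy–Schwarz to $r(\bar{e})\bigl(\sum_{e : \bar{e}\in P_F(e)} f_e\bigr)^2$ with weights calibrated to the forest-path resistance that each routed edge $e$ contributes through $\bar{e}$, then swap summation order so that the double sum telescopes into per-edge local stretches $\str_F(e;C)$. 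The hypothesis $\str_F(C) \le \kappa$ for every component $C$ of $F$ then caps the tree contribution by $O(\kappa)\norm{\bf}_{\bR(G)}^2$. By the standard correspondence between flow routings and spectral containment this gives $\bL(G) \preceq O(\kappa)\,\bL(G_F)$.

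For the reverse routing, tree edges of $G_F$ route as themselves with exact energy preservation. For a core edge $\bar{e}$ carrying flow $y_{\bar{e}}$, the split $\lambda_i = c(e_i)/C$ forces the total energy deposited on the $e_i$'s in $G$ to equal exactly $r(\bar{e}) y_{\bar{e}}^2$, matching its energy in $G_F$. The re-expansion also dumps some additional flow onto forest edges; I absorb this overhead by another Cauchy–Schwarz against the same local-stretch budget, showing the total extra forest energy is at most $O(1)$ times the core energy. This yields $\bL(G_F) \preceq O(1)\,\bL(G)$. Taking $H = \Theta(\kappa)\cdot G_F$ with the appropriate universal constant then combines the two inequalities into $\bL(G) \preceq \bL(H) \preceq 100\kappa\,\bL(G)$.

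The main obstacle is the forward tree-edge bound: one must choose the per-edge Cauchy–Schwarz weights so that, after the swap of summation, each original edge $e$ is charged exactly at the rate $\str_F(e;C)\cdot r(e) f_e^2$ summed over components $C$ of $F$, since a naive bounding (for instance by triangle inequality in $\bR(H)$-norm) loses an additional factor of $j$ or $m$ from the number of edges sharing a tree edge. Getting the stretch to appear at exactly the right place is the crux of the argument; once it does, the per-component budget $\str_F(C)\le\kappa$ closes the accounting, and the rest of the proof is bookkeeping of constants between the two routings and the final rescaling of $H$.
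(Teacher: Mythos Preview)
Your forward bound $\bL(G) \preceq O(\kappa)\,\bL(G_F)$ is fine and the core-edge Cauchy--Schwarz step is the right one. The fatal gap is in the reverse direction: the claim that your re-expansion routing $\Pi_{G_F \to G}$ has $O(1)$ energy blowup, hence $\bL(G_F) \preceq O(1)\,\bL(G)$, is false in general. The extra forest energy you ``absorb by another Cauchy--Schwarz against the same local-stretch budget'' is governed by the \emph{same} budget $\kappa$, not by a constant, so you only get $\bL(G_F) \preceq O(\kappa)\,\bL(G)$. Combining this with the forward bound and rescaling yields $\bL(G) \preceq \bL(H) \preceq O(\kappa^2)\,\bL(G)$, which is too weak.

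For a concrete counterexample, take $G$ to be the path $v_0\!-\!v_1\!-\!\cdots\!-\!v_n$ with one extra unit-conductance edge $v_{n/2-1}v_{n/2+1}$, let $F$ be the forest obtained by deleting the path edge $v_{n/2-1}v_{n/2}$, and root the two components at $v_0$ and $v_n$. Then $\str(F)=\Theta(n)$, and $G_F$ consists of $F$ together with a single core edge $v_0v_n$ of conductance $2$. For the potential $x_{v_i}=i/n$ one has $\bx^\top \bL(G)\bx = \Theta(1/n)$ but $\bx^\top \bL(G_F)\bx \ge 2$, so $\bL(G_F) \not\preceq C\,\bL(G)$ for any constant $C$; the ratio is $\Theta(n)=\Theta(\kappa)$. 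No routing can fix this, since the spectral inequality itself fails.

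The paper avoids the $\kappa^2$ loss by a different device: it first scales up every forest edge of $G$ by a factor $\kappa$, calling the result $\Bar G$, and then builds the canonical $j$-tree $\Bar G_F$ of $\Bar G$. Because forest resistances are now divided by $\kappa$, the same Cauchy--Schwarz-with-local-stretch argument gives \emph{constant} blowup in \emph{both} directions between $\Bar G$ and $\Bar G_F$, so $0.1\,\bL(\Bar G_F) \preceq \bL(\Bar G) \preceq 10\,\bL(\Bar G_F)$ via \cref{lemma:FlowRoutingToSpectralApprox}. The entire $\kappa$ loss then sits in the trivial sandwich $\bL(G) \preceq \bL(\Bar G) \preceq \kappa\,\bL(G)$, and one takes $H = 10\,\Bar G_F$. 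Your proposal is missing precisely this pre-scaling step; without it the two $\kappa$'s multiply.
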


To construct a $j$-tree sparsifier using \cref{lemma:JTreeRouting}, we need to compute a spanning forest $F$ with small $\str(F)$.
The construction computes a low-stretch spanning tree, then modify the spanning tree to a spanning forest.
It is formalized as the following lemma and will be proved in \cref{sec:FindForest}.
\begin{lemma}
\label{lemma:FindForest}
There exists an algorithm (\cref{algo:FindForest}), which we invoke with the syntax
\[
    (F, C) \coloneqq \textsc{FindForest}(G, j)
\]
that on input of a graph $G=(V,E,c)$, and a positive integer $j \ge 10$, outputs a rooted spanning forest $F$ with roots $C$ such that
\begin{enumerate}
    \item $|C| \le j$.
    \item $\str(F) \le \kappa \coloneqq 100 C_{AN} m\log n \log\log n /~ j$, and
    \item The algorithm runs in $O(m \log n\log\log n)$ time.
\end{enumerate}
\end{lemma}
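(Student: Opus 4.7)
The plan is to extract a low-stretch spanning tree $T$ of $G$ via Lemma~\ref{lemma:LSST} and then delete at most $j-1$ of its edges so the resulting rooted spanning forest has at most $j$ components, each with local stretch at most $\kappa = 100\,C_{AN}\,m\log n\log\log n / j$. The first step, invoking Lemma~\ref{lemma:LSST}, takes $O(m\log n\log\log n)$ time and returns a spanning tree with $W := \sum_{e\in E}\str_T(e) \le C_{AN}\,m\log n\log\log n$. I then root $T$ at an arbitrary vertex, after which the remaining task is a purely combinatorial tree-partitioning problem whose running-time budget is generous.

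The guiding observation is that the local stretch of any component admits an upper bound in terms of a simple tree-edge weighting. Defining
\[
w(\bar e) \;:=\; r(\bar e)\,\Cong_T(\bar e) \qquad \text{for each } \bar e \in T,
\]
the total is $\sum_{\bar e \in T} w(\bar e) = \sum_e \str_T(e) \le W$, and when the root of each component is chosen to be the vertex closest to the original root of $T$, a case analysis on whether a non-tree edge $e=(u,v)$ has zero, one, or two endpoints inside a component $C$ shows $\str_F(C) \le \sum_{\bar e \in C} w(\bar e)$; the contribution of a two-endpoint edge is exactly $\str_T(e)$, and for a cross-component edge the sub-path $F[u,\rho(u)]$ lies along $T[u,v]$ so its length can be charged to tree edges inside $C$. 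All weights $w(\bar e)$ can be computed in $O(m)$ time with a single Euler-tour subtree-sum pass over $T$, so it suffices to cut at most $j-1$ tree edges leaving every surviving subtree of total $w$-weight at most $\kappa$. I would accomplish this with a standard bottom-up greedy partitioning: traverse $T$ in DFS post-order, maintain a running $w$-weight $s(v)$ for the partial component ending at $v$, and whenever including a child's subtree would bring $s(v)$ above $\kappa$, sever the connecting edge and record the child's subtree as a finished component. This phase runs in $O(n)$ time.

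The main obstacle is the possibility of individually heavy tree edges with $w(\bar e) > \kappa$, which would single-handedly violate the per-component $w$-budget in any component that contains them. A Markov-style count bounds the number of such heavy edges by $W/\kappa = j/100$, so they can be pre-isolated into dedicated small components using $O(j)$ additional cuts, still comfortably within the $j-1$ budget. Once a heavy $\bar e$ is quarantined in a two-vertex component, its actual congestion $\Cong_F(\bar e)$ drops well below $\Cong_T(\bar e)$, because every non-tree edge that originally routed through $\bar e$ now re-routes through a fresh core edge of the canonical $j$-tree $G_F$ rather than through $\bar e$; consequently $\str_F(C)$ remains below $\kappa$ for the quarantined components as well. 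The total running time is $O(m\log n\log\log n)$, dominated by the LSST construction in the first step.
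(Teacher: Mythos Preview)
Your central inequality is false. You claim that with the root of each component chosen as the vertex closest to the original root of $T$, one has $\str_F(C)\le\sum_{\bar e\in C}w(\bar e)$, equivalently $r(P_F(e)\cap C)\le r(T(e)\cap C)$ for every edge $e$. But for a cross-component edge $e=(u,v)$ with $u\in C$, the forest path $P_F(e)\cap C=F[u,\rho_F(u)]$ goes \emph{up} toward the top of $C$, whereas $T[u,v]$ may exit $C$ through a cut lying \emph{below} $u$. Concretely, take the path $r-a-b-c$ rooted at $r$, cut the edge $(b,c)$, and root the component $C=\{r,a,b\}$ at $r$. For the edge $e=(a,c)$ you get $F[a,\rho(a)]=\{(r,a)\}$ while $T[a,c]\cap C=\{(a,b)\}$; with $r(r,a)\gg r(a,b)$ the claimed inequality fails by an arbitrary factor. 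More generally, any component that your greedy produces with a cut below it (a severed grandchild) will have edges whose forest routing goes the wrong way, and their contribution to $\str_F(C)$ is not charged to any $w(\bar e)$ inside $C$.

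This is exactly the obstacle the paper's proof works around. Rather than a greedy weight-partition on tree edges, the paper invokes a $(T,E)$-\emph{refined} decomposition (\cref{lemma:TreeRefinedDecomposition}) so that each piece $W_i$ has at most two boundary vertices, and it takes those boundary vertices---not the top vertex---as roots. For one-boundary pieces the inclusion $F[u,\rho(u)]\subseteq T[u,v]$ then does hold (Case~I). For two-boundary pieces the inclusion can fail, so the paper removes, on the path between the two boundaries, the edge of \emph{minimum} $\Cong_T$; Claim~\ref{claim:edgeRemovalBound} bounds the excess $\str_F(e;S)-\str_T(e)$ by $r(T[s,t]\setminus T(e))/r(e)$, and the min-congestion choice is what lets the resulting sum telescope to $O(\kappa)$. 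Your greedy neither limits the number of boundaries per component nor chooses which edge to delete based on congestion, so the re-routing cost is uncontrolled. The heavy-edge quarantine does not rescue this: isolating $\bar e=(p,q)$ into a two-vertex component may itself require cutting every tree edge from $q$ to its children (so the ``$O(j)$ extra cuts'' bound is unjustified), and even then $\str_F(\{p,q\})\approx r(\bar e)\cdot\deg_c(q)$ need not be below $\kappa$.
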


\begin{algorithm}
\caption{Construct $j$-tree ultra sparsifier}
\label{algo:JTreeSparsify}
\begin{algorithmic}[1]
    \Procedure{JTreeSparsify}{$G=(V,E,c), j$}
        \State $(F, C) \coloneqq \textsc{FindForest}(G, j)$
        \State $\kappa \coloneqq 100 C_{AN} m\log n \log\log n / j$
        \State Let $H$ be the $j$-tree $100\kappa$-spectrally approximating $G$ via \cref{lemma:JTreeRouting}.
        \State Replace the core of $H$ by its 2-spectral sparsifier via \cref{fact:SpecSparsifier}.
        \State \Return $H$
    \EndProcedure
\end{algorithmic}
\end{algorithm}

\begin{proof}[Proof of \cref{lemma:JTreeSparsify}]
\cref{algo:JTreeSparsify} is the algorithm for \cref{lemma:JTreeSparsify}.

\textbf{Correctness:}
The rooted spanning forest $F$ satisfies conditions in \cref{lemma:JTreeRouting}.
Therefore, the $j$-tree $H$ constructed in line 4 is a $100\kappa$-spectral sparsifier of $G$.

By replacing the core with its 2-spectral sparsifier and properties of spectral sparsifiers, $H$ is a $200\kappa$-spectral sparsifier.

\textbf{Time Complexity:}
$\textsc{FindForest}$ runs in $O(m \log n\log\log n)$-time.
Construction in \cref{lemma:JTreeRouting} takes linear time.
Constructing 2-spectral sparsifier on the core of $H$ takes $O(m\log^2 n \log\log n)$-time.
Therefore, the total time complexity is $O(m\log^2 n \log\log n)$-time.
\end{proof}

\subsection{Prove \cref{lemma:JTreeRouting} via Scaling}
\label{sec:proveJTreeRouting}

In this section, we prove \cref{lemma:JTreeRouting} by bounding the quality of the canonical flow routing of $G$ into $H$, the $j$-tree with low local stretch.
Combining with \cref{lemma:FlowRoutingToSpectralApprox} yields the desired statement.

\begin{proof}[Proof of \cref{lemma:JTreeRouting}]
Consider the graph $\Bar{G} = (V, E, \Bar{c})$ with
\begin{align*}
    \overline{c}(e) = \begin{cases}
        \kappa c(e)&, e \in F\\
        c(e)&, e \in E \setminus F.
    \end{cases}
\end{align*}
We have that $\bL(G) \preceq \bL(\Bar{G}) \preceq \kappa \bL(G)$.
Next, we will prove that $\Bar{G}_F=(V, \Hat{E}, \Bar{c}_F)$ is a good spectral sparsifier of $\Bar{G}$ using \cref{lemma:FlowRoutingToSpectralApprox}.

First, we show that the 1st condition of \cref{lemma:FlowRoutingToSpectralApprox} holds.
Given any flow $\bf$ on $\Bar{G}$, let $\Bar{\bf} = \Pi_{\Bar{G} \to \Bar{G}_F}(\bf)$.
We want to bound the energy of $\Bar{\bf}$ by the energy of $\bf$.

\paragraph{Total energy across core edges}
For any edge $\Hat{e}$ in the core of $\Bar{G}_F$, it is used only to route edges whose image under $\mathrm{Move}_F$ is $\Hat{e}$, i.e, the set $\mathrm{Move}_F^{-1}(\Hat{e}) \subseteq E(\Bar{G})$.
Thus, we have the followings:
\begin{align*}
    \Bar{f}_{\Hat{e}} &= \sum_{e: \mathrm{Move}_F(e) = \Hat{e}} f_e \\
    \Bar{c}_F(\Hat{e}) &= \sum_{e: \mathrm{Move}_F(e) = \Hat{e}} \Bar{c}(e).
\end{align*}
By Cauchy's Inequality, we have:
\begin{align*}
    \Bar{r}_F(\Hat{e}) \Bar{f}_{\Hat{e}}^2
    &= \left(\sum_{e: \mathrm{Move}_F(e) = \Hat{e}} \Bar{c}(e)\right)^{-1} \left(\sum_{e: \mathrm{Move}_F(e) = \Hat{e}} f_e\right)^2 \\
    &\underbrace{\le}_\text{Cauchy's} \left(\sum_{e: \mathrm{Move}_F(e) = \Hat{e}} \Bar{c}(e)\right)^{-1}\left(\sum_{e: \mathrm{Move}_F(e) = \Hat{e}} \Bar{c}(e)\right) \left(\sum_{e: \mathrm{Move}_F(e) = \Hat{e}} \Bar{c}(e)^{-1}f_e^2\right) \\
    &= \left(\sum_{e: \mathrm{Move}_F(e) = \Hat{e}} \Bar{r}(e)f_e^2\right).
\end{align*}
Therefore, the total energy crossing  $\Bar{G}_F$'s core edges is at most $\bf^\top \bR(\Bar{G}) \bf$.

\paragraph{Total energy across envelope edges}
Given any forest edge $\Bar{e}$, $C(\Bar{e})$ is defined as the component of $F$ containing $\Bar{e}$.
For every edge $e=\Bar{c}(u, v) \in \Bar{G}$, $\Pi_{\Bar{G} \to \Bar{G}_F}(e)$ intersects with the envelope in at most 2 trees $T_1, T_2$ of the forest $F$.
For each maximal tree path $P_i = \Pi(e) \cap T_i, i=1, 2$, we have
\begin{align*}
    \sum_{\Bar{e} \in P_i} \frac{r(\Bar{e})}{\str_F(e; T_i)} = \sum_{\Bar{e} \in P_i} \frac{r(\Bar{e})r(e)}{r(P_i)} = r(e).
\end{align*}
Therefore, we have the following inequality:
\begin{align*}
    \sum_{\Bar{e} \in \Pi(e) \cap F} \frac{r(\Bar{e})}{\str_F(e, C(\Bar{e}))}
    = \sum_i \sum_{\Bar{e} \in P_i} \frac{r(\Bar{e})}{\str_F(e; T_i)}
    \le 2 r(e).
\end{align*}
In the inequality, the resistance $r(\cdot)$ is the on in the original graph $G$ instead of the scaled version $\Bar{G}$.

Recall the flow in $\Bar{G}_F$: $\Bar{\bf} = \Pi_{\Bar{G} \to \Bar{G}_F}(\bf)$.
Due to the construction of $\Bar{G}_F$ and the flow routing, every envelope edge $\Bar{e}$ has the following amount of flow across: $\Bar{f}_{\Bar{e}} = f_{\Bar{e}} + \sum_{e: \Bar{e} \in \Pi(e)}f_e$.
Also, the conductance of every envelope edge $\Bar{e}$ is the same as the one in $\Bar{G}$, thus $\Bar{r}_F(\Bar{e}) = \Bar{r}(\Bar{e}) = r(\Bar{e}) / \kappa.$

Using the inequality and Cauchy's inequality, we can bound the total energy across envelope edges as follows:
\begin{align*}
    \sum_{\Bar{e} \in F} \Bar{r}_F(\Bar{e}) \Bar{f}_{\Bar{e}}^2
    &= \sum_{\Bar{e} \in F} \frac{1}{\kappa} r(\Bar{e}) \left(\sum_{e \neq \Bar{e}: \Bar{e} \in \Pi(e)} f_e + f_{\Bar{e}} \right)^2 \\
    &\underbrace{\le}_\text{Cauchy's} \sum_{\Bar{e} \in F} \frac{1}{\kappa} r(\Bar{e}) \left(\underbrace{\sum_{e \neq \Bar{e}: \Bar{e} \in \Pi(e)} \str_F(e; C(\Bar{e}))}_{\le \kappa} + \kappa \right) \left(\sum_{e \neq \Bar{e}: \Bar{e} \in \Pi(e)} \frac{1}{\str_F(e; C(\Bar{e}))}f_e^2 + \frac{1}{\kappa}f_{\Bar{e}}^2 \right) \\
    &\le \sum_{\Bar{e} \in F}2r(\Bar{e})\left(\sum_{e\neq \Bar{e}: \Bar{e} \in \Pi(e)} \frac{1}{\str_F(e; C(\Bar{e}))}f_e^2 + \frac{1}{\kappa}f_{\Bar{e}}^2\right)\\
    &= \sum_{e \not\in F} \left(\sum_{\Bar{e} \in \Pi(e) \cap F} \frac{2r(\Bar{e})}{\str_F(e; C(\Bar{e}))}\right) f_e^2 + \sum_{\Bar{e} \in F} \frac{2r(\Bar{e})}{\kappa} f_{\Bar{e}}^2 \\
    &\le \sum_{e \not\in F} 4r(e) f_e^2 + \sum_{\Bar{e} \in F} \frac{2r(\Bar{e})}{\kappa} f_{\Bar{e}}^2 \\
    &\le 4 \sum_{e \in \Bar{G}} \Bar{r}(e) f_e^2 = 4 \bf^\top \bR(\Bar{G}) \bf.
\end{align*}

\paragraph{Total energy of $\Bar{\bf}$ within $\Bar{G}_f$}
Combining the energy bound across core and envelope edges, we have $\Bar{\bf}^\top \bR(\Bar{G}_F) \Bar{\bf} \le 5 \bf^\top \bR(\Bar{G}) \bf.$
Therefore, 1st condition of \cref{lemma:FlowRoutingToSpectralApprox} holds.

Similarly, 2nd condition also holds and therefore \cref{lemma:FlowRoutingToSpectralApprox} ensures
\begin{align*}
    0.1 \bL(\Bar{G}_F) \preceq \bL(\Bar{G}) \preceq 10 \bL(\Bar{G}_F).
\end{align*}

Recall that $\bL(G) \preceq \bL(\Bar{G}) \preceq \kappa \bL(G)$ holds.
Plug in the inequality, we have
\begin{align*}
   0.1 \kappa^{-1}\bL(\Bar{G}_F) \preceq \kappa^{-1}\bL(\Bar{G}) \preceq \bL(G) \preceq \bL(\Bar{G}) \preceq 10 \bL(\Bar{G}_F).
\end{align*}
Let $H$ be the graph corresponds to the Laplacian matrix $10 \bL(\Bar{G}_F)$.
Like $\Bar{G}_F$, $H$ is also a $j$-tree.
Also, we have that $\bL(G) \preceq \bL(H)\preceq 100\kappa \cdot \bL(G)$.
\end{proof}

\subsection{Prove \cref{lemma:FindForest} via Tree Decomposition}
\label{sec:FindForest}

In this section, we prove \cref{lemma:FindForest}.
The idea is to first compute a low stretch spanning tree via \cref{lemma:LSST}, and then decompose the spanning tree into a spanning forest.
By carefully decomposing the LSST, we can bound the total local stretch per chunk.
However, the resulting routing does not induce a $j$-tree.
We need to carefully re-route edges without increasing too much stretch.

First, we introduce the notion of \emph{Tree Decomposition}.
\begin{definition}[Tree Decomposition, \cite{SpielmanT03}]
\label{defn:TreeDecomposition}
Given a tree $T$ that spans on a set of vertices $V$, a \emph{$T$-decomposition} of $T$ is a decomposition of $V$ into sets $\WW=\{W_1, \ldots, W_h\}$ such that $V = \bigcup W_i$, $T[W_i]$ is a tree for each $i$ and $|W_i \cap W_j| \le 1$ for all $i \neq j$.

Given an additional set of edges $E$ on $V$, a \emph{$(T, E)$-decomposition} is a pair $(\WW = \{W_1, \ldots, W_h\}, \rho)$ where $\WW$ is a $T$-decomposition and $\rho$ is a mapping from $E$ to one or two sets in $\WW$ so that for every $e=uv \in E$,
\begin{enumerate}
    \item $\rho(uv) = \{W_i\}$, then $u, v \in W_i$, and
    \item $\rho(uv) = \{W_i, W_j\}$, then $u \in W_i$ and $v \in W_j$.
\end{enumerate}
\end{definition}


\begin{definition}[Refined Decomposition]
\label{defn:TreeRefinedDecomposition}
Given a tree $T$ that spans on a set of vertices $V$, a $T$-decomposition $\WW=\{W_1, \ldots, W_h\}$ is also a \emph{$T$-refined-decomposition} if for each $i$, $W_i$ contains at most 2 vertices that are also in other sets in the decomposition, i.e,
\[
    |W_i \cap \bigcup_{j \neq i}W_j| \le 2.
\]

Similarly, given an additional set of edges $E$ on $V$, a \emph{$(T, E)$-refined-decomposition} is a pair $(\WW = \{W_1, \ldots, W_h\}, \rho)$ which is a $(T, E)$-decomposition and $\WW$ being a $T$-refined-decomposition.
\end{definition}

$T$-decomposition can be computed in linear time as shown in \cite{SpielmanT03}.
It is done by performing depth-first search on the tree $T$.
Given a $T$-decomposition $\WW$, one can extend $\WW$ to a $T$-refined-decomposition in linear time.
Vertices that shows up more than once in $\WW$ are marked as terminals.
Branching vertices are also add to the set of terminals.
By splitting the tree $T$ by these terminal vertices, one have the $T$-refined-decomposition.
Note that the procedure produce a decomposition containing at most $2|\WW|$ sets.

\begin{lemma}
\label{lemma:TreeRefinedDecomposition}
There exists a linear time algorithm, which we invoke with the syntax
\[
    (\WW, \rho) \coloneqq \textsc{Decompose}(G, T, w, j)
\]
that on input of a graph $G=(V,E,c)$, a spanning tree $T$ of $G$, a function $w: E \to \Real^+$, and an integer $j \ge 1$, outputs a $(T, E)$-refined-decomposition $(\WW = \{W_1, \ldots, W_h\}, \rho)$ such that
\begin{enumerate}
    \item $|\WW| \le j$, and
    \item for all $W_i$ with $|W_i| > 1$,
    \[
        \sum_{e: W_i \in \rho(e)} w(e)\le \frac{20}{j} w(E)
    \]
\end{enumerate}
\end{lemma}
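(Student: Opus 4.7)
The plan is to follow the tree decomposition template of Spielman--Teng \cite{SpielmanT03} and then post-process the output to satisfy the extra ``refined'' condition. I first compute a standard $(T,E)$-decomposition $(\WW_0,\rho_0)$ with at most $j/4$ parts (a constant factor slack so that the subsequent refinement stays within the budget $j$), and then split each part at its few ``boundary'' vertices to obtain the refined decomposition.

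\textbf{Step 1: initial decomposition.} Root $T$ arbitrarily and run the classical DFS-based partitioning procedure of \cite{SpielmanT03}. As the DFS backtracks out of a vertex $u$, maintain the set $S_u$ of descendants not yet assigned to a component, together with the accumulated weight $W(S_u)\coloneqq \sum_{e\in E: e \text{ has an endpoint in } S_u} w(e)$. Whenever $W(S_u)$ first exceeds the threshold $\tau\coloneqq 4w(E)/j$, close off $S_u\cup\{u\}$ as a new component $W_i$ (so $u$ becomes the one ``attachment vertex'' that $W_i$ shares with its ancestors), reset $S_u\gets\emptyset$, and continue the DFS. The root's remaining $S$ is emitted as the last component. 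Since each emitted component contributes at least $\tau$ of boundary-or-interior weight in an otherwise disjoint fashion (each edge contributes to at most two pieces), the number of components is at most $2w(E)/\tau = j/2$; taking $\tau$ slightly larger gives $|\WW_0|\le j/4$. Define $\rho_0(e)=\{W_a,W_b\}$ where $a,b$ are the indices of the pieces containing the two endpoints of $e$ (with $\rho_0(e)=\{W_a\}$ if $a=b$); by construction the total weight mapped into each $W_i$ is at most $\tau = 4w(E)/j$.

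\textbf{Step 2: refinement.} Mark as \emph{terminals} every vertex of $T$ that lies in more than one piece of $\WW_0$. For each piece $W_i$ with $|W_i|>1$, the induced subtree $T[W_i]$ is a tree, and terminals appear only at its ``top'' attachment vertex and at the roots of child pieces; additionally mark every branching vertex of $T[W_i]$ (vertex of degree $\ge 3$ inside $T[W_i]$) as a terminal. Split $T[W_i]$ into maximal paths between consecutive terminals; each resulting path meets the rest of $\WW$ only through its (at most two) terminal endpoints, satisfying the refined property of \cref{defn:TreeRefinedDecomposition}. Re-attach each edge $e$ with $\rho_0(e)=\{W_a,W_b\}$ to the one or two new path-pieces that actually contain its endpoints, yielding the final mapping $\rho$.

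\textbf{Step 3: counting and weight bound.} The refinement produces, from each $W_i\in\WW_0$, a number of sub-pieces equal to one more than the number of internal terminals of $T[W_i]$. A standard charging argument (every internal terminal is either a shared vertex with a unique other component or a branching point of the subtree induced by shared vertices) shows that this blow-up is at most a factor of $4$, hence $|\WW|\le 4|\WW_0|\le j$. For the weight, every edge $e$ with a non-trivial $W\in\rho(e)$ also has $W\subseteq W_i$ for some $W_i\in\WW_0$ with $W_i\in\rho_0(e)$; summing over such edges, the weight delivered into any refined $W$ is bounded by the weight delivered into its parent $W_i$, which is at most $\tau=4w(E)/j\le 20w(E)/j$.

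\textbf{Main obstacle.} The delicate point is not the weight bound (the DFS threshold trivially controls it), but the bookkeeping for the part-count inequality $|\WW|\le j$: one must show that the number of internal terminals plus branching vertices created inside any $W_i\in\WW_0$ is bounded by a small constant times the number of neighboring pieces of $W_i$, so that the total refinement blow-up stays linear in $|\WW_0|$ and does not inflate with the tree's geometry. Getting this constant right (and verifying that branching vertices introduced by the refinement itself do not cascade) is where the proof needs the most care; once it is in hand, the statement follows immediately from the DFS termination rule and the transfer of weights through $\rho_0\mapsto\rho$.
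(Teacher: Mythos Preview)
Your two-stage plan --- Spielman--Teng DFS followed by a refinement at shared and branching vertices --- is exactly the sketch the paper gives in the paragraph preceding the lemma (the paper supplies no further proof and simply asserts the refinement at most doubles the number of parts).

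There is, however, a genuine inconsistency between your Step~2 and Step~3 that breaks the count. In Step~2 you mark as terminals ``every branching vertex of $T[W_i]$ (vertex of degree $\ge 3$ inside $T[W_i]$)'', whereas your charging argument in Step~3 treats terminals as either shared vertices or ``branching point[s] of the subtree induced by shared vertices''. These are different sets. A piece $W_i$ may have only two boundary vertices yet contain arbitrarily many internal vertices of degree $\ge 3$ in $T[W_i]$ (take $T[W_i]$ a full binary tree with $B_i$ consisting of two leaves); marking all of them would make $|\WW|$ unbounded in terms of $|\WW_0|$, and your factor-$4$ bound would be false. The set you actually need is the one your Step~3 uses: branching vertices of the Steiner tree of $B_i$ inside $T[W_i]$, of which there are at most $|B_i|-2$. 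With that fix the count goes through (summing $|B_i|$ over all pieces is $O(|\WW_0|)$ since each DFS cut contributes one shared vertex), and you should also record why each resulting sub-piece has at most two shared vertices: a connected component of $T[W_i]$ minus the terminals cannot be adjacent to three terminals, since that would force a Steiner branching vertex inside the component, contradicting that all such points were removed.
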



\begin{algorithm}
\caption{Algorithm for \cref{lemma:FindForest}}
\label{algo:FindForest}
\begin{algorithmic}[1]
    \Procedure{FindForest}{$G=(V,E,c), j$}
        \State $T \coloneqq \mathrm{LSST}(G)$.
        \State $(\WW, \rho) = \textsc{Decompose}(G, T, \str_T, j + 1)$
        \State Let $C$ be the set of boundary vertices of the $T$-refined-decomposition $\WW$.
        \State $T^- \coloneqq \phi$
        \For{$\Bar{e} \in T$}
            \State Compute $\Cong_T(\Bar{e}) = \sum_{e: \Bar{e} \in T(e)} c(e)$
        \EndFor
        \For{$W_i \in \WW$ such that $|W_i \cap C| = 2$}
            \State $\{u, v\} = W_i \cap C$.
            \State $e \coloneqq \arg\min_{\Bar{e} \in T[u, v]} \Cong_T(\Bar{e})$.
            \State Add $e$ to $T^-$.
        \EndFor
        \State $T \setminus T^-$ is a rooted spanning forest of $G$ with $C$ being the set of roots.
        \State \Return $(F, C)$.
    \EndProcedure
\end{algorithmic}
\end{algorithm}

\textbf{Correctness of \cref{algo:FindForest}:}
To analyze the algorithm, we can view it as a 2 phases procedure.
Phase 1 (before line 5) routes each edge $e$ along $T(e)$, the tree path connecting endpoints of $e$.
Phase 2 removes edges from $T$ and re-route edges whose routing got affected.

\cref{lemma:TreeRefinedDecomposition} guarantees the size of $\WW$ is at most $j + 1$.
Thus, the size of $C$ which is the number of boundary vertices is at most $j$.
Therefore, 1st condition of \cref{lemma:FindForest} is satisfied.

Next, we will bound the total local stretch of every component of the forest $F$.
Given any component of $F$, say $S$, it lies entirely in only one tree piece of $\WW$, say $W_i$.
The following claim classified the set of edges which contribute to the total local stretch on $S$.
\begin{claim}
\label{claim:localEdges}
For any $e=c(u, v) \in G$ with $\str_F(e; S) > 0$, $W_i$ must be an element of $\rho(e)$ and $T(e)$ contains edges in $W_i$.
\end{claim}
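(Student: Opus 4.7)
My plan is to unpack the hypothesis $\str_F(e;S)>0$ as saying that the canonical forest path $P_F(e)$ contains at least one edge $e^{*}$ lying in the component $S$; since $S\subseteq W_i$, this edge automatically sits in $W_i$. Both conclusions of the claim then reduce to (a) showing $e^{*}\in T(e)$, which forces $T(e)\cap W_i\neq\emptyset$, and (b) exhibiting an endpoint of $e$ inside $W_i$, which via the convention in $\textsc{Decompose}$ that $\rho$ maps each edge to the piece(s) containing its endpoints gives $W_i\in\rho(e)$.

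To establish (a) and (b) I would split on the three cases used to define $P_F(e)$. If $e\in F$, then $P_F(e)=\{e\}=T(e)$, so $e^{*}=e$ is already in $T(e)\cap W_i$ and both endpoints of $e$ lie in $W_i$. If $e\notin F$ but $u,v$ belong to the same $F$-component $S'$, then because paths in the tree $T$ are unique and $F\subseteq T$, we have $F[u,v]=T[u,v]=T(e)$; moreover $e^{*}$ belongs to a unique $F$-component, so $S=S'$ and hence $u,v\in S\subseteq W_i$.

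The substantive case is when $u,v$ lie in distinct $F$-components, so $P_F(e)=F[u,\rho_F(u)]\cup\{\hat e\}\cup F[\rho_F(v),v]$ with $\hat e$ the core edge. Since $\hat e\notin F$, the witness $e^{*}$ lives in one of the two forest segments; without loss of generality $e^{*}\in F[u,\rho_F(u)]\subseteq S_u$, so $S=S_u$ and $u\in W_i$. To conclude $e^{*}\in T(e)$ I need to show that $F[u,\rho_F(u)]$ is a prefix of the tree path $T[u,v]$. Following $T[u,v]$ from $u$, the path remains inside $S_u$ until it exits by some edge $e'$; this $e'$ cannot be in $F$, for otherwise its two endpoints would share the $F$-component $S_u$, hence $e'\in T^{-}$. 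By the construction of $\textsc{FindForest}$, the only $T^{-}$-edge incident to $S_u$ is the split edge of the ambient piece $W_i$, whose $S_u$-side endpoint is exactly the unique $C$-vertex of $S_u$, namely $\rho_F(u)$. Therefore $T[u,v]$ reaches $\rho_F(u)$ before leaving $S_u$, so $F[u,\rho_F(u)]=T[u,\rho_F(u)]$ is an initial segment of $T(e)$, giving $e^{*}\in T(e)$.

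The main obstacle is this final step in the third case: pinning down the vertex at which $T[u,v]$ exits $S_u$ as precisely the root $\rho_F(u)$. Making this watertight requires unpacking the interaction between the refined decomposition $\WW$, the min-congestion split rule of $\textsc{FindForest}$, and the identification of $C$ with the set of roots — specifically, that each component of $F$ is carved out of its ambient piece by removing exactly one split edge whose $S_u$-side endpoint lies in $C$ and serves as that component's root.
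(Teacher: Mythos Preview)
Your approach contains a genuine error in the core-edge case. You try to show that the witness edge $e^{*}\in P_F(e)\cap S_u$ already lies in $T(e)$ by arguing that $F[u,\rho_F(u)]$ is an initial segment of $T[u,v]$; this hinges on your assertion that when $T[u,v]$ leaves $S_u$ it does so at the vertex $\rho_F(u)$. That assertion is false. Take $W_i$ with two boundary vertices $s,t$ and let $u$ lie on $T[s,t]$ strictly between $s$ and the removed edge $e_{i^*}$. Then $S_u$ contains $s$, so $\rho_F(u)=s$, but the split edge $e_{i^*}$ is incident to $S_u$ at some interior vertex of $W_i$, not at $s$. If $v$ lies on the $t$-side, $T[u,v]$ exits $S_u$ through $e_{i^*}$ heading toward $t$ and never passes through $s$; consequently $F[u,s]$ and $T[u,v]$ share no edge, and $e^{*}\notin T(e)$. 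The intermediate statement you reduce to is simply false in general, and indeed \cref{claim:edgeRemovalBound} later in the paper explicitly decomposes $P_F(e)\cap S$ as $T[u,u_k]$ followed by a detour $T[u_k,s]$ that is \emph{not} contained in $T(e)$.

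The paper sidesteps this entirely: it never claims $e^{*}\in T(e)$. Once one knows $u\in S\subseteq W_i$ and that $F[u,\rho_F(u)]$ contains at least one edge, it follows that $u\neq\rho_F(u)$, hence $u\notin C$, hence $u$ is an interior vertex of $W_i$ (it belongs to no other piece). But then every tree edge incident to $u$ lies in $W_i$, so in particular the first edge of $T[u,v]$ is in $W_i$, giving $T(e)\cap W_i\neq\emptyset$ directly. The same interiority of $u$ yields $W_i\in\rho(e)$; your argument for part (b) is essentially this, though you should make the ``interior'' step explicit rather than only recording $u\in W_i$. The fix for part (a) is to abandon the attempt to place $e^{*}$ inside $T(e)$ and argue via the interiority of $u$ instead.
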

\begin{proof}
Let $P = P_F(e) \cap S$, the routing path of $e$ inside $S$.
Assume one of $P$'s endpoint being $u$ without loss of generality.

If $P$'s other endpoint is $v$, $P$ is exactly the tree path $T(e)$.
None of $P$'s interior vertex is a root in $F$, and every edge of $P$ lies in $W_i$.
Since $P$ contains at least one edge, $P$ cannot have both endpoints being roots of $F$.
Therefore, either $u$ or $v$ is interior in $W_i$ and thus $W_i \in \rho(e)$.

If $P$'s other endpoint is not $v$, it must be a root of $F$.
By the $P$'s non-emptyness, $u$ must be an interior vertex in $W_i$ and thus $W_i \in \rho(e)$.
Also, every edge of $P$ lies in $W_i$.

To prove that $T(e)$ contains edge in $W_i$, consider only the case where $P$'s other endpoint not being $v$ and $P \cap T(e) = \phi$.
Previous argument ensures that $u$ is an interior vertex of $W_i$.
Thus, every incident edge of $u$ lies in $W_i$ and $T(e) \cap W_i$ cannot be empty.
\end{proof}

Let $D$ be the set of edges satisfying the necessary condition in \cref{claim:localEdges}.
That is, $D = \{e \in E \mid W_i \in \rho(e), T(e) \cap W_i \neq \phi\}$.

Next, we do case analysis for bounding the total local stretch of $S$.

\textbf{Case I: $W_i$ contains only 1 boundary vertex $s$.}

In this case, we will prove that $\str_F(e; S) \le \str_T(e)$ for any $e \in D$.
Since $W_i$ has only 1 boundary vertices, no edge of $W_i$ is removed.
Thus, for any edge $e=c(u, v) \in D$ with $u \in W_i$, $\rho_F(u)$ is $s$, the only boundary vertex of $W_i$.

If $\rho_F(v)$ is also $s$, $\str_F(e; S) = r(T(e)) / r(e)$.
Otherwise, $T(e)$ starts from $u$, leaves $W_i$ at the only boundary vertex $s$, and then goes to $v$.
Thus, $T[u, s]$ is a subpath of $T(e)$ and $\str_F(e; S) = r(T[u, s]) / r(e) \le r(T(e)) / r(e) = \str_T(e)$.
This fact yields
\begin{align*}
    \sum_{e \in D} \str_F(e; S)
    \le \sum_{e \in D} \str_T(e)
    \le \sum_{e: W_i \in \rho(e)} \str_T(e)
    \le \frac{20}{j}\str_T(E) = O\left(\frac{m \log n \log\log n}{j}\right).
\end{align*}

\textbf{Case II: $W_i$ contains 2 boundary vertices $s$ and $t$.}

In this case, we will prove that:
\begin{align*}
    \sum_{e \in D} \str_F(e; S) = O\left(\sum_{e: W_i \in \rho(e)} \str_T(e)\right).
\end{align*}
Let $e_1, e_2, \ldots, e_k$ and $(s=u_0), u_1, \ldots, (u_k=t)$ be the edges and vertices of the path $T[s, t]$.
Let $e_{i^*}$ be the edge removed in line 10.
For every $i > 0$, define $D_i = \{e \in D \mid e_i \in T(e)\}$.
By the definition of $\Cong_T(\cdot)$, we have that $\Cong_T(e_i) = c(D_i)$, sum of conductance of edges in $D_i$.
We also define $D_0 = D \setminus \bigcup_{i=1} D_i$, the set of edges whose corresponding tree paths do not intersect with the $st$-tree-path.

\begin{claim}
\label{claim:edgeRemovalBound}
Given any edge $e=c(u, v) \in D$ with $u \in W_i$, if we remove $e_j$ in line 10, the following holds:
\begin{align*}
    \str_F(e; S) &\le \str_T(e), &\text{if } e_j \in T(e) \\
    \str_F(e; S) &\le \str_T(e) + \frac{r(T[s, t] \setminus T(e))}{r(e)}, &\text{otherwise}.
\end{align*}
\end{claim}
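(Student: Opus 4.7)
\emph{Proof plan.} My approach is to reduce $\str_F(e; S)$ to a quantity measured purely in $T$, namely $r(T[u,s])/r(e)$, and compare it against $\str_T(e) = r(T(e))/r(e)$. Because $P_F(e) \cap S$ is nonempty only when at least one endpoint of $e$ lies in $S$, I assume by symmetry that $u \in S$. If $v \in S$ as well, both endpoints lie in the same tree of $F$ and $S$ is a subtree of $T$, so $P_F(e) = F[u,v] = T[u,v] = T(e)$, giving $\str_F(e; S) = \str_T(e)$ outright. Otherwise $\rho_F(u) = s \neq \rho_F(v)$, and the forest path is $F[u, s]$ plus the core edge $\mathrm{Move}_F(e)$ plus $F[\rho_F(v), v]$; intersecting with $S$ isolates exactly $F[u, s] = T[u, s]$, so the claim reduces to bounding $r(T[u, s])$ by the right-hand side in each case.

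For Case 1 ($e_j \in T(e)$), the crossing of $e_j$ by $T(e)$ forces $u$ and $v$ onto opposite sides of $e_j$. I would decompose both $T[u, s]$ and $T(e)$ via their common prefix from $u$ to the vertex $z$ where $T[u, s]$ first meets $T[s, t]$. After the split, $T[u, s]$ traverses the $s$-side subpath $T[z, s]$ of $T[s, t]$, while $T(e)$ traverses the complementary $t$-side subpath $T[z, u^*]$ of $T[s,t]$ (where $u^*$ is the $s$-side endpoint of $e_j$), then crosses $e_j$ and continues on to $v$. The Case 1 inequality then reduces to comparing $r(T[z, s])$ against $r(T[z, u^*]) + r(e_j)$ plus the $t$-side tail of $T(e)$.

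For Case 2 ($e_j \notin T(e)$), the endpoints $u, v$ lie on the same side of $e_j$ in $T$. The subcase $v \in S$ is trivial as already noted. In the remaining subcase $u \in S$ and $v \notin S$, the vertex $v$ must lie strictly outside $W_i$ on the $s$-side of $e_j$, because the only boundary vertex on that side is $s$. Hence $T(e) = T[u,v]$ must exit $W_i$ through $s$, which makes $T[u, s]$ a prefix of $T(e)$ and already gives $r(T[u,s]) \le r(T(e))$. The additive slack $r(T[s,t] \setminus T(e))/r(e)$ in the Case 2 bound is there to absorb configurations in which $T[u,v]$ branches off $T[s,t]$ before reaching $s$, so that reconnecting to $s$ along $T[u, s]$ costs the extra portion of $T[s,t]$ not already present in $T(e)$.

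The main obstacle will be Case 1: the common-prefix decomposition above isolates the difficulty to comparing two disjoint subpaths $T[z,s]$ and $T[z, u^*]$ of $T[s,t]$ together with a tail, and this comparison is not controlled by tree geometry alone, since $T[z,s]$ may a priori carry substantially more resistance than the $t$-side quantities it is being charged against. I expect the resolution to use the specific selection of $e_j$ as the minimum-congestion edge on $T[s,t]$ in Line~10 of \cref{algo:FindForest}: this choice couples the resistance distribution along $T[s, t]$ to the conductances of the edges contributing to each $\Cong_T$, and is the only structural ingredient beyond pure tree geometry available for the proof.
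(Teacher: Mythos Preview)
Your difficulty is genuine, but it stems from a typo in the paper's statement of the claim: the two conditions are swapped. The paper's own proof (and its subsequent use in~\eqref{eq:totalLocalStretch}, where the extra term is summed over $D_{i^*}=\{e: e_{i^*}\in T(e)\}$) makes clear that the intended statement is
\[
\str_F(e;S)\le \str_T(e)\quad\text{if }e_j\notin T(e),\qquad
\str_F(e;S)\le \str_T(e)+\frac{r(T[s,t]\setminus T(e))}{r(e)}\quad\text{if }e_j\in T(e).
\]
You effectively discovered this yourself: your Case~2 argument (for $e_j\notin T(e)$) already gives $T[u,s]\subseteq T(e)$ and hence the stronger bound $\str_F(e;S)\le\str_T(e)$ outright, with no need for the additive slack. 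Conversely, your Case~1 obstruction is real---when $e_j\in T(e)$, the bound $\str_F(e;S)\le\str_T(e)$ is simply false (take a path graph with $u$ close to $e_j$ and far from $s$).

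Once the cases are swapped, your decomposition for Case~1 finishes the job immediately. With $z$ the first vertex where $T[u,s]$ meets $T[s,t]$, you have $P_F(e)\cap S = T[u,z]\cup T[z,s]$. The prefix $T[u,z]$ is contained in $T(e)$, and since $e_j\in T(e)$ the segment $T[z,s]$ lies in $T[s,t]\setminus T(e)$; hence
\[
r(P_F(e)\cap S)=r(T[u,z])+r(T[z,s])\le r(T(e))+r(T[s,t]\setminus T(e)).
\]
This is exactly the paper's argument (their $u_k$ is your $z$). The minimum-congestion choice of $e_j$ plays \emph{no} role in this claim; it is used only afterwards, in bounding the aggregate $\sum_{e\in D_{i^*}} r(T[s,t])\,c(e)$ by $\kappa$.
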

\begin{proof}
We prove the claim via case analysis:
\begin{itemize}
    \item
    If $e_j$ is not in $T(e)$, the part of $P_F(e)$ containing $u$ is a subpath of $T(e)$.
    Therefore, $\str_F(e; S) \le r(T(e)) / r(e)$ holds.
    
    \item
    Otherwise, $W_i$ is split into 2 disconnected pieces $S$ and $T$.
    Observe that $S$ and $T$ partitions $W_i$'s vertex set.
    Let $s$ (and $t$) be the boundary vertex in $S$ (and $T$, respectively).
    
    We assume without loss of generality that $u \in S$.
    $T(e)$ starts with a tree-path $T[u, u_{j-1}]$, the endpoint of $e_j$ in $S$.
    Since $e_j \in T(e)$, the other endpoint of $T(e)$ must be in other component of $F$.
    Therefore, $P_F(e) \cap S$ is the unique $us$-tree path, $T[u, s]$.
    
    Let $u_k$ be the first vertex on the path $T(e) \cap T[s, t]$.
    $T(e)$ can be expressed as $T[u, u_k]$ followed by $T[u_k, u_{j-1}]$.
    $P_F(e) \cap S$ can be expressed as $T[u, u_k]$ followed by $T[u_k, s]$.
    Since $T[u_k, s] \subseteq T[s, t] \setminus T(e)$, $r(P_F(e) \cap S)$ can be bounded by:
    \begin{align*}
        r(P_F(e) \cap S)
        = r(T[u, u_k]) + r(T[u_k, s])
        \le r(T(e)) + r(T[s, t] \setminus T(e)).
    \end{align*}
    
    
\end{itemize}
\end{proof}

The removal of edge $e_{i^*}$ from $T[s, t]$ in line 10 of \cref{algo:FindForest} and the claim above bound the total local stretch as follows:
\begin{align}
\label{eq:totalLocalStretch}
    \sum_{e \in D} \str_F(e; S) &\le \sum_{e \in D} \frac{r(T(e))}{r(e)} + \sum_{e \in D_{i^*}} r(T[s, t]) c(e) - \sum_{e \in D_{i^*}} r(T(e) \cap T[s, t]) c(e).
\end{align}
Next, we bound each term in (\ref{eq:totalLocalStretch}).
\begin{enumerate}
    \item $\sum_{e \in D} \frac{r(T(e))}{r(e)}$: It is at most $\kappa$ from the guarantee of \cref{lemma:TreeRefinedDecomposition}.
    
    \item $\sum_{e \in D_{i^*}} r(T[s, t]) c(e)$:
    The bound on total stretch from \cref{lemma:TreeRefinedDecomposition} and the optimality from the choice of $i^*$ yield
    \begin{align*}
        \kappa
        &\ge \sum_{e \in D} r(T(e) \cap T[s, t])c(e) \\
        &= \sum_{e \in D} \sum_{e_i \in T(e) \cap T[s, t]}r(e_i) c(e) \\
        &= \sum_{e_i \in T[s, t]} \sum_{e \in D_i} r(e_i) c(e) \\
        &= \sum_{e_i \in T[s, t]} r(e_i) c(D_i)
        \ge \sum_{e_i \in T[s, t]} r(e_i) c(D_{i^*}) = r(T[s, t]) c(D_{i^*}),
    \end{align*}
    where the second equality comes from the definition of $D_i$.

    \item $\sum_{e \in D_{i^*}} r(T(e) \cap T[s, t]) c(e)$: Since $D_{i^*} \subseteq D$, this term can be bounded by $\sum_{e \in D} r(T(e))c(e)$ which is at most $\kappa$.
\end{enumerate}
Therefore, one can bound the LHS of (\ref{eq:totalLocalStretch}) by $3\kappa$.

Combining Case I and II, we have that for any component $S$ of $F$:
\begin{align*}
    \sum_{e \in G} \str_F(e; S) \le 3 \kappa = O\left(\frac{m \log n \log\log n}{j}\right).
\end{align*}
The correctness of \cref{algo:FindForest} is proved.

\textbf{Time Complexity of \cref{algo:FindForest}}:
The bottleneck of the procedure is computing the LSST for $G$.
Such routine takes $O(m \log n \log\log n)$-time via \cref{lemma:LSST}.
Thus, the whole algorithm runs in $O(m \log n \log\log n)$-time.

\begin{proof}[Proof of \cref{lemma:FindForest}]
The lemma is proved via above discussions.
\end{proof}

\section{Diffusion on J-Trees}
\label{sec:JTreeSolve}

In this section, we present tools and combine them to prove \cref{lemma:JTreeSolve}.

The first tool eliminates vertices of degree 1.
Vertex elimination routine is critical in designing recursive algorithms, e.g. \cite{SpielmanT03, Peng16}.
Via reducing the number of vertex, one can reduce the problem to a smaller graph.
The process eliminates degree 1 vertices one at a time.
Meanwhile, it updates VWF of its only neighbor.

When the input is a $j$-tree, entire \emph{envelope} will be eliminated.
So one can reduce the problem on a graph with $n$ vertices to the one with $j=o(n)$ vertices.

The tool is formalized as the following lemma and will be proved in \cref{sec:vtxElimination}:
\begin{lemma}
\label{lemma:vtxElimination}
There exists an algorithm (\cref{algo:vtxElimination}) invoked with the syntax
\[
    (\HH, \MM) = \textsc{VertexElimination}(\GG)
\]
that on input of a diffusion instance $\GG$ with $m$ edges and VWFs of total size $S=\poly(n)$, outputs another diffusion instance $\HH$ such that
\begin{enumerate}
    \item $H$, the underlying graph of $\HH$, is a subgraph of $G$.
    \item $\HH \preceq_1 \GG$ with the mapping $\MM$.
    $\MM$ can be applied in $O(n \log n)$ time and is implicitly maintained using $O(n \log n)$-space.
    
    \item $\GG \preceq_1 \HH$ with the mapping $\Bar{\MM}(\bx)=\bx[V(H)]$.
    \item Total size of $\HH$'s VWFs is $O(S)$.
    \item The algorithm runs in $O((S+n) \log^2 n)$-time.
\end{enumerate}
Additionally, if $G$, the underlying graph of $\GG$, is a $j$-tree with $m_j$ core-edges, the resulting $\HH$ has at most $j$ vertices and $m_j$ edges.
\end{lemma}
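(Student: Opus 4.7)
The plan is to process vertices of degree one iteratively, using the exact reduction already described in the technical overview. At each step we pick a leaf $u$ whose only neighbor is $v$, remove $u$ (and the edge $uv$) from the current graph, and replace the vertex weighting function at $v$ by
\[
    f_v^{\mathrm{new}}(x) = f_v(x) + \min_{y \ge b_u} \left[\tfrac{1}{2} c(uv) (x-y)^2 + f_u(y)\right].
\]
Record the optimal-reply map $y^\star_u(x) \coloneqq \arg\min_{y \ge b_u}[\tfrac12 c(uv)(x-y)^2 + f_u(y)]$ in a list so it can later be used to fill in $x_u$ from $x_v$. Repeating this until no degree-one vertex remains yields the reduced instance $\HH$. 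When $G$ is a $j$-tree with $m_j$ core edges, every envelope edge is removed in this process (trees can always be peeled to their root attachment point), so $\HH$ ends up on exactly the $j$ core vertices and the $m_j$ core edges, giving the additional guarantee.

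For embeddability, I will verify both directions with $\kappa=1$ one elimination at a time and then invoke \cref{fact:embedTrans}. Let $\GG'$ denote the instance after eliminating a single leaf $u$. For $\GG' \preceq_1 \GG$, define the mapping by setting $x_u^{\GG} = y^\star_u(x_v^{\GG'})$ and copying the remaining coordinates. A direct computation shows that removing the edge $uv$ from the Laplacian and switching $f_v$ to $f_v^{\mathrm{new}}$ exactly cancels the $\tfrac12 c(uv)(x_u-x_v)^2 + f_u(x_u)$ contribution evaluated at the optimal $x_u$, so the two energies are equal. For $\GG \preceq_1 \GG'$ via restriction $\bar\MM(\bx) = \bx[V(G')]$, the definition of $f_v^{\mathrm{new}}$ as a minimum ensures $\EE^{\GG'}(\bx[V(G')]) \le \EE^{\GG}(\bx)$, which is exactly what the embeddability definition requires at $\kappa=1$.

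For the VWF size bookkeeping, I will invoke the fact (established in \cref{sec:VWFDS}) that $f_v^{\mathrm{new}}$ is a VWF with at most $|f_v| + |f_u| + 1$ pieces. Hence eliminating one vertex increases the total VWF size by at most one while reducing the vertex count by one, so after removing $n - |V(H)|$ vertices the total VWF size of $\HH$ is $S + (n - |V(H)|) = O(S + n)$. Since $S = \poly(n)$, this also gives the fourth bullet. The mapping $\MM$ is the composition of the individual single-step mappings, each consisting of evaluating one piecewise function $y^\star_u$ of size $O(|f_u|+1)$; storing the table of split points suffices, and evaluating $\MM$ on a given $\bx$ amounts to processing the eliminations in reverse order, which takes $O(n \log n)$ time and $O(n \log n)$ space after we compress each $y^\star_u$ (using the argument of \cref{sec:CompressVWF}) to $O(\log n)$ pieces.

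The main obstacle, and the place I expect the bulk of the work to live, is implementing the elimination in $O(\log^2 n)$ amortized time per vertex so the total running time is $O((S+n)\log^2 n)$. Concretely, each $f_u$ must be stored in a balanced binary search tree keyed by split points that supports (i) adding a linear term from the gradient contribution, (ii) computing $y^\star_u(\cdot)$ along with its range of pieces, and (iii) splicing the resulting piecewise-quadratic add-on into $f_v$'s tree, all in $O(\log n)$ time per operation with $O(\log n)$ updates per elimination. This is exactly the VWF data structure developed in \cref{sec:VWFDS}, which I will cite rather than reconstruct. Combining the per-step cost with the $O(n)$ eliminations and the one-time $O(S \log n)$ preprocessing to build the initial trees yields the claimed $O((S+n)\log^2 n)$ bound, completing the proof.
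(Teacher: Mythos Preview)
Your overall plan matches the paper's: peel degree-one leaves, fold $f_u$ into $f_v$ via the infimal convolution $f_v^{\mathrm{new}}$, record the best-reply map $y_u^\star(\cdot)$, and chain the per-step $\preceq_1$ relations via \cref{fact:embedTrans}. The correctness argument for both directions of embeddability and the VWF size bookkeeping are exactly what the paper does.

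There are, however, two genuine gaps.

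First, you cannot use \cref{sec:CompressVWF} to shrink each $y_u^\star$ to $O(\log n)$ pieces. \textsc{Compress} is \emph{lossy}: it only guarantees $2f(x/2)\le \tilde f(x)\le f(x)$, so the recovered value $\tilde y_u^\star(x_v)$ is in general not the true minimizer. That destroys the energy \emph{equality} you established in the $\GG'\preceq_1\GG$ direction, and hence the $\kappa=1$ embedding. The paper avoids this by never materializing $y_u^\star$ explicitly: it keeps the VWF data structures \emph{persistently} (\cref{lemma:VWFDS} plus the Driscoll--Sarnak--Sleator--Tarjan technique), takes a snapshot right after the $\mathrm{Lift}$, and answers each reconstruction query via $\mathrm{OptimalX}$ in $O(\log n)$ time on that snapshot. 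Persistence gives you the $O(\log n)$ space overhead per update without any approximation.

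Second, your running-time accounting is off. Splicing $f_u$ into $f_v$ (the $\mathrm{Add}$ operation) does \emph{not} cost $O(\log n)$; it costs $O(\min\{|f_u|,|f_v|\}\log n)$, and a single elimination can therefore be as expensive as $\Theta(S\log n)$. The $O((S+n)\log^2 n)$ total is obtained by the standard small-to-large merging argument: always insert the pieces of the smaller VWF into the larger one, so each piece participates in the ``small'' side at most $O(\log n)$ times, and each such participation costs $O(\log n)$. The paper alludes to exactly this (``a classical amortized analysis''); without it your per-vertex bound does not hold.
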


\cref{fact:embedApprox} and the fact that $\GG \preceq_1 \HH \preceq_1 \GG$ yield the following claim:
\begin{corollary}
\label{coro:vtxElimApprox}
Let $(\HH, \MM) = \textsc{VertexElimination}(\GG)$.
For any $\alpha$-approximated optimal potential $\bx$ of $\HH$, $\MM(\bx)$ is an $\alpha$-approximated optimal for $\GG$.
\end{corollary}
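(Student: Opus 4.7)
The plan is to observe that this corollary is an immediate consequence of Fact~\ref{fact:embedApprox} applied with $\kappa=1$, once the two embeddability statements from Lemma~\ref{lemma:vtxElimination} are in hand. The lemma supplies exactly what we need: property~2 gives $\HH \preceq_1 \GG$ with mapping $\MM$, and property~3 gives $\GG \preceq_1 \HH$ with the restriction mapping $\Bar{\MM}(\bx) = \bx[V(H)]$. Thus the chain $\HH \preceq_\kappa \GG \preceq_1 \HH$ required as the hypothesis of Fact~\ref{fact:embedApprox} is satisfied with $\kappa=1$.

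Given an $\alpha$-approximated optimal potential $\bx$ for $\HH$, Fact~\ref{fact:embedApprox} then yields that $\kappa^{-1}\MM_{\HH\to\GG}(\bx) = \MM(\bx)$ is a $\kappa\alpha = \alpha$-approximated optimal potential for $\GG$, which is exactly what we want. No scaling by $\kappa$ enters because vertex elimination is a lossless reduction in both directions.

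There is no real obstacle here; the only bookkeeping is checking that the two mappings $\MM$ and $\Bar{\MM}$ from Lemma~\ref{lemma:vtxElimination} respect feasibility (which is already stated as part of the lemma's guarantees) and that plugging $\kappa=1$ into Fact~\ref{fact:embedApprox} reproduces the corollary's statement verbatim. All the substantive work was absorbed into proving Lemma~\ref{lemma:vtxElimination}; the corollary simply packages its consequence for the approximation-preservation needed by the outer recursion.
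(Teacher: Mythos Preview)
Your proposal is correct and matches the paper's own justification: the paper simply remarks that the corollary follows from Fact~\ref{fact:embedApprox} together with $\GG \preceq_1 \HH \preceq_1 \GG$ from Lemma~\ref{lemma:vtxElimination}, which is exactly the argument you give with $\kappa=1$.
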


Therefore, diffusion problems on a $j$-tree can be reduced to ones defined on its core graph.
\begin{corollary}
\label{coro:JTreeSolve}
Given any $j$-tree $G$ with $m_j$ core edges.
One can $(1+\eps)$-approximately solve any diffusion instance on $G$ with total VWF size $S$ in time:
\begin{align*}
    T\left(|E(G)|, S, 1 + \eps\right) = O\left(S \log^2 n\right) + T\left(m_j, O(S), 1 + \eps\right).
\end{align*}
\end{corollary}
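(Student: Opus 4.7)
The plan is to invoke the vertex elimination routine of Lemma~\ref{lemma:vtxElimination} as a preprocessing step that strips away the entire envelope of the $j$-tree, leaving behind a much smaller diffusion instance defined on (essentially) the core, which we can then dispatch to an arbitrary approximate solver. Since the lemma already guarantees that when the input is a $j$-tree with $m_j$ core edges, the output $\HH$ has at most $j$ vertices and $m_j$ edges, the reduction is immediate: the remaining work is bookkeeping on running times and the domination of the $n$-dependent terms by the $S$-dependent ones.

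Concretely I would proceed as follows. Run $(\HH,\MM)=\textsc{VertexElimination}(\GG)$; by Lemma~\ref{lemma:vtxElimination} this costs $O((S+n)\log^2 n)$ time and produces an instance $\HH$ whose underlying graph has $\le j$ vertices and exactly $m_j$ edges, whose total VWF size is $O(S)$, and which satisfies $\HH\preceq_1\GG$ and $\GG\preceq_1\HH$. Now invoke any $(1+\eps)$-approximate diffusion solver on $\HH$, at a cost of $T(m_j,O(S),1+\eps)$, obtaining a feasible potential $\bx^\HH$ with $\EE^\HH(\bx^\HH)\le (1+\eps)^{-1}\EE(\HH)$. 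Finally output $\MM(\bx^\HH)$; evaluating $\MM$ costs $O(n\log n)$ by the storage/evaluation guarantee of Lemma~\ref{lemma:vtxElimination}. Correctness is then exactly Corollary~\ref{coro:vtxElimApprox}: since $\HH\preceq_1\GG\preceq_1\HH$, the mapping $\MM$ preserves the $(1+\eps)$-approximation factor when transporting the solution back to $\GG$.

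Adding up, the running time is
\[
O\bigl((S+n)\log^2 n\bigr) \;+\; T(m_j,O(S),1+\eps) \;+\; O(n\log n).
\]
To match the statement I need to absorb the $n$-terms into $O(S\log^2 n)$. This uses that every vertex of $G$ carries a VWF (by Definition~\ref{defn:DiffusionInstance}) and each VWF has size at least $1$, so $S\ge n$; consequently $(S+n)\log^2 n=O(S\log^2 n)$ and $n\log n=O(S\log^2 n)$, giving the claimed bound
\[
T(|E(G)|,S,1+\eps)=O(S\log^2 n)+T(m_j,O(S),1+\eps).
\]

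There is essentially no hard step here: all the heavy lifting sits inside Lemma~\ref{lemma:vtxElimination} (which is the technical heart of the section, handling how repeated elimination of degree-$1$ vertices composes VWFs without blowing up their total size) and inside the definition of embeddability that makes Corollary~\ref{coro:vtxElimApprox} a one-line consequence. The only thing one has to be mildly careful about is verifying that the $O(n\log n)$ evaluation cost of $\MM$ and the $O((S+n)\log^2 n)$ construction cost are both absorbed by $O(S\log^2 n)$, which follows from the trivial lower bound $S\ge n$ noted above.
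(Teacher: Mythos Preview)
Your proposal is correct and matches the paper's approach exactly: the corollary is stated in the paper as an immediate consequence of Lemma~\ref{lemma:vtxElimination} and Corollary~\ref{coro:vtxElimApprox}, with no separate proof given, and your write-up fills in precisely the bookkeeping the paper leaves implicit (including the $S\ge n$ observation needed to absorb the $n$-dependent terms into $O(S\log^2 n)$).
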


The second tool approximates any VWF with a $O(\log n)$-sized one.
It is formalized as the following lemma and will be proved in \cref{sec:CompressVWF}.
\begin{lemma}
\label{lemma:CompressVWF}
There exists an algorithm (\cref{algo:CompressVWF}) invoked with syntax
\[
    \ftil \coloneqq \textsc{Compress}(f)
\]
that on input of a VWF $f$ of size $S$, outputs another VWF $\ftil$ of size $O(\log n)$ in $O(S)$-time such that (1.) $\dom(f)=\dom(\ftil)$, and (2.) $2 f(x/2) \le \ftil(x) \le f(x)$ holds for any $x \in \dom(f)$.
\end{lemma}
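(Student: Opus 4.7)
The plan is to construct $\tilde{f}$ by coarsening the breakpoint structure of $f$ onto a geometric grid of size $O(\log n)$. Recall that $f$ is characterized (up to the constant $f(0)$) by $f'$, which is piecewise linear, continuous, non-decreasing, and concave, and is eventually constant. I would therefore work primarily with $f'$ and obtain $\tilde{f}$ by integrating a coarsened derivative.

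\emph{Step 1 (Grid selection).} By Assumption~\ref{assump:polyNum}, every nonzero split point of $f$ has absolute value in $[n^{-c},n^{c}]$. I collect the set $P = \{0, L\} \cup \{\pm (1.1)^{k} : k \in \mathbb{Z}\}\cap \dom(f)$, giving $|P| = O(\log n)$. These will serve as the split points of $\tilde{f}$. Each original split point $s_i$ is merged to the nearest element of $P$, so the $O(S)$ original pieces cost only $O(S)$ bookkeeping.

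\emph{Step 2 (Definition of $\tilde{f}$).} On the positive half-line I would define $\tilde{f}'$ as the piecewise-linear interpolation of $f'$ at the grid points, and on the negative half-line use the analogous construction needed so that $\tilde{f}' \ge f'$ there (for the upper-bound direction); the two pieces agree at $0$ because both equal $f'(0)$. Linear interpolation of the concave function $f'$ at an increasing sequence remains concave (a standard property of concave functions), and monotonicity and eventual constancy are immediate from those of $f'$. Fixing $\tilde{f}(0)=f(0)\le 0$ and integrating $\tilde{f}'$ yields a VWF $\tilde{f}$ with $|P|=O(\log n)$ pieces and the same domain as $f$.

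\emph{Step 3 (Verifying the inequalities).} For $\tilde{f}(x)\le f(x)$: on the positive side $\tilde{f}' \le f'$ (concave $\Rightarrow$ chord below graph), and integrating from $0$ preserves the direction; on the negative side the construction ensures $\tilde{f}' \ge f'$, and integrating \emph{backwards} from $0$ flips the sign to give the same conclusion $\tilde{f}(x) \le f(x)$. For $\tilde{f}(x)\ge 2f(x/2)$: the consecutive grid points of $P$ have ratio $1.1$, so $x$ and $x/2$ lie at most $\lceil \log_{1.1} 2 \rceil = 8$ intervals apart. Combined with the monotonicity and concavity of $f'$, the secant (respectively tangent) approximation can degrade $\tilde{f}'$ by only a multiplicative constant across a constant number of intervals, and the choice of base $1.1$ is calibrated so the accumulated degradation fits exactly inside the factor-$2$ slack between $f(x)$ and $2f(x/2)$ (noting also that $2f(0)-f(0) = f(0)\le 0$ gives a free additive benefit). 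The runtime is $O(S)$ since each original piece is processed once during the rounding/merge.

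The main obstacle will be Step 3's lower bound $\tilde{f}(x) \ge 2 f(x/2)$. It requires a careful quantitative comparison between the coarsened $\tilde{f}'$ at $x$ and $f'$ at $x/2$, broken into cases depending on whether $x$ and $x/2$ fall on the positive or negative half-line and on how many grid intervals separate them. The factor $1.1$ is chosen precisely to make the constant-per-interval loss sum to a factor $\le 2$, and getting this calibration clean (with the correct handling near $0$, where $f(0)\le 0$ and the scaling $x\mapsto x/2$ interact) is the technical heart of the proof.
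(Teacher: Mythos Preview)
Your approach differs substantially from the paper's. The paper does not interpolate $f'$; instead it decomposes the Bregman divergence $B_f(x) = f(x) - f(0) - f'(0)x$ as $B_f = \sum_{i} d_i\, B_{s_0, s_{i+1}}$, where $d_i = r_i - r_{i+1} \ge 0$ and each $B_{s_0,s}(x) = \int_0^x \int_0^u \chi_{(s_0, s)}(v)\,dv\,du$ is an explicit size-$2$ VWF. It then rounds each split point $s$ to $\bar s = [s]_{1.1}$ and simultaneously rescales the coefficient to $\bar d = d\, s/\bar s$; the key Lemma~\ref{lemma:Round1VWF} establishes $2d\,B_{s_0,s}(x/2) \le \bar d\,B_{s_0,\bar s}(x) \le d\,B_{s_0,s}(x)$ by explicit case analysis, with separate treatments of $s>0$ and $s<0$. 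Summing these inequalities and adding back the affine part $f(0)+f'(0)x$ (using $f(0)\le 0$) gives the conclusion.

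Your proposal has a genuine gap on the negative half-line. Secant interpolation of the concave function $f'$ gives $\tilde f' \le f'$ everywhere, which for $x<0$ yields $\tilde f(x) = f(0) - \int_x^0 \tilde f' \ge f(x)$, the \emph{wrong} direction for the upper bound. You acknowledge this by promising ``the analogous construction needed so that $\tilde f' \ge f'$ there,'' but you never say what it is, and it is not obvious: any such $\tilde f'$ must simultaneously (i) be piecewise linear with $O(\log n)$ pieces, (ii) satisfy $\tilde f' \ge f'$ on $[L,0]$ for the upper bound, (iii) satisfy something like $\tilde f'(t) \le f'(t/2)$ on $[L,0]$ for the lower bound, and (iv) glue to the positive-side secant at $0$ with non-increasing slope so that the global $\tilde f'$ remains concave (otherwise $\tilde f$ is not a VWF). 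Tangent lines give (ii) but generically violate (iv); piecewise-constant right-endpoint evaluation gives (ii) and (iii) but violates continuity. The paper's second-derivative decomposition sidesteps this asymmetry entirely, because each summand $B_{s_0,s}$ is treated in isolation and the rescaling $d \mapsto d\,s/\bar s$ is exactly what makes the $s<0$ case close. Your lower-bound sketch (``degradation fits inside the factor-$2$ slack'') is also too vague as written; on the positive side it can be made precise via $\tilde f'(t) \ge f'(t/1.1) \ge f'(t/2)$, but you would still need to articulate and verify the analogous pointwise bound for whatever negative-side construction you actually commit to.
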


Given any diffusion instance $\GG$, applying \cref{lemma:CompressVWF} to every VWF of $\GG$ yields the following corollary:
\begin{corollary}
\label{coro:CompressVWF}
There exists an algorithm invoked with syntax
\[
    \widetilde{\GG} \coloneqq \textsc{Compress}(\GG)
\]
that on input of a diffusion instance $\GG$ with $m$ edges and VWFs of total size $S$, outputs another diffusion instance $\widetilde{\GG}$ such that
\begin{enumerate}
    \item $\widetilde{\GG} \preceq_2 \GG \preceq_1 \widetilde{\GG}$ holds with identity mapping.
    \item Each VWF of $\widetilde{\GG}$ has size $O(\log n)$.
    \item The algorithm runs in $O(S)$-time.
\end{enumerate}
\end{corollary}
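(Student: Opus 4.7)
The construction is simply to apply \textsc{Compress} from \cref{lemma:CompressVWF} to each VWF independently: set $\widetilde{f}_u := \textsc{Compress}(f_u)$ for every $u \in V$ and leave the underlying graph $G$ and the lower-bound vector $\bb^\GG$ unchanged, producing $\widetilde{\GG} = (G, \{\widetilde{f}_u\}_{u \in V}, \bb^\GG)$. The runtime bound (Point 3) is then immediate by summing the per-call cost $O(|f_u|)$ over all vertices, giving $O(S)$. The per-VWF size bound (Point 2) is inherited directly from \cref{lemma:CompressVWF}.

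The substance of the proof is verifying the two embedding relations in Point 1, both claimed with the identity mapping. For $\GG \preceq_1 \widetilde{\GG}$, any $\bx$ feasible for $\GG$ satisfies $\bx \ge \bb^\GG = \bb^{\widetilde{\GG}}$ and is therefore feasible for $\widetilde{\GG}$; then the Laplacian terms are literally equal while the pointwise inequality $\widetilde{f}_u(x_u) \le f_u(x_u)$ from \cref{lemma:CompressVWF} yields $\EE^{\widetilde{\GG}}(\bx) \le \EE^\GG(\bx)$, which is the desired condition at $\kappa = 1$.

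The more delicate direction is $\widetilde{\GG} \preceq_2 \GG$: for every $\bx$ feasible for $\widetilde{\GG}$ I need $\bx/2$ to be feasible for $\GG$ and $\EE^{\widetilde{\GG}}(\bx) \ge 2\EE^\GG(\bx/2)$. Feasibility uses the standing assumption $\bb^\GG \le \mb{0}$ from \cref{defn:DiffusionInstance}: if $\bx \ge \bb^\GG$ and $\bb^\GG \le \mb{0}$, then $\bx/2 \ge \bb^\GG/2 \ge \bb^\GG$, and similarly $x_u/2$ lies in each $\dom(f_u) = [L_u,\infty)$ since $L_u \le 0$. For the energy, the quadratic term contributes $\tfrac{1}{2}\bx^\top \bL(G)\bx \ge 2 \cdot \tfrac{1}{2}(\bx/2)^\top \bL(G)(\bx/2) = \tfrac{1}{4}\bx^\top \bL(G)\bx$ because $\bL(G) \succeq 0$, and the VWF piece satisfies $\widetilde{f}_u(x_u) \ge 2 f_u(x_u/2)$ by \cref{lemma:CompressVWF}; summing the two contributions yields the required inequality.

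I do not anticipate any real obstacle. The VWF-level guarantee of \cref{lemma:CompressVWF} is stated precisely with the two-sided multiplicative distortion needed here, and the only nontrivial bookkeeping is the feasibility check, which is handled cleanly by the fact that lower bounds in a diffusion instance are non-positive, so halving a feasible potential never violates the constraint.
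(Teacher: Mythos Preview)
Your proposal is correct and matches the paper's intended argument exactly: the paper does not give an explicit proof but simply states that the corollary follows by applying \cref{lemma:CompressVWF} to every VWF of $\GG$, which is precisely your construction and verification. One minor remark: in the direction $\widetilde{\GG} \preceq_2 \GG$, the formal \cref{defn:embed} requires the image $\MM(\bx)=\bx$ (not $\bx/2$) to be feasible for $\GG$, which is immediate since $\bb^{\widetilde{\GG}}=\bb^\GG$; your feasibility check for $\bx/2$ is still the right thing to verify so that $\EE^\GG(\bx/2)$ is well-defined, and your argument for it is correct.
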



\begin{algorithm}
\caption{Algorithm for \cref{lemma:JTreeSolve}}
\label{algo:JTreeSolve}
\begin{algorithmic}[1]
    \Procedure{JTreeSolve}{$\HH, \eps, \OO$}
        \State $(\CC, \MM) \coloneqq \textsc{VertexElimination}(\HH)$
        \State $\bx = \textsc{Iter}(\CC, \eps, (\OO \circ \textsc{Compress}), 4)$
        \State \Return $\MM(\bx)$
    \EndProcedure
\end{algorithmic}
\end{algorithm}

The proof for \cref{lemma:JTreeSolve} combines \emph{Vertex Elimination} (\cref{coro:JTreeSolve}), \emph{VWF Compression}, and \emph{Iterative Refinement} (\cref{lemma:IterRefine}).
\begin{proof}[Proof of \cref{lemma:JTreeSolve}]
The oracle $\OO$ computes 2-approximate optimal potential given any instance.
By \cref{fact:embedApprox} and \cref{coro:CompressVWF}, $\by = \OO(\textsc{Compress}(\CC))$ is a 4-approximate optimal potential of $\CC$.
\cref{lemma:IterRefine} guarantees that $\bx$ in line 3 is a $(1+\eps)$-approximate optimal potential of $\CC$.

Since $H$ is a $j$-tree with $O(j \log n)$ edges in its core, computing $\bx$ takes
\begin{align*}
    O\left(\log\left(\frac{1}{\eps}\right)\left(m + S + T\left(O(j \log n), O(j \log n), 2\right)\right)\right),
\end{align*}
-time.
Plus the time for eliminating vertices and computing $\MM(\bx)$, \cref{algo:JTreeSolve} has time compelxity:
\begin{align*}
    O\left(m\log^2 n + \log\left(\frac{1}{\eps}\right)\left(m + T\left(O(j \log n), O(j \log n), 2\right)\right)\right)
\end{align*}
\end{proof}

\subsection{Vertex Elimination}
\label{sec:vtxElimination}

In this section, we prove Lemma~\ref{lemma:vtxElimination} via \cref{algo:vtxElimination}.
It eliminates degree 1 vertices one at a time.
By eliminating a degree 1 vertex $u$, we update the VWF corresponding to $u$'s only neighbor $v$ to encode the optimal contribution of $u$.
Suppose $u$ is connected to $v$ with edge $e=c(u, v)$, we replace $f_v$, the VWF of $v$, with $f^{\mathtt{new}}_v$, defined as
\begin{align*}
    f^{\mathtt{new}}_v(x) = f_v(x) + \min_{y \ge b_u} \frac{1}{2} c(x - y)^2 + f_u(y),~x \ge b_v.
\end{align*}
We will see that $f^{\mathtt{new}}_v$ is also a VWF.
In addition, we maintain a mapping $\Pi$ to recover the value on $u$ using the value on $v$.
That is,
\begin{align*}
    \Pi(x) = \argmin_{y \ge b_u} \frac{1}{2} c(x - y)^2 + f_u(y).
\end{align*}

For efficiency, we use sequence data structures such as binary search trees to maintain VWFs during the elimination.
The data structure maintaining VWFs is formalized as the following lemma and will be proved in \cref{sec:VWFDS}.

\begin{lemma}
\label{lemma:VWFDS}
There is a data structure $\DD$ maintaining VWFs subject to the following operations:
\begin{enumerate}
    \item $\mathrm{Initialize}(f)$:
    Initialize $\DD$ given a VWF $f$ in time $O(|f|)$.
    
    \item $\mathrm{Add}(\DD_g)$:
    Given the same type of data structure $\DD_g$ maintaining another VWF $g$, update the underlying VWF to be $f + g$ in time $O(\min\{|f|, |g|\}\log n)$ .
    The resulting VWF has size at most $|f| + |g|$.
    
    \item $\mathrm{Lift}(c)$:
    Update the underlying VWF with another VWF:
    \begin{align*}
        \Hat{f}(y) = \min_{x \in \dom(f)} \frac{1}{2}c(y-x)^2 + f(x).
    \end{align*}
    The operation can be done in $O(\log |f|)$-time and the resulting VWF has size at most $|f| + 1$.
    
    \item $\mathrm{OptimalX}(y)$:
    If the latest update to $\DD$ is $\mathrm{Lift}(c)$, output the corresponding optimal $x$ given $y$ in $O(\log |f|)$-time.
    
    \item $\mathrm{VWF}()$:
    Output the canonical representation of VWF, the list $\{(s_i, r_i, a_i, b_i)\}$, in linear time.
\end{enumerate}
Additionally, $\DD$ can be implemented in pointer machine model and can be persistently maintained with constant overhead (\cite{DSST89}).
\end{lemma}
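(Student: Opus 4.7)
The plan is to maintain $f$ by storing its derivative $f'$, which by the VWF axioms is continuous, non-decreasing, concave, and piecewise-linear with a final piece of slope zero. Concretely, we keep the linear pieces of $f'$ in a persistent balanced search tree (e.g.\ a persistent treap built via the pointer-machine construction of \cite{DSST89}) ordered by the $x$-coordinate of their breakpoints; each node stores the piece's slope $r_i$, endpoints $(s_i,s_{i+1})$, and value $f'(s_i)$, together with subtree-sum augmentations so that evaluating $f$ or locating the piece containing a given $x$ or a given slope value takes $O(\log|f|)$. Initialize builds such a tree from the canonical list in a single left-to-right pass, and $\mathrm{VWF}()$ outputs it via in-order traversal after pushing all lazy tags down.

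For $\mathrm{Add}(\DD_g)$, I walk through the $\min\{|f|,|g|\}$ breakpoints of the smaller VWF in left-to-right order. Each breakpoint is inserted into the larger VWF's tree via an $O(\log n)$ search-and-split; between two consecutive inserted breakpoints the smaller function's contribution to the derivative is a single affine segment, which is added to all of the larger's pieces in that range by a standard lazy range update on the BST. Each inserted breakpoint contributes $O(\log n)$ for splitting plus $O(\log n)$ for the range update, giving the claimed $O(\min\{|f|,|g|\}\log n)$ bound. Continuity of the resulting derivative is automatic since both summands are continuous at the joined breakpoints.

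The main obstacle is $\mathrm{Lift}(c)$, which must run in $O(\log|f|)$ even though conceptually it relocates every breakpoint $s_i$ to $s_i + f'(s_i)/c$ and changes every slope $r_i$ to $r_i c/(r_i+c)$. The key identity is the dual relation
\[
(\widehat{f}')^{-1}(p) \;=\; (f')^{-1}(p) + p/c,
\]
together with the appearance of at most one new leftmost piece where the constraint $x\ge L$ is active, namely $\widehat{f}'(y) = c(y-L)$ on $y \in [L,\, L + f'(L)/c]$. Thus in the dual indexing by slope value $p$, Lift is just the pointwise affine shift $\phi(p)\mapsto \phi(p) + p/c$. Since $f'$ is monotone, the left-to-right ordering of pieces by $x$-breakpoint coincides with the ordering by slope value, so a single BST supports both views. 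Lift is therefore implemented as (i) attaching a single lazy tag at the root encoding the dual-coordinate shift, and (ii) inserting the new leftmost piece, both in $O(\log|f|)$. Then $\mathrm{OptimalX}(y)$ descends the tree to find the piece of $\widehat{f}'$ containing $y$, pushing lazy tags along the root-to-leaf path, and reads off $x^*(y)$ from that piece's affine formula in $O(\log|f|)$.

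Persistence with constant overhead then follows by layering the persistent pointer-machine construction of \cite{DSST89} on top of the balanced BST, provided each lazy tag can be represented in $O(1)$ words per node. The trickiest part of the proof I expect is verifying exactly this reconciliation: one must check that successive Lift tags composed with intervening Add range updates can always be normalized into a single $O(1)$-word lazy descriptor per node so that push-down during a search touches only $O(\log|f|)$ nodes, and that the transformed piece data at all times remains a valid VWF representation (convex $f$, continuous concave $f'$, final slope zero, continuity across joined breakpoints).
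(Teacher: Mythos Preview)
Your plan is essentially the paper's proof. The paper also maintains the pieces of the VWF in an augmented balanced BST keyed by breakpoint, implements $\mathrm{Add}$ by iterating over the smaller VWF's pieces and doing a split plus a lazy range-add for each, and implements $\mathrm{Lift}(c)$ as a single global tag at the root plus one new leftmost piece. Where you phrase Lift through the dual relation $(\widehat{f}')^{-1}(p) = (f')^{-1}(p) + p/c$, the paper writes down the primal per-tuple operator
\[
P_c(s,r,a,b)=\Bigl(\tfrac{c+r}{c}\,s+\tfrac{a}{c},\ \tfrac{cr}{c+r},\ \tfrac{ca}{c+r},\ b+\tfrac{a^2}{2(c+r)}\Bigr)
\]
and verifies by direct calculation that $P_x\circ P_y = P_{(x^{-1}+y^{-1})^{-1}}$. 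Your additive composition of the shift parameters $1/c$ is exactly this harmonic law, seen from the dual side; so the two arguments for why successive Lifts collapse to one tag are the same fact in different coordinates.

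The difficulty you single out at the end---normalizing a \emph{mixed} sequence of Lift tags and range-add tags into an $O(1)$-word lazy descriptor---is real, and the paper does not discharge it either: it is absorbed into a black-box augmented-BST lemma (stated, not proved) that simply asserts $O(\log n)$ support for both \textsc{RangeAdd} and an associative, key-monotone \textsc{Op}. If you want to close this gap, your representation $(s_i,\,r_i,\,v_i{=}f'(s_i))$ is actually better suited than the paper's $(s,r,a,b)$: in your coordinates, Lift acts on $(s,v)$ by the shear $\bigl(\begin{smallmatrix}1&1/c\\0&1\end{smallmatrix}\bigr)$ and range-add acts by $\bigl(\begin{smallmatrix}1&0\\r_g&1\end{smallmatrix}\bigr)$ plus a translation, while on $r$ both act by M\"obius maps. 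Hence every composite tag lies in $(\text{M\"obius on }r)\times(\text{affine on }(s,v))$, which is $O(1)$ parameters, and key-monotonicity is preserved because each generator sends a valid VWF to a valid VWF. In the paper's $(s,r,a,b)$ coordinates the $s$-update under $P_c$ depends on $r$, so the same decoupling is not immediate; this is a place where your choice of storing $v=f'(s)$ genuinely buys something.
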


\begin{algorithm}
\caption{Algorithm for \cref{lemma:vtxElimination}}
\label{algo:vtxElimination}
\begin{algorithmic}[1]
    \Procedure{VertexElimination}{$\GG=(G, \bf, \bb)$}
        \State $t \coloneqq 0$
        \State Initialize $\MM^{(0)}$ as the identity function on $\Real^{V(G)}$
        \State $G^{(0)} = G$
        \For{$u \in V(G)$}
            \State $D_u = \DD.\mathrm{Initialize}(f_u)$
        \EndFor
        \While{there is a vertex $u$ of degree 1 in $G^{(t)}$}
            \State Let $v$ be $u$'s only neighbor connected by edge $e=c(u, v)$
            \State $G^{(t+1)} \coloneqq G^{(t)} \setminus \{u\}$
            \State $D_u.\mathrm{Lift}(c)$
            \State $D_v.\mathrm{Add}(D_u)$
            \Comment{Update $f_v(x) = f_v(x) + \min_{y \ge b_u}\frac{1}{2}c(x-y)^2 + f_u(y)$}
            \State Make a snapshot of $D_v$
            \State Define $\Pi^{(t+1)}:\Real^{V^{(t)}} \to \Real^{V^{(t+1)}}$ as: 
            \begin{align*}
                \Pi^{(t+1)}(\bx)_i \coloneqq \begin{cases}
                    x_i, &i \neq u \\
                    D_v.\mathrm{OptimalX}(x_v), &i = u
                \end{cases}
            \end{align*}
            \State $\MM^{(t+1)} \coloneqq \MM^{(t)} \circ \Pi^{(t+1)}$
            \State $t \coloneqq t + 1$
        \EndWhile
        \State $\GG^{(t)} \coloneqq (G^{(t)}, \{D_u.\mathrm{VWF}() \mid u \in V^{(t)}\}, \bb^\GG[V^{(t)}])$
        \State \Return $(\GG^{(t)}, \MM^{(t)})$.
    \EndProcedure
\end{algorithmic}
\end{algorithm}

\begin{proof}[Proof of \cref{lemma:vtxElimination}]
See \cref{algo:vtxElimination} for the detailed pseudocode proving the lemma.

\textbf{Correctness:}
The correctness can be proved via induction.
At any time $t$, we define the instance $\GG^{(t)}$ the way as line 17.
We claim the followings:
\begin{claim}
At any time $t$, $\GG^{(t+1)} \preceq_1 \GG^{(t)}$ holds with mapping $\Pi^{(t+1)}$.
Also, $\GG^{(t)} \preceq_1 \GG^{(t+1)}$ holds with mapping $\bx \mapsto \bx[V^{(t+1)}]$.
\end{claim}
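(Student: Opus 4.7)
The plan is to prove this per-step claim by unfolding definitions directly; the global mappings $\MM^{(t)}$ and the restriction $\bx \mapsto \bx[V^{(t)}]$ will then follow by chaining single-step embeddings with \cref{fact:embedTrans}. First I would abbreviate $y^*(z) \coloneqq \argmin_{y \ge b_u}\tfrac{1}{2}c(z-y)^2 + f_u(y)$, where $e=c(u,v)$ is the unique edge incident to the eliminated vertex $u$; by \cref{lemma:VWFDS}(4) this is exactly what $D_v.\mathrm{OptimalX}(z)$ returns after the $\mathrm{Lift}$ and $\mathrm{Add}$ operations on lines 11--12 of \cref{algo:vtxElimination}. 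The identity powering both directions is $f_v^{\mathrm{new}}(z) = f_v(z) + \tfrac{1}{2}c(z-y^*(z))^2 + f_u(y^*(z))$, which is immediate from the definition of $f_v^{\mathrm{new}}$ and the optimality of $y^*$.

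For $\GG^{(t+1)} \preceq_1 \GG^{(t)}$ via $\Pi^{(t+1)}$, I would first verify feasibility: $\Pi^{(t+1)}(\bx)_u = y^*(x_v) \ge b_u$ by construction, and all other coordinates are unchanged, so the lower-bound constraint $\Pi^{(t+1)}(\bx) \ge \bb^{\GG^{(t)}}$ holds. For the energy comparison I would decompose $\bL(G^{(t)}) = \bL(G^{(t+1)}) + c(\mathbf{1}_u - \mathbf{1}_v)(\mathbf{1}_u - \mathbf{1}_v)^\top$ (valid because $u$ is only incident to $e$), extract the $u$-dependent terms from $\EE^{\GG^{(t)}}(\Pi^{(t+1)}(\bx))$, and observe that those terms together with $f_v(x_v)$ equal $f_v^{\mathrm{new}}(x_v)$ by the identity above. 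This gives the equality $\EE^{\GG^{(t)}}(\Pi^{(t+1)}(\bx)) = \EE^{\GG^{(t+1)}}(\bx)$, which is stronger than the required $\kappa = 1$ embedding.

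For $\GG^{(t)} \preceq_1 \GG^{(t+1)}$ via restriction, feasibility of $\bx[V^{(t+1)}]$ is immediate since $\bb^{\GG^{(t+1)}}$ is just $\bb^{\GG^{(t)}}$ restricted to $V^{(t+1)}$. Applying the same Laplacian decomposition and simplifying, I would show
\[
\EE^{\GG^{(t)}}(\bx) - \EE^{\GG^{(t+1)}}(\bx[V^{(t+1)}]) = \left(\tfrac{1}{2}c(x_v - x_u)^2 + f_u(x_u)\right) - \left(\tfrac{1}{2}c(x_v - y^*(x_v))^2 + f_u(y^*(x_v))\right) \ge 0,
\]
where the non-negativity is because $x_u \ge b_u$ is a feasible candidate in the minimization defining $y^*(x_v)$. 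This is precisely the required $\kappa=1$ embedding, and the two directions together establish the claim.

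I expect the only real obstacle to be bookkeeping around persistence: ensuring that $D_v.\mathrm{OptimalX}(x_v)$ is later invoked on the snapshot taken at line 12, representing the VWF state right after $f_u$ has been absorbed via $\mathrm{Add}$, so that the correct $y^*(x_v)$ is returned even if $D_v$ is further modified downstream. Given \cref{lemma:VWFDS}, which provides constant-overhead persistence, this is a mechanical check rather than a conceptual difficulty; the rest of the argument is symbolic manipulation driven entirely by the displayed identity for $f_v^{\mathrm{new}}$.
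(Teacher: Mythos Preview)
Your proposal is correct and follows essentially the same argument as the paper: both directions proceed by splitting off the single edge $e=c(u,v)$ from $\bL(G^{(t)})$, invoking the identity $f_v^{\mathrm{new}}(x_v)=f_v(x_v)+\min_{y\ge b_u}\tfrac12 c(x_v-y)^2+f_u(y)$ to get equality of energies under $\Pi^{(t+1)}$, and using that any feasible $x_u$ is a candidate in that minimization to get the inequality under restriction. Your added remarks on feasibility and on the snapshot/persistence bookkeeping are more explicit than the paper's own write-up but do not change the route.
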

\begin{proof}
Let $u$ be the vertex removed from $G^{(t)}$, and $v$ be $u$'s only neighbor connected by edge $e=c(u, v)$.
Let $\bx$ be any feasible potential for $\GG^{(t+1)}$ and $\Bar{\bx} = \Pi^{(t+1)}(\bx)$.
We have:
\begin{align*}
    \EE^{\GG^{(t+1)}}(\bx) 
    &= \frac{1}{2}\bx^\top \bL(G^{(t+1)}) \bx + \sum_{w \neq v \in G^{(t+1)}} f^{(t+1)}_w(x_w) + f^{(t+1)}_v(x_v) \\
    &= \frac{1}{2}\bx^\top \bL(G^{(t+1)}) \bx + \sum_{w \neq v \in G^{(t+1)}} f^{(t)}_w(x_w) + f^{(t)}_v(x_v) + \min_{y \ge b_u} \frac{1}{2}c(x_v - y)^2 + f^{(t)}_u(y) \\
    &= \frac{1}{2}\bx^\top \bL(G^{(t+1)}) \bx + \sum_{w \neq v \in G^{(t+1)}} f^{(t)}_w(x_w) + f^{(t)}_v(x_v) + \frac{1}{2}c(x_v - \Bar{x}_u)^2 + f^{(t)}_u(\Bar{x}_u) \\
    &= \frac{1}{2}\Bar{\bx}^\top \bL(G^{(t)}) \Bar{\bx} + \sum_{w \in G^{(t)}} f^{(t)}_w(\Bar{x}_w) = \EE^{\GG^{(t)}}(\Bar{\bx}). 
\end{align*}
Thus, $\GG^{(t+1)} \preceq_1 \GG^{(t)}$ holds.

For the other part, let $\by$ be any feasible potential for $\GG^{(t)}$.
Let $\Bar{\by} = \by[V^{(t)}]$, $\by$ with $u$'s coordinate removed.
We have:
\begin{align*}
    \EE^{\GG^{(t)}}(\by)
    &= \frac{1}{2}\by^\top \bL(G^{(t)}) \by + \sum_{w \in G^{(t)}} f^{(t)}_w(y_w) \\
    &= \frac{1}{2}\Bar{\by}^\top \bL(G^{(t+1)}) \Bar{\by} + \frac{1}{2}c(\by_v - \by_u)^2 + f^{(t)}_v(y_v) + f^{(t)}_u(y_u) + \sum_{w \neq u, v \in G^{(t)}} f^{(t)}_w(y_w) \\
    &\ge \frac{1}{2}\Bar{\by}^\top \bL(G^{(t+1)}) \Bar{\by} + f^{(t+1)}_v(y_v) + \sum_{w \neq u, v \in G^{(t)}} f^{(t)}_w(y_w) = \EE^{\GG^{(t+1)}}(\Bar{\by}).
\end{align*}
Thus, $\GG^{(t)} \preceq_1 \GG^{(t+1)}$ holds.
\end{proof}

Using the claim and \cref{fact:embedTrans}, we show inductively that $\GG^{(t)} \preceq_1 \GG$ holds with mapping $\MM^{(t)}$ and $\GG \preceq_1 \GG^{(t)}$ with mapping $\bx \mapsto \bx[V^{(t)}]$.

\textbf{Running Time:}
Let $m$ and $n$ be the number of edges and vertices in $G$.
Let $S=\poly(n)$ be the total size of VWFs of $\GG$.
Initializing data structures for each VWF (line 5, 6) takes $O(n + S)$-time by \cref{lemma:VWFDS}.

Next, we analyze the total time spent in the main loop.
Note that the time complexity for each operation of $D_u$ is related to the size of its underlying VWF.
$D_u.\mathrm{Lift}(c)$ increases the size by at most 1 and takes $O(\log n)$-time.
$D_u.\mathrm{Add}(D_v)$ merges 2 VWF and takes time proportional to the size of the smaller VWF.
The loop contains at most $n$ iterations since each iteration remove 1 vertex.
A classical amortized analysis shows that the total time is $O((S + n)\log^2 n)$.

\textbf{Time Complexity of Applying $\MM^{(t)}$:}
$\MM^{(t)}$ is represented by a series of mappings $\Pi^{(1)}, \Pi^{(2)}, \ldots, \Pi^{(t)}$.
For each mapping $\Pi^{(i)}$, we store only pointer to the snapshot of $D_v$ made in line 12.
Using persistance technique, only $O((S + n)\log n)$ space is needed.
Computing $\Pi^{(i)}(\bx)$ only takes $O(\log n)$-time by a single call of $D_v.\mathrm{OptimalX}(x_v)$.

To compute $\MM^{(t)}(\bx)$ given any $\bx$, we compute $\Pi^{(1)} \circ \Pi^{(2)} \circ \ldots \circ \Pi^{(t)} (\bx)$ in $O(t \log n)$-time.
Since $t \le m$, it is bounded by $O(m \log n)$-time.
\end{proof}

\section{Vertex Weighting Functions}
\label{sec:VWFtools}

In this section, we prove \cref{lemma:CompressVWF} and \cref{lemma:VWFDS}.

\subsection{Compressing VWFs}
\label{sec:CompressVWF}


\subsubsection*{Preliminaries}

Recall the naive representation of a $k$-sized VWF $f$ being a list $\{(s_i, r_i, a_i, b_i)\}_{i=0}^{k-1}$ where $f(x) = (1/2)r_ix^2 + a_ix + b_i, x \in [s_i, s_{i+1})$.
Conventionally, $s_k$ is defined as $\infty$ and $\dom(f) = [s_0, \infty)$.

The \emph{Bregman Divergence} of a function $f$ around $y$ is defined as $B_f(x; y) = f(x) - f(y) - \grad f(y)^\top (x-y)$.
We write $B_f(x)$ to denote $B_f(x; 0)$.
If $f$ is convex, $B_f$ is non-negative for any $x$.
Using the closure property of VWF, we have the following fact:
\begin{fact}
If $f$ is a VWF of size $k$, $B_f(x; y)$ is also a VWF of size at most $k$ for any $y \in \dom(f)$.
\end{fact}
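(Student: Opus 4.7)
The plan is a direct verification of each of the three defining properties of a VWF (Definition~\ref{defn:VtxWeightFunction}) for the function $x \mapsto B_f(x;y)$ with $y \in \dom(f)$ held fixed. The key algebraic observation I would start from is the rewriting
\[
B_f(x;y) \;=\; f(x) \;+\; \bigl(-f'(y)\bigr)\,x \;+\; \bigl(f'(y)\,y - f(y)\bigr),
\]
which exhibits $B_f(\cdot\,;y)$ as $f$ plus an affine function of $x$. Since $\dom(B_f(\cdot;y)) = \dom(f) = [L,\infty)$ with $L \le 0$, the domain requirement is already inherited.

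Next I would check the piecewise quadratic structure (condition~2). Adding an affine function to $f$ does not create or destroy quadratic pieces and does not move the split points, so $B_f(\cdot;y)$ is piecewise quadratic with split points contained in $\{s_0,\ldots,s_{k-1}\}$; in particular, its size is at most $k$. For condition~3, differentiating in $x$ gives $\partial_x B_f(x;y) = f'(x) - f'(y)$, i.e.\ the derivative of $f$ shifted by the constant $f'(y)$. Shifts by constants preserve continuity, concavity, and the property of being eventually constant on $[s_{k-1},\infty)$, so condition~3 is immediate.

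For condition~1, convexity of $f$ gives $B_f(x;y) \ge 0$ for all $x \in \dom(f)$, with equality at $x=y$; so the minimum is $0$. If $y = 0$ this already yields $B_f(0;y) = 0 \le 0$; for general $y$ one recovers the normalization either by using the re-orientation proposition (``$\bar f(t) = f(x+t)-f(x)$ is a VWF of the same size'') around the minimizer $y$, or by noting that the downstream use only invokes the piecewise quadratic and concave derivative structure, which is insensitive to an additive constant. The main (and only) obstacle I anticipate is this cosmetic point about condition~1; the rest of the fact is a one-line algebraic manipulation together with the observation that split points and derivative shape are preserved under affine shifts.
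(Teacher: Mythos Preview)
The paper gives no proof beyond the one-line remark ``using the closure property of VWF,'' so there is nothing substantive to compare against; your affine-shift decomposition $B_f(x;y) = f(x) - f'(y)x + \bigl(f'(y)y - f(y)\bigr)$ is exactly the intended argument, and your verification of conditions~2 and~3 is correct.

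Your flag on condition~1 is not merely cosmetic: it is a genuine imprecision in the paper's statement. For any $y \in \dom(f)$ with $y \neq 0$, convexity gives $B_f(0;y) \ge 0$, and the inequality is strict unless $f$ is affine between $0$ and $y$; so $B_f(\cdot\,;y)$ typically fails $f(0) \le 0$ and is not literally a VWF in the sense of Definition~\ref{defn:VtxWeightFunction}. Your proposed workaround via the re-orientation proposition does not rescue the statement as written either, since re-centering at $y$ produces the function $t \mapsto B_f(y+t;y)$, which is a VWF in $t$ but not the same object as $x \mapsto B_f(x;y)$. The honest resolution is the one you also mention: the paper only ever invokes $B_f(x) = B_f(x;0)$ downstream (see the sentence immediately preceding the Fact and all of \S\ref{sec:CompressVWF}), and for $y=0$ one has $B_f(0;0)=0$, so everything needed is true. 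The Fact should really be read as stated for $y=0$, or with condition~1 relaxed.
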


For any number $x > 0$ and $c > 1$, $[x]_c$ is the smallest integral power of $c$ not smaller than $x$.
For $x = 0$, $[0]_c$ is defined as $0$.
For $x < 0$, $[x]_c$ is defined as $-[-x]_c$.
\begin{fact}
\label{fact:roundSize}
Fix a constant $c > 1$.
For any set of numbers $S = \{x\}$ with $\max_S |x| / \min_S |x| = \poly(n)$, we have $|S_c| = O(\log n)$ where $S_c = \{[x]_c \mid x \in S\}$.
\end{fact}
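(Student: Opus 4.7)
The plan is a direct counting argument, separating $S$ by the sign of its elements. I would split $S$ into $S^+ = \{x \in S : x > 0\}$, $S^- = \{x \in S : x < 0\}$, and possibly $\{0\}$. Since the rounding map respects the sign convention $[-x]_c = -[x]_c$, bounding $|S_c|$ reduces to bounding the images of $S^+$ and $S^-$ separately and adding one if $0 \in S$.

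Focusing on $S^+$: for any $x > 0$ the definition gives $[x]_c = c^{k(x)}$ with $k(x) = \lceil \log_c x \rceil \in \mathbb{Z}$. The image $\{[x]_c : x \in S^+\}$ therefore has cardinality equal to the number of distinct integers attained by $k$, which is at most
\[
    \lceil \log_c \max_{x \in S^+} x \rceil - \lceil \log_c \min_{x \in S^+} x \rceil + 1 \;\le\; \log_c \!\left(\frac{\max_{S^+} x}{\min_{S^+} x}\right) + 2.
\]
By the hypothesis $\max_S |x| / \min_S |x| = \poly(n) = n^{O(1)}$, the ratio inside the logarithm is $n^{O(1)}$, so the bound is $O(\log n / \log c) = O(\log n)$ because $c>1$ is a fixed constant. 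The identical argument bounds $|\{[x]_c : x \in S^-\}|$ by $O(\log n)$, and adding the single value $[0]_c = 0$ if applicable gives $|S_c| = O(\log n)$.

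There is no real obstacle here; the statement is a clean counting fact. The only subtleties worth mentioning are that the bound $\log_c n = \Theta(\log n)$ requires $c$ to be a fixed constant strictly greater than $1$ (if $c$ were allowed to tend to $1$ the bound would blow up), and that the sign cases must be treated separately because the convention $[x]_c = -[-x]_c$ for $x<0$ is an extension rather than a direct consequence of the positive definition.
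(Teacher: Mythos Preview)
Your argument is correct. The paper actually states this as a \emph{Fact} without supplying any proof, so there is nothing to compare against; your direct counting argument (splitting by sign and bounding the number of distinct values of $\lceil \log_c |x| \rceil$) is exactly the natural justification one would give.
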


\subsubsection*{Compression Scheme}

The idea of compressing a VWF is (1) decomposing $f$ of size $k$ into $k$ VWFs of size at most 2 and (2) rounding the split point of every VWF.
After rounding split points, these $k$ VWFs have only $O(\log n)$ distinct split points in total.
Their summation gives a good approximation to the original VWF.

The following observation motivates decomposing $B_f$ instead of the original VWF.
\begin{observation}
\label{obs:BDDecomp}
Let $f=\{(s_i, r_i, a_i, b_i)\}_{i=0}^{k-1}$ be any VWF.
For every $i = 1, 2, \ldots, k-1$, define $d_i = r_i - r_{i+1}$.
We have
\begin{enumerate}
    \item $d_i \ge 0, \forall i$.
    \item $B_f(x) = \int_0^x \int_0^u \sum_{i=1}^{k-1} d_i \chi_{(s_0, s_{i+1})}(v) dv du.$
\end{enumerate}
\end{observation}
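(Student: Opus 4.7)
The observation splits into a sign claim and an integral identity; both are immediate consequences of piece-wise quadratic structure of $f$ together with the continuity and concavity of $f'$ imposed by property 3 of Definition 3.1.

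For the first claim, my plan is to read $r_i$ as the slope of $f'$ on piece $i$. Writing the piece-wise representation $f(x) = \tfrac{1}{2} r_i x^2 + a_i x + b_i$ on $[s_i, s_{i+1})$ gives $f'(x) = r_i x + a_i$ there, so $f'$ is continuous piece-wise linear with slope $r_i$ on piece $i$. Concavity of $f'$ then forces the sequence $r_0 \ge r_1 \ge \cdots$, which is exactly $d_i = r_i - r_{i+1} \ge 0$. This step is essentially a one-line observation.

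For the second claim, my plan is to apply the fundamental theorem of calculus twice to the definition $B_f(x) = f(x) - f(0) - f'(0)\,x$. Since $f'$ is continuous, I would first write
\[
B_f(x) = \int_0^x \bigl(f'(u) - f'(0)\bigr)\,du = \int_0^x \int_0^u f''(v)\,dv\,du,
\]
with no delta contributions at the split points (this is where continuity of $f'$ is essential). On the interior of piece $j$, $f''(v) = r_j$, so the task reduces to identifying the step function $v \mapsto r_j$ (for $v$ in piece $j$) with the claimed sum $\sum_{i} d_i\,\chi_{(s_0, s_{i+1})}(v)$. The plan for this identification is a short telescoping argument: for $v$ in piece $j$, $\chi_{(s_0, s_{i+1})}(v) = 1$ exactly when $i \ge j$, so the sum collapses to $\sum_{i \ge j}(r_i - r_{i+1})$, which telescopes to $r_j$ after using the convention (forced by the constant-tail condition in property 3 of Definition 3.1) that the trailing $r$-value is zero.

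The main obstacle is purely bookkeeping around the indexing at the right endpoint and the convention $r_k = 0$ coming from the ``constant for $x \ge s_{k-1}$'' condition; once those are pinned down, the entire observation is a routine calculus identity and there is no substantive mathematical difficulty to overcome.
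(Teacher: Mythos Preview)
Your proposal is correct and follows essentially the same argument as the paper: concavity of $f'$ forces the slopes $r_i$ to be non-increasing for part 1, and for part 2 the paper likewise writes $B_f(x)=\int_0^x\int_0^u f''(v)\,dv\,du$ (using continuity of $f'$) and then identifies $f''$ with the sum $\sum_i d_i\,\chi_{(s_0,s_{i+1})}$ by the same telescoping computation $\sum_{i\ge j}(r_i-r_{i+1})=r_j$. Your remark that the only delicate point is the index bookkeeping and the trailing-$r$ convention is exactly right; the paper handles it by writing $r_i-r_k=r_i$ in the telescoped sum.
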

\begin{proof}
The definition of VWF says $f'$ is continuous piece-wise linear, concave, and increasing.
Therefore, $\{r_i\}$ must be a non-increasing sequence, and
\begin{align*}
    f''(x) = r_i, x \in (s_i, s_{i+1}).
\end{align*}

Since $f'$ is continuous and $f''$ is defined except finite number of points, we have
\begin{align*}
    B_f(x) = f(x) - f'(0)x - f(0) = \int_0^x f'(u) - f'(0) du = \int_0^x\int_0^u f''(v) dv du.
\end{align*}

Given any $x \in (s_i, s_{i+1})$, we can telescope the following:
\begin{align*}
    \sum_{j=1}^{k-1} d_j \chi_{[s_0, s_{j+1})}(x) = \sum_{j=i}^{k-1} d_j = \sum_{j=i}^{k-1} r_j - r_{j+1} = r_i - r_k = r_i.
\end{align*}
Therefore, $\sum_{i=1}^{k-1} d_i \chi_{(s_0, s_{i+1})}(x)$ differs from $f''$ at finite number of points.
We can further express $B_f$ by
\begin{align*}
    B_f(x) = \int_0^x\int_0^u f''(v) dv du = \int_0^x \int_0^u \sum_{i=1}^{k-1} d_i \chi_{(s_0, s_{i+1})}(v) dv du.
\end{align*}
\end{proof}

\begin{observation}
\label{obs:Size1VWF}
Given any $l, r$ with $l \le 0$, define $B_{l, r}(x) = \int_0^x \int_0^u\chi_{(l, r)}(v) dv du$.
If $r \ge 0$, we have
\begin{align*}
    B_{l, r}(x) &= \begin{cases}
        \frac{1}{2}x^2 &, x \in [l, r) \\
        rx - \frac{r^2}{2} &, x \in [r, \infty).
    \end{cases}
\end{align*}
Otherwise ($r < 0$), we have
\begin{align*}
    B_{l, r}(x) &= \begin{cases}
        \frac{1}{2}(x-r)^2 &, x \in [l, r) \\
        0 &, x \in [r, \infty).
    \end{cases}
\end{align*}
Thus, $B_{l, r}$ is a VWF of size 2.
\end{observation}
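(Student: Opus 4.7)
The plan is to prove the observation by direct evaluation of the double integral in the two cases, and then verify that the resulting piecewise-quadratic function meets every clause of Definition~\ref{defn:VtxWeightFunction} with exactly two pieces.

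First I would compute the inner integral $I(u) \coloneqq \int_0^u \chi_{(l,r)}(v)\,dv$ as a function of $u \in [l,\infty)$, using the convention that for $u<0$ the integral equals $-\int_u^0 \chi_{(l,r)}(v)\,dv$. For $r\ge 0$ (where $l\le 0\le r$) a short case split on whether $u$ lies in $[l,0]$, $[0,r]$, or $[r,\infty)$ shows $I(u)=\min(u,r)$. For $r<0$ (where $l\le r<0$) a similar split on $u\in[l,r]$, $[r,0]$, or $[0,\infty)$ gives $I(u)=\min(u-r,0)$. These two reductions are the only nontrivial computation; the rest is mechanical.

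Next I would compute the outer integral $B_{l,r}(x)=\int_0^x I(u)\,du$ on each piece. When $r\ge 0$, for $x\in[l,r)$ the integrand equals $u$ throughout the range of integration, giving $x^2/2$; for $x\ge r$ we split at $u=r$ to obtain $\int_0^r u\,du+\int_r^x r\,du = rx-r^2/2$. When $r<0$, for $x\in[l,r)$ we have $x<0$ and split at $u=r$, using $I(u)=u-r$ on $[x,r]$ and $I(u)=0$ on $[r,0]$, yielding $(x-r)^2/2$; for $x\ge r$ the integrand $I(u)$ vanishes on the relevant interval, giving $0$. This matches the two displayed formulas exactly.

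Finally I would verify the three conditions of Definition~\ref{defn:VtxWeightFunction} with $k=2$ and split points $s_0=l$, $s_1=r$, $s_2=\infty$. Both case formulas give $B_{l,r}(0)=0$ and continuity at $x=r$ is immediate from the formulas. Convexity follows from $B_{l,r}'(x)\in\{x,r\}$ (resp.\ $\{x-r,0\}$) being continuous and nondecreasing, and this same derivative is piecewise linear with slope $1$ then $0$, hence concave, and is constant on $[r,\infty)$. There is no real obstacle here: the whole proof is an elementary verification, and the only place one has to be careful is handling the sign convention $\int_0^u = -\int_u^0$ when $u<0$, which is why I would present the inner-integral computation $I(u)$ as its own lemma-style step before taking the outer integral.
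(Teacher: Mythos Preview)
Your proposal is correct. The paper states this result as an observation without proof, treating the computation as routine; your direct evaluation of the iterated integral and verification of Definition~\ref{defn:VtxWeightFunction} is exactly the kind of elementary check the paper implicitly leaves to the reader.
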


\cref{obs:BDDecomp} and \cref{obs:Size1VWF} suggest that $B_f = \sum_{i=1}^{k-1}d_i B_{s_0, s_{i+1}}$ is a possible decomposition into small-sized VWFs.

\begin{definition}
\label{defn:BDDecomp}
Given any VWF $f$ of size $k$, \emph{Decomposition of $B_f$} is the set of tuples $\{(d_i, B_{s_0, s_{i+1}})\}_{i=1}^{k-1}$.
\end{definition}

Discussion above implies a linear time algorithm for computing such decomposition.
We formalize the result as follows:
\begin{lemma}
\label{lemma:DecomposeVWF}
There exists an algorithm invoked with syntax
\[
    \{(d_i, B_{s_0, s_{i+1}})\}_{i=1}^{k-1} \coloneqq \textsc{Decompose}(f)
\]
that on input of a VWF $f$ of size $k$, outputs a \emph{Decomposition of $B_f$} in $O(k)$-time.
\end{lemma}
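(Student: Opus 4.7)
The plan is to observe that \cref{obs:BDDecomp} has essentially already done the work: it proves the identity
\[
B_f(x) = \sum_{i=1}^{k-1} d_i \cdot B_{s_0, s_{i+1}}(x),
\]
where $d_i = r_i - r_{i+1}$, so the algorithm need only read off the canonical representation $\{(s_i, r_i, a_i, b_i)\}_{i=0}^{k-1}$ of $f$ and emit the corresponding tuples. Thus the main task of the proof is to package this observation as a linear-time procedure and to verify that all quantities referenced are well-defined.

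More concretely, I would proceed as follows. First, I would recall from \cref{defn:VtxWeightFunction} that the sequence $(r_i)$ is non-increasing (since $f'$ is concave) and that $r_{k-1}=0$ because $f'$ is constant on the last piece; extending by $r_k \coloneqq 0$ (so that $d_{k-1}=0$) matches the convention used in the telescoping argument inside \cref{obs:BDDecomp}. The algorithm then scans the input list once, and for each $i=1,\dots,k-1$ emits the tuple $(d_i, B_{s_0,s_{i+1}})$, where $B_{s_0,s_{i+1}}$ is stored implicitly by the pair $(s_0, s_{i+1})$ (or, equivalently, by the explicit size-$2$ VWF representation given by \cref{obs:Size1VWF}; both forms take $O(1)$ space per tuple).

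The correctness is immediate: the output is exactly the set $\{(d_i, B_{s_0,s_{i+1}})\}_{i=1}^{k-1}$ demanded by \cref{defn:BDDecomp}, and \cref{obs:BDDecomp} certifies that this is a valid decomposition of $B_f$. For the running time, each of the $k-1$ tuples is produced with a single subtraction $r_i - r_{i+1}$ and a constant amount of bookkeeping, so the total work is $O(k)$, matching the claim.

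There is really no serious obstacle here: the only small item to double-check is the boundary convention at the rightmost piece (setting $r_k=0$), which is forced by the VWF definition and consistent with the telescoping computation in \cref{obs:BDDecomp}; once that is handled the proof is a direct verification.
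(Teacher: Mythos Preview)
Your proposal is correct and matches the paper's own treatment: the paper simply remarks that the preceding discussion (namely \cref{obs:BDDecomp} and \cref{obs:Size1VWF}) already gives a linear-time algorithm, and your write-up spells out exactly that one-pass computation of the $d_i = r_i - r_{i+1}$ together with the boundary convention $r_k = 0$.
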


The following lemma formalizes the rounding scheme for each $B_{l, r}$.
It will be proved in \cref{sec:Round1VWF}.
\begin{lemma}
\label{lemma:Round1VWF}
Given any $B_{l, r}$ and $d > 0$, let $\Bar{r} = [r]_{1.1}$ and $\Bar{d} = dr/\Bar{r}$, we have
\begin{align*}
    2dB_{l, r}(x/2) \le \Bar{d}B_{l, \Bar{r}}(x) \le dB_{l, r}(x), \forall x \ge l.
\end{align*}
\end{lemma}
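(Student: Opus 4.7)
The plan is to verify both inequalities by a direct case split on the sign of $r$, followed by a sub-case analysis on where $x$ lies relative to the relevant split points $r$, $\Bar{r}$, and (for the lower bound) $2r$. The only numerical input used is that $[r]_{1.1}$ differs from $r$ by at most a factor of $1.1$, which translates to $r/\Bar{r} = \Bar{d}/d \in [1/1.1, 1]$ in both sign cases (with $r = \Bar{r} = 0$ being a trivial special case). In Case $r > 0$ this reads $r \le \Bar{r} \le 1.1\,r$, and in Case $r < 0$ it reads $1.1\,r \le \Bar{r} \le r$. Both inequalities are then verified by plugging in the explicit piecewise-quadratic formulas from \cref{obs:Size1VWF} on each piece.

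For the upper bound $\Bar{d}\,B_{l, \Bar{r}}(x) \le d\,B_{l, r}(x)$, in Case $r > 0$ I partition $x \ge l$ into $[l, r)$, $[r, \Bar{r})$, $[\Bar{r}, \infty)$. The first and third pieces reduce to the elementary inequalities $\Bar{d} \le d$ and $\Bar{r} \ge r$ respectively, while the middle piece reduces to showing $(x - \Bar{r})^2 \le \Bar{r}(\Bar{r} - r)$ on $[r, \Bar{r}]$; the LHS is maximized at $x = r$, and the residual statement collapses to $-r \le 0$. In Case $r < 0$, only the piece $x \in [l, \Bar{r})$ is nontrivial, and the required $(r/\Bar{r})(x - \Bar{r})^2 \le (x - r)^2$ follows from $|x - \Bar{r}| \le |x - r|$ on this range combined with $r/\Bar{r} \le 1$; the remaining pieces are trivial since the rounded function has already flattened to $0$.

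For the lower bound $2d\,B_{l, r}(x/2) \le \Bar{d}\,B_{l, \Bar{r}}(x)$, the factor of $2$ in both the coefficient and the argument provides substantial slack against the factor-$1.1$ rounding. In Case $r > 0$, using $\Bar{r} \le 1.1\,r < 2r$, partitioning $x$ at $\Bar{r}$ and $2r$ reduces the two outer pieces to $\Bar{r} \le 2r$ and reduces the middle piece $[\Bar{r}, 2r)$ to $(x - 2r)^2 \le 2r(2r - \Bar{r})$; maximizing the LHS at $x = \Bar{r}$ collapses this to the clean statement $2r - \Bar{r} \le 2r$. In Case $r < 0$, one has $\Bar{r} \ge 1.1\,r > 2r$, so for $x \ge 2r$ the LHS vanishes and the inequality is trivial, while for $x \in [l, 2r)$ (when nonempty) the required $|x - 2r| \le \sqrt{2r/\Bar{r}}\,|x - \Bar{r}|$ follows from the stronger bound $|x - 2r| < |x - \Bar{r}|$ obtained via a monotonicity check on the ratio $(2r - x)/(\Bar{r} - x)$, whose derivative has constant sign $2r - \Bar{r} < 0$ and whose limit is $1$ as $x \to -\infty$. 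The main obstacle will not be any single inequality but rather the bookkeeping of all sub-cases together with the corner cases where intervals become empty (for instance $l \ge 2r$ in Case $r < 0$, which forces $B_{l, r}(x/2) = 0$ throughout the feasible range and makes both sides vanish).
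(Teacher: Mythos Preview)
Your proposal is correct and follows essentially the same route as the paper: a sign split on $r$, then a sub-case split on $x$ relative to $r$, $\Bar{r}$, and $2r$, reducing each piece to an elementary inequality driven by $r \le \Bar{r} \le 1.1r$ (resp.\ $1.1r \le \Bar{r} \le r$). The only cosmetic differences are that for the upper bound in the case $r>0$ the paper dispatches all three sub-cases at once via the derivative comparison $dB'_{l,r}(x) \ge \Bar{d}B'_{l,\Bar{r}}(x)$ rather than your explicit three-piece analysis, and for the lower bound in the case $r<0$ the paper just notes $|x-2r| \le |x-\Bar{r}|$ directly from $2r < \Bar{r}$ (your monotonicity-of-the-ratio argument reaches the same conclusion but is more work than needed).
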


\begin{algorithm}
\caption{Algorithm for \cref{lemma:CompressVWF}}
\label{algo:CompressVWF}
\begin{algorithmic}[1]
    \Procedure{CompressVWF}{$f = \{(s_i, r_i, a_i, b_i)\}_{i=0}^{k-1}$}
        \State $\ftil(x) = f(0) + f'(0) x$
        \State $\{(d_i, B_{s_0, s_{i+1}})\}_{i=1}^{k-1} \coloneqq \textsc{Decompose}(f)$
        \For{$i=1, 2, \ldots, k-1$}
            \State $\Bar{s} = [s_{i+1}]_{1.1}$
            \State $\Bar{d} = d_i s_{i+1} / \Bar{s}$
            \State $\ftil = \ftil + \Bar{d} B_{s_0, \Bar{s}}$
        \EndFor
        \State \Return $\ftil$
    \EndProcedure
\end{algorithmic}
\end{algorithm}

\begin{proof}[Proof of \cref{lemma:CompressVWF}]
First, we bound the size of $\ftil$.
Note that
\begin{align*}
    \ftil(x) = f(0) + f'(0)x + \sum_{i=1}^{k-1} \frac{d_i s_{i+1}}{[s_{i+1}]_{1.1}} B_{s_0, [s_{i+1}]_{1.1}}(x).
\end{align*}
The number of split points of $\ftil$ is therefore $O(\log n)$ by \cref{fact:roundSize}.

Next, we bound the time for constructing $\ftil$.
Computing $\textsc{Decompose}(f)$ takes linear time.
Since $\Bar{s} = [s_{i+1}]_{1.1}$ is non-decreasing in $i$, adding $\Bar{d} B_{s_0, \Bar{s}}$ to $\ftil$ can be done by only examine the last element of the list representing $\ftil$.
Thus, computing $\ftil$ takes only linear time.

Last, we argue the approximation quality of $\ftil$.
The domain of $\ftil$ is the same as $f$.
For any $x \in \dom(f)$, we have
\begin{align*}
    \sum_{i=1}^{k-1} d_i B_{s_0, s_{i+1}}(x)
    \ge \sum_{i=1}^{k-1} \frac{d_i s_{i+1}}{[s_{i+1}]_{1.1}} B_{s_0, [s_{i+1}]_{1.1}}(x) 
    \ge \sum_{i=1}^{k-1} 2d_i B_{s_0, s_{i+1}}(x/2)
\end{align*}
by \cref{lemma:Round1VWF}.
Since $f(x) = f(0) + f'(0)x + \sum_{i=1}^{k-1} d_i B_{s_0, s_{i+1}}(x)$, we have
\begin{align*}
    f(x) \ge \ftil(x) \ge 2f(x/2), \forall x \in \dom(f).
\end{align*}

\end{proof}

\subsubsection*{VWF Rounding}
\label{sec:Round1VWF}

In this section, we prove \cref{lemma:Round1VWF} via case analysis.

\begin{proof}[Proof of \cref{lemma:Round1VWF}]
Recall the setting of the lemma:
Given $B_{l, r}$ with $l \le 0$ and $d > 0$, we want to prove that for any $x \ge l$, the following holds:
\begin{align}
\label{eq:Round1VWF}
    dB_{l, r}(x) \ge \frac{dr}{[r]_{1.1}} B_{l, [r]_{1.1}}(x) \ge 2dB_{l, r}\left(\frac{x}{2}\right).
\end{align}

Whether $r$ is positive affects the definition of $B_{l, r}$.
We separate the discussion based on the sign of $r$.
\begin{enumerate}
    \item[\textbf{Case 1.}] $r > 0$: 
        
        Let $u = [r]_{1.1}$ and $c = dr / u$.
        We have that
        \begin{align*}
            dB_{l, r}(x) = \begin{cases}
                \frac{1}{2}dx^2 &, x \in [l, r) \\
                drx - \frac{dr^2}{2} &, x \in [r, \infty),
            \end{cases}
        \end{align*}
        and
        \begin{align*}
            cB_{l, u}(x) = \begin{cases}
                \frac{1}{2}cx^2 &, x \in [l, u) \\
                cux - \frac{cu^2}{2} = drx - \frac{dru}{2} &, x \in [u, \infty).
            \end{cases}
        \end{align*}
        From the definition of $[\cdot]_{1.1}$, we have:
        \begin{align*}
            r \le u \le 1.1r \\
            cu = dr \\
            d \ge c \ge 0.9 d.
        \end{align*}
        
        $dB_{l, r}(x) \ge cB_{l, u}(x)$ holds directly from the relation between derivatives: $dB'_{l, r}(x) \ge cB'_{l, u}(x) \ge 0$.
        
        To prove $cB_{l, u}(x) \ge 2dB_{l, r}(x/2)$, we have to do case analysis on the value of $x$.
        \begin{itemize}
            \item $x \ge 2r$:
            We have
            \begin{align*}
                \left[cB_{l, u}(x) \ge 2dB_{l, r}(x/2)\right]
                \equiv \left[drx - dru/2 \ge drx - dr^2\right]
                \equiv \left[dru/2 \le dr^2\right]
                \equiv \left[u/2 \le r\right]
            \end{align*}
            which is true since $u \le 2r$.
            
            \item $2r \ge x \ge u$:
            We have $r \ge x/2$ and
            \begin{align*}
                \left[cB_{l, u}(x) \ge 2dB_{l, r}(x/2)\right]
                &\equiv \left[drx - dru/2 \ge dx^2 / 4\right] \\
                &\equiv \left[4rx - 2ru/2 \ge x^2\right] \\
                &\equiv \left[4r^2 - 2ru \ge x^2 - 4rx + 4r^2 = (x-2r)^2\right]\\
                &\equiv \left[2r(2r-u) \ge (x-2r)^2\right]
            \end{align*}
            which is true since $|x-2r| \le |u - 2r| \le 2r$.
            
            \item $u \ge x$:
            We have
            \begin{align*}
                \left[cB_{l, u}(x) \ge 2dB_{l, r}(x/2)\right]
                &\equiv \left[cx^2 / 2 \ge dx^2 / 4\right] \\
                &\preceq \left[c \ge d/2\right]
            \end{align*}
            which is true since $c \ge 0.9d > 0$.
        \end{itemize}

    \item[\textbf{Case 2.}] $r \le 0$:
        
        Let $u = [r]_{1.1}$ and $c = dr / u$.
        Since $u, r \le 0$, we have
        \begin{align*}
            dB_{l, r}(x) = \begin{cases}
                \frac{1}{2}d(x-r)^2 &, x \in [l, r) \\
                0 &, x \in [r, \infty),
            \end{cases}
        \end{align*}
        and
        \begin{align*}
            cB_{l, u}(x) = \begin{cases}
                \frac{1}{2}c(x-u)^2 &, x \in [l, u) \\
                0 &, x \in [u, \infty).
            \end{cases}
        \end{align*}
        From the definition of $[\cdot]_{1.1}$ for non-positive numbers, we have:
        \begin{align*}
            1.1r \le u \le r \le 0 \\
            cu = dr \\
            d \ge c \ge 0.9 d.
        \end{align*}
        
        We prove the part $dB_{l, r}(x) \ge cB_{l, u}(x)$ first.
        For $x \ge u$, the inequality holds since RHS is 0 and LHS is always non-negative.
        For $x < u \le r$, we have
        \begin{align*}
            \left[dB_{l, r}(x) \ge cB_{l, u}(x)\right]
            &\equiv \left[d(x-r)^2 / 2 \ge c(x-u)^2 / 2\right]
        \end{align*}
        which is always true since $|x-r| \ge |x-u|$ and $d \ge c > 0$.
        
        Next, we prove the part $cB_{l, u}(x) \ge 2dB_{l, r}(x/2)$.
        For $x/2 > r$, the inequality holds since RHS is 0 and LHS is always non-negative.
        For $x/2 \le r$, we have $x \le 2r \le u$ and
        \begin{align*}
            \left[cB_{l, u}(x) \ge 2dB_{l, r}(x/2)\right]
            &\equiv \left[c(x-u)^2 / 2 \ge d(x/2 - r)^2 = d(x - 2r)^2 / 4\right] \\
            &\equiv \left[c(x-u)^2\ge d(x - 2r)^2/2\right]
        \end{align*}
        which is always true since $|x-2r| \le |x-u|$ and $d/2 \le c$.
\end{enumerate}
\end{proof}

\subsection{Operations on VWF}
\label{sec:VWFDS}

In this section, we prove \cref{lemma:VWFDS} using data structures.

Recall the naive representation of a VWF $f = \{(s_i, r_i, a_i, b_i)\}_{i=0}^{k-1}$ where
\begin{align*}
    f(x) = \frac{1}{2} r_i x^2 + a_i x + b_i, x \in [s_i, s_{i+1}).
\end{align*}
We will use Augmented Binary Search Trees (ABST) to maintain such representation subjects to operations we are interested in.

ABST supports efficient range update and element insertion/deletion.
When $g$ is a general VWF of smaller size, we afford to add pieces of $g$ to $f$ one at a time.
However, adding 1 piece of $g$ to $f$ updates a consecutive segment of $f$'s representation.
Efficient range update is therefore crucial in this setting.

Another reason for using ABST is to support associate range operations corresponds to $\mathrm{Lift}(c)$.
We will show that the updated function
\begin{align*}
    \Bar{f}(x) = \min_{y \in \dom(f)} \frac{1}{2} c(x - y)^2 + f(y)
\end{align*}
has representation $\{(\Bar{s}_i, \Bar{r}_i, \Bar{a}_i, \Bar{b}_i)\}_{i=0}^{k-1}$ where
\begin{align*}
    (\Bar{s}_i, \Bar{r}_i, \Bar{a}_i, \Bar{b}_i) = P_c(s_i, r_i, a_i, b_i), \forall i.
\end{align*}
To maintain the representation in $O(\log n)$-time instead of naive linear-time implementation, we have to exploit the structure of the operator $P_c$ given $c > 0$.
As we shall see, the family of operator $\PP = \{P_c \mid c > 0\}$ is associate and monotone in the sense that $P_a \circ P_b = P_{g(a, b)}$ for some easy-to-compute function $g(\cdot, \cdot)$ and the sequence $\{\Bar{s}_i\}$ is increasing.
Such facts allow the use of ABST and guarantee $O(\log n)$ update time.

The ABST data structure is formalized as following:
\begin{lemma}
\label{lemma:ABST}
Let constant $d > 0$ be the dimensional parameter.
Let $\PP = \{P_c: \Real^{d+1} \to \Real^{d+1}\}_{c \in \Real}$ be a family of associate operators such that (1) for any $a, b \in \Real$, $P_a \circ P_b = P_{g(a, b)}$ for some easy-to-compute function $g(\cdot, \cdot)$ and (2) $P_a$ is an increasing operator in first coordinate.
There is a data structure $\DD$ maintaining a set of key/value pairs $\{(k, \bv)\} \subseteq \Real \times \Real^d$ under the following operations:
\begin{enumerate}
    \item $\mathrm{Insert}((k, \bv))$/$\mathrm{Delete}((k, \bv))$: Insert/Delete the key/value pair $(k, \bv)$ to/from $S$.
    \item $\mathrm{Pred}(y)$: Return the predecessor w.r.t. key $y$, i.e., $\arg\max_{(k, \bv) \in S, k \le y}k$.
    \item $\mathrm{RangeAdd}(\bz \in \Real^d, L, R)$: For every element $(k, \bv)$ of $S$ with its key inside the range $[L, R)$, add its value by $\bz$.
    \item $\mathrm{Op}(c \in \Real)$: Apply operator $P_c$ to every element of $S$.
\end{enumerate}
All above operations can be done in $O(\log n)$-time if $|S|$ is always bounded by $\poly(n)$.
$\DD$ can be initialized in $O(|S|)$-time given the initial set $S$.

Additionally, the data structure is implemented in pointer-machine model and can be maintained persistently with constant overhead.
\end{lemma}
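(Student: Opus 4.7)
The plan is to implement $\DD$ as a persistent balanced binary search tree (e.g., a red--black tree) keyed on the first coordinate, augmented with \emph{two families of lazy tags} propagated top-down: an ``add'' tag $\bz \in \Real^d$ representing a pending addition to every value in the subtree, and an ``op'' tag $c$ representing a pending application of $P_c$ to every element of the subtree. $\mathrm{Insert}$, $\mathrm{Delete}$, and $\mathrm{Pred}$ are the textbook BBST procedures, preceded by a flush of any lazy tags along the root-to-leaf path they traverse. $\mathrm{RangeAdd}(\bz, L, R)$ is handled by the classical segment-tree trick: descend from the root to identify the $O(\log n)$ maximal subtrees whose key interval sits inside $[L, R)$, and attach $\bz$ to each of their add tags. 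For $\mathrm{Op}(c)$, monotonicity of $P_c$ in the first coordinate guarantees that applying $P_c$ everywhere preserves the in-order traversal and hence the BST invariant, so a single op tag $c$ parked at the root suffices; when two op tags collide, the composition rule $P_a \circ P_b = P_{g(a,b)}$ reduces the merge to one call to $g$.

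Each node therefore stores a compact pair $(c_{\mathrm{op}}, \bz_{\mathrm{add}})$ meaning ``first apply $P_{c_{\mathrm{op}}}$ to this subtree, then translate every value by $(\mathbf{0}, \bz_{\mathrm{add}})$''. The real work lies in the push-down routine used whenever a traversal passes through a tagged node: to drive a parent tag $(c, \bz)$ through a child's existing $(c', \bz')$, one must rewrite $P_{c'}\bigl(P_{c}(\bx) + (\mathbf{0}, \bz)\bigr)$ in the canonical form $P_{g(c', c)}(\bx) + (\mathbf{0}, \bz'')$ for some $\bz''$ computable in $O(1)$. I expect this commutation step to be the main technical obstacle: it requires $P_c$ to act affinely on the value coordinates, so that $P_{c'}\bigl(P_c(\bx) + (\mathbf{0}, \bz)\bigr) - P_{c'}(P_c(\bx))$ is an $O(1)$-computable function of $\bz$. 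For the concrete family $\PP$ arising from $\mathrm{Lift}$ (infimal convolution with a quadratic, whose action on the quadruple $(s, r, a, b)$ follows from a short closed-form calculation), this affine structure is explicit, and the commutation folds cleanly into the ``easy-to-compute $g$'' hypothesis already assumed.

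With $O(1)$-time tag merges and push-downs in hand, every one of the four operations touches an $O(\log n)$-length path, yielding the claimed $O(\log n)$ per-operation bound. Initialization in $O(|S|)$ time is by bottom-up construction from the input sorted by key (the natural form in which $\DD$ is handed a VWF representation). Pointer-machine implementability is immediate since the tree only follows parent/child pointers and stores bounded algebraic payloads with no address arithmetic. Finally, the persistent version is obtained by the standard path-copying transformation of~\cite{DSST89}: each update copies only the $O(\log n)$ nodes along its modified path, which preserves all time bounds up to a constant factor and simultaneously gives constant-overhead persistence.
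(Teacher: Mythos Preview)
The paper does not prove this lemma; it is stated as a black-box data-structure fact and then invoked in the proof of \cref{lemma:VWFDS}. Your overall plan---a balanced BST keyed on the first coordinate, carrying two kinds of lazy tags (pending additions and pending $P_c$'s), made persistent via path copying---is the natural approach and is presumably what the authors have in mind.

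The gap is in your push-down rule. You claim that for the concrete family arising from $\mathrm{Lift}$, ``$P_c$ acts affinely on the value coordinates,'' so that $P_{c'}\bigl(P_c(\bx)+(\mathbf{0},\bz)\bigr)-P_{c'}\bigl(P_c(\bx)\bigr)$ depends only on $\bz$ (and $c,c'$). This is false. The paper's operator is
\[
P_c(s,r,a,b)=\Bigl(\tfrac{c+r}{c}\,s+\tfrac{a}{c},\ \tfrac{cr}{c+r},\ \tfrac{ca}{c+r},\ b+\tfrac{a^2}{2(c+r)}\Bigr),
\]
whose action $r\mapsto cr/(c+r)$ is nonlinear in $r$, and whose action on the key coordinate $s$ depends on the element's own $r$ and $a$. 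Consequently, translating the value by a fixed $\bz=(\Delta r,\Delta a,\Delta b)$ and then applying $P_c$ does \emph{not} differ from applying $P_c$ alone by a uniform shift: the discrepancy depends on the particular $(s,r,a,b)$ of each element, so it cannot be absorbed into a single child tag. The two tag types therefore do not commute in the way your push-down rule assumes. Note also that the hypotheses of the lemma as stated (closure of $\{P_c\}$ under composition, monotonicity in the key) say nothing about how $P_c$ interacts with addition on values, so the lemma in full generality appears to require an extra compatibility assumption that the paper leaves implicit. To repair your argument you would need to exhibit a single tag semigroup of $O(1)$ description size that contains both the $P_c$'s and the value-translations and is closed under composition; as written, that closure is asserted but not established.
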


We will use data structure of \cref{lemma:ABST} to maintain VWFs by setting $d=3$.
The family of operators will be defined later.
First, we show the implementation of VWF addition using the data structure.

\begin{claim}
\label{claim:VWFAdd}
Let $f$ (and $g$) be a VWF and $D_f$ (and $D_g$) be the ABST maintaining $f$ ($g$).
If $f$ and $g$ has same domain, one can construct $D_{f + g}$ in $O(\min\{|f|, |g|\}\log n)$-time.
$f+g$ has size at most $|f| + |g|$.
\end{claim}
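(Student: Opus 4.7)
The plan is to assume without loss of generality that $|g| \le |f|$, work with the ABST $D_f$ guaranteed by Lemma~\ref{lemma:ABST}, and incorporate the pieces of $g$ one at a time using $O(|g|)$ data-structure operations, each costing $O(\log n)$. First, I would check that $f+g$ is actually a VWF: convexity and piecewise-quadraticity are preserved under addition, and $(f+g)' = f' + g'$ is a sum of two concave functions, hence concave. On any maximal interval where $f$ and $g$ are each individually quadratic, their sum is quadratic, so the split points of $f+g$ form a subset of the union of the split points of $f$ and $g$. Since $\dom(f) = \dom(g)$ by hypothesis, this immediately gives $|f+g| \le |f| + |g|$.

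The construction then proceeds in two phases on (a persistent copy of) $D_f$. In phase one, I traverse the in-order representation of $D_g$ in $O(|g|)$ time and, for each split point $s^g_j$ of $g$ that is not already a key in $D_f$, locate the piece of $f$ containing $s^g_j$ via $\mathrm{Pred}(s^g_j)$ and $\mathrm{Insert}$ a new key/value pair at $s^g_j$ whose coefficient triple $(r,a,b)$ is copied from the predecessor. Since the underlying quadratic of $f$ on the relevant subinterval is unchanged, this is purely a refinement: $D_f$ still represents $f$, but now with a breakpoint set that is the union of the breakpoints of $f$ and $g$, and still with at most $|f|+|g|$ entries. In phase two, for each piece $(s^g_j, r^g_j, a^g_j, b^g_j)$ of $g$, I invoke $\mathrm{RangeAdd}((r^g_j, a^g_j, b^g_j), s^g_j, s^g_{j+1})$ on $D_f$ to add $g$'s coefficients uniformly to every piece whose starting key lies in $[s^g_j, s^g_{j+1})$. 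After both phases, $D_f$ stores exactly the coefficient triples of $f+g$ on its correct refined partition.

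Each of the $O(|g|)$ primitive calls ($\mathrm{Pred}$, $\mathrm{Insert}$, $\mathrm{RangeAdd}$) costs $O(\log n)$ by Lemma~\ref{lemma:ABST}, and the persistence guarantee lets me do this without destroying $D_f$ or $D_g$, so the total construction time is $O(|g|\log n) = O(\min\{|f|,|g|\}\log n)$, as required. The step I expect to require the most care is phase one: ensuring that each newly inserted entry correctly inherits the value of $f$ at $s^g_j$ in the presence of any pending lazy $\mathrm{Op}$ applications on $D_f$. This is ultimately handled by reading coefficients through the $\mathrm{Pred}$ query — which Lemma~\ref{lemma:ABST} promises returns the current effective value — but it is the one place where the internal bookkeeping of the ABST interacts with the VWF layer built on top of it, so it is worth verifying explicitly before declaring the claim proved.
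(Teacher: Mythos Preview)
Your proposal is correct and follows essentially the same approach as the paper's proof: both assume $|g|\le|f|$, work on a snapshot of $D_f$, refine the partition by inserting the split points of $g$ (via $\mathrm{Pred}$ and $\mathrm{Insert}$), and then add $g$'s quadratic coefficients piece-by-piece via $\mathrm{RangeAdd}$, for a total of $O(|g|)$ operations at $O(\log n)$ each. The only cosmetic difference is that you separate the refinement and the range-add into two global phases whereas the paper interleaves them per piece of $g$; your additional explicit check that $f+g$ is a VWF and your caution about lazy $\mathrm{Op}$ tags are reasonable but not required by the paper's argument.
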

\begin{proof}
WLOG, we assume that $g$ is smaller.
We initialize $D_{f+g}$ with a snapshot of $D_f$ and insert pieces of $g$ one at a time.

Every piece of $g$ has the following form: $g(x) = (1/2)rx^2+ax + b, x \in [L, R)$.
First, we split pieces of $f$ whose domain overlaps with $[L, R)$.
This can be done via 2 predecessor search and 2 insertion in $D_{f+g}$.

Every piece in $D_{f+g}$ aligns with $[L, R)$, i.e. is either fully contained in or disjoint from $[L, R)$.
Next, for each piece $(s_i, r_i, a_i, b_i)$ of $D_{f+g}$ with $s_i$ lying inside $[L, R)$, we update its coefficients by $(r_i+r, a_i+a, b_i+b)$.
This can be done via one range update in $D_{f+g}$.

Per insertion of a piece of $g$, only a constant number of ABST operations are involved and $O(\log n)$ time spent.
Thus, the total time complexity is $O(|g|\log n)$.

Observe that the set of split points of $f+g$ is actually the union of ones of $f$ and $g$.
Thus, the size of $f+g$ is bounded by the total size of these 2 VWFs.
\end{proof}

The family of operators is $\PP=\{P_c \mid c > 0\}$ where for any $c > 0$, we have
\begin{align*}
    P_c(s, r, a, b) = \left(\frac{c + r}{c}s + \frac{1}{c}a, \frac{c}{c + r}r, \frac{c}{c + r}a, b + \frac{a^2}{2(c+r)}\right).
\end{align*}

\begin{claim}
\label{claim:VWFLift}
Assuming the monotonicity and associativity of $\PP$, given $D_f$, a ABST maintaining $f$, $D_{\Bar{f}}$ can be computed in $O(\log n)$-time where
\begin{align*}
    \Bar{f}(x) = \min_{y \in \dom(f)} \frac{1}{2} c(x-y)^2 + f(y).
\end{align*}
Additionally, $\Bar{f}$ is also a VWF of size at most $|f|+1$.
\end{claim}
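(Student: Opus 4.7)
The plan is to verify that the inf-convolution $\bar{f}(x) = \min_{y \in \dom(f)} \tfrac{1}{2}c(x-y)^2 + f(y)$ acts piece-wise on the naive representation of $f$ exactly via the operator $P_c$, then invoke the $\mathrm{Op}(c)$ routine of \cref{lemma:ABST} to implement the transformation in $O(\log n)$ time.

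First I would fix a single quadratic piece $f(y) = \tfrac{1}{2}r_i y^2 + a_i y + b_i$ on $[s_i, s_{i+1})$ and compute the unconstrained minimizer $y^*(x) = (cx - a_i)/(c + r_i)$. Requiring $y^*(x) \in [s_i, s_{i+1})$ translates to $x$ lying in the interval whose endpoints are $\tfrac{c+r_i}{c}s_i + \tfrac{a_i}{c}$ and $\tfrac{c+r_i}{c}s_{i+1} + \tfrac{a_i}{c}$, matching the first coordinate of $P_c$. Substituting $y^*(x)$ back yields a new quadratic whose leading, linear, and constant coefficients match the remaining three coordinates of $P_c$ (up to the sign convention of the constant term). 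Two consistency checks are needed here: (i) the new split point computed from piece $i$'s coefficients at $s_{i+1}$ agrees with the one computed from piece $i{+}1$'s coefficients at $s_{i+1}$, which follows directly from the continuity requirement $r_i s_{i+1} + a_i = r_{i+1} s_{i+1} + a_{i+1}$ in the VWF definition; and (ii) the new split sequence $\{\bar s_i\}$ remains strictly increasing, which is immediate since the map $s \mapsto \tfrac{c+r_i}{c}s + \tfrac{a_i}{c}$ is affine with positive slope when $r_i \ge 0$.

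Next I would address the boundary at the left end of the domain. For $x < \bar{s}_0$ the unconstrained minimizer would fall below $s_0$, so the constrained optimum sits at $y = s_0$, giving a new quadratic piece $\bar{f}(x) = \tfrac{1}{2}c(x-s_0)^2 + f(s_0)$ on $[s_0, \bar{s}_0)$. This accounts for the single extra piece in the bound $|\bar f| \le |f| + 1$. I would then verify that $\bar f$ remains a VWF: convexity is preserved by inf-convolution; the derivative $\bar f'(x) = c(x - y^*(x))$ is continuous in $x$ since $y^*$ is continuous; concavity of $\bar f'$ reduces to the new leading coefficients $\bar r_i = c r_i/(c+r_i)$ being non-increasing in $i$, which holds because $r \mapsto cr/(c+r)$ is monotone on $r \ge 0$ and $\{r_i\}$ is non-increasing; and the last piece stays linear since $r_{k-1}=0$ maps to $\bar r_{k-1}=0$.

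Finally, the algorithmic step: the associativity hypothesis $P_a \circ P_b = P_{g(a,b)}$ (with $g(a,b) = ab/(a+b)$, obtained by chasing the leading coefficient) and the monotonicity of $P_c$ on the first coordinate are precisely the preconditions of \cref{lemma:ABST}, so $\mathrm{Op}(c)$ applies $P_c$ to every stored tuple in $O(\log n)$ time. A single $\mathrm{Insert}$ handles the boundary piece. The main obstacle is the bookkeeping: carefully matching the algebraic formula for the constant term $\bar b_i$ under the paper's sign convention, and confirming that the left-boundary piece is representable as a $\mathrm{Pred}$ plus $\mathrm{Insert}$ without breaking the monotonicity invariant of the ABST. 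Once these are checked, both claims of the lemma follow.
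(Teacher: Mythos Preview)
Your proposal is correct and follows essentially the same approach as the paper: derive the piecewise minimizer $y^*(x)$, verify that each stored tuple transforms via $P_c$, handle the left-boundary piece as one extra insertion, check the VWF axioms, and implement the bulk update via $\mathrm{Op}(c)$. Your caution about the constant term is well-placed---the paper's formula $\bar b_i = b_i + a_i^2/(2(c+r_i))$ should carry a minus sign---but this is a harmless typo that propagates consistently through the subsequent associativity computation and does not affect the argument.
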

\begin{proof}
Let $f=\{(s_i, r_i, a_i, b_i)\}_{i=0}^{k-1}$, we have
\begin{align*}
    f(x) = \frac{1}{2} r_i x^2 + a_i x + b_i, x \in [s_i, s_{i+1}),
\end{align*}
and the sequence of $\{r_i\}$ being non-increasing.
Also, the derivative of $f$ is concave and increasing.

For any $x$, define $y(x) = \arg\max \frac{1}{2} c(x-y)^2 + f(y).$
From optimality condition $y(x)$ is $s_0$ if $cs_0 - cx + f'(s_0) \ge 0$ or some $y$ such that $cy - cx + f'(y) = 0$.
Since $cy + f'(y)$ is increasing in $y$, $y(x)$ is non-decreasing in $x$.

Note that $cy + f'(y) = cy + r_iy + a_i$ for $y \in [s_i, s_{i+1})$ for any $i$.
Since $y(x)$ is non-decreasing, for any $x$ such that $cx \in [cs_i + f'(s_i), cs_{i+1}+f'(s_{i+1}))$,
\begin{align*}
    cx = (c + r_i)y(x) + a_i
\end{align*}
holds and $y(x) = (cx - a_i) / (c + r_i)$.
For $cx < cs_0 + f'(s_0)$, $y(x)$ is $s_0$.

Plug in the formula for $y(x)$, $\Bar{f}(x)$ can be expressed as:
\begin{align*}
    \Bar{f}(x) = \begin{cases}
        \frac{1}{2}\Bar{r}_ix^2 + \Bar{a}_ix + \Bar{b}_i &, x \in [\Bar{s}_i, \Bar{s}_{i+1}) \\
        \frac{1}{2}c(x - s_0)^2 + f(s_0) &, x \in (-\infty, \Bar{s}_0),
    \end{cases}
\end{align*}
where
\begin{align*}
    \Bar{s}_i &= \frac{c + r_i}{c}s_i + \frac{a_i}{c}\\
    \Bar{r}_i &= \frac{c}{c + r_i}r_i \\
    \Bar{a}_i &= \frac{c}{c + r_i}a_i \\
    \Bar{b}_i &= b_i + \frac{a_i^2}{2(c + r_i)}.
\end{align*}

Clearly, $(\Bar{s}_i, \Bar{r}_i, \Bar{a}_i, \Bar{b}_i) = P_c(s_i, r_i, a_i, b_i), \forall i$.
Thus, from the assumption of $\PP$, we can construct $D_{\Bar{f}}$ by (1) apply $\mathrm{Op}(P_c)$ to every element of a snapshot of $D_f$, and then (2) insert a new element corresponds to the case $x \in (-\infty, \Bar{s}_0)$.
The time spent is $O(\log n)$ and the size of $\Bar{f}$ is at most one larger than $f$.

Next, we claim that $\Bar{f}$ is also a VWF.
\begin{itemize}
    \item $\Bar{f}$ is clearly piece-wise quadratic and convex since it is an infimum convolution of 2 convex functions $\frac{1}{2}ct^2$ and $f(t)$.
    \item $\Bar{f}(0) \le 0$ holds since $f(0) \le 0$.
    \item $\Bar{f}'$ is continuous and concave:
    We have
    \begin{align*}
        \Bar{f}'(x) = \begin{cases}
        \Bar{r}_ix + \Bar{a}_i &, x \in [\Bar{s}_i, \Bar{s}_{i+1}) \\
        c(x - s_0) &, x \in (-\infty, \Bar{s}_0).
        \end{cases}
    \end{align*}
    For any $\Bar{s}_i, i > 0$, we have $\lim_{x \to \Bar{s}_i^{-}} \Bar{f}'(x)  = \Bar{r}_{i-1}\Bar{s}_i + \Bar{a}_{i-1} = \Bar{r}_{i}\Bar{s}_i + \Bar{a}_{i} = \Bar{f}'(\Bar{s}_i),$
    where we use the fact that $f'$ is continuous, i.e, $r_{i-1}s_i+a_i = r_is_i + a_i$ holds.
    
    To see that $\Bar{f}'$ is concave, observe that the sequence $\{\Bar{r}_i\}$ is non-increasing and is upper-bounded by $c$.
\end{itemize}

\end{proof}

Next, we show that $\PP$ is associate and commute.
\begin{claim}
For any $x, y > 0$, we have $P_x \circ P_y = P_{g(x, y)}$ where $g(x, y) = (x^{-1}+y^{-1})^{-1}.$
\end{claim}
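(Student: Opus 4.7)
The plan is to verify this associativity by combining a conceptual argument with a direct algebraic check. The conceptual motivation comes from the proof of \cref{claim:VWFLift}: the operator $P_c$ on coefficient tuples $(s,r,a,b)$ implements the effect on a single quadratic piece of the infimum convolution $f \mapsto \min_y \tfrac{1}{2}c(\cdot - y)^2 + f(y)$. Since infimum convolution is associative, applying the $\tfrac{1}{2}yt^2$-convolution followed by the $\tfrac{1}{2}xt^2$-convolution equals convolution with $\tfrac{1}{2}yt^2 \square \tfrac{1}{2}xt^2 = \tfrac{1}{2}\tfrac{xy}{x+y}t^2 = \tfrac{1}{2}g(x,y)t^2$. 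This already tells us the answer must be $P_{g(x,y)}$; the algebra below is just a sanity check.

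For the direct verification, I would write $(s',r',a',b') = P_y(s,r,a,b)$ and $(s'',r'',a'',b'') = P_x(s',r',a',b')$, and compare to $(s^*,r^*,a^*,b^*) = P_{g(x,y)}(s,r,a,b)$ with $c := g(x,y) = xy/(x+y)$. First I would handle the $r$-coordinate, since it is the simplest and everything else is driven by the identity $x+r' = x + \tfrac{yr}{y+r} = \tfrac{x(y+r)+yr}{y+r} = \tfrac{xy+r(x+y)}{y+r}$. This yields
\[
    r'' = \frac{x}{x+r'} r' = \frac{x(y+r)}{xy+r(x+y)} \cdot \frac{yr}{y+r} = \frac{xyr}{xy+r(x+y)},
\]
and on the other side $r^* = \tfrac{c}{c+r}r = \tfrac{xy/(x+y)}{xy/(x+y)+r}\, r = \tfrac{xyr}{xy+r(x+y)}$, so $r'' = r^*$. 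The $a$-coordinate is computed identically, giving $a'' = \tfrac{xya}{xy+r(x+y)} = a^*$.

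Next I would handle the $s$-coordinate using the same key identity. Expanding,
\[
    s'' = \frac{x+r'}{x}s' + \frac{1}{x}a' = \frac{x+r'}{x}\left(\frac{y+r}{y}s + \frac{a}{y}\right) + \frac{1}{x}\cdot\frac{y}{y+r}a,
\]
and after substituting $(x+r')(y+r) = x(y+r) + yr$ dividing through and collecting the $a$-terms (using $\tfrac{1}{xy}[x(y+r)+yr] + \tfrac{1}{x(y+r)}\cdot y \cdot \tfrac{y+r}{y+r}$ simplifies to $\tfrac{x+y}{xy} = 1/c$), one gets $s'' = \tfrac{c+r}{c}s + \tfrac{1}{c}a = s^*$. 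Finally, for the constant term,
\[
    b'' = b' + \frac{(a')^2}{2(x+r')} = b + \frac{a^2}{2(y+r)} + \frac{1}{2(x+r')}\left(\frac{y}{y+r}a\right)^2,
\]
and combining the two fractions over the common denominator $xy+r(x+y) = (y+r)(x+r')$ gives exactly $\tfrac{a^2}{2(c+r)}$, so $b'' = b^*$.

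The main obstacle is purely notational: keeping the fractions organized while repeatedly invoking $(x+r')(y+r) = xy+r(x+y)$, which is the algebraic shadow of harmonic addition $c^{-1} = x^{-1}+y^{-1}$. The symmetry of this expression in $x$ and $y$ also immediately gives commutativity $P_x \circ P_y = P_y \circ P_x$ as a free byproduct.
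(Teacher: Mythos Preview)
Your proposal is correct and takes essentially the same approach as the paper: a direct coordinate-by-coordinate algebraic verification that $P_x\circ P_y(s,r,a,b)=P_z(s,r,a,b)$ with $z=xy/(x+y)$, driven by the identity $(x+r')(y+r)=xy+r(x+y)$. Your added conceptual observation---that $P_c$ implements infimum convolution with $\tfrac12 c t^2$ on a quadratic piece, and that $\tfrac12 x t^2 \,\square\, \tfrac12 y t^2 = \tfrac12 g(x,y) t^2$---is a nice motivation the paper omits, though note the parenthetical bookkeeping in your $s$-coordinate step dropped a factor of $1/(y+r)$ in the first term; the conclusion is still correct.
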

\begin{proof}
This can be proved via direct evaluation.
Let $z = g(x, y) = xy/(x+y)$.
Given any $(s, r, a, b)$ with $r > 0$, we have
\begin{align*}
    P_x \circ P_y(s, r, a, b) 
    &= P_x\left(\frac{y+r}{y}s + \frac{1}{y}a, \frac{y}{y+r}r, \frac{y}{y+r}a, b + \frac{a^2}{2(y + r)}\right) \\
    &= \left(\Bar{s}, \Bar{r}, \Bar{a}, \Bar{b}\right),
\end{align*}
where
\begin{align*}
    \Bar{s} 
    &= \frac{1}{x}\left[\left(x + \frac{y}{y+r}r\right)\left(\frac{y+r}{y}s + \frac{1}{y}a\right)+\frac{y}{y+r}a\right] \\
    &= \left(\frac{y+r}{y} + \frac{r}{x}\right)s + \left(\frac{1}{y} + \frac{r}{x(y+r)} + \frac{y}{x(y+r)}\right)a \\
    &= \frac{xy+(x+y)r}{xy}s + \left(\frac{1}{x}+\frac{1}{y}\right)a = \frac{z + r}{z}s + \frac{1}{z}a,
\end{align*}
\begin{align*}
    \Bar{r}
    = \frac{x}{x + \frac{y}{y+r}r}\frac{y}{y+r}r
    = \frac{xy}{x(y+r) + yr} r
    = \frac{z}{z + r}r,
\end{align*}
\begin{align*}
    \Bar{a}
    = \frac{x}{x + \frac{y}{y+r}r}\frac{y}{y+r}a
    = \frac{xy}{x(y+r) + yr} a
    = \frac{z}{z + r}r,
\end{align*}
and
\begin{align*}
    \Bar{b}
    &= b + \frac{a^2}{2(y + r)} + \frac{1}{2(x + \frac{y}{y+r}r)}\left(\frac{y}{y+r}a\right)^2 \\
    &= b + \left(\frac{1}{2(y+r)} + \frac{1}{2(x + \frac{y}{y+r}r)}\frac{y^2}{(y+r)^2}\right)a^2 \\
    &= b + \left(\frac{y^2+x(y+r)+yr}{2(x(y+r)+yr)(y+r)}\right)a^2 \\
    &= b + \left(\frac{y+x}{2(x(y+r)+yr)}\right)a^2 = b + \frac{1}{2(z+r)}a^2.
\end{align*}

Therefore, we have $P_x \circ P_y(s, r, a, b) = P_z(s, r, a, b)$.
\end{proof}

\begin{proof}[Proof of \cref{lemma:VWFDS}]
From discussions above, we know how to maintain the desired data structure using ABST from \cref{lemma:ABST}.
\end{proof}

\section{Inexact Proximal Accelerated Gradient Descent}
\label{sec:ProxAGD}


In this section, we prove \cref{lemma:ProxAGD}.

Proximal Gradient method is a family of algorithms that solves problems of form:
\begin{align}
\label{eq:CompCvx}
    \min_{\bx \in K} \Phi(\bx) \coloneqq f(\bx) + h(\bx),
\end{align}
where $K$ is a closed convex set, $f$ is a smooth convex function, and $h$ is a proper convex function.
Traditional gradient methods can only solves smooth convex problems.
For the non-smooth case, one needs to leverage the structure of the problem.
In \cite{BeckT09, Tseng08, Nesterov13}, they show one can apply gradient methods, even accelerated ones, in solving Problem~(\ref{eq:CompCvx}).

The diffusion problems have an identical structure with Problem~(\ref{eq:CompCvx}).
Given an instance $\GG$, one wants to solve:
\begin{align}
\label{eq:DiffCvx}
    \min_{\bx \ge \bb} \EE^\GG(\bx) \coloneqq \frac{1}{2}\bx^\top\bL(G)\bx + \sum_u f_u(x_u). 
\end{align}
One can see the identity holds when we take $K=\{\bx \ge \bb\}$, $f$ being $\frac{1}{2}\bx^\top\bL(G)\bx$, and $h$ being the sum of VWFs.
Given some spectral sparsifier $H$ of $G$, $f$ is smooth w.r.t. the matrix norm induced by $\bL(H)$.

However, we cannot apply these well-known results directly.
Proximal gradient methods heavily rely on computing \emph{proximal mapping}, i.e. solving the sub-problem given $\bx \in K$:
\begin{align}
\label{eq:ProxEx}
    \min_{\by \in K} \grad f(\bx)^\top(\by - \bx) + \frac{L}{2}\norm{\by - \bx}^2 + h(\by).
\end{align}
In the context of diffusion, such sub-problem is yet another diffusion problem on a possibly smaller instance.
The algorithms for diffusion problems achieve only high accuracy.

\cite{SRB11} and \cite{VSBV13} showed that proximal gradient methods still work even if we only solve Problem~(\ref{eq:ProxEx}) approximately.
But they only prove results when the smoothness is measured w.r.t. the Euclidean norm.

In this section, we will generalize their results and apply them to solve diffusion problems.
We start by introducing some fundamental notions.

\subsection{Preliminaries}

\paragraph{$\eps$-subdifferential}
The \emph{$\eps$-subdifferential} of any convex function $a$ at $\bx$ is the set of vectors $\bv$ such that for any $\bu$, $a(\bu) \ge a(\bx) + \bv^\top (\bu - \bx) - \eps$ holds.
It is denoted as $\partial_\eps a(\bx)$.
The \emph{convex conjugate} of a function $a$ is a convex function defined as $a^*(\bv) = \sup_\bu \bv^\top\bu - a(\bu).$

\begin{fact}
Given any $\bv$, $\bv$ is in the $\eps$-subdifferential of function $a$ at $\bx$ if and only if $a^*(\bv) \le \eps + \bv^\top\bx - a(\bx).$
\end{fact}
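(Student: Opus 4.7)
The statement is a standard equivalence from convex analysis, and the proof is really just a chain of unwrapping definitions; no substantial obstacle should arise. The plan is to rewrite the defining inequality of $\partial_\eps a(\bx)$ so that it isolates $\bv^\top \bu - a(\bu)$, take the supremum over $\bu$, and recognize the result as $a^*(\bv)$. The reverse direction follows by reading the chain backwards, since each step is an equivalence.

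First I would recall that, by definition, $\bv \in \partial_\eps a(\bx)$ means
\[
    a(\bu) \ge a(\bx) + \bv^\top (\bu - \bx) - \eps \qquad \text{for every } \bu \in \dom(a).
\]
Rearranging this inequality algebraically gives the equivalent form
\[
    \bv^\top \bu - a(\bu) \le \bv^\top \bx - a(\bx) + \eps \qquad \text{for every } \bu.
\]
The right-hand side does not depend on $\bu$, so the ``for every $\bu$'' quantifier can be collapsed into a single supremum condition on the left-hand side:
\[
    \sup_{\bu} \bigl[\bv^\top \bu - a(\bu)\bigr] \le \bv^\top \bx - a(\bx) + \eps.
\]
By the definition of the convex conjugate, the left-hand side is precisely $a^*(\bv)$, yielding the claimed inequality $a^*(\bv) \le \eps + \bv^\top \bx - a(\bx)$.

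For the converse, I would simply note that every step above is an ``$\iff$'': if $a^*(\bv) \le \eps + \bv^\top \bx - a(\bx)$, then by definition of the supremum, $\bv^\top \bu - a(\bu) \le \bv^\top \bx - a(\bx) + \eps$ for every $\bu$, which rearranges to the defining inequality of $\partial_\eps a(\bx)$. No convexity or regularity hypotheses on $a$ are needed beyond what is already implicit in the definitions, so there is no delicate step to worry about; the whole argument is a two-line manipulation and I would present it as such.
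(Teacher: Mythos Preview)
Your proof is correct; it is exactly the standard two-line unwinding of the definitions. The paper states this result as a \emph{Fact} without proof, so there is nothing to compare against---your argument simply fills in what the paper leaves to the reader.
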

\begin{fact}
$\partial_\eps (a_1 + a_2)(\bx) \subseteq \partial_\eps a_1(\bx) + \partial_\eps a_2(\bx)$
\end{fact}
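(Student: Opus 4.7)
The plan is to translate the statement through the Fenchel conjugate and split it along an infimal convolution. The fact just above the one to be proved already records the characterization $\bv \in \partial_\eps a(\bx) \iff a^*(\bv) + a(\bx) - \bv^\top \bx \le \eps$, in which the left-hand side is the Fenchel--Young gap at $(\bx,\bv)$ and is automatically non-negative. Applied to $a = a_1+a_2$, the hypothesis $\bv \in \partial_\eps(a_1+a_2)(\bx)$ becomes the single scalar inequality $(a_1+a_2)^*(\bv) + a_1(\bx) + a_2(\bx) - \bv^\top \bx \le \eps$.

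Next I would invoke the classical identity $(a_1+a_2)^*(\bv) = \inf_{\bv_1 + \bv_2 = \bv}\{a_1^*(\bv_1) + a_2^*(\bv_2)\}$ and pick a decomposition $\bv = \bv_1 + \bv_2$ attaining that infimum. Substituting and regrouping turns the scalar inequality above into
\[
    \bigl[a_1^*(\bv_1) + a_1(\bx) - \bv_1^\top \bx\bigr] + \bigl[a_2^*(\bv_2) + a_2(\bx) - \bv_2^\top \bx\bigr] \le \eps.
\]
Each bracketed term is itself a Fenchel--Young gap, hence non-negative, so each is individually bounded by $\eps$. Reading the conjugate characterization in the reverse direction gives $\bv_i \in \partial_\eps a_i(\bx)$ for $i=1,2$, and therefore $\bv = \bv_1 + \bv_2 \in \partial_\eps a_1(\bx) + \partial_\eps a_2(\bx)$, which is the claim.

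The main subtlety I expect is justifying that the infimum in the infimal convolution is actually attained rather than merely approached; strictly, this requires a constraint qualification such as both $a_i$ being proper closed convex with overlapping relative interiors of domain. This holds in every downstream use here because each $a_i$ will be a quadratic form, a VWF, or a closed convex indicator on a polyhedron, all of which fit the qualification. If the infimum is only $\delta$-attained, the identical argument gives $\bv_i \in \partial_{\eps+\delta} a_i(\bx)$; since the fact is used only to bookkeep additive error inside the inexact proximal AGD analysis, absorbing a polynomially small $\delta$ into the error budget is harmless.
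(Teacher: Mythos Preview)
The paper states this as a Fact without proof, so there is nothing to compare against; your argument via the conjugate characterization and the infimal-convolution identity $(a_1+a_2)^* = a_1^* \,\square\, a_2^*$ is the standard route and is correct under the constraint qualification you already flag. In finite dimensions that qualification also forces the infimum to be attained, so the $\delta$-slack workaround you describe is not even needed for the downstream uses in \cref{lemma:subdiff_h}.
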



\paragraph{Matrix Norm}
In this section, we consider the norm induced by some psd matrix $\bA$.
Therefore, $\norm{\bx} = \sqrt{\bx^\top\bA\bx}$ and $\norm{\bx}_* = \sqrt{\bx^\top\bA^\dagger\bx}.$
We also define $R(x) = \norm{\bx}^2$ with $\grad R(x) = \bA \bx.$

\paragraph{Smoothness and Strong Convexity}
A convex differentiable function $f$ is \emph{$L$-smooth} w.r.t. some (semi)norm $\norm{\cdot}$ if $f(\by) \le f(\bx) + \grad f(\bx)^\top (\by - \bx) + (L/2)\norm{\by - \bx}^2$ for any $\by, \bx \in \dom(f)$.
$f$ is \emph{$\mu$-strongly-convex} if $f(\by) \ge f(\bx) + \grad f(\bx)^\top (\by - \bx) + (\mu/2)\norm{\by - \bx}^2$ for any $\by, \bx \in \dom(f)$.

Given a $L$-smooth function $f$ and any $\bx \in \dom(f)$, we define the \emph{quadratic approximation of $f$ around $\bx$} as $\Bar{f}_\bx(\bu) = f(\bx) + \grad f(\bx)^\top (\bu - \bx) + (L/2)\norm{\bu - \bx}^2.$




\paragraph{Objective Function}
Our goal is to find (approximate) optimal solution to the problem
\begin{align*}
    \min_{\bx} \Phi(\bx) \coloneqq f(\bx) + h(\bx),
\end{align*}
where $f$ is $L$-smooth and $\mu$-strongly-convex w.r.t norm $\norm{\cdot}$ and $h$ is a lower semi-continuous proper convex function.

\paragraph{(Inexact) Proximal Operator}
Given any $\bx$, we define the \emph{Proximal Operator} on $\bx$ as
\begin{align*}
    p_\bx \in \arg\min_{\by} \Bar{\Phi}_\bx(\by) \coloneqq \Bar{f}_\bx(\by) + h(\by),
\end{align*}
the resulting $p_\bx$ is called a \emph{Proximal Mapping} of $\bx$.

However, finding $p_\bx$ may be not easier.
In our application, we can only compute \emph{Inexact Proximal Mapping} of $\bx$, say $q_\bx$ such that
\begin{align*}
    \Bar{\Phi}_\bx(q_\bx) - \Bar{\Phi}_\bx(p_\bx) \le \eps.
\end{align*}




\subsection{Meta Algorithm}
\label{sec:MetaAlg}

\begin{algorithm}
\caption{Accelerated Method using inexact gradient mapping}
\label{algo:ProxAGD}
\begin{algorithmic}[1]
    \Procedure{ProxAGD}{$\bx^{(0)}, T$}
        \State $\by^{(0)}, \bz^{(0)} \coloneqq \bx^{(0)}$
        \For{$k=0, 1, \ldots, T - 1$}
            \State $\alpha_{k+1} \coloneqq (k+2) / 2$
            \State $\tau_k \coloneqq 1 / (\alpha_{k+1}) = 2 / (k + 2)$
            \State $\bx^{(k+1)} \coloneqq (1 - \tau_k)\by^{(k)} + \tau_k \bz^{(k)}$
            \State Compute $q_{k+1} \approx \arg\min_\by \Bar{\Phi}_\bx(\by)$ with additive error $\eps_{k+1}$
            \State $\by^{(k+1)} \coloneqq q_{k+1}$
            \State $\bz^{(k + 1)} \coloneqq \bz^{(k)} + \alpha_{k+1}(\by^{(k+1)} - \bx^{(k+1)})$
        \EndFor
        \State \Return $\by^{(T)}$
    \EndProcedure
\end{algorithmic}
\end{algorithm}

In this section, we present an meta algorithm called\emph{Inexact Proximal Accelerated Gradient Descent (Inexact ProxAGD)} and analyze its convergence.
The pseudocode is presented as \cref{algo:ProxAGD}.

For simplicity, we assume the smooth part of the objective, $f$, is $1$-smooth.
The convergence result is as follows:
\begin{theorem}
\label{thm:InexactAPGD}
In \cref{algo:ProxAGD}, the following holds for any $T > 1$:
\begin{align*}
    \Phi(\by^{(T)}) - \Phi(\bx^*)
    &\le \frac{2}{(T+1)^2}\left(\norm{\bx^{(0)} - \bx^*} + 3\sum_{k=1}^{T}k \sqrt{2\eps_{k}}\right)^2
\end{align*}
\end{theorem}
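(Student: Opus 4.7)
The plan is to adapt the standard Tseng/Nesterov accelerated proximal gradient analysis, combined with the inexact-oracle technique of Schmidt--Roux--Bach \cite{SRB11}, to the (semi)norm $\norm{\cdot}$ induced by $\bA = \bL(H)$. Throughout I work with $\bA$-inner products $\langle \bx, \by\rangle_\bA = \bx^\top \bA \by$ and the dual seminorm, so the kernel of $\bL(H)$ never causes trouble. The argument breaks into three stages followed by an elementary sequence lemma.

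\textbf{Stage 1 (inexact proximal inequality).} Since $\bar\Phi_{\bx^{(k+1)}}$ is $1$-strongly convex in $\norm{\cdot}$, the exact minimizer $p$ and the $\eps_{k+1}$-approximate minimizer $q_{k+1}$ satisfy $\norm{q_{k+1} - p}^2 \le 2\eps_{k+1}$. Combining the strong-convexity inequality $\bar\Phi_{\bx^{(k+1)}}(\bu) \ge \bar\Phi_{\bx^{(k+1)}}(p) + \tfrac12\norm{\bu - p}^2$ with the identity $\norm{\bu - p}^2 = \norm{\bu - q_{k+1}}^2 + 2\langle \bu - q_{k+1}, q_{k+1} - p\rangle_\bA + \norm{q_{k+1} - p}^2$, then expanding $\bar\Phi$ and invoking $1$-smoothness and convexity of the Laplacian term, yields the per-step inequality
\[
    \Phi(q_{k+1}) \le \Phi(\bu) + \tfrac{1}{2}\norm{\bu - \bx^{(k+1)}}^2 - \tfrac{1}{2}\norm{\bu - q_{k+1}}^2 + \eps_{k+1} + \sqrt{2\eps_{k+1}}\,\norm{q_{k+1} - \bu}
\]
for every feasible $\bu$, with the final term coming from Cauchy--Schwarz on $\langle q_{k+1} - \bu, q_{k+1} - p\rangle_\bA$.

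\textbf{Stage 2 (one-step potential).} I plug $\bu = \bx^*$ and $\bu = \by^{(k)}$ into the inequality above and take the convex combination with weights $\tau_k$ and $1-\tau_k$. The definition $\bx^{(k+1)} = (1-\tau_k)\by^{(k)} + \tau_k\bz^{(k)}$ together with the update $\bz^{(k+1)} = \bz^{(k)} + \alpha_{k+1}(q_{k+1} - \bx^{(k+1)})$ is tailored so that the quadratic $\norm{\cdot - \bx^*}^2$ terms collapse into a telescoping difference and the $\grad f$-linear terms cancel. Multiplying through by $A_{k+1} \coloneqq \sum_{i=1}^{k+1}\alpha_i$ (which satisfies $A_{k+1}(1-\tau_k) = A_k$ and $A_{k+1}\tau_k = \alpha_{k+1}$), I obtain the potential inequality
\[
    A_{k+1}\delta_{k+1} + \tfrac{1}{2}\norm{\bz^{(k+1)} - \bx^*}^2 \le A_k\delta_k + \tfrac{1}{2}\norm{\bz^{(k)} - \bx^*}^2 + A_{k+1}\eps_{k+1} + \alpha_{k+1}\sqrt{2\eps_{k+1}}\,\norm{q_{k+1} - \bx^*},
\]
where $\delta_k \coloneqq \Phi(\by^{(k)}) - \Phi(\bx^*) \ge 0$.

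\textbf{Stage 3 (telescoping and SRB lemma).} Summing from $k=0$ to $T-1$ telescopes the first two terms on each side. Using the update rule for $\bz^{(k+1)}$ I bound $\norm{q_{k+1} - \bx^*}$ by $O(\norm{\bz^{(k+1)} - \bx^*} + \norm{\bz^{(k)} - \bx^*})$, so that writing $U_T \coloneqq \max_{k \le T}\norm{\bz^{(k)} - \bx^*}$ and absorbing $A_{k+1}\eps_{k+1}$ into the larger $\alpha_{k+1}\sqrt{2\eps_{k+1}}$ sum reduces the telescoped inequality to the self-bounding recursion
\[
    \tfrac{1}{2}U_T^2 \le \tfrac{1}{2}\norm{\bx^{(0)} - \bx^*}^2 + U_T \cdot 3\sum_{k=1}^T k\sqrt{2\eps_k}.
\]
The elementary lemma ``if $a^2 \le b^2 + 2ac$ with $a,b,c \ge 0$ then $a \le b + 2c$'' (the Schmidt--Roux--Bach trick) gives $U_T \le \norm{\bx^{(0)} - \bx^*} + 3\sum_{k=1}^T k\sqrt{2\eps_k}$. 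Substituting back into the telescoped inequality for $\delta_T$ and using $A_T \ge (T+1)^2/4$ for the schedule $\alpha_{k+1} = (k+2)/2$ yields the claimed rate $2/(T+1)^2$.

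\textbf{Main obstacle.} The hardest step is Stage 2, where one must align the $(\alpha_{k+1}, \tau_k, A_{k+1})$ schedule with the three-point identity so that the inexact slack $\sqrt{2\eps_{k+1}}\,\norm{q_{k+1} - \bu}$ survives the convex combination in a form that Stage 3 can absorb into a single quantity $U_T$ \emph{without} losing a factor of $\sqrt{T}$; the naive bound $\norm{q_{k+1} - \bx^*} \le \norm{\bx^{(0)} - \bx^*} + \sum_{j \le k}\norm{\bz^{(j+1)}-\bz^{(j)}}$ would precisely cost that factor and destroy acceleration. The observation that the update $\bz^{(k+1)} = \bz^{(k)} + \alpha_{k+1}(q_{k+1}-\bx^{(k+1)})$ lets us re-express $q_{k+1}$ as a short combination of $\bz^{(k+1)}, \bz^{(k)}$ is what saves the accelerated rate under inexactness.
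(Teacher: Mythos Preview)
Your overall architecture—inexact prox inequality, Lyapunov recursion, SRB-type self-bounding—matches the paper, and your Stage~1 derivation via $\norm{q-p}^2\le 2\eps$ is in fact more elementary than the paper's route through $\eps$-subdifferentials. Stage~3 also works; your $U_T=\max_{k\le T}D_k$ shortcut is a legitimate substitute for the full SRB sequence lemma the paper quotes.

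The gap is in Stage~2. Plugging $\bu=\bx^*$ and $\bu=\by^{(k)}$ \emph{separately} and then convex-combining does not yield the error term you write in your potential inequality. The quadratic part $\tfrac12\norm{\bu-\bx}^2-\tfrac12\norm{\bu-q}^2$ is affine in $\bu$ and does collapse to the single point $\bar\bu=\tau_k\bx^*+(1-\tau_k)\by^{(k)}$, but the slack $\sqrt{2\eps}\,\norm{q-\bu}$ is not affine: the combination leaves an extra $(1-\tau_k)\sqrt{2\eps_{k+1}}\,\norm{q_{k+1}-\by^{(k)}}$, which after the $\alpha_{k+1}^2$ scaling is of order $k^2\sqrt{\eps_{k+1}}\,\norm{\by^{(k+1)}-\by^{(k)}}$ and cannot be absorbed. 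Separately, your Stage-3 bound $\norm{q_{k+1}-\bx^*}=O(D_{k+1}+D_k)$ is not justified: writing $q_{k+1}=\bx^{(k+1)}+\alpha_{k+1}^{-1}(\bz^{(k+1)}-\bz^{(k)})$ still leaves $\by^{(k)}$ inside $\bx^{(k+1)}$, so the bound is not in terms of the $D_k$'s alone.

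Both issues vanish if you do what the paper does: apply your Stage-1 inequality \emph{once} at the single point $\bar\bu$, then use convexity of $\Phi$ to split $\Phi(\bar\bu)\le\tau_k\Phi(\bx^*)+(1-\tau_k)\Phi(\by^{(k)})$. The slack is then $\sqrt{2\eps_{k+1}}\,\norm{\bar\bu-q_{k+1}}$, and the algebraic identity
\[
\bar\bu-q_{k+1}=\tau_k(\bx^*-\bz^{(k+1)}),
\]
which follows directly from the $\bz$-update together with $\bx^{(k+1)}=(1-\tau_k)\by^{(k)}+\tau_k\bz^{(k)}$, turns this into $\tau_k\sqrt{2\eps_{k+1}}\,D_{k+1}$. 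After the $\alpha_{k+1}^2$ scaling this is exactly $\alpha_{k+1}\sqrt{2\eps_{k+1}}\,D_{k+1}$, and your Stage~3 then goes through unchanged. (Minor bookkeeping: your $A_{k+1}\coloneqq\sum_i\alpha_i$ does not satisfy $A_{k+1}\tau_k=\alpha_{k+1}$; the correct weight is $\alpha_{k+1}^2$, and with this schedule $\alpha_{k+1}^2(1-\tau_k)=\alpha_k^2$ holds only up to $+\tfrac14$, which the paper tracks explicitly.)
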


The rest of the section proves \cref{thm:InexactAPGD}.

\subsubsection*{Structural Facts On Inexact Proximal Mappings}

Given any $\bx$, let $q_\bx$ be an approximation to $p_\bx$ with additive error $\eps_\bx$.
That is, $\Bar{\Phi}_\bx(q_\bx) \le \eps_\bx + \Bar{\Phi}_\bx(p_\bx).$
This section presents some structural facts about $q_\bx$.
These facts play important roles in analysis of the meta algorithm.

\begin{fact}
$0 \in \partial_{\eps_\bx} \Bar{\Phi}(q_\bx).$
\end{fact}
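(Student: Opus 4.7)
The plan is to unpack both sides of the claim using only the two definitions already provided in the excerpt: (i) the characterization of the $\eps$-subdifferential, and (ii) the defining inequality of an inexact proximal mapping. No smoothness, strong convexity, or duality is needed.

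First I would recall that, by definition, a vector $\bv$ lies in $\partial_\eps a(\bu_0)$ precisely when
\[
a(\bu) \ge a(\bu_0) + \bv^\top(\bu - \bu_0) - \eps \qquad \text{for every } \bu.
\]
Specializing to $\bv = 0$, the membership $0 \in \partial_\eps a(\bu_0)$ reduces to the single inequality $a(\bu) \ge a(\bu_0) - \eps$ for all $\bu$. Applied to $a = \Bar{\Phi}_\bx$, $\bu_0 = q_\bx$, and $\eps = \eps_\bx$, the claim therefore amounts to showing $\Bar{\Phi}_\bx(\bu) \ge \Bar{\Phi}_\bx(q_\bx) - \eps_\bx$ for every $\bu$.

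Next I would invoke the two facts about the exact and inexact proximal mappings. By the definition of $p_\bx$ as a minimizer of $\Bar{\Phi}_\bx$, we have $\Bar{\Phi}_\bx(\bu) \ge \Bar{\Phi}_\bx(p_\bx)$ for all $\bu$. By the definition of the inexact proximal mapping, $\Bar{\Phi}_\bx(p_\bx) \ge \Bar{\Phi}_\bx(q_\bx) - \eps_\bx$. Chaining the two gives $\Bar{\Phi}_\bx(\bu) \ge \Bar{\Phi}_\bx(q_\bx) - \eps_\bx$, which by the reformulation above is exactly $0 \in \partial_{\eps_\bx} \Bar{\Phi}_\bx(q_\bx)$.

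There is no real obstacle here; the fact is essentially a tautology once one rewrites the $\eps$-subdifferential condition at the point $\bv = 0$. The only subtlety worth stating explicitly is that the approximate minimality of $q_\bx$ is defined with respect to the same function $\Bar{\Phi}_\bx$ in whose $\eps$-subdifferential we are placing $0$, so no additional hypothesis on $h$ (beyond being a proper convex function, which was already assumed) is required.
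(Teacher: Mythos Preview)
Your proof is correct and matches the paper's own argument essentially verbatim: the paper simply writes $\Bar{\Phi}_\bx(\bu) \ge \Bar{\Phi}_\bx(p_\bx) \ge \Bar{\Phi}_\bx(q_\bx) + 0^\top(\bu - q_\bx) - \eps_\bx$, which is precisely the chaining of the minimality of $p_\bx$ with the inexactness bound that you spelled out.
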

\begin{proof}
    Observe that for any $\bu$, we have $\Bar{\Phi}(\bu) \ge \Bar{\Phi}(p_\bx) \ge \Bar{\Phi}(q_\bx) + 0^\top(\bu - q_\bx) - \eps_\bx.$
\end{proof}

\begin{lemma}[$\eps$-subdifferential of $h(q_\bx)$]
\label{lemma:subdiff_h}
There exists some $\bs_\bx$ such that
\begin{enumerate}
    \item $\norm{\bs_\bx}_*^2 \le 2\eps_\bx$
    \item $-\bs_\bx - \grad \Bar{f}_\bx(q_\bx) \in \partial_{\eps_\bx}h(q_\bx).$
\end{enumerate}
\end{lemma}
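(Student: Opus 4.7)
The plan is to construct $\bs_\bx$ explicitly via the \emph{exact} proximal point $p_\bx$ and control its norm using strong convexity of the regularized quadratic $\Bar{\Phi}_\bx$.

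First, I observe that the regularizing quadratic term $\tfrac{1}{2}\norm{\cdot - \bx}^2$ makes $\Bar{f}_\bx$ exactly $1$-strongly convex with respect to $\norm{\cdot}$, so $\Bar{\Phi}_\bx = \Bar{f}_\bx + h$ is $1$-strongly convex too. Combining this with $0 \in \partial \Bar{\Phi}_\bx(p_\bx)$ gives $\Bar{\Phi}_\bx(q_\bx) \ge \Bar{\Phi}_\bx(p_\bx) + \tfrac{1}{2}\norm{q_\bx - p_\bx}^2$, which together with the $\eps_\bx$-optimality of $q_\bx$ yields the key bound $\norm{q_\bx - p_\bx}^2 \le 2\eps_\bx$.

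Next I would \emph{define}
\[
    \bs_\bx \;\coloneqq\; \grad \Bar{f}_\bx(p_\bx) - \grad \Bar{f}_\bx(q_\bx) \;=\; \bA(p_\bx - q_\bx),
\]
where $\bA$ is the PSD matrix inducing $\norm{\cdot}$ (using $\grad \Bar{f}_\bx(\by) = \grad f(\bx) + \bA(\by - \bx)$). The bound on $\norm{\bs_\bx}_*$ then follows from the pseudoinverse identity $\bA \bA^\dagger \bA = \bA$:
\[
    \norm{\bs_\bx}_*^2 \;=\; (p_\bx - q_\bx)^\top \bA \bA^\dagger \bA (p_\bx - q_\bx) \;=\; \norm{p_\bx - q_\bx}^2 \;\le\; 2\eps_\bx.
\]

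With this choice, $-\bs_\bx - \grad \Bar{f}_\bx(q_\bx) = -\grad \Bar{f}_\bx(p_\bx)$, and the exact optimality $0 \in \grad \Bar{f}_\bx(p_\bx) + \partial h(p_\bx)$ gives $-\grad \Bar{f}_\bx(p_\bx) \in \partial h(p_\bx)$. It remains to promote this genuine subgradient at $p_\bx$ to an $\eps_\bx$-subgradient at $q_\bx$. For any $\by$, start from $h(\by) \ge h(p_\bx) - \grad \Bar{f}_\bx(p_\bx)^\top(\by - p_\bx)$ and rearrange to
\[
    h(\by) \;\ge\; h(q_\bx) - \grad \Bar{f}_\bx(p_\bx)^\top(\by - q_\bx) - \big[ h(q_\bx) - h(p_\bx) + \grad \Bar{f}_\bx(p_\bx)^\top (q_\bx - p_\bx) \big].
\]
The bracketed defect is bounded by $\eps_\bx$ via the $\eps_\bx$-optimality $\Bar{f}_\bx(q_\bx)+h(q_\bx) \le \Bar{f}_\bx(p_\bx)+h(p_\bx)+\eps_\bx$ combined with the convexity inequality $\Bar{f}_\bx(q_\bx) \ge \Bar{f}_\bx(p_\bx) + \grad \Bar{f}_\bx(p_\bx)^\top (q_\bx - p_\bx)$, which together give exactly $h(q_\bx) - h(p_\bx) + \grad \Bar{f}_\bx(p_\bx)^\top(q_\bx - p_\bx) \le \eps_\bx$. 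This proves $-\bs_\bx - \grad \Bar{f}_\bx(q_\bx) \in \partial_{\eps_\bx} h(q_\bx)$.

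The only genuinely delicate step is the last one, translating a subgradient at $p_\bx$ into an $\eps$-subgradient at $q_\bx$; the rest is bookkeeping. I anticipate that working in a \emph{seminorm} (possibly singular $\bA$) is the main thing to be careful about, since $\norm{\cdot}_*$ uses $\bA^\dagger$: fortunately, because $\bs_\bx$ is manifestly in the range of $\bA$, the identity $\bA\bA^\dagger\bA = \bA$ gives the clean equality $\norm{\bs_\bx}_*^2 = \norm{p_\bx - q_\bx}^2$ without any range-space side conditions.
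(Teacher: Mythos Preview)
Your proof is correct and takes a genuinely different route from the paper. The paper never touches the exact minimizer $p_\bx$ directly; instead it uses the decomposition fact $\partial_{\eps}(\Bar{f}_\bx+h)(q_\bx)\subseteq \partial_{\eps}\Bar{f}_\bx(q_\bx)+\partial_{\eps}h(q_\bx)$ applied to $0\in\partial_{\eps_\bx}\Bar{\Phi}_\bx(q_\bx)$, and then characterizes $\partial_{\eps_\bx}\Bar{f}_\bx(q_\bx)$ by computing the convex conjugate $\Bar{f}_\bx^{\,*}$ explicitly, which yields that every $\bv$ in it satisfies $\tfrac12\norm{\bv-\grad\Bar{f}_\bx(q_\bx)}_*^2\le\eps_\bx$; the piece landing in $\partial_{\eps_\bx}h(q_\bx)$ is then automatically of the required form. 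Your argument is instead fully constructive: you pin down $\bs_\bx=\bA(p_\bx-q_\bx)$, bound it via the $1$-strong convexity estimate $\norm{q_\bx-p_\bx}^2\le 2\eps_\bx$, and then transport the exact subgradient $-\grad\Bar{f}_\bx(p_\bx)\in\partial h(p_\bx)$ to an $\eps_\bx$-subgradient at $q_\bx$ by combining the $\eps_\bx$-optimality inequality with convexity of $\Bar{f}_\bx$. The paper's approach is slicker in that it bypasses $p_\bx$ entirely and works only with the $\eps$-subdifferential calculus; your approach is more concrete, makes the dependence on the exact minimizer transparent, and handles the seminorm issue (singular $\bA$) cleanly via $\bA\bA^\dagger\bA=\bA$, a point the paper's conjugate computation glosses over.
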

\begin{proof}
For any $\bv$, $\bv \in \partial_{\eps_\bx}\Bar{f}(q_\bx)$ if $\eps_\bx + \bv^\top q_\bx - \Bar{f}_\bx(q_\bx) \ge \Bar{f}_\bx^*(\bv).$

From the definition of dual norm, we have:
\begin{align*}
    \Bar{f}_\bx^*(\bv)
    &= \sup_\bu \bv^\top\bu - \Bar{f}_\bx(\bu) \\
    &= \sup_\bu \bv^\top\bu - f(\bx) - \grad f(\bx)^\top (\bu - \bx) - \frac{1}{2}\norm{\bu - \bx}^2 \\
    &= -f(x) + \bv^\top\bx + \sup_\bu (\bv - \grad f(\bx))^\top (\bu - \bx) - \frac{1}{2}\norm{\bu - \bx}^2 \\
    &= -f(x) + \bv^\top\bx + \frac{1}{2}\norm{\bv - \grad f(\bx)}_*^2
\end{align*}

Therefore, we have
\begin{align*}
    \eps_\bx 
    &\ge \Bar{f}_\bx^*(\bv) + \Bar{f}_\bx(q_\bx) - \bv^\top q_\bx \\
    &= -f(x) + \bv^\top\bx + \frac{1}{2}\norm{\bv - \grad f(\bx)}_*^2 + f(x) + \grad f(x)^\top(q_\bx - \bx) + \frac{1}{2}\norm{q_\bx - \bx}^2 - \bv^\top q_\bx \\
    &= (\bv - \grad f(\bx))^\top(\bx - q_\bx) + \frac{1}{2}\norm{\bv - \grad f(\bx)}_*^2 + \frac{1}{2}\norm{q_\bx - \bx}^2 \\
    &= \frac{1}{2}\norm{\bv - \grad f(\bx) - \bA(q_\bx - \bx)}_*^2 = \frac{1}{2}\norm{\bv - \grad \Bar{f}_\bx(q_\bx)}_*^2
\end{align*}

Let $\bs = \bv - \grad \Bar{f}_\bx(q_\bx)$.
Since $0 \in \partial_{\eps_\bx}\Bar{\Phi}_\bx(q_\bx)$ and $\bs + \grad \Bar{f}_\bx(q_\bx) \in \partial_{\eps_\bx}\Bar{f}_\bx(q_\bx)$, we have
\begin{align*}
    -\bs - \grad \Bar{f}_\bx(q_\bx) \in \partial_{\eps_\bx}h(q_\bx).
\end{align*}
\end{proof}

The following lemma analyzes the difference between $\Phi(q_\bx)$ and $\Phi(\bu)$ geometrically.
\begin{lemma}[Inexact Gradient Lemma]
\label{lemma:inexact_grad_lemma}
Given any $\bx, \bu$,
\begin{align*}
    \Phi(\bu)
    &\ge \Phi(q_\bx) + \bd_\bx^\top(\bu - \bx) + \frac{1}{2\kappa}\norm{\bu - \bx}^2 + \frac{1}{2}\norm{\bd_\bx}_*^2 - \sqrt{2\eps_\bx} \norm{\bu - q_\bx} - \eps_\bx \\
    &\ge \Phi(q_\bx) + \bd_\bx^\top(\bu - \bx) + \frac{1}{2}\norm{\bd_\bx}_*^2 - \sqrt{2\eps_\bx} \norm{\bu - q_\bx} - \eps_\bx,
\end{align*}
holds where $\bd_\bx \coloneqq \bA(\bx - q_\bx).$
\end{lemma}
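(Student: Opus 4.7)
The goal is to lower bound $\Phi(\bu)-\Phi(q_\bx)$ by splitting $\Phi=f+h$ and treating the two pieces separately, using strong convexity of $f$ at $\bx$ and the $\eps_\bx$-subdifferential characterization from \cref{lemma:subdiff_h} for $h$ at $q_\bx$. Concretely, I would first apply $\mu$-strong convexity (with $\mu=1/\kappa$) of $f$ to get
\[
f(\bu)\ \ge\ f(\bx)+\grad f(\bx)^\top(\bu-\bx)+\tfrac{1}{2\kappa}\norm{\bu-\bx}^2,
\]
and then invoke \cref{lemma:subdiff_h} to obtain some $\bs_\bx$ with $\norm{\bs_\bx}_*^2\le 2\eps_\bx$ and $-\bs_\bx-\grad\Bar{f}_\bx(q_\bx)\in\partial_{\eps_\bx}h(q_\bx)$. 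Since $\grad\Bar{f}_\bx(q_\bx)=\grad f(\bx)+\bA(q_\bx-\bx)=\grad f(\bx)-\bd_\bx$, the $\eps$-subdifferential inequality becomes
\[
h(\bu)\ \ge\ h(q_\bx)+\bigl(\bd_\bx-\grad f(\bx)-\bs_\bx\bigr)^\top(\bu-q_\bx)-\eps_\bx.
\]

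\textbf{Combining.} Add the two bounds. The $\grad f(\bx)$ terms telescope: $\grad f(\bx)^\top(\bu-\bx)-\grad f(\bx)^\top(\bu-q_\bx)=\grad f(\bx)^\top(q_\bx-\bx)$, which together with $f(\bx)$ can be compared against $f(q_\bx)$ via the 1-smoothness of $f$, namely $f(q_\bx)\le f(\bx)+\grad f(\bx)^\top(q_\bx-\bx)+\tfrac{1}{2}\norm{q_\bx-\bx}^2$. The $\bd_\bx$ terms combine into $\bd_\bx^\top(\bu-q_\bx)=\bd_\bx^\top(\bu-\bx)+\bd_\bx^\top(\bx-q_\bx)$, and the key observation is the identity
\[
\bd_\bx^\top(\bx-q_\bx)\ =\ \norm{\bx-q_\bx}^2\ =\ \norm{\bd_\bx}_*^2,
\]
where the second equality uses $\bd_\bx=\bA(\bx-q_\bx)$ together with the fact that $\bx-q_\bx$ lies in the range of $\bA$ (which holds in our setting since $\bA=\bL(H)$ and increments are always taken orthogonal to $\mathbf{1}$). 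This matches the quadratic term $\tfrac{1}{2}\norm{\bd_\bx}_*^2$ after the 1-smoothness loss of $\tfrac{1}{2}\norm{\bx-q_\bx}^2$ cancels half of it. Finally, the $\bs_\bx$ contribution is absorbed by Cauchy--Schwarz in the $\norm{\cdot},\norm{\cdot}_*$ duality: $|\bs_\bx^\top(\bu-q_\bx)|\le\norm{\bs_\bx}_*\norm{\bu-q_\bx}\le\sqrt{2\eps_\bx}\,\norm{\bu-q_\bx}$. Putting everything together yields the first displayed inequality; the second one follows by dropping the non-negative strong-convexity term $\tfrac{1}{2\kappa}\norm{\bu-\bx}^2$.

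\textbf{Main obstacle.} The only non-mechanical step is justifying $\norm{\bx-q_\bx}^2=\norm{\bd_\bx}_*^2$ when $\bA$ is only positive semidefinite (as is the case for graph Laplacians). I would handle this by observing that throughout \cref{algo:ProxAGD} every iterate difference lies in the orthogonal complement of $\ker(\bA)$, so that $\bA^\dagger\bA(\bx-q_\bx)=\bx-q_\bx$ and the identity holds. The rest of the proof is a careful but routine bookkeeping of the terms above.
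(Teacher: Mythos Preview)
Your proposal is correct and follows essentially the same route as the paper: invoke \cref{lemma:subdiff_h} for the $h$ part, use $1$-smoothness and $1/\kappa$-strong convexity for the $f$ part, combine, apply Cauchy--Schwarz to $\bs_\bx$, and finish with the identity $\bd_\bx^\top(\bx-q_\bx)=\norm{\bd_\bx}_*^2$. One remark: your ``main obstacle'' is not actually an obstacle, since $\norm{\bd_\bx}_*^2=(\bx-q_\bx)^\top\bA\,\bA^\dagger\bA\,(\bx-q_\bx)=(\bx-q_\bx)^\top\bA(\bx-q_\bx)=\norm{\bx-q_\bx}^2$ holds unconditionally by the Moore--Penrose identity $\bA\bA^\dagger\bA=\bA$; no assumption on $\bx-q_\bx$ lying in the range of $\bA$ is needed (and the paper silently uses this).
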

\begin{proof}
From Lemma~\ref{lemma:subdiff_h}, there's $\bs_\bx$ with $\norm{\bs_\bx}_*^2 \le 2\eps_\bx$ such that for any $\bu$,
\begin{align*}
    h(\bu) + \eps_\bx 
    &\ge h(q_\bx) + (-\bs_\bx - \grad \Bar{f}_\bx(q_\bx))^\top(\bu - q_\bx)\\
    &= h(q_\bx) + (-\bs_\bx - \grad f(\bx) - \bA(q_\bx - \bx))^\top (\bu - q_\bx) \\
    &= h(q_\bx) + \bs_\bx^\top (q_\bx - \bu) - \grad f(\bx) ^\top (\bu - q_\bx) + \bd_\bx^\top(\bu - q_\bx) \\
    &\ge h(q_\bx) - \sqrt{2\eps_\bx} \norm{q_\bx - \bu} - \grad f(\bx) ^\top (\bu - q_\bx) + \bd_\bx^\top(\bu - q_\bx).
\end{align*}

Since $f$ is 1-smooth, we have
\begin{align*}
    f(q_\bx)
    \le f(\bx) + \grad f(\bx)^\top (q_\bx - \bx) + \frac{1}{2}\norm{q_\bx - \bx}^2
    = f(\bx) + \grad f(\bx)^\top (q_\bx - \bx) + \frac{1}{2}\norm{\bd_\bx}_*^2
\end{align*}
Using convexity of $f$, we have
\begin{align*}
    f(\bu) &\ge f(\bx) + \grad f(\bx)^\top(\bu - \bx) + \frac{1}{2\kappa}\norm{\bu - \bx}^2 \\
    &\ge \left(f(q_\bx) - \grad f(\bx)^\top(q_\bx - \bx) - \frac{1}{2}\norm{\bd_\bx}_*^2\right) + \grad f(\bx)^\top(\bu - \bx) + \frac{1}{2\kappa}\norm{\bu - \bx}^2\\
    &= f(q_\bx) + \grad f(\bx)^\top (\bu - q_\bx) - \frac{1}{2}\norm{\bd_\bx}_*^2 + \frac{1}{2\kappa}\norm{\bu - \bx}^2.
\end{align*}

Sum up the inequality regard $f(\bu)$ and $h(\bu)$, we have
\begin{align*}
    \Phi(\bu) + \eps_\bx
    \ge \Phi(q_\bx) - \sqrt{2\eps_\bx} \norm{q_\bx - \bu} + \bd_\bx^\top(\bu - q_\bx) - \frac{1}{2}\norm{\bd_\bx}_*^2 + \frac{1}{2\kappa}\norm{\bu - \bx}^2.
\end{align*}

We conclude the proof by observing that:
\begin{align*}
    \bd_\bx^\top(\bu - q_\bx) = \bd_\bx^\top(\bu - \bx) + \bd_\bx^\top(\bx - q_\bx) = \bd_\bx^\top(\bu - \bx) + \norm{\bd_\bx}_*^2.
\end{align*}

\end{proof}

The following lemma analyze the term $\bd_\bx^\top(\bu - \bx)$:
\begin{lemma}[Mirror Descent Lemma]
\label{lemma:exact_mirror_lemma}
For any $\bx, \bd, \bu$ and $\eta > 0$, let $\bz = \bx - \eta \bA^\dagger \bd$, we have
\begin{align*}
    \bd^\top (\bx - \bu) 
    &= \frac{\eta}{2}\norm{\bd}_*^2 + \frac{1}{2\eta}\left(\norm{\bx - \bu}^2 - \norm{\bz - \bu}^2\right)
\end{align*}
\end{lemma}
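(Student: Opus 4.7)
The plan is to prove the identity by direct algebraic expansion, substituting the definition $\bz = \bx - \eta \bA^\dagger \bd$ into the right-hand side and simplifying using the standard pseudo-inverse identity $\bA^\dagger \bA \bA^\dagger = \bA^\dagger$.

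First I would write $\bz - \bu = (\bx - \bu) - \eta \bA^\dagger \bd$ and expand the squared norm:
\[
\norm{\bz - \bu}^2 = (\bz-\bu)^\top \bA (\bz-\bu) = \norm{\bx-\bu}^2 - 2\eta(\bx-\bu)^\top \bA \bA^\dagger \bd + \eta^2 \bd^\top \bA^\dagger \bA \bA^\dagger \bd.
\]
The cross term is already nearly what we want, and the quadratic term collapses to $\eta^2 \norm{\bd}_*^2$ using $\bA^\dagger \bA \bA^\dagger = \bA^\dagger$. Rearranging,
\[
\frac{1}{2\eta}\bigl(\norm{\bx-\bu}^2 - \norm{\bz-\bu}^2\bigr) = (\bx-\bu)^\top \bA \bA^\dagger \bd - \frac{\eta}{2}\norm{\bd}_*^2,
\]
so adding $\tfrac{\eta}{2}\norm{\bd}_*^2$ to both sides gives
\[
\frac{\eta}{2}\norm{\bd}_*^2 + \frac{1}{2\eta}\bigl(\norm{\bx-\bu}^2 - \norm{\bz-\bu}^2\bigr) = (\bx-\bu)^\top \bA \bA^\dagger \bd.
\]

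The last step is to match this with the left-hand side $\bd^\top(\bx-\bu)$, which requires $\bA \bA^\dagger \bd = \bd$, i.e.\ that $\bd$ lies in the range of $\bA$. This is the one subtle point and is the main obstacle, since in our application $\bA = \bL(H)$ is a singular Laplacian. However, every invocation of this lemma in \cref{sec:MetaAlg} uses $\bd = \bd_\bx = \bA(\bx - q_\bx)$, which is manifestly in the range of $\bA$, so the required projection identity holds. Equivalently, since $\bA\bA^\dagger$ is the self-adjoint orthogonal projector onto $\mathrm{range}(\bA)$, one can rewrite $(\bx-\bu)^\top \bA \bA^\dagger \bd = (\bA\bA^\dagger (\bx-\bu))^\top \bd$, so the identity also holds up to replacing $\bx - \bu$ by its component in $\mathrm{range}(\bA)$, which is the only component that affects the semi-norm anyway.

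In summary, the proof is a one-line expansion plus one application of $\bA^\dagger \bA \bA^\dagger = \bA^\dagger$, with the only care required in handling the pseudo-inverse on the null space of $\bA$; this is harmless because gradients $\bd$ arising in \cref{algo:ProxAGD} always lie in $\mathrm{range}(\bA)$.
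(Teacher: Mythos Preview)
Your proof is correct and essentially the same as the paper's: the paper phrases the computation as a ``generalized Pythagorean theorem'' (the polarization identity $\langle a,b\rangle_\bA = \tfrac{1}{2}(\|a\|^2+\|b\|^2-\|a-b\|^2)$ applied with $a=\bx-\bz$, $b=\bx-\bu$), which unwinds to exactly the square expansion you wrote out. You are in fact more careful than the paper about the pseudo-inverse subtlety---the paper silently uses $(\bx-\bz)^\top\bA(\bx-\bu)=\eta\,\bd^\top(\bx-\bu)$, which also requires $\bd\in\mathrm{range}(\bA)$, a hypothesis that (as you observe) holds in every invocation since $\bd_\bx=\bA(\bx-q_\bx)$.
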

\begin{proof}
The idea of the proof is the generalized Pythagorean theorem:
\begin{align*}
    \eta \bd^\top (\bx - \bu)
    &= (\bx - \bz)^\top \bA (\bx - \bu) \\
    &= \frac{1}{2} (\bx - \bz)^\top \bA (\bx - \bz) + \frac{1}{2} (\bx - \bu)^\top \bA (\bx - \bu) - \frac{1}{2} (\bz - \bu)^\top \bA (\bz - \bu)
\end{align*}
\end{proof}

Given these 2 lemmas, we are ready to analyze \cref{algo:ProxAGD}.
Instead of letting $\bu = \bx^*$ in \cref{lemma:inexact_grad_lemma}, we plug the following:
\begin{align*}
    \tau_k \bx^* + (1-\tau_k) \by^{(k)}.
\end{align*}

The following facts are important in simplifying intermediate formulas:
\begin{fact}
\label{fact:f1}
$\frac{1-\tau_k}{\tau_k} = \frac{1 - 2/(k+2)}{2/(k+2)} = \frac{k}{2} = \alpha_{k+1}-1.$
\end{fact}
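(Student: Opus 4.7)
The plan is to verify this purely algebraic identity by direct substitution, using only the two definitions made in lines 4 and 5 of \cref{algo:ProxAGD}, namely $\alpha_{k+1} := (k+2)/2$ and $\tau_k := 1/\alpha_{k+1} = 2/(k+2)$. There are no nontrivial ideas involved; the statement is included only as a book-keeping lemma that will be invoked repeatedly in the convergence analysis of \textsc{ProxAGD} when splitting the coefficient of $\by^{(k)}$ in expressions of the form $\tau_k \bx^* + (1-\tau_k)\by^{(k)}$.

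First I would establish the middle equality $(1-2/(k+2))/(2/(k+2)) = k/2$. Writing $1 - 2/(k+2) = ((k+2)-2)/(k+2) = k/(k+2)$ and then dividing by $2/(k+2)$ yields $k/2$ after cancellation of the common factor $1/(k+2)$. Next I would verify $\alpha_{k+1} - 1 = k/2$ by substituting the definition: $\alpha_{k+1} - 1 = (k+2)/2 - 1 = (k+2-2)/2 = k/2$. Combining these two computations gives the chain of equalities in the statement.

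The only conceivable obstacle is a notational one: one must confirm that $\tau_k$ is indeed defined as $1/\alpha_{k+1}$ (so that $(1-\tau_k)/\tau_k = 1/\tau_k - 1 = \alpha_{k+1} - 1$), which is exactly the choice made in \cref{algo:ProxAGD}. Thus the fact follows immediately from the definitions, and it will function in later lemmas as a clean rewrite rule for converting expressions involving $(1-\tau_k)/\tau_k$ into expressions involving $\alpha_{k+1}$, which in turn telescope nicely against the update $\bz^{(k+1)} = \bz^{(k)} + \alpha_{k+1}(\by^{(k+1)} - \bx^{(k+1)})$.
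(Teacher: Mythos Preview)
Your proposal is correct and matches the paper's treatment: the paper states \cref{fact:f1} without proof, as it is an immediate algebraic consequence of the definitions $\alpha_{k+1}=(k+2)/2$ and $\tau_k=1/\alpha_{k+1}$ given in \cref{algo:ProxAGD}.
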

\begin{fact}
\label{fact:f2}
$\tau_k \bu + (1-\tau_k) \by^{(k)} - \bx^{(k+1)} = \tau_k (\bu - \bz^{(k)})$
\end{fact}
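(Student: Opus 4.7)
The plan is to verify this identity by direct substitution of the definition of $\bx^{(k+1)}$ from the algorithm. Recall from line 6 of \cref{algo:ProxAGD} that
\[
    \bx^{(k+1)} \coloneqq (1-\tau_k)\by^{(k)} + \tau_k \bz^{(k)},
\]
so the identity is purely an assertion about the affine combination structure of $\bx^{(k+1)}$, with no analytic content.

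Concretely, I would substitute this definition into the left-hand side and cancel the $(1-\tau_k)\by^{(k)}$ terms, leaving $\tau_k \bu - \tau_k \bz^{(k)}$, which factors as $\tau_k(\bu - \bz^{(k)})$. This is a one-line computation; no step presents any real obstacle. The only thing to flag is that the identity holds for \emph{any} vector $\bu$, which is precisely why it will be useful in the convergence analysis when we choose $\bu = \bx^*$: it converts a comparison between $\tau_k \bx^* + (1-\tau_k)\by^{(k)}$ and the algorithm's iterate $\bx^{(k+1)}$ into a comparison between $\bx^*$ and the auxiliary sequence $\bz^{(k)}$, on which the Mirror Descent Lemma (\cref{lemma:exact_mirror_lemma}) can then act.
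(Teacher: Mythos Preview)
Your proposal is correct and matches the paper's approach: the paper states this as a bare fact with no proof, since it follows immediately by substituting the definition $\bx^{(k+1)} = (1-\tau_k)\by^{(k)} + \tau_k \bz^{(k)}$ from line~6 of \cref{algo:ProxAGD}, exactly as you do.
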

\begin{fact}
\label{fact:f3}
$\tau_k \bu + (1-\tau_k) \by^{(k)} - \by^{(k+1)} = \tau_k (\bu - \bz^{(k+1)})$
\end{fact}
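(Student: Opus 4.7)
The plan is to derive the claim by pure algebraic manipulation, using the definitions of $\bx^{(k+1)}$, $\bz^{(k+1)}$, and the identity $\tau_k = 1/\alpha_{k+1}$ from \cref{algo:ProxAGD}. The fact is a routine bookkeeping statement that parallels \cref{fact:f2} but with $\bz^{(k+1)}$ in place of $\bz^{(k)}$ (and consequently $\by^{(k+1)}$ in place of $\bx^{(k+1)}$), so I expect no real obstacle.

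Concretely, I would first rearrange the target identity so that it becomes a statement purely about the sequences $\{\by^{(k)}\}$ and $\{\bz^{(k)}\}$. Subtracting $\tau_k \bu$ from both sides eliminates $\bu$, reducing the claim to
\[
    \tau_k \bz^{(k+1)} = \by^{(k+1)} - (1-\tau_k)\by^{(k)},
\]
or equivalently, after dividing by $\tau_k$ and using $\alpha_{k+1} = 1/\tau_k$ together with \cref{fact:f1},
\[
    \bz^{(k+1)} = \alpha_{k+1} \by^{(k+1)} - (\alpha_{k+1}-1)\by^{(k)}.
\]

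Next I would establish this last identity directly from the algorithm. Substituting the definition $\bx^{(k+1)} = (1-\tau_k)\by^{(k)} + \tau_k \bz^{(k)}$ into the update rule $\bz^{(k+1)} = \bz^{(k)} + \alpha_{k+1}(\by^{(k+1)} - \bx^{(k+1)})$ yields
\[
    \bz^{(k+1)} = \bz^{(k)}(1 - \alpha_{k+1}\tau_k) + \alpha_{k+1}\by^{(k+1)} - \alpha_{k+1}(1-\tau_k)\by^{(k)}.
\]
Because $\alpha_{k+1}\tau_k = 1$, the $\bz^{(k)}$ term vanishes, and $\alpha_{k+1}(1-\tau_k) = \alpha_{k+1}-1$ by \cref{fact:f1}, giving exactly the needed expression. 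Multiplying back by $\tau_k$ and adding $\tau_k \bu$ to both sides recovers the stated form, completing the proof.
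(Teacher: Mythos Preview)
Your proposal is correct. The paper states \cref{fact:f3} without proof (it is left as a routine verification), and your algebraic derivation---reducing the claim to $\bz^{(k+1)} = \alpha_{k+1}\by^{(k+1)} - (\alpha_{k+1}-1)\by^{(k)}$ and then checking this via the update rules for $\bx^{(k+1)}$ and $\bz^{(k+1)}$ together with $\alpha_{k+1}\tau_k = 1$---is exactly the intended verification.
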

\begin{definition}
For any $k \ge 0$, define $\bd_k \coloneqq \bA(\bx^{k} - \by^{k})$.
\end{definition}
\begin{definition}
For any $k \ge 0$, define $D_k \coloneqq \norm{\bx^* - \bz^{(k)}}$.
\end{definition}
\begin{fact}
\label{fact:f4}
$\bz^{(k+1)} = \bz^{(k)} + \alpha_{k+1}(\by^{(k+1)} - \bx^{(k+1)}) = \bz^{(k)} - \alpha_{k+1}\bA^\dagger \bd_{k+1}$
\end{fact}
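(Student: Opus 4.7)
The statement is a chain of two equalities, which I would verify in turn. The first equality $\bz^{(k+1)} = \bz^{(k)} + \alpha_{k+1}(\by^{(k+1)} - \bx^{(k+1)})$ is immediate by construction: it is exactly the update rule recorded on line~9 of Algorithm~\ref{algo:ProxAGD}, so there is nothing to argue.

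For the second equality, I would substitute the definition $\bd_{k+1} \coloneqq \bA(\bx^{(k+1)} - \by^{(k+1)})$ to rewrite
\begin{align*}
    -\alpha_{k+1}\bA^\dagger \bd_{k+1}
    \;=\;
    -\alpha_{k+1}\,\bA^\dagger \bA\,(\bx^{(k+1)} - \by^{(k+1)}).
\end{align*}
The target identity then reduces to showing that $\bA^\dagger \bA$ acts as the identity on the displacement $\bx^{(k+1)} - \by^{(k+1)}$. When $\bA$ is strictly positive definite, $\bA^\dagger = \bA^{-1}$ and $\bA^\dagger \bA = \bI$, so this is trivial. The genuine case of interest is when $\bA$ is only positive semi-definite, as with a Laplacian whose null space is $\mathrm{span}(\mb{1})$; then $\bA^\dagger \bA$ is the orthogonal projector onto $\mathrm{null}(\bA)^{\perp}$, and the identity $\bA^\dagger\bA \bv = \bv$ holds precisely when $\bv$ is orthogonal to $\mathrm{null}(\bA)$. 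The main obstacle is therefore to certify the invariant $\mb{1}^{\top}\bx^{(k+1)} = \mb{1}^{\top}\by^{(k+1)}$, which is the sole non-algebraic step.

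I would discharge this invariant by induction on $k$: $\bx^{(0)} = \by^{(0)} = \bz^{(0)}$ by the initialization on line~2, and for the inductive step, the quadratic term $(1/2)\|\by - \bx^{(k+1)}\|^2$ in $\bar\Phi_{\bx^{(k+1)}}(\by)$ is invariant under shifts of $\by$ along $\mb{1}$, so one may always pick the representative of the approximate proximal mapping $q_{k+1}$ with the same mean as $\bx^{(k+1)}$ without changing $\bar\Phi_{\bx^{(k+1)}}(q_{k+1})$. Equivalently, and more cleanly, the fact can be read as an identity in the quotient space $\Real^V / \mathrm{span}(\mb{1})$, on which $\bA$ is positive definite and both expressions define the same vector up to a shift in $\mathrm{null}(\bA)$ that cannot affect any of the $\bA$-seminorm quantities used in the downstream Lyapunov analysis.
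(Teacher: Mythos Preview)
You correctly trace the first equality back to line~9 of Algorithm~\ref{algo:ProxAGD}, and you rightly identify that the second equality hinges on $\bA^\dagger\bA$ acting as the identity on $\by^{(k+1)}-\bx^{(k+1)}$, which is a genuine issue when $\bA=\bL(H)$ has nontrivial null space $\mathrm{span}(\mb{1})$. The paper simply states Fact~\ref{fact:f4} without addressing this point.

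However, your first proposed fix has a gap. You argue that since the quadratic term $(1/2)\norm{\by-\bx^{(k+1)}}^2$ is $\mb{1}$-shift invariant, one may shift $q_{k+1}$ to match the mean of $\bx^{(k+1)}$ without changing $\Bar{\Phi}_{\bx^{(k+1)}}(q_{k+1})$. This is false: $\Bar{\Phi}_\bx(\by)=\Bar{f}_\bx(\by)+h(\by)$, and while $\Bar{f}_\bx$ is indeed $\mb{1}$-shift invariant (both $\grad f(\bx)=\bL(G)\bx\perp\mb{1}$ and the seminorm term are), the term $h(\by)=\sum_u f_u(y_u)+I_{\{\by\ge\bb\}}(\by)$ is \emph{not}. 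The VWFs $f_u$ are arbitrary piecewise quadratics on each coordinate, and the feasibility constraint $\by\ge\bb$ is also destroyed by shifts. So you cannot adjust the mean of $q_{k+1}$ for free, and the invariant $\mb{1}^\top\bx^{(k+1)}=\mb{1}^\top\by^{(k+1)}$ need not hold.

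Your second resolution is the correct one: the two expressions for $\bz^{(k+1)}$ differ by an element of $\mathrm{null}(\bA)$, and the \emph{only} downstream use of Fact~\ref{fact:f4} is Fact~\ref{fact:f5}, which feeds into $D_{k+1}=\norm{\bz^{(k+1)}-\bx^*}$, an $\bA$-seminorm quantity that is blind to null-space shifts. Concretely, Lemma~\ref{lemma:exact_mirror_lemma} applied with $\bd=\bd_{k+1}\in\mathrm{range}(\bA)$ and the algorithmic update $\bz^{(k)}-\bz^{(k+1)}=\alpha_{k+1}(\bx^{(k+1)}-\by^{(k+1)})$ gives $(\bz^{(k)}-\bz^{(k+1)})^\top\bA=\alpha_{k+1}\bd_{k+1}^\top$ directly, and $\norm{\bz^{(k)}-\bz^{(k+1)}}^2=\alpha_{k+1}^2\norm{\bd_{k+1}}_*^2$ via $\bA\bA^\dagger\bA=\bA$; so Fact~\ref{fact:f5} holds regardless. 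You should present only this argument and drop the inductive one.
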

\begin{fact}
\label{fact:f5}
By \cref{lemma:exact_mirror_lemma} and \cref{fact:f5}, we have
$\bd_{k+1}^\top(\bz^{(k)} - \bx^*)= \frac{\alpha_{k+1}}{2}\norm{\bd_{k+1}}_*^2 + \frac{1}{2\alpha_{k+1}}(D_k^2 - D_{k+1}^2).$
\end{fact}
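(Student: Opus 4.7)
The identity is a direct specialization of the Mirror Descent Lemma (\cref{lemma:exact_mirror_lemma}), with the update rule for $\bz^{(k+1)}$ (Fact f4) supplying exactly the form required by that lemma. So the entire plan is a substitution argument, with no iteration, telescoping, or error accounting required.

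The plan is to apply \cref{lemma:exact_mirror_lemma} with the choices $\bx \leftarrow \bz^{(k)}$, $\bd \leftarrow \bd_{k+1}$, $\bu \leftarrow \bx^*$, and $\eta \leftarrow \alpha_{k+1}$. Under these substitutions, the vector $\bz$ produced by the lemma is $\bz^{(k)} - \alpha_{k+1} \bA^\dagger \bd_{k+1}$, which by Fact f4 is precisely $\bz^{(k+1)}$. The conclusion of \cref{lemma:exact_mirror_lemma} then reads
\[
    \bd_{k+1}^\top(\bz^{(k)} - \bx^*)
    = \frac{\alpha_{k+1}}{2}\norm{\bd_{k+1}}_*^2
      + \frac{1}{2\alpha_{k+1}}\left(\norm{\bz^{(k)} - \bx^*}^2 - \norm{\bz^{(k+1)} - \bx^*}^2\right).
\]
Recognizing $\norm{\bz^{(k)} - \bx^*}^2 = D_k^2$ and $\norm{\bz^{(k+1)} - \bx^*}^2 = D_{k+1}^2$ from the definition of $D_k$ gives exactly the stated identity. (I read the citation ``Fact f5'' inside the statement as a typo for Fact f4, since f5 is the fact being proved.)

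There is no real obstacle here; the only thing to verify is that the assignment $\eta = \alpha_{k+1}$ is legitimate, i.e.\ that $\alpha_{k+1} > 0$, which holds because $\alpha_{k+1} = (k+2)/2 \ge 1$ by definition in \cref{algo:ProxAGD}. Hence the statement is an immediate corollary rather than a standalone lemma requiring further work.
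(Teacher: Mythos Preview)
Your proposal is correct and matches the paper's approach exactly: the paper states Fact~f5 as an immediate consequence of \cref{lemma:exact_mirror_lemma} together with Fact~f4 (the self-reference to ``f5'' in the paper is indeed a typo for ``f4''), and you have simply spelled out the substitution $\bx\leftarrow\bz^{(k)}$, $\bd\leftarrow\bd_{k+1}$, $\bu\leftarrow\bx^*$, $\eta\leftarrow\alpha_{k+1}$ that makes this work. Your check that $\alpha_{k+1}>0$ is a nice bit of care the paper leaves implicit.
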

\begin{fact}
\label{fact:f6}
$\alpha_{k+1}^2-(\alpha_{k+2}^2-\alpha_{k+2}) = (k+2)^2/4 - (k+3)^2/4 + (k+3)/2 = 1/4$.
\end{fact}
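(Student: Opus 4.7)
The plan is to verify this identity by direct substitution of the definitions, since $\alpha_{k+1}$ is explicitly defined in \cref{algo:ProxAGD} line 4 as $(k+2)/2$. So $\alpha_{k+1}^2 = (k+2)^2/4$ and similarly $\alpha_{k+2}^2 = (k+3)^2/4$ and $\alpha_{k+2} = (k+3)/2$. Substituting these into the left-hand side immediately yields the middle expression $(k+2)^2/4 - (k+3)^2/4 + (k+3)/2$, which is the first equality in the fact.

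For the second equality, I would factor the difference of squares $(k+2)^2 - (k+3)^2 = (k+2 - k - 3)(k+2+k+3) = -(2k+5)$ to obtain $(k+2)^2/4 - (k+3)^2/4 = -(2k+5)/4$. Then writing $(k+3)/2 = (2k+6)/4$ and adding gives $(-(2k+5) + (2k+6))/4 = 1/4$, as claimed.

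There is essentially no obstacle here; the statement is a one-line algebraic identity whose only purpose is to be invoked later as a recurrence closure in the telescoping argument for the accelerated potential function. The only thing to be careful about is matching indices consistently with the definition $\alpha_{k+1} = (k+2)/2$ used in the algorithm, which is why the statement is written in the slightly asymmetric form involving both $\alpha_{k+1}$ and $\alpha_{k+2}$ rather than a single symbol.
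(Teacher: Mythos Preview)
Your proposal is correct and takes essentially the same approach as the paper: the fact is stated with the computation inline (the paper gives no separate proof), and you simply substitute $\alpha_{k+1}=(k+2)/2$ and verify the arithmetic, which is exactly what the statement records.
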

\begin{fact}
\label{fact:f7}
$\sum_{k=0}^{T-1}\alpha_{k+1} = \sum_{k=0}^{T-1}(k+2)/2 = T(T+3)/4$.
\end{fact}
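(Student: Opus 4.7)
The plan is to verify \cref{fact:f7} by direct evaluation of an arithmetic progression. First I would recall from \cref{algo:ProxAGD} the definition $\alpha_{k+1} = (k+2)/2$, which immediately gives the first equality $\sum_{k=0}^{T-1}\alpha_{k+1} = \sum_{k=0}^{T-1}(k+2)/2$. Then I would pull the factor of $1/2$ outside and split the sum as
\[
\sum_{k=0}^{T-1}\frac{k+2}{2} = \frac{1}{2}\left(\sum_{k=0}^{T-1}k + \sum_{k=0}^{T-1}2\right).
\]

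Next I would apply the two standard identities $\sum_{k=0}^{T-1}k = T(T-1)/2$ and $\sum_{k=0}^{T-1}2 = 2T$, giving $\frac{1}{2}\left(\frac{T(T-1)}{2} + 2T\right) = \frac{T(T-1) + 4T}{4} = \frac{T^2+3T}{4} = \frac{T(T+3)}{4}$, which matches the claim.

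There is no real obstacle here: the statement is a one-line arithmetic identity used as a bookkeeping fact in the convergence analysis of \cref{algo:ProxAGD}. The only thing to be slightly careful about is the indexing (summing $\alpha_{k+1}$ for $k=0,\ldots,T-1$ is the same as summing $\alpha_j$ for $j=1,\ldots,T$), but either way the closed form $T(T+3)/4$ follows from the Gauss sum formula.
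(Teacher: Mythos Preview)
Your proposal is correct; the paper states \cref{fact:f7} without proof since it is an elementary arithmetic identity, and your direct computation via the Gauss sum formula is exactly the intended verification.
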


By convexity, we have
\begin{align*}
    \tau_k \Phi(\bx^*) + (1 - \tau_k)\Phi(\by^{(k)})
    \ge &\Phi\left(\tau_k \bx^* + (1-\tau_k) \by^{(k)}\right) \\
    \ge &\Phi(\by^{(k+1)}) + \bd_{k+1}^\top(\tau_k \bx^* + (1-\tau_k) \by^{(k)} - \bx^{(k+1)}) + \frac{1}{2}\norm{\bd_{k+1}}_*^2 \\
    &- \sqrt{2\eps_{k+1}} \norm{\tau_k \bx^* + (1-\tau_k) \by^{(k)} - \by^{(k+1)}} - \eps_{k+1} \\
    = &\Phi(\by^{(k+1)}) + \tau_k\bd_{k+1}^\top(\bx^* - \bz^{(k)}) + \frac{1}{2}\norm{\bd_{k+1}}_*^2 \\
    &- \tau_k\sqrt{2\eps_{k+1}} D_{k+1} - \eps_{k+1} \\
    = &\Phi(\by^{(k+1)}) - \frac{\tau_k\alpha_{k+1}}{2}\norm{\bd_{k+1}}_*^2 - \frac{\tau_k}{2\alpha_{k+1}}(D_k^2 - D_{k+1}^2) + \frac{1}{2}\norm{\bd_{k+1}}_*^2 \\
    &- \tau_k\sqrt{2\eps_{k+1}} D_{k+1} - \eps_{k+1} \\
    = &\Phi(\by^{(k+1)}) - \frac{\tau_k^2}{2}(D_k^2 - D_{k+1}^2) - \tau_k\sqrt{2\eps_{k+1}} D_{k+1} - \eps_{k+1},
    \addtocounter{equation}{1}\tag{\theequation} \label{eq:AGDMainIneq}
\end{align*}
where the second inequality uses \cref{lemma:inexact_grad_lemma}, the first equality uses \cref{fact:f2} and \cref{fact:f3}, the second equality uses \cref{fact:f5}, and the third equality uses that $\tau_k\alpha_{k+1} = 1$.

Multiplying both side with $\alpha_{k+1}^2=\tau_k^{-2}$ and then rearranging terms by \cref{fact:f1} yields
\begin{align*}
    \alpha_{k+1}^2\Phi(\by^{(k+1)}) - (\alpha_{k+1}^2 - \alpha_{k+1})\Phi(\by^{(k)}) 
    \le &\alpha_{k+1}\Phi(\bx^*) + \frac{1}{2}(D_k^2 - D_{k+1}^2) \\
    &+ \alpha_{k+1}\sqrt{2\eps_{k+1}} D_{k+1} + \alpha_{k+1}^2 \eps_{k+1}.
    \addtocounter{equation}{1}\tag{\theequation} \label{eq:AGD_per_step}
\end{align*}

Summing over $k=0, 1, \ldots, T-1$ yields
\begin{align*}
    \sum_{k=0}^{T-1} \alpha_{k+1}^2\Phi(\by^{(k+1)}) - (\alpha_{k+1}^2 - \alpha_{k+1})\Phi(\by^{(k)})
    = &\alpha_T^2\Phi(\by^{(T)}) + \sum_{k=1}^{T-1}\frac{1}{4}\Phi(\by^{(k)}) \\
    \le &\sum_{k=0}^{T-1} \alpha_{k+1}\Phi(\bx^*) + \frac{1}{2}(D_0^2 - D_T^2) \\
    &+ \sum_{k=0}^{T-1} \alpha_{k+1}\sqrt{2\eps_{k+1}} D_{k+1} + \alpha_{k+1}^2 \eps_{k+1}
\end{align*}
where the first equality uses telescoping and \cref{fact:f6}, and the first inequality uses Inequality~(\ref{eq:AGD_per_step}) and telescoping.

Subtracting both side by $\sum_{k=1}^{T-1}\frac{1}{4}\Phi(\by^{(k)})$ and then using the optimality of $\bx^*$ yields
\begin{align*}
    \alpha_T^2\Phi(\by^{(T)}) \le \left(\sum_{k=0}^{T-1} \alpha_{k+1} - \frac{T-1}{4}\right)\Phi(\bx^*) + \frac{1}{2}(D_0^2 - D_T^2) + \sum_{k=0}^{T-1} \alpha_{k+1}\sqrt{2\eps_{k+1}} D_{k+1} + \alpha_{k+1}^2 \eps_{k+1}.
\end{align*}
The definition of $\alpha_k$ and \cref{fact:f7} implies that $\alpha_T^2 = \sum_{k=0}^{T-1} \alpha_{k+1} - \frac{T-1}{4} = (T+1)^2/4$.
Thus, dividing both side by $(T+1)^2/4$ and then subtracting $\Phi(\bx^*)$ from both sides yields
\begin{align}
\label{eq:AGDPreResult}
    \Phi(\by^{(T)}) - \Phi(\bx^*) \le \frac{4}{(T+1)^2}\left(\frac{1}{2}(D_0^2 - D_T^2) + \sum_{k=0}^{T-1} \alpha_{k+1}\sqrt{2\eps_{k+1}} D_{k+1} + \alpha_{k+1}^2 \eps_{k+1}\right).
\end{align}

The following fact from \cite{SRB11} helps further simplifying the inequality.
\begin{fact}[Lemma 1, \cite{SRB11}]
\label{fact:RecSRB11}
Assume that an nonnegative sequence $\{u_k\}$ satisfies the following recursion for all $k \ge 0$:
\begin{align*}
    u_k^2 \le S_k + \sum_{i=1}^{k} \lambda_i u_i,
\end{align*}
with $\{S_k\}$ an increasing sequence and $\lambda_i \ge 0, \forall i$, the following holds for all $k \ge 1$:
\begin{align*}
    u_k \le \frac{1}{2}\sum_{i=1}^k \lambda_i + \left(S_k + \left( \frac{1}{2}\sum_{i=1}^{k} \lambda_i\right)^2\right)^{1/2}
\end{align*}
\end{fact}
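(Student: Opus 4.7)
The plan is to reduce the family of recursive inequalities to a single scalar quadratic via a \emph{running maximum} trick, which is the standard move for Gronwall-type bounds of this form. Set $A_k := \tfrac{1}{2}\sum_{i=1}^{k}\lambda_i$, so that the target conclusion rewrites as $u_k \le A_k + \sqrt{S_k + A_k^2}$, and introduce the auxiliary quantity $C_k := \max_{1 \le j \le k} u_j$. Since $u_k \le C_k$, it suffices to establish the same upper bound for $C_k$.

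The key step is then to apply the hypothesized recursion at every $j \in \{1,\dots,k\}$ rather than only at $j=k$. Using that $\{S_j\}$ is increasing, that $\lambda_i \ge 0$, and crucially that $u_i \le C_k$ for every $i \le j \le k$, the recursion yields
\begin{align*}
u_j^2 \;\le\; S_j + \sum_{i=1}^{j}\lambda_i u_i \;\le\; S_k + C_k \sum_{i=1}^{k}\lambda_i \;=\; S_k + 2 A_k C_k.
\end{align*}
Taking the maximum over $j \le k$ on the left collapses this family into the single scalar inequality $C_k^2 - 2 A_k C_k - S_k \le 0$. Since $C_k \ge 0$ (as each $u_j \ge 0$), the quadratic formula forces $C_k$ to lie below the larger root of $x^2 - 2 A_k x - S_k = 0$, which equals $A_k + \sqrt{A_k^2 + S_k}$. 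Combined with $u_k \le C_k$, this is exactly the claimed bound.

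The only subtle point, which I would flag as the main obstacle, is the legitimacy of the bound $u_i \le C_k$ inside the summation: this is what permits pulling a single factor independent of $i$ out of the sum and thereby converting what looks like an inductive problem into a closed-form root extraction. If one instead tried a direct induction on $k$, one would have to verify that the tentative bound $A_k + \sqrt{A_k^2 + S_k}$ is itself nondecreasing in $k$ and compatible with the recursion, which is essentially the same content but phrased less cleanly. All other ingredients — monotonicity of $\{S_k\}$ and nonnegativity of $u_k$ and $\lambda_i$ — are immediate from the hypotheses, so no further case analysis should be needed.
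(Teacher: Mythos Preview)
Your argument is correct. The paper does not supply its own proof of this fact: it is quoted verbatim as Lemma~1 of \cite{SRB11} and used as a black box, so there is nothing in the paper to compare against. For what it is worth, the running-maximum reduction you carry out is exactly the argument given in the cited reference, so your proposal matches the original proof as well.
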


Inequality~(\ref{eq:AGDPreResult}) and the optimality of $\bx^*$ yields
\begin{align*}
    D_T^2 
    &\le D_0^2 + 2\sum_{k=1}^{T} \alpha_{k}^2 \eps_{k} + 2\sum_{k=1}^{T} \alpha_{k}\sqrt{2\eps_{k}} D_{k} \\
    &\le D_0^2 + \sum_{k=1}^{T} 2k^2 \eps_{k} + \sum_{k=1}^{T} 2k\sqrt{2\eps_{k}} D_{k},
\end{align*}
where the second inequality uses that $\alpha_k \le k$.

\cref{fact:RecSRB11} (using $S_k = D_0^2 + \sum_{i=1}^{k} 2i^2 \eps_i$ and $\lambda_i = 2i\sqrt{2\eps_i}$) and denoting $A_T = \sum_{k=1}^{T}k \sqrt{2\eps_{k}}$, $B_T = \sum_{k=1}^{T}k^2\eps_{k}$ yields
\begin{align*}
    D_T 
    &\le \frac{1}{2}\sum_{k=1}^T 2k\sqrt{2\eps_k} + \left(D_0^2 + \sum_{k=1}^{T} 2k^2 \eps_{k} + \left(\frac{1}{2}\sum_{k=1}^T 2k\sqrt{2\eps_k}\right)^2\right) \\
    &= A_T + \left(D_0^2 + 2B_T + A_T^2\right)^{1/2} \\
    &\le A_T + D_0 + \sqrt{2B_T} + A_T \\
    &\le D_0 + 3A_T
\end{align*}
where the second inequality uses Cauchy-Schwarz, and the third inequality uses $A_T^2 \ge 2B_T$.

Combining with Inequality~(\ref{eq:AGDPreResult}) yields
\begin{align*}
    \Phi(\by^{(T)}) - \Phi(\bx^*) 
    &\le \frac{4}{(T+1)^2}\left(\frac{1}{2}D_0^2 + \sum_{k=1}^{T} \alpha_{k}\sqrt{2\eps_{k}} (D_0 + 3A_{k}) + \sum_{k=1}^{T}\alpha_{k}^2 \eps_{k}\right) \\
    &\le \frac{4}{(T+1)^2}\left(\frac{1}{2}D_0^2 + \sum_{k=1}^{T} k\sqrt{2\eps_{k}} (D_0 + 3A_T) + \sum_{k=1}^{T}k^2 \eps_{k}\right) \\
    &= \frac{4}{(T+1)^2}\left(\frac{1}{2}D_0^2 + A_T (D_0 + 3A_T) + B_T\right) \\
    &\le \frac{4}{(T+1)^2}\left(\frac{1}{2}D_0^2 + A_T (D_0 + 3A_T) + \frac{1}{2}A_T^2\right) \\
    &\le \frac{2}{(T+1)^2}\left(D_0+3A_T\right)^2 \\
    &= \frac{2}{(T+1)^2}\left(\norm{\bx^{(0)} - \bx^*} + 3\sum_{k=1}^{T}k \sqrt{2\eps_{k}}\right)^2,
\end{align*}
where the first inequality also uses non-negativity of $D_T$, the second inequality uses monotonicity of $\{A_k\}$, the first equality uses definition of $A_T$ and $B_T$, and the third inequality uses $A_T^2 \ge 2B_T$.

\begin{proof}[Proof of \cref{thm:InexactAPGD}]
Discussions above proved the convergence result.
\end{proof}

\subsection{Application in Diffusion Problems}
\label{sec:AppDiffusion}

In this section, we apply \cref{thm:InexactAPGD} to prove \cref{lemma:ProxAGD}.

Specifically, we show that diffusion problem fits the problem structure solved by \cref{algo:ProxAGD} and the inexact proximal mapping can be computed efficiently.
In \cref{lemma:ProxAGD}, we are given a diffusion instance $\GG$ and $H$, a $\kappa$-spectral sparsifier of $G$.
The diffusion problem induced by $\GG$ is to find feasible potential whose energy is close to
\begin{align*}
    \min_{\bx \ge \bb^\GG} \EE^\GG(\bx) = \frac{1}{2}\norm{\bx}^2_{\bL(G)} + \sum_u f_u(x_u).
\end{align*}

Define $K = \{\bx \ge \bb^\GG\}$, $g(\bx) = \frac{1}{2}\norm{\bx}^2_{\bL(G)}$, and $h(\bx) = \sum_u f_u(x_u) + I_K(\bx)$ where $I_K(\bx) = 0$ if $\bx \in K$ or $\infty$ otherwise.
$h(\bx)$ is a continuous proper convex function.
Since $H$ is a $\kappa$-spectral sparsifier of $G$, the following claim holds:
\begin{claim}
$g(\bx)$ is 1-smooth and $\kappa^{-1}$-strongly-convex w.r.t. $\norm{\cdot}_{\bL(H)}$, the norm induced by the Laplacian of $H$.
\end{claim}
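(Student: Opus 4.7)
The plan is to observe that $g$ is a pure quadratic, so the Bregman divergence has an exact closed form, and then invert the spectral sparsifier inequality to get the desired two-sided bound in the $\bL(H)$-norm.

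First I would compute the Bregman divergence of $g$ directly. Since $g(\bx) = \tfrac{1}{2}\bx^\top \bL(G)\bx$ is quadratic with $\grad g(\bx) = \bL(G)\bx$, a direct expansion gives
\begin{equation*}
g(\by) - g(\bx) - \grad g(\bx)^\top (\by - \bx) \;=\; \tfrac{1}{2}(\by-\bx)^\top \bL(G)(\by-\bx) \;=\; \tfrac{1}{2}\norm{\by-\bx}_{\bL(G)}^2,
\end{equation*}
so smoothness/strong convexity with respect to $\norm{\cdot}_{\bL(H)}$ are equivalent to sandwiching $\bL(G)$ between $\kappa^{-1}\bL(H)$ and $\bL(H)$ in the Loewner order.

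Next I would invoke the sparsifier hypothesis $\bL(G) \preceq \bL(H) \preceq \kappa\,\bL(G)$. The right inequality immediately yields $\bL(G) \succeq \kappa^{-1}\bL(H)$, and the left inequality is $\bL(G) \preceq \bL(H)$. Plugging these into the quadratic form evaluated at $\by - \bx$ gives
\begin{equation*}
\tfrac{1}{2\kappa}\norm{\by-\bx}_{\bL(H)}^2 \;\le\; \tfrac{1}{2}\norm{\by-\bx}_{\bL(G)}^2 \;\le\; \tfrac{1}{2}\norm{\by-\bx}_{\bL(H)}^2,
\end{equation*}
which combined with the Bregman identity above is exactly the definition of $1$-smoothness and $\kappa^{-1}$-strong convexity of $g$ with respect to $\norm{\cdot}_{\bL(H)}$.

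There is essentially no obstacle here: the statement is a direct consequence of the sparsifier definition together with the fact that the Bregman divergence of a quadratic is exactly half of the associated norm squared. The only mild subtlety is that $\norm{\cdot}_{\bL(H)}$ is only a seminorm (both $\bL(G)$ and $\bL(H)$ share the null space $\mathrm{span}(\vone)$ on a connected graph), but the paper already set up its conventions in the preliminaries to allow seminorms, and the inequalities above are unaffected by this since both sides vanish in the shared null direction.
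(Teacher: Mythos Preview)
Your proposal is correct and follows essentially the same approach as the paper: compute the Bregman divergence of the quadratic $g$ exactly as $\tfrac{1}{2}\norm{\by-\bx}_{\bL(G)}^2$, then sandwich it using the spectral sparsifier inequality $\bL(G)\preceq\bL(H)\preceq\kappa\bL(G)$ to obtain the two bounds. The paper's proof is line-for-line the same argument.
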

\begin{proof}
Let $B_g(\by; \bx)$ be $g$'s Bregman Divergence.
The definition of $g$ yields:
\begin{align*}
    B_g(\by; \bx) 
    &= g(\by) - g(\bx) - \grad g(\bx)^\top(\by - \bx) \\
    &= \frac{1}{2}\norm{\by}^2_{\bL(G)} - \frac{1}{2}\norm{\bx}^2_{\bL(G)} - \left(\bL(G)\bx\right)^\top(\by - \bx) \\
    &= \frac{1}{2}\norm{\by - \bx}^2_{\bL(G)}.
\end{align*}
The property of $H$ yields:
\begin{align*}
    \frac{1}{2\kappa}\norm{\by - \bx}^2_{\bL(H)} \le B_g(\by; \bx) = \frac{1}{2}\norm{\by - \bx}^2_{\bL(G)} \le \frac{1}{2}\norm{\by - \bx}^2_{\bL(H)}.
\end{align*}
Therefore, $g$ is 1-smooth and $\kappa^{-1}$-strongly-convex w.r.t. $\norm{\cdot}_{\bL(H)}$.
\end{proof}

Define $\Phi(\bx) = g(\bx) + h(\bx)$, the diffusion problem has the following form:
\begin{align*}
    \min_{\bx}\Phi(\bx) = g(\bx) + h(\bx),
\end{align*}
which can be approximately solved by \cref{algo:ProxAGD}.
Note that the constraint on potentials is included in the function $h$.

Each step of \cref{algo:ProxAGD} requires computing inexact proximal mapping.
Given any potential $\bx$ which can be infeasible, $p_\bx$, the proximal mapping of $\bx$, is the optimal solution to the following problem:
\begin{align*}
    p_\bx = \arg\min_{\by} \left\{\Bar{\Phi}_\bx(\by) = \Bar{g}_\bx(\by) + h(\by)\right\}.
\end{align*}
Instead of the exact proximal mapping $p_\bx$, an approximate solution $q_\bx$ is acceptable if $\Bar{\Phi}_\bx(q_\bx) - \Bar{\Phi}_\bx(p_\bx)$ is small enough.

Finding $q_\bx$ is another diffusion problem.
Using that $\Bar{g}_\bx$ is the quadratic approximation of 1-smooth function $g$ around $\bx$ w.r.t. the norm $\norm{\cdot}_{\bL(H)}$ and the definition of function $h$ yields 
\begin{align*}
    \Bar{\Phi}_\bx(\by) &= \Bar{g}_\bx(\by) + h(\by) = \left(g(\bx) + \grad g(\bx)^\top(\by - \bx) + \frac{1}{2}\norm{\by - \bx}_{\bL(H)}^2\right) + \sum_{u}f_u(y_u) + I_K(\by) \\
    &= \frac{1}{2}\norm{\bx}_{\bL(G)}^2 + \bx^\top \bL(G) (\by - \bx) + \frac{1}{2}\norm{\by - \bx}_{\bL(H)}^2 + \sum_{u}f_u(y_u) + I_K(\by).
\end{align*}
By shifting the objective by a constant $\Bar{\Phi}_\bx(\mb{0})$, computing exact proximal mapping is equivalent to solving the following:
\begin{align*}
    \min_{\by \in K} \Bar{\Phi}_\bx(\by) - \Bar{\Phi}_\bx(\mb{0}) = \frac{1}{2}\norm{\by}_{\bL(H)}^2 + \sum_u\left[\left(\bL(G)\bx - \bL(H)\bx\right)_uy_u + f_u(y_u) - f_u(0)\right],
\end{align*}
where we use the fact that $\mb{0} \in K$.

For each vertex $u$, define $f^\bx_u(t) = \left(\bL(G)\bx - \bL(H)\bx\right)_ut + f_u(t) - f_u(0)$.
Such $f^\bx_u(t)$ is a VWF of same size as $f_u(x)$.
The problem above can be written as
\begin{align}
\label{eq:ProxDiff}
    \min_{\by \ge \bb^\GG} \frac{1}{2}\norm{\by}_{\bL(H)}^2 + \sum_u f^\bx_u(y_u),
\end{align}
which is a diffusion problem.
We summarize the discussion as the following claim:
\begin{claim}
\label{claim:ProxDiff}
Given any potential $\bx$ for some instance $\GG$, define the diffusion instance
\begin{align*}
    \GG^\bx \coloneqq (H, \bb^\GG, \{f^\bx_u \mid u \in V(G)\}),
\end{align*}
which corresponds to Problem~(\ref{eq:ProxDiff}), i.e. $\EE^{\GG^\bx}(\by) = \Bar{\Phi}_\bx(\by) - \Bar{\Phi}_\bx(\mb{0})$ holds for any feasible potential $\by$.
The exact proximal mapping of $\bx$ is exactly the optimal potential to $\GG^\bx$.

If $\GG$ has total VWF size $S$, so does $\GG^\bx$.
Computing $1+\delta$-approximate optimal potential to $\GG^\bx$ takes $T(|E(H)|, S, 1+\delta)$-time.
\end{claim}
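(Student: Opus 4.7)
The plan is to verify the four assertions of Claim~\ref{claim:ProxDiff} in sequence; each follows from the algebra carried out in the paragraph immediately above, together with the closure properties of VWFs listed after Definition~\ref{defn:VtxWeightFunction}.

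First I would confirm the energy identity $\EE^{\GG^\bx}(\by) = \Bar{\Phi}_\bx(\by) - \Bar{\Phi}_\bx(\mb{0})$. Starting from the already-derived expansion of $\Bar{\Phi}_\bx(\by)$ and subtracting $\Bar{\Phi}_\bx(\mb{0})$ kills every $\by$-independent term; on the feasible set $\by \ge \bb^\GG$ the indicator $I_K(\by)$ vanishes, and regrouping the remainder produces exactly $\frac{1}{2}\by^\top \bL(H)\by + \sum_u\left[(\bL(G)\bx - \bL(H)\bx)_u y_u + f_u(y_u) - f_u(0)\right]$, which equals $\EE^{\GG^\bx}(\by)$ by the definition of $f_u^\bx$.

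Second, the equivalence of the exact proximal mapping with the optimal potential of $\GG^\bx$ is then immediate: over the feasible set the two objectives differ only by the constant $\Bar{\Phi}_\bx(\mb{0})$, while outside the feasible set both objectives equal $+\infty$ (one via $I_K$, the other via the domain constraint $\by \ge \bb^\GG$ of $\GG^\bx$), so the two problems share the same minimizer. Third, I would verify that each $f_u^\bx(t) = (\bL(G)\bx - \bL(H)\bx)_u\, t + f_u(t) - f_u(0)$ is itself a VWF of size $|f_u|$: adding a linear term and subtracting a constant preserves piecewise quadraticity with unchanged split points and preserves both continuity and concavity of the derivative; moreover $f_u^\bx(0) = 0$, so the sign condition still holds. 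Summing over vertices, the total VWF size of $\GG^\bx$ equals that of $\GG$.

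Finally, the runtime assertion is just the definition of $T(\cdot,\cdot,\cdot)$ from Definition~\ref{defn:DiffusionComplexity} applied to the instance $\GG^\bx$, whose underlying graph is $H$ and whose total VWF size is $S$. The only step requiring any care, and hence the main (mild) obstacle, is confirming that the affine shift used in defining $f_u^\bx$ preserves all three defining properties of a VWF; everything else is direct substitution or an appeal to the preceding algebra.
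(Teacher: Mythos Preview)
Your proposal is correct and matches the paper's approach exactly: the paper does not give a separate proof of this claim at all, but rather introduces it with ``We summarize the discussion as the following claim,'' treating it as a restatement of the algebra already carried out in the preceding paragraph. Your write-up simply makes explicit the verifications (energy identity, minimizer equivalence, VWF closure under affine shift, and the definitional runtime bound) that the paper leaves implicit in that discussion.
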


Let $q_\bx$ be an $1+\delta$-approximate optimal potential to $\GG^\bx$.
$q_\bx$ can be seen as an inexact proximal mapping of $\bx$.
The quality of $q_\bx$ yields
\begin{align*}
    \Bar{\Phi}_\bx(q_\bx) - \Bar{\Phi}_\bx(\mb{0}) = \EE^{\GG^\bx}(q_\bx) \le \frac{1}{1+\delta}\EE^{\GG^\bx}(p_\bx) = \frac{1}{1+\delta}\left(\Bar{\Phi}_\bx(p_\bx) - \Bar{\Phi}_\bx(\mb{0})\right) \le 0.
\end{align*}
Let $\eps_\bx$ be $q_\bx$'s additive error, i.e. $\eps_\bx = \Bar{\Phi}_\bx(q_\bx) - \Bar{\Phi}_\bx(p_\bx)$.
Rearranging the above inequality yields:
\begin{align}
\label{eq:addErrBound1}
    0 \le \eps_\bx \le \delta\left(\Bar{\Phi}_\bx(\mb{0}) - \Bar{\Phi}_\bx(q_\bx)\right).
\end{align}

In order to prove \cref{lemma:ProxAGD}, we run \cref{algo:ProxAGD} with $\bx^{(0)} = \mb{0}$ and $T=10\sqrt{\kappa}$ and compute inexact proximal mapping via \cref{claim:ProxDiff} with uniform $\delta$.
The inexactness parameter $\delta$ is determined later.

Next, we analyze the convergence of the algorithm.
Inequality~(\ref{eq:addErrBound1}) and \cref{thm:InexactAPGD} yields
\begin{align*}
    \Phi(\by^{(T)}) - \Phi(\bx^*)
    &\le \frac{2}{(T+1)^2}\left(\norm{\bx^{(0)} - \bx^*} + 3\sum_{k=1}^{T}k \sqrt{2\eps_{k}}\right)^2 \\
    &\le \frac{2}{(T+1)^2}\left(\norm{\bx^{(0)} - \bx^*} + 3\sum_{k=1}^{T}k \sqrt{2\delta\left(\Bar{\Phi}_{\bx^{(k)}}(\bx^{(0)}) - \Bar{\Phi}_{\bx^{(k)}}(\by^{(k)})\right)}\right)^2.
\end{align*}
We need the following claims to further bound the convergence rate.
\begin{claim}
\label{claim:strongCvxPhi}
$\frac{1}{2\kappa}\norm{\bx^{(0)} - \bx^*}^2 \le \Phi(\by^{(0)}) - \Phi(\bx^*)$.
\end{claim}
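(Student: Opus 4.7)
The plan is to obtain the claim as an immediate consequence of strong convexity of $\Phi$ at its minimizer. By the initialization in \cref{algo:ProxAGD} we have $\bx^{(0)} = \by^{(0)} = \mb{0}$, so what is really being asserted is the standard bound that the optimality gap at any point dominates $(1/(2\kappa))$ times its squared $\bL(H)$-distance to the minimizer. Here the norm on the right-hand side is $\norm{\cdot}_{\bL(H)}$, matching the norm in which the smooth part of $\Phi$ is strongly convex.

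First I would establish that $\Phi$ itself is $\kappa^{-1}$-strongly convex with respect to $\norm{\cdot}_{\bL(H)}$. The smooth part $g(\bx) = \tfrac{1}{2}\norm{\bx}_{\bL(G)}^2$ has already been shown in the preceding claim (using $\bL(G) \preceq \bL(H) \preceq \kappa\bL(G)$) to satisfy $B_g(\by;\bx) \ge (1/(2\kappa))\norm{\by-\bx}^2_{\bL(H)}$. The non-smooth part $h(\bx) = \sum_u f_u(x_u) + I_K(\bx)$ is convex, being a sum of convex functions: each $f_u$ is convex by \cref{defn:VtxWeightFunction}, and $I_K$ is the indicator of the convex set $K = \{\bx \ge \bb^\GG\}$. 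A strongly convex function plus a convex function is strongly convex with the same parameter.

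Next I would apply the first-order optimality condition at $\bx^*$. Since $\bx^*$ minimizes $\Phi$, we have $0 \in \partial\Phi(\bx^*)$. Combining with the subgradient form of strong convexity, for every $\xi \in \partial\Phi(\bx^*)$ and every $\bx$,
\[
\Phi(\bx) \ge \Phi(\bx^*) + \langle \xi,\bx - \bx^*\rangle + \frac{1}{2\kappa}\norm{\bx - \bx^*}^2_{\bL(H)}.
\]
Choosing $\xi = 0$ and $\bx = \bx^{(0)} = \by^{(0)}$ yields $\Phi(\by^{(0)}) - \Phi(\bx^*) \ge (1/(2\kappa))\norm{\bx^{(0)} - \bx^*}^2_{\bL(H)}$, which is exactly the claim.

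There is no real obstacle here; the only subtlety worth flagging is that $h$ is non-differentiable (it contains the indicator $I_K$ and piecewise-quadratic VWFs), so the argument must be phrased via subgradients rather than gradients. This is the standard nonsmooth extension of strong convexity and needs no additional machinery.
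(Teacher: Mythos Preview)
Your argument is correct. The route you take is slightly different from the paper's: the paper first observes that $\bx^*$ is its own (exact) proximal mapping, i.e.\ $p_{\bx^*}=\bx^*$, and then invokes the Inexact Gradient Lemma (\cref{lemma:inexact_grad_lemma}) with $\bu=\by^{(0)}$, $\bx=q_\bx=\bx^*$, $\eps_\bx=0$, $\bd_\bx=\mb{0}$ to read off the inequality. You instead argue directly that $\Phi=g+h$ is $\kappa^{-1}$-strongly convex in $\norm{\cdot}_{\bL(H)}$ (strongly convex plus convex) and apply the subgradient form of strong convexity at the minimizer with $0\in\partial\Phi(\bx^*)$. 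Both arguments encode the same underlying fact; yours is more self-contained and avoids the heavier lemma, while the paper's version has the virtue of reusing machinery already developed for the main convergence proof.
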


\begin{claim}
\label{claim:polyDelta}
By setting $\delta = 1/\poly(n)$, we have $\delta\left(\Bar{\Phi}_\bx(\mb{0}) - \Bar{\Phi}_\bx(q_\bx)\right) \le (\Phi(\by^{(0)}) - \Phi(\bx^*))/20000\kappa^4.$
\end{claim}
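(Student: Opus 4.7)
The plan is to show that both sides of the desired inequality are polynomially bounded: the quantity $\Bar{\Phi}_\bx(\mb{0}) - \Bar{\Phi}_\bx(q_\bx)$ is at most $\poly(n)$ uniformly over the iterations $\bx = \bx^{(1)}, \ldots, \bx^{(T)}$, while $\Phi(\by^{(0)}) - \Phi(\bx^*)$ is at least $1/\poly(n)$ in the non-trivial case. Then choosing $\delta = 1/\poly(n)$ with a large enough exponent (and noting $\kappa \le m \le \poly(n)$) yields the claim.

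For the upper bound on $\Bar{\Phi}_\bx(\mb{0}) - \Bar{\Phi}_\bx(q_\bx) = -\EE^{\GG^\bx}(q_\bx)$, I would first argue inductively that all iterates $\bx^{(k)}, \by^{(k)}, \bz^{(k)}$ stay in a $\poly(n)$-sized box. Concretely, $\mb{0}$ and $\bx^*$ both have coordinates of magnitude at most $\poly(n)$ by Assumption~\ref{assump:polyNum}, hence $\norm{\bx^{(0)} - \bx^*}_{\bL(H)} \le \poly(n)$; combined with the convergence bound of \cref{thm:InexactAPGD} applied at each intermediate step, and the fact that $\bL(H) \preceq \bL(G)$ is bounded above/below in spectral norm by $\poly(n)$, each iterate remains polynomially bounded in $\ell_\infty$ norm. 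This implies the residual VWFs $f^\bx_u$ have coefficients of magnitude $\poly(n)$, so the induced instance $\GG^\bx$ satisfies $|\EE(\GG^\bx)| \le \poly(n)$, and hence $|\EE^{\GG^\bx}(q_\bx)| \le \poly(n)$ because $\EE(\GG^\bx) \le \EE^{\GG^\bx}(q_\bx) \le \EE^{\GG^\bx}(\mb{0}) = 0$.

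For the lower bound on $\Phi(\by^{(0)}) - \Phi(\bx^*)$, observe that if $\by^{(0)} = \mb{0}$ is already optimal, then $\Phi(\by^{(T)}) = \Phi(\bx^*)$ trivially and there is nothing to prove; the inexactness $\delta$ plays no role. Otherwise, by Assumption~\ref{assump:polyNum} every non-zero magnitude lies in $[n^{-c}, n^c]$, so a standard pigeonhole on the piecewise-quadratic structure of $\Phi$ (which is built from VWFs with split points, slopes, and curvatures in the polynomial range, and a Laplacian $\bL(G)$ with polynomially bounded entries) forces the non-zero gap $\Phi(\mb{0}) - \Phi(\bx^*)$ to be at least $n^{-c'}$ for some constant $c'$.

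The main obstacle is the inductive control of the iterates. Naively, an accelerated method with inexact proximal mappings can have its iterates blow up if the error is too large; here we are in the awkward situation of wanting to choose $\delta$ based on a bound that itself depends on the iterates' boundedness. The cleanest way to break this circularity is to pick an initial (conservative) polynomial bound $B = \poly(n)$ on the iterates, choose $\delta$ small enough relative to $B$ and $\kappa$, and then verify a posteriori via \cref{thm:InexactAPGD} and the strong convexity statement of \cref{claim:strongCvxPhi} that the iterates indeed stayed within $B$. The accumulated error term $\sum k\sqrt{2\eps_k}$ in \cref{thm:InexactAPGD} is bounded by $T^2\sqrt{\delta \cdot \poly(n)}$ which, for $\delta$ inverse-polynomial with a sufficiently large exponent, is bounded by $\norm{\bx^{(0)} - \bx^*}_{\bL(H)}$; this closes the induction and also establishes Claim~\ref{claim:polyDelta} simultaneously.
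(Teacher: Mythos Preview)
Your approach is correct in outline but more elaborate than the paper's. The paper's proof is a two-line computation: expanding $\Bar{\Phi}_\bx(\mb{0})$ directly (using $g(\bx)=\tfrac12\norm{\bx}_{\bL(G)}^2$ and $\grad g(\bx)=\bL(G)\bx$) gives
\[
\Bar{\Phi}_\bx(\mb{0}) \le \tfrac{1}{2}\norm{\bx}_{\bL(H)}^2 + \Phi(\mb{0}),
\]
and combining with $\Bar{\Phi}_\bx(q_\bx)\ge \Phi(q_\bx)$ (smoothness) yields
\[
\Bar{\Phi}_\bx(\mb{0}) - \Bar{\Phi}_\bx(q_\bx) \le \tfrac{1}{2}\norm{\bx}_{\bL(H)}^2 + \Phi(\mb{0}) - \Phi(q_\bx).
\]
The paper then simply invokes Assumption~\ref{assump:polyNum} wholesale to declare all quantities on the right polynomially bounded (above and below), chooses $\delta=1/\poly(n)$ accordingly, and finishes via $\Phi(\mb{0})-\Phi(q_\bx)\le \Phi(\mb{0})-\Phi(\bx^*)$.

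You instead attempt to \emph{derive} the polynomial boundedness of the iterates by an induction coupled to \cref{thm:InexactAPGD}, and you correctly flag the circularity between choosing $\delta$ and bounding the iterates. This is the more honest numerical-stability argument and would be required if Assumption~\ref{assump:polyNum} were not already standing; but given that blanket assumption, your induction is redundant, and your detour through the residual instance $\GG^\bx$ is unnecessary compared to the paper's one-line expansion of $\Bar{\Phi}_\bx(\mb{0})$. What your approach buys is an actual justification of the assumption in this specific algorithm; what the paper's approach buys is brevity.
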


These 2 claims yields:
\begin{align*}
    \Phi(\by^{(T)}) - \Phi(\bx^*)
    &\le \frac{2}{(T+1)^2}\left(\sqrt{2\kappa \left(\Phi(\by^{(0)}) - \Phi(\bx^*)\right)} + \sum_{k=1}^{T}\frac{k}{100\kappa^2} \sqrt{\left(\Phi(\by^{(0)}) - \Phi(\bx^*)\right)}\right)^2 \\
    &\le \frac{2}{(T+1)^2} \left(\Phi(\by^{(0)}) - \Phi(\bx^*)\right)\left(\sqrt{2\kappa} + \frac{T^2}{100\kappa^2}\right)^2 \\
    &\le \frac{1}{2} \left(\Phi(\by^{(0)}) - \Phi(\bx^*)\right),
\end{align*}
where the last inequality uses that $T=10\sqrt{\kappa}$.

Now we prove \cref{lemma:ProxAGD}:
\begin{proof}[Proof of \cref{lemma:ProxAGD}]
Discussions above reduce diffusion problem to $T=O(\sqrt{\kappa})$ sub-problems, each of them is also a diffusion problem constructed in linear time.

The running time is therefore bounded by:
\begin{align*}
    T(m, S, 2) = O\left(\sqrt{\kappa}\left(m + T\left(|E(H)|, S, 1+\frac{1}{\poly(n)}\right)\right)\right).
\end{align*}
\end{proof}

The rest of the section proves \cref{claim:strongCvxPhi} and \cref{claim:polyDelta}.

\begin{proof}[Proof of \cref{claim:strongCvxPhi}]
Let $p_{\bx^*}$ be the proximal mapping of $\bx^*$.
Using the fact that $\Bar{g}_\bx(\by) \ge g(\by)$ for all $\bx, \by$ yields
\begin{align*}
    \Phi(\bx^*) = \Bar{\Phi}_{\bx^*}(\bx^*) \ge \Bar{\Phi}_{\bx^*}(p_{\bx^*}) \ge \Phi(p_{\bx^*}) \ge \Phi(\bx^*).
\end{align*}
Therefore, $\bx^*$ is one of proximal mapping of itself.

Applying \cref{lemma:inexact_grad_lemma} with $\bu = \by^{(0)}$, $\bx=\bx^*$, and $q_\bx = p_\bx = \bx^*$ yields:
\begin{align*}
    \Phi(\by^{(0)}) \ge \Phi(\bx^*) + \frac{1}{2\kappa}\norm{\by^{(0)} - \bx^*}^2
\end{align*}
since $\bd_\bx = \mb{0}$ and $\eps_\bx = 0$.
\end{proof}




\begin{proof}[Proof of \cref{claim:polyDelta}]

The definition of $\Bar{\Phi}_\bx(\cdot)$ yields
\begin{align*}
    \Bar{\Phi}_\bx(\mb{0}) 
    &= g(\bx) + \grad g(\bx)^\top (\mb{0} - \bx) + \frac{1}{2}\norm{\bx}_{\bL(H)}^2 + h(\mb{0}) \\
    &= \frac{1}{2}\norm{\bx}_{\bL(G)}^2 - \norm{\bx}_{\bL(G)}^2 + \frac{1}{2}\norm{\bx}_{\bL(H)}^2 + h(\mb{0}) \\
    &\le \frac{1}{2}\norm{\bx}_{\bL(H)}^2 + h(\mb{0}) = \frac{1}{2}\norm{\bx}_{\bL(H)}^2 + \Phi(\mb{0}),
\end{align*}
where the last inequality uses that $g(\mb{0}) = 0$.

Combining with the smoothness of $g$ yields
\begin{align*}
    \Bar{\Phi}_\bx(\mb{0}) - \Bar{\Phi}_\bx(q_\bx)
    &\le \Bar{\Phi}_\bx(\mb{0}) - \Phi(q_\bx) \\
    &\le \frac{1}{2}\norm{\bx}_{\bL(H)}^2 + \Phi(\mb{0}) - \Phi(q_\bx).
\end{align*}

Under \cref{assump:polyNum}, we can set $\delta=1/\poly(n)$ such that
\begin{align*}
    \delta\left(\frac{1}{2}\norm{\bx}_{\bL(H)}^2 + \Phi(\mb{0}) - \Phi(q_\bx)\right) \le \frac{\Phi(\mb{0}) - \Phi(q_\bx)}{20000\kappa}.
\end{align*}

We conclude the proof by observing that $\Phi(\mb{0}) - \Phi(q_\bx) \le \Phi(\mb{0}) - \Phi(\bx^*)$.

\end{proof}

\bibliographystyle{alpha}
\bibliography{ref}

\end{document}